\theoremstyle{plain}
\theoremstyle{plain}
\newtheorem{theorem}{Theorem}[section]
\newtheorem{proposition}[theorem]{Proposition}
\newtheorem{lemma}[theorem]{Lemma}
\theoremstyle{definition}
\newtheorem{definition}[theorem]{Definition}
\theoremstyle{remark}
\newtheorem{remark}[theorem]{Remark}
\theoremstyle{definition}
\theoremstyle{remark}
\numberwithin{equation}{section}
\newcommand{\RR}{\mathbb{R}}
\newcommand{\NN}{\mathbb{N}}
\newcommand{\ZZ}{\mathbb{Z}}
\newcommand{\CC}{\mathbb{C}}
\renewcommand{\SS}{\mathbb{S}}
\newcommand{\WW}{\mathbb{W}}
\renewcommand{\S}{\mathcal{S}}
\renewcommand{\H}{\mathcal{H}}
\newcommand{\D}{\mathcal{D}}
\newcommand{\DD}{\mathrm{D}}
\renewcommand{\L}{\mathcal{L}}
\newcommand{\M}{\mathcal{M}}
\mathchardef\emptyset="001F
\renewcommand{\d}[1]{\, \mathrm{d} #1}
\newcommand{\de}{\partial}
\newcommand{\e}{\varepsilon}
\renewcommand{\tilde}{\widetilde}
\newcommand{\x}{{\times}}
\newcommand{\ol}{\overline}
\newcommand{\ul}{\underline}
\newcommand{\sm}{\setminus}
\newcommand{\dist}{{\rm dist}}
\newcommand{\argmin}{\mathrm{argmin}}
\newcommand{\cart}{\mathrm{cart}}
\newcommand{\weak}{\rightharpoonup}
\newcommand{\wstar}{\stackrel{*}\rightharpoonup}
\renewcommand{\flat}{\stackrel{\mathrm{f}}\to}
\newcommand{\mres}{\mathbin{\vrule height 1.6ex depth 0pt width 0.13ex\vrule height 0.13ex depth 0pt width 1.3ex}}
\newcommand{\integral}[3]{\int_{#1} \! #2 #3}
\newcommand{\nn}{{\langle i , j \rangle}}
\newcommand{\PC}{\mathcal{PC}}
\newcommand{\geo}{\mathrm{d}_{\SS^1}}
\newcommand{\compact}{\subset\subset}
\renewcommand{\Supset}{\supset\supset}
\newcommand{\w}{{\! \wedge \!}}
\newcommand{\supp}{\mathrm{supp}}
\author{Marco Cicalese}
\address[Marco Cicalese]{Technische Universit\"at M\"unchen, Munich, Germany}
\email{cicalese@ma.tum.de}
\author{Gianluca Orlando}
\address[Gianluca Orlando]{Technische Universit\"at M\"unchen, Munich, Germany}
\email{orlando@ma.tum.de}
\author{Matthias Ruf}
\address[Matthias Ruf]{Ecole polytechnique f\'ed\'erale de Lausanne, Lausanne, Switzerland}
\email{matthias.ruf@epfl.ch}
\title[The $N$-clock model: Variational analysis for fast and slow $N$]{The $N$-clock model: Variational analysis \\ for fast and slow divergence rates of $N$}
\begin{document}

	\begin{abstract}
		We study a nearest neighbors ferromagnetic spin system on the square lattice in which the spin field is constrained to take values in a discretization of the unit circle consisting of $N$ equi-spaced vectors, also known as $N$-clock model. We find a fast rate of divergence of $N$ with respect to the lattice spacing for which the $N$-clock model has the same discrete-to-continuum variational limit of the $XY$ model, in particular concentrating energy on topological defects of dimension 0. We prove the existence of a slow rate of divergence of $N$ at which the coarse-grain limit does not detect topological defects, but it is instead a $BV$-total variation. Finally, the two different types of limit behaviors are coupled in a critical regime for $N$, whose analysis requires the aid of Cartesian currents. 
	\end{abstract}

	\maketitle
	
	\noindent {\bf Keywords}: $\Gamma$-convergence, $XY$ model, $N$-clock model, cartesian currents, topological singularities. 
	
	\vspace{1em}
	
	\noindent {\bf MSC 2010}: 49J45, 49Q15, 26B30, 82B20.

	\setcounter{tocdepth}{1}
	\tableofcontents

	\section{Introduction}
	
	The emergence of phase transitions mediated by the formation of topological singularities has been proposed in the pioneering works on the ferromagnetic $XY$ model~\cite{Ber, Kos, Kos-Tho}. The latter describes a system of $\SS^1$ vectors sitting on a square lattice, in which only the nearest neighbors interact in a ferromagnetic way. If the spin field is allowed to attain only finitely many, say $N$, equi-spaced values on $\SS^1$ (as in the $N$-clock model) this topological concentration is ruled out. Instead, the phase transitions are characterized by a typical domain structure as in Ising systems. These two different behaviors lead to the natural question whether the $N$-clock model approximates the $XY$ model as $N \to +\infty$. Fr\"ohlich and Spencer give a positive answer to this question, showing in~\cite{Fro-Spe} that the $N$-clock model (for $N$ large enough) presents phase transitions mediated by the formation and interaction of topological singularities.
	
	The results in this paper and in~\cite{CicOrlRuf,CicOrlRuf3} concern a related problem regarding the behavior of low-energy states of the two systems in the discrete-to-continuum variational analysis as the lattice spacing vanishes and $N$ diverges simultaneously. With the help of fine concepts in geometric measure theory and in the theory of cartesian currents, these results show to which extent the coarse-grain limit of the $N$-clock model resembles the one of the $XY$ model obtained in~\cite{Ali-Cic, Ali-DL-Gar-Pon}. To state precisely the results, we set the mathematical framework for the problem.

	We consider a bounded, open set with Lipschitz boundary $\Omega \subset \RR^2$. 
	Given a small parameter $\e > 0$, we consider the square lattice $\e \ZZ^2$ and we define $\Omega_\e := \Omega \cap \e \ZZ^2$. The $XY$ energy (referred to its minimum) is defined on spin fields $u \colon \Omega_\e \to \SS^1$ by 
	\begin{equation*}
		XY_\e(u) = \frac{1}{2} \sum_\nn \e^2 | u(\e i) - u(\e j) |^2 ,
	\end{equation*}
	where the sum is taken over ordered pairs of nearest neighbors $\nn$, i.e., $(i, j) \in \ZZ^2 \x \ZZ^2$ such that $|i - j| = 1$ and $\e i, \e j \in \Omega_\e$. We consider the additional parameter $N_\e \in \NN$ or, equivalently, $\theta_\e := \tfrac{2\pi}{N_\e}$, and we set
	\[
	\S_\e := \{\exp(\iota k \theta_\e) \colon  k = 0, \dots, N_\e-1\} \, , 
	\] 
	where $\iota$ is the imaginary unit. The admissible spin fields we consider here are only those taking values in the discrete set $\S_\e$. We study the energy defined for every $u \colon \Omega_\e \to \SS^1$ by
	\begin{equation} \label{eq:def of E}
		E_\e(u) := \begin{cases}
			XY_\e(u) & \text{if } u \colon \Omega_\e \to \S_\e \, , \\
			+ \infty & \text{otherwise}.
		\end{cases}
	\end{equation}
	We are interested in the behavior of low-energy states $u_\e$ such that $E_\e(u_\e) \leq C \kappa_\e$ as $\e \to 0$. To this end, given $\theta_\e \to 0$, we find the relevant scaling $\kappa_\e$ and we study the $\Gamma$-limit of $\frac{1}{\kappa_\e} E_\e$. The limit strongly depends on the rate of convergence $\theta_\e \to 0$ and can be characterized by interfacial-type singularities~\cite{Caf-DLL, Ali-Bra-Cic, Bra-Pia, Cic-Sol, Ali-Gel, Bra-Cic, Bra-Kre, Cic-For-Orl, Bac-Cic-Kre-Orl-surf} (see also~\cite{Bra-Cic-Ruf,Ali-Cic-Ruf}) or vortex-like singularities~\cite{Pon, Ali-Cic, Ali-Cic-Pon, Ali-DL-Gar-Pon, Bra-Cic-Sol, Bad-Cic-DL-Pon, Ali-Bra-Cic-DL-Pia}, possibly coexisting. In this paper we are interested in the following regimes: $\e |\log \e | \ll \theta_\e$, $\theta_\e \sim \e |\log \e|$, and $\theta_\e \ll \e$. The case $\e \ll \theta_\e \ll \e |\log \e|$ has been covered in~\cite{CicOrlRuf}.
	
	To understand how the limit is affected by the choice of $\theta_\e \to 0$, we start by considering the following example. Let $\Omega = B_{1/2}(0)$ be the ball of radius $\tfrac{1}{2}$ centered at 0, let $v_1 = \exp(\iota \varphi_1)$, $v_2 = \exp(\iota \varphi_2) \in \SS^1$, and let us define for $x = (x_1,x_2)$
	\begin{equation} \label{eq:pure jump function}
		u(x) := \begin{cases}
			v_1 & \text{if } x_1 \leq 0 \, ,\\
			v_2 & \text{if } x_1 > 0 \, .
		\end{cases}
	\end{equation}
	For $\e i = (\e i_1,\e i_2) \in \Omega_\e$ we define
	\begin{equation} \label{eq:transition for jump}
		u_\e(\e i) := \begin{cases}
			v_1 & \text{if } \e i_1  \leq 0 \, ,\\
			\exp\Big(\iota\big( (\varphi_1 - \varphi_2) \big(1 - \tfrac{\e i_1}{\eta_\e} \big)  + \varphi_2 \big)\Big) & \text{if } 0 < \e i_1 \leq \eta_\e \, , \\
			v_2 & \text{if } \e i_1 > \eta_\e \, .
		\end{cases}
	\end{equation}
	If $u_\e$ satisfies the constraint $u_\e \colon \Omega_\e \to \S_\e$, then $|\varphi_1-\varphi_2|\frac{\e}{\eta_\e} \sim \theta_\e$, i.e., $\eta_\e \sim |\varphi_1-\varphi_2|\frac{\e}{\theta_\e}$, see Figure~\ref{fig:transition}. As a result 
	\begin{equation*}
		\frac{1}{\kappa_\e} E_\e(u_\e) \sim \Big( 1 - \cos\Big(\frac{\e}{\eta_\e} (\varphi_1 - \varphi_2)\Big) \Big) \frac{\eta_\e}{\kappa_\e}  \sim \big( 1 - \cos(\theta_\e ) \big) \frac{\e}{\theta_\e \kappa_\e} |\varphi_1 - \varphi_2| \sim \frac{\e \theta_\e}{\kappa_\e}  |\varphi_1 - \varphi_2|\, .
	\end{equation*}
	This suggests that the nontrivial scaling $\kappa_\e = \e \theta_\e$ leads to a finite energy proportional to~$|\varphi_1-\varphi_2|$. The construction can be optimized by choosing the angles $\varphi_1$ and $\varphi_2$ in such a way that $|\varphi_1 - \varphi_2|$ equals the geodesic distance on $\SS^1$ between $v_1$ and $v_2$, namely $\geo(v_1,v_2)$. 
    \begin{figure}[H]
        \includegraphics{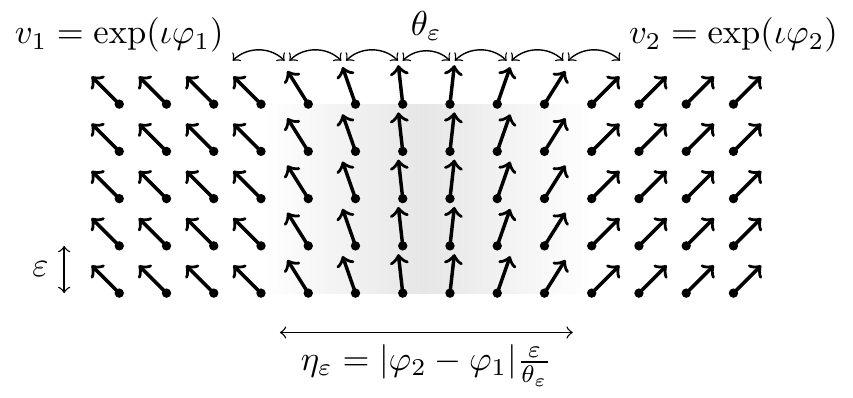}

		\caption{Construction which shows that $\frac{1}{\e \theta_\e}E_\e$ approximates the geodesic distance between the two values $v_1$ and $v_2$ of a pure-jump function. During the transition between $v_1$ and $v_2$ in the strip of size $\eta_\e = |\varphi_1 - \varphi_2| \frac{\e}{\theta_\e}$ the minimal angle between two adjacent vectors is $\theta_\e$.}
		
		\label{fig:transition}
	\end{figure}
	The fact that $\geo(v_1,v_2)$ is the total variation (in the sense of~\cite[Formula~(2.11)]{Amb}) of the $\SS^1$-valued pure-jump function $u$ defined in~\eqref{eq:pure jump function} suggests that at the scaling $\kappa_\e = \e \theta_\e$ the $\Gamma$-limit of $\tfrac{1}{\e \theta_\e} E_\e$ might be finite on the class $BV(\Omega;\SS^1)$ of $\SS^1$-valued functions of bounded variation. This is confirmed by the following theorem, for which we introduce some notation, cf.\ \cite{Amb-Fus-Pal}. Given a function $u \in BV(\Omega;\SS^1)$, its distributional derivative $\DD u$ can be decomposed as $\DD u = \nabla u \L^2 + \DD^{(c)} u + (u^+-u^-) \otimes \nu_u \H^1 \mres J_u$, where $\nabla u$ denotes the approximate gradient, $\L^2$ is the Lebesgue measure in $\RR^2$, $\DD^{(c)} u$ is the Cantor part of $\DD u$, $\H^1$ is the 1-dimensional Hausdorff measure, $J_u$ is the $\H^1$-countably rectifiable jump set of $u$ oriented by the normal $\nu_u$, and $u^+$ and $u^-$ are the traces of $u$ on $J_u$. By $|\, \cdot \,|_1$ we denote the 1-norm on vectors and by $| \, \cdot \, |_{2,1}$ the anisotropic norm on matrices given by the sum of the Euclidean norms of the columns.

	\begin{theorem}[Regime $\e |\log \e| \ll \theta_\e \ll 1$] \label{thm:e log smaller theta}
		Assume that $\e |\log \e| \ll \theta_\e \ll 1$. Then the following results hold:
		\begin{itemize}
			\item[i)] (Compactness) Let $u_\e \colon \Omega_\e \to \S_\e$ be such that $\frac{1}{\e \theta_\e} E_\e(u_\e) \leq C$.
			Then there exists a subsequence (not relabeled) and a function $u \in BV(\Omega;\SS^1)$ such that $u_\e \to u$ in $L^1(\Omega;\RR^2)$. 
			\item[ii)] ($\Gamma$-liminf inequality) Assume that $u_\e \colon \Omega_\e \to \S_\e$ and $u \in BV(\Omega;\SS^1)$ satisfy $u_\e \to u$ in $L^1(\Omega;\RR^2)$. Then 
			\begin{equation*}
				\integral{\Omega}{|\nabla u|_{2,1} }{\d x} + |\DD^{(c)} u|_{2,1}(\Omega) + \integral{J_u}{\geo(u^-,u^+)|\nu_{u}|_1}{\d \H^1} \leq \liminf_{\e \to 0} \frac{1}{\e \theta_\e} E_\e(u_\e) \, .
			\end{equation*} 
			\item[iii)] ($\Gamma$-limsup inequality) Let $u \in BV(\Omega;\SS^1)$. Then there exists a sequence $u_\e \colon \Omega_\e \to \S_\e$ such that $u_\e \to u$ in $L^1(\Omega;\RR^2)$ and
			\begin{equation*}
				\limsup_{\e \to 0} \frac{1}{\e \theta_\e} E_\e(u_\e) \leq \integral{\Omega}{|\nabla u|_{2,1} }{\d x} + |\DD^{(c)} u|_{2,1}(\Omega) + \integral{J_u}{\geo(u^-,u^+)|\nu_{u}|_1}{\d \H^1}.
			\end{equation*}  
		\end{itemize}
	\end{theorem}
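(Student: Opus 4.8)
The plan is to treat the three parts separately, with the $\Gamma$-liminf inequality as the heart of the matter. For \emph{compactness} and the \emph{lower bound} I would first pass from the spin field $u_\e \colon \Omega_\e \to \S_\e$ to a piecewise-constant (or piecewise-affine) interpolation $\tilde u_\e$ and work with the associated "lifting": writing each value of $u_\e$ as $\exp(\iota \varphi_\e)$ with $\varphi_\e$ taking values in the discrete set $\{k\theta_\e\}$, one constructs a scalar field $\varphi_\e$ whose discrete gradient on each edge $\langle i,j\rangle$ has modulus comparable to $\geo(u_\e(\e i), u_\e(\e j))$, at least on edges where this geodesic distance is below $\pi$. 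The key elementary inequality is that for $v, w \in \S_\e$ one has $|v-w|^2 = 2(1-\cos\geo(v,w)) \geq c\, \theta_\e \,\geo(v,w)$ when $0 < \geo(v,w)\leq \pi - \theta_\e$, because on the discrete circle the geodesic distance is a multiple of $\theta_\e$ and $1-\cos t \gtrsim t\sin\theta_\e \gtrsim t\theta_\e$ in that range; pairs at geodesic distance exactly $\pi$ (if $N_\e$ is even) cost $\e^2\cdot 4$, which is $\gg \e\theta_\e$, so such "bad" edges are few and can be handled by a separate truncation/modification argument. Summing over edges then gives $\frac{1}{\e\theta_\e}E_\e(u_\e) \gtrsim \sum_{\langle i,j\rangle}\e\,\geo(u_\e(\e i),u_\e(\e j))$, which controls the discrete BV-norm of $\varphi_\e$ up to $2\pi\ZZ$-ambiguities; a classical compactness argument in $BV$ for the $\SS^1$-valued interpolation (composing the bounded-variation lifting with $\exp(\iota\,\cdot\,)$) yields a subsequence converging in $L^1$ to some $u \in BV(\Omega;\SS^1)$.

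For the \emph{$\Gamma$-liminf inequality} proper, I would localize: define set functions $\mu_\e(A) := \frac{1}{\e\theta_\e}E_\e(u_\e;A)$ for open $A\subset\Omega$, extract a weak-$*$ limit measure $\mu$, and prove the three lower bounds — $\mu \geq |\nabla u|_{2,1}\L^2$ on the absolutely continuous part, $\mu \geq |\DD^{(c)}u|_{2,1}$ on the Cantor part, and $\mu \geq \geo(u^-,u^+)|\nu_u|_1\H^1\mres J_u$ on the jump part — by the standard blow-up method of Fonseca–Müller. The anisotropy $|\cdot|_{2,1}$ and the weight $|\nu_u|_1$ on the jump term both come from the square-lattice geometry: horizontal and vertical edges contribute independently, so one really proves a one-dimensional slicing estimate in the $e_1$- and $e_2$-directions and then integrates. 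Concretely, by Fubini the energy splits as a sum over horizontal and vertical edge-families, each of which, after slicing, reduces to the one-dimensional statement that for a scalar discrete field with steps in $\theta_\e\ZZ$ the rescaled energy $\frac{1}{\e\theta_\e}\sum \e^2(1-\cos(\text{increment}))$ bounds below the total variation of the limiting $\SS^1$-valued function on the slice, including the geodesic jump cost. The one-dimensional lower bound is exactly where the computation in the Introduction's example gets promoted to a rigorous matching estimate.

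The \emph{$\Gamma$-limsup inequality} I would prove by the usual density-and-diagonalization scheme: first establish it for $u$ smooth (where one simply samples $u$ on $\Omega_\e$ and projects each value onto the nearest point of $\S_\e$, the projection error per edge being $O(\theta_\e^2)$, hence a total contribution $O(\e^2\theta_\e^2 \cdot \e^{-2}) = O(\theta_\e^2)\to 0$ relative to $\e\theta_\e$ — wait, more carefully, the main term reproduces $\int_\Omega|\nabla u|_{2,1}$ because $\frac{1}{\e\theta_\e}\e^2|u(\e i)-u(\e j)|^2 \approx \frac{\e}{\theta_\e}|\nabla u\cdot e_k|^2$ which is the wrong order) — so in fact for the recovery sequence one must mimic the transition layer of~\eqref{eq:transition for jump}: approximate $u$ first by $BV$ maps that are locally constant away from a polyhedral jump set, handle the jump by the explicit optimal-angle interpolation over a strip of width $\eta_\e \sim \geo(u^-,u^+)\e/\theta_\e$ oriented along coordinate axes (which produces precisely $\geo(u^-,u^+)|\nu_u|_1$ after accounting for the staircasing of an oblique interface), then recover the absolutely continuous and Cantor parts by a further approximation exploiting that $|\nabla u|_{2,1}$ and $|\DD^{(c)}u|_{2,1}$ are themselves obtainable as limits of jump-type energies (a relaxation/Reshetnyak argument). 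Diagonalizing over the successive approximations closes the inequality, and matching the constants with part~ii) gives the full $\Gamma$-convergence.

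The main obstacle I anticipate is the sharp one-dimensional lower bound with the correct geodesic constant on the jump set: one must rule out that the discrete field "cheats" by winding around $\SS^1$ the long way or by exploiting near-$\pi$ jumps, and simultaneously show the bound is attained, so that the localization/blow-up argument produces exactly $\geo(u^-,u^+)$ and not merely a comparable quantity. Controlling the $\geo = \pi$ edges (for even $N_\e$) and the interaction between the lifting's $2\pi\ZZ$-jumps and the genuine topological content of $u$ is the delicate point; this is presumably where the hypothesis $\e|\log\e|\ll\theta_\e$ enters, ensuring the energy budget $\e\theta_\e$ is too small to support any vortex (whose cost is $\sim\e^2|\log\e|$), so that no topological defect survives in the limit and the lifting can be chosen globally with controlled variation.
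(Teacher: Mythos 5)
Your overall architecture matches the paper's: the lower bound rests on the elementary estimate $|v-w|^2=4\sin^2(\tfrac12\geo(v,w))\geq 2(1-\sigma)\,\theta_\e\,\geo(v,w)$ for distinct $v,w\in\S_\e$ (valid uniformly, including antipodal pairs, so no separate truncation is needed), which converts $\frac{1}{\e\theta_\e}E_\e(u_\e)$ into $(1-\sigma)$ times the anisotropic total variation $\int_{J_{u_\e}}\geo(u_\e^-,u_\e^+)|\nu_{u_\e}|_1\,\d\H^1$ of the piecewise-constant interpolation; compactness is then plain $BV$-compactness for the $\SS^1$-valued interpolation, and the liminf follows from lower semicontinuity of the limit functional, which the paper proves by exactly the coordinate slicing you describe (your blow-up alternative would also work but is heavier). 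Your limsup plan --- piecewise-constant approximation on a coarse lattice, explicit transition layers of width $\sim\geo(u^-,u^+)\e/\theta_\e$ along coordinate staircases, diagonalization --- is the construction of the companion paper that this one invokes.

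The one genuine error is your reading of the hypothesis $\e|\log\e|\ll\theta_\e$. You claim it makes the energy budget ``too small to support any vortex, so that no topological defect survives and the lifting can be chosen globally with controlled variation.'' It is the other way around: in this regime $\e\theta_\e\gg\e^2|\log\e|$, so a bounded energy $C\e\theta_\e$ can pay for an unbounded number of vortices, and the discrete vorticity measures are \emph{not} compact --- the paper states this explicitly at the start of its Section 6 and for that reason avoids any argument requiring control of the vorticity. The hypothesis is used only in the limsup, where the recovery sequence near each singularity of the approximating map costs $2\pi|\log\e|\,\e/\theta_\e\to0$ at the scaling $\e\theta_\e$, so vortices are energetically invisible rather than forbidden. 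Consequently your proposed route to compactness via a global lifting of $u_\e$ with energy-controlled variation does not go through: near each vortex the lifting must accumulate an extra $2\pi$-jump line whose $BV$ cost is not controlled by $\sum\e\,\geo$. The fix is simply to drop the lifting and apply $BV$-compactness and lower semicontinuity directly to the $\SS^1$-valued interpolation, whose total variation \emph{is} controlled by the energy; with that change your argument is sound.
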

	
    \begin{figure}[H]
        \includegraphics{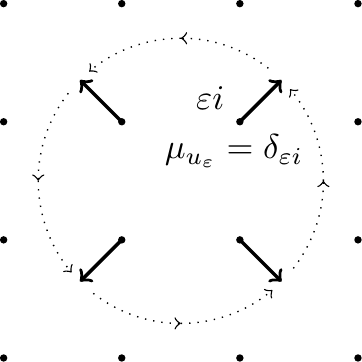}
			

			
			

			

			
		\caption{Example of discrete vorticity measure equal to a Dirac delta on the point $\e i \in \e \ZZ^2$. By following a closed path on the square of the lattice with the top-right corner in $\e i$,  the spin field covers the whole $\SS^1$. The discrete vorticity measure can only have weights in $\{-1,0,1\}$.}
		
		\label{fig:discrete vorticity}
	\end{figure}
	
	The previous theorem does not hold true if $\theta_\e \lesssim \e |\log \e|$, i.e., $\frac{\theta_\e}{\e |\log \e|} \to C \in [0,+\infty)$. In this regime, an additional object plays a role, namely the discrete vorticity measures $\mu_{u_\e}$ associated to the spin field $u_\e$ (see Figure~\ref{fig:discrete vorticity} and cf.~\eqref{eq:discrete vorticity measure} for the precise definition). By~\eqref{eq:def of E}, we have 
	\begin{equation} \label{eq:no vortices allowed}
		\frac{1}{\e^2 |\log \e|} XY_\e(u_\e) =  \frac{\e \theta_\e}{\e^2 |\log \e|} \frac{1}{\e \theta_\e}  E_\e(u_\e) \sim \frac{\theta_\e}{\e |\log \e|} \sim C \, .
	\end{equation}
	The bound $\frac{1}{\e^2 |\log \e|} XY_\e(u_\e) \leq C$ yields compactness for the discrete vorticity measure $\mu_{u_\e}$. More precisely, in~\cite{Ali-Cic} it is proven that $\mu_{u_\e} \flat \mu$ up to a subsequence in the flat convergence (i.e., in the norm of the dual of Lipschitz functions with compact support, see~\eqref{eq:defflatconv}), where $\mu = \sum_{h=1}^N d_h \delta_{x_h}$, $x_h \in \Omega$, $d_h \in \ZZ$, is a measure that represents the vortex-like singularities of the spin field $u_\e$ as $\e$ goes to zero. The limit of $\frac{1}{\e \theta_\e} E_\e(u_\e)$ is, in general, strictly greater than the anisotropic total variation in $BV(\Omega;\SS^1)$ obtained in Theorem~\ref{thm:e log smaller theta}, since $u_\e$ must satisfy the topological constraint $\mu_{u_\e} \flat \mu$. To describe the limit, we associate to $u_\e$ with $\frac{1}{\e \theta_\e} E_\e(u_\e) \leq C$ the current $G_{u_\e}$ given by the extended graph in $\Omega \x \SS^1$ of its piecewise constant interpolation, see Subsection~\ref{s.discretecurrents}. In Section~\ref{sec:compactness} we prove a compactness result for $G_{u_\e}$ to deduce that $G_{u_\e} \weak T$ in the sense of currents. In Proposition~\ref{prop:bd of Gu is mu} we show that $\de G_{u_\e} = - \mu_{u_\e} \x \llbracket \SS^1 \rrbracket$, where $\llbracket \SS^1 \rrbracket$ is the current given by the integration over $\SS^1$ oriented counterclockwise. Since $\mu_{u_\e} \flat \mu$, the limit $T \in \mathrm{Adm}(\mu,u;\Omega)$ of the currents~$G_{u_\e}$ satisfies, among other properties that characterize the class $\mathrm{Adm}(\mu,u;\Omega)$ given in~\eqref{eq:def of Adm}, $\de T = -\mu \x \llbracket \SS^1 \rrbracket$. For this reason, in general, the current $T$ is different from the graph $G_u$ of the limit map $u$, which may have a boundary different from $-\mu \x \llbracket \SS^1 \rrbracket$. Nevertheless, $T$ can be represented as 
	\begin{equation} \label{eqintro:structure}
		T = G_{u} + L \x \llbracket \SS^1 \rrbracket  \, ,
	\end{equation}
	where $L$ is an integer multiplicity $1$-rectifiable current, which keeps track of the possible concentration of  $|\DD u_\e|$ on $1$-dimensional sets. In~\cite{CicOrlRuf} we have proved that the energy concentration on vortex-like singularities and the $BV$-type concentration on 1-dimensional sets occur at two separate energy scalings if $\e \ll \theta_\e \ll \e |\log \e|$. In this paper, we investigate the critical regime $\theta_\e \sim \e |\log \e|$ and prove that the two concentration effects appear simultaneously and are coupled in the limit energy by  
	\begin{equation*}
		\begin{split}
			\mathcal{J}(\mu,u;\Omega) := \inf \bigg\{ \integral{J_T}{\ell_T(x) |\nu_T(x)|_1}{\d \H^1(x)} \ : & \ T \in \mathrm{Adm}(\mu,u;\Omega) \bigg\} \, .
		\end{split}
	\end{equation*}
	Here $J_T$ is the $1$-dimensional jump-concentration set of $T$ oriented by the normal $\nu_T$, accounting for both the jump set of $u$ and the support of the concentration part $L$ in the decomposition~\eqref{eqintro:structure}. At each point $x \in J_T$, the current $T$ has a vertical part, given by a curve in $\SS^1$ which connects the traces of $u$ on the two sides of $J_T$; $\ell_T(x)$ is its length.

	\begin{theorem}[Regime $\theta_\e \sim \e |\log \e|$] \label{thm:theta equal e log}
		Assume that $\theta_\e = \e |\log \e|$. Then the following results hold:
		\begin{itemize}
			\item[i)] (Compactness) Let $u_\e \colon \Omega_\e \to \S_\e$ be such that $\frac{1}{\e^2 |\log \e|} E_\e(u_\e)\leq C$. Then there exists a measure $\mu = \sum_{h=1}^{M} d_h \delta_{x_h}$, $x_h \in \Omega$, $d_h \in \ZZ$, such that (up to a subsequence) $\mu_{u_\e} \flat \mu$ and there exists a function $u \in BV(\Omega;\SS^1)$ such that (up to a subsequence) $u_\e \to u$ in $L^1(\Omega;\RR^2)$. 
			\item[ii)] ($\Gamma$-liminf inequality) Let $u_\e \colon \Omega_\e \to \S_\e$, let  $\mu = \sum_{h=1}^{M} d_h \delta_{x_h}$, $x_h \in \Omega$, $d_h \in \ZZ$, and let  $u \in BV(\Omega;\SS^1)$. Assume that $\mu_{u_\e} \flat \mu$ and $u_\e \to u$ in $L^1(\Omega;\RR^2)$. Then 
			\begin{equation*}
				\integral{\Omega}{|\nabla u|_{2,1} }{\d x} + |\DD^{(c)} u|_{2,1}(\Omega) + \mathcal{J}(\mu,u;\Omega) + 2\pi|\mu|(\Omega) \leq \liminf_{\e \to 0}  \frac{1}{\e^2 |\log \e|} E_\e(u_\e) \, .
			\end{equation*} 
			\item[iii)] ($\Gamma$-limsup inequality) Let  $\mu = \sum_{h=1}^{M} d_h \delta_{x_h}$, $x_h \in \Omega$, $d_h \in \ZZ$ and let $u \in BV(\Omega;\SS^1)$. Then there exists a sequence $u_\e \colon \Omega_\e \to \S_\e$ such that $\mu_{u_\e} \flat \mu$, $u_\e \to u$ in $L^1(\Omega;\RR^2)$, and
			\begin{equation*}
				\limsup_{\e \to 0} \frac{1}{\e^2 |\log \e|} E_\e(u_\e)  \leq \integral{\Omega}{|\nabla u|_{2,1} }{\d x} + |\DD^{(c)} u|_{2,1}(\Omega) + \mathcal{J}(\mu,u;\Omega) + 2\pi|\mu|(\Omega) \, .
			\end{equation*}  
		\end{itemize}
	\end{theorem}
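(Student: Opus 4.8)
The plan is to establish the three items separately, leaning on the Cartesian-current machinery developed for the subcritical regime in~\cite{CicOrlRuf} and on the compactness and structure results of Section~\ref{sec:compactness}. As $\theta_\e=\e|\log\e|$ one has $\e^2|\log\e|=\e\theta_\e$, so the hypothesis $\frac{1}{\e^2|\log\e|}E_\e(u_\e)\leq C$ gives both $\frac{1}{\e\theta_\e}XY_\e(u_\e)\leq C$ and $\frac{1}{\e^2|\log\e|}XY_\e(u_\e)\leq C$; by~\eqref{eq:no vortices allowed} the latter is exactly the bound under which the compactness of discrete vorticities of~\cite{Ali-Cic} yields a subsequence with $\mu_{u_\e}\flat\mu$, $\mu=\sum_h d_h\delta_{x_h}$, $d_h\in\ZZ$. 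For the $L^1$-compactness of $u_\e$ I would use that for $\S_\e$-valued spins the geodesic distance of nearest-neighbour values lies in $\theta_\e\ZZ$, which yields the elementary estimate $|u_\e(\e i)-u_\e(\e j)|^2\geq c\,\theta_\e\,\geo(u_\e(\e i),u_\e(\e j))$ for $\e$ small, with $c>0$ absolute; summing over bonds bounds the discrete $\SS^1$-total variation $\sum_\nn\e\,\geo(u_\e(\e i),u_\e(\e j))$ by $\tfrac{C}{\e\theta_\e}XY_\e(u_\e)$, and a standard discrete compactness argument gives $u_\e\to u$ in $L^1(\Omega;\RR^2)$ with $u\in BV(\Omega;\SS^1)$, the constraint $|u|=1$ passing to the limit a.e.

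For the $\Gamma$-liminf inequality I may assume the left side finite, pass to a subsequence realising the $\liminf$ with uniformly bounded energy (so that $\mu_{u_\e}\flat\mu$ and $u_\e\to u$), lift $u_\e$ to the current $G_{u_\e}$ of the extended graph of its piecewise-constant interpolation (Subsection~\ref{s.discretecurrents}) and, by the compactness of Section~\ref{sec:compactness} together with $\de G_{u_\e}=-\mu_{u_\e}\x\llbracket\SS^1\rrbracket$ (Proposition~\ref{prop:bd of Gu is mu}), extract a further subsequence with $G_{u_\e}\weak T$, where $T\in\mathrm{Adm}(\mu,u;\Omega)$ and $T=G_u+L\x\llbracket\SS^1\rrbracket$ as in~\eqref{eqintro:structure}. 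The crux is the two-scale lower bound
\begin{equation*}
\liminf_{\e\to0}\frac{1}{\e^2|\log\e|}E_\e(u_\e)\ \geq\ \integral{\Omega}{|\nabla u|_{2,1}}{\d x}+|\DD^{(c)}u|_{2,1}(\Omega)+\integral{J_T}{\ell_T|\nu_T|_1}{\d\H^1}+2\pi|\mu|(\Omega),
\end{equation*}
from which the claim follows because $\mathcal{J}(\mu,u;\Omega)\leq\int_{J_T}\ell_T|\nu_T|_1\,\d\H^1$ for admissible $T$. To prove it I fix small $\rho>0$ and pairwise disjoint balls $B_\rho(x_h)\compact\Omega$, discard the bonds straddling $\bigcup_h\de B_\rho(x_h)$, and bound separately the energy on $\Omega\sm\bigcup_h\ol{B_\rho(x_h)}$ and on each $B_\rho(x_h)$. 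Away from the balls no topological obstruction survives: the inequality $|u_\e(\e i)-u_\e(\e j)|^2\geq\theta_\e\,\geo(u_\e(\e i),u_\e(\e j))(1-o(1))$ turns the discrete energy there into a lower bound for the anisotropic mass of $G_{u_\e}$, and lower semicontinuity of mass under $G_{u_\e}\weak T$ yields $\int|\nabla u|_{2,1}+|\DD^{(c)}u|_{2,1}+\int\ell_T|\nu_T|_1\,\d\H^1$ localised on $\Omega\sm\bigcup_h\ol{B_\rho(x_h)}$ -- the anisotropy $|\cdot|_{2,1}$, $|\nu_T|_1$ coming from the square geometry exactly as in the Introduction's example, and $\ell_T$ recording on $\supp L$ the full $\SS^1$-length $2\pi$ that the concentrating part of $|\DD u_\e|$ must sweep. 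On each $B_\rho(x_h)$ I would run a Sandier--Jerrard ball construction for the $XY$-type energy around $x_h$, where $\deg(u_\e,\de B_\rho(x_h))=d_h$ eventually; since a degree-$d$ singularity is cheaper to split into $|d|$ elementary ones, each of cost $2\pi$ at the scale $\e^2|\log\e|$, this gives $\liminf_\e\frac{1}{\e^2|\log\e|}E_\e(u_\e;B_\rho(x_h))\geq 2\pi|d_h|$. Adding the two bounds and letting $\rho\to0$ -- using $\H^1(J_T\cap B_\rho(x_h))\to0$ and that the portion of $L$ inside $B_\rho(x_h)$ carries vanishing mass -- yields the displayed inequality. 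The genuine difficulty lies precisely here: in this critical regime the vortex-core energy $2\pi|\mu|(\Omega)$ and the jump-concentration energy $\int_{J_T}\ell_T|\nu_T|_1\,\d\H^1$, which in~\cite{CicOrlRuf} sit at separated scales, are both of order $\e^2|\log\e|$, so one cannot merely add two independent global estimates; the localisation must extract the sharp constant $2\pi$ from the cores and the sharp sweep length $\ell_T$ from the jump-concentration set at the same time, without double counting along the branch cuts $\supp L$ that necessarily join the singularities.

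For the $\Gamma$-limsup inequality, by density and diagonalisation it suffices to build a recovery sequence when $u$ is Lipschitz off a polyhedral jump set with $\DD^{(c)}u=0$ and there is an almost optimal $T=G_u+L\x\llbracket\SS^1\rrbracket\in\mathrm{Adm}(\mu,u;\Omega)$ with $L$ polyhedral and $\int_{J_T}\ell_T|\nu_T|_1\,\d\H^1\leq\mathcal{J}(\mu,u;\Omega)+\delta$; the general case follows by approximating $u$ strictly in $BV$ and $T$ in mass. For such data I would glue three pieces: away from $J_T$ and from small balls $B_\rho(x_h)$, sample $u$ on $\e\ZZ^2$ and project onto the nearest point of $\S_\e$, at cost $\int_\Omega|\nabla u|_{2,1}+o(1)$; across each facet of $J_T$ insert the one-dimensional geodesic transition~\eqref{eq:transition for jump} on a strip of width $\sim\e/\theta_\e$ oriented by $\nu_T$, sweeping in $\SS^1$ an arc of length $\ell_T$ (equal to $\geo(u^-,u^+)$ on $J_u\sm\supp L$, to a multiple of $2\pi$ on $\supp L$), at cost $\int_{J_T}\ell_T|\nu_T|_1\,\d\H^1+o(1)$, the factor $|\nu_T|_1$ again dictated by the lattice anisotropy; and in each $B_\rho(x_h)$ put a discretised degree-$d_h$ vortex, realised as $|d_h|$ well-separated elementary vortices with the standard profile outside a core of radius $\sim\e$, of total cost tending to $2\pi|d_h|$, with transition-layer and far-field remainders vanishing as $\e\to0$ and then $\rho\to0$. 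Letting $\e\to0$, then $\rho\to0$ and $\delta\to0$ gives the upper bound.
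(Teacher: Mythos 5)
Your treatment of {\em i)} and {\em ii)} follows essentially the paper's route. Compactness is obtained exactly as in Proposition~\ref{prop:current compactness}: the single energy bound yields both the vorticity compactness of Proposition~\ref{prop:XY classical} and, via the pointwise inequality of Lemma~\ref{lemma:bound with parametric integral}, local $BV$ bounds on $u_\e$. For the liminf, your splitting into small balls around the singularities (lower bound $2\pi|d_h|$ from the localized $XY$ estimate) plus the parametric integral on the complement (lower semicontinuous under $G_{u_\e}\weak T$, and computed in terms of $u$ and $\ell_T$ by Lemma~\ref{lemma:parametric in terms of u}) is precisely~\eqref{eq:split}; your worry about ``double counting along the branch cuts'' does not materialize because the two estimates live on disjoint regions, and the conclusion via $\mathcal{J}(\mu,u;\Omega)\leq\int_{J_T}\ell_T|\nu_T|_1\,\d\H^1$ for the particular limit $T$ is the paper's final step.

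For {\em iii)} there are two issues. First, your reduction ``by density'' to an almost optimal $T=G_u+L\x\llbracket\SS^1\rrbracket$ with $L$ polyhedral and polyhedral $J_u$ is not an off-the-shelf fact: one must approximate an admissible current while preserving the boundary condition $\de T=-\mu\x\llbracket\SS^1\rrbracket$ and without increasing $\int_{J_T}\ell_T|\nu_T|_1\,\d\H^1$, and the jump-concentration set of a general $T\in\cart(\Omega_\mu\x\SS^1)$ is only countably rectifiable. The paper circumvents this entirely by extending $T$ across $\de\Omega$ (Lemma~\ref{lemma:extension of currents}) and invoking the Approximation Theorem~\ref{thm:approximation}, which replaces $T$ by graphs of smooth $\SS^1$-valued maps with $\int|\nabla u_k|_{2,1}\to\int\Phi(\vec T)\,\d|T|$; the whole jump-concentration energy is thereby absorbed into the gradient of a smooth map with finitely many point singularities, and no explicit surface transition for $\mathcal{J}$ is ever built. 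Your explicit geodesic strips across facets of $J_T$ could in principle replace this, but you would need to supply the polyhedral density statement.

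Second, and more seriously, you never verify that your recovery sequence satisfies $\mu_{u_\e}\flat\mu$, which is part of the statement and is exactly what the paper singles out as the new difficulty in the critical regime (Step~10 of its proof). Since the surface transitions and the vortex cores both cost order $\e^2|\log\e|$, the energy bound alone does not exclude that spurious dipoles created in the transition layers survive in the flat limit. The paper resolves this by a bootstrap: it applies the already proven lower bound of part {\em ii)} to the constructed sequence to force $|\mu_0|(\Omega)\leq|\mu|(\Omega)$ for any flat limit $\mu_0$, and separately identifies $\mu_0\mres B_{\eta/16}(x_h)=d_h\delta_{x_h}$ through the Jacobians of the piecewise affine interpolations (\cite[Proposition~5.2, Lemma~3.1]{Ali-Cic-Pon}), whence $\mu_0=\mu$. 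Without an argument of this type your construction establishes the energy inequality but not the convergence of the vorticities, so the proof of {\em iii)} is incomplete as written.
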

	
	Theorem~\ref{thm:e log smaller theta}, Theorem~\ref{thm:theta equal e log}, and the results proven in~\cite{CicOrlRuf} lead, in particular, to the conclusion that for $\e \ll \theta_\e$ the $N_\e$-clock model does not share the same asymptotic behavior of the $XY$ model. For the latter model, the following asymptotic expansion is known to hold true in the sense of $\Gamma$-convergence (see~\cite[Section~4]{Ali-DL-Gar-Pon} and also \cite{Bet-Bre-Hel, San-Ser-book, Alb-Bal-Orl, Ali-Pon} for the Ginzburg-Landau model) 
	\begin{equation} \label{eqintro:development of XY}
		\frac{1}{\e^2} XY_\e(u_\e) \sim 2 \pi M |\log \e| + \WW(\mu) + M \gamma   \, ,
	\end{equation}
	where $\WW$ is a Coulomb-type interaction potential referred to as renormalized energy, while~$\gamma$ is the core energy carried by each vortex, cf.~\eqref{eq:renormalized energy as limit} and \eqref{eq:def of core} for the precise definitions. The next theorem shows that the $E_\e$ energy has the same asymptotic expansion if $\theta_\e \ll \e$, finally providing a precise range for $\theta_\e$ for which the $N_\e$-clock model approximates the $XY$ model.

	\begin{theorem}[Regime $\theta_\e \ll \e$] \label{thm:theta smaller eps}
		Assume that $\theta_\e \ll \e$. Then the following results hold:
		\begin{enumerate}
			\item[i)] (Compactness) Let $M \in \NN$ and assume that  $u_\e \colon \Omega_\e \to \S_\e$  satisfies $\frac{1}{\e^2} E_\e(u_\e) - 2 \pi M |\log \e| \leq C$. Then there exists a measure $\mu = \sum_{h=1}^N d_h \delta_{x_h}$, $x_h \in \Omega$, $d_h \in \ZZ$, with $|\mu|(\Omega) \leq M$ such that (up to a subsequence) $\mu_{u_\e} \flat \mu$. Moreover, if $|\mu|(\Omega) = M$, then $|d_h|=1$.
			\item[ii)] ($\Gamma$-liminf inequality) Let $\mu = \sum_{h=1}^M d_h \delta_{x_h}$, $x_h \in \Omega$, $|d_h| = 1$,  and let $u_\e \colon \Omega_\e \to \S_\e$ be such that $\mu_{u_\e} \flat \mu$. Then
			\begin{equation*}
				\WW(\mu) + M \gamma \leq \liminf_{\e \to 0} \Big( \frac{1}{\e^2} E_\e(u_\e) - 2 \pi M |\log \e| \Big)  \, .
			\end{equation*}
			\item[iii)] ($\Gamma$-limsup inequality) Let $\mu = \sum_{h=1}^M d_h \delta_{x_h}$, $x_h \in \Omega$, $|d_h| = 1$. Then there exists $u_\e \colon \Omega_\e \to \S_\e$ such that $\mu_{u_\e} \flat \mu$ and
		\end{enumerate}
		\begin{equation*}
			\limsup_{\e \to 0} \Big( \frac{1}{\e^2} E_\e(u_\e) - 2 \pi M |\log \e| \Big)\leq \WW(\mu) + M \gamma   \, .
		\end{equation*}
	\end{theorem}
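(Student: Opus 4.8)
The plan is to transfer the known $\Gamma$-expansion of the $XY$ model in~\eqref{eqintro:development of XY} to the constrained energy $E_\e$ by exploiting that, when $\theta_\e \ll \e$, the quantization constraint $u_\e \colon \Omega_\e \to \S_\e$ costs asymptotically nothing at the scale $\e^2 |\log \e|$: a fixed number of vortices already forces an energy of order $\e^2|\log\e| \gg \e^2 \theta_\e$, whereas the ``projection error'' one pays to round an $\SS^1$-valued field onto $\S_\e$ is controlled by quantities of order at most $\e^2\theta_\e^2 \cdot (\text{number of bonds}) \lesssim \theta_\e^2 \ll \e^2$ on the bulk and $\theta_\e$ per unit length on interfaces, hence negligible. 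Concretely I would argue as follows.

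First, for compactness and the $\Gamma$-liminf inequality, observe that $E_\e(u_\e) = XY_\e(u_\e)$ whenever $u_\e$ is $\S_\e$-valued, so the hypothesis $\frac{1}{\e^2}E_\e(u_\e) - 2\pi M|\log\e| \le C$ is literally the hypothesis of the $XY$ compactness/lower-bound theorem of~\cite{Ali-DL-Gar-Pon}. Thus part i) and the inequality in part ii) follow \emph{immediately} from the corresponding $XY$ statements, since $\frac{1}{\e^2} E_\e(u_\e) = \frac{1}{\e^2} XY_\e(u_\e)$ for the relevant sequences; there is genuinely nothing to prove here beyond citing~\cite{Ali-DL-Gar-Pon} (and recalling the definition of $\mu_{u_\e}$, $\WW$, $\gamma$). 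The only mild point to check is that the discrete vorticity $\mu_{u_\e}$ and the flat convergence are defined in the same way here as there, which is routine.

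The substance of the theorem is the $\Gamma$-limsup inequality iii), i.e.\ producing $\S_\e$-valued recovery sequences. Here I would start from an $XY$ recovery sequence $w_\e \colon \Omega_\e \to \SS^1$ for the prescribed $\mu = \sum_{h=1}^M d_h\delta_{x_h}$ with $|d_h|=1$, as constructed in~\cite{Ali-DL-Gar-Pon} (concentric profiles $\frac{x-x_h}{|x-x_h|}$ near each $x_h$, the renormalized-energy harmonic-conjugate optimal profile away from the singularities, a fixed core profile on small balls $B_{r}(x_h)$ realizing $\gamma$), so that $\limsup_\e (\frac{1}{\e^2}XY_\e(w_\e) - 2\pi M|\log\e|) \le \WW(\mu) + M\gamma$ and $\mu_{w_\e}\flat\mu$. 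Then define $u_\e(\e i)$ to be (one of) the nearest point(s) of $\S_\e$ to $w_\e(\e i)$; equivalently, write $w_\e(\e i) = \exp(\iota\psi_\e(\e i))$ for a locally-defined phase and round $\psi_\e$ to the nearest multiple of $\theta_\e$. One must check two things: (a) $\mu_{u_\e}\flat\mu$ still holds, and (b) the energy is not increased in the limit. For (a): the rounding changes each phase by at most $\theta_\e/2$, hence changes each nearest-neighbor phase difference by at most $\theta_\e$; since on $\Omega\setminus\bigcup_h B_r(x_h)$ the original phase differences of $w_\e$ are $O(\e)$ and $\theta_\e\ll\e$, the winding number computed on each unit square is unchanged there, so $\mu_{u_\e} = \mu_{w_\e}$ away from the cores and $\mu_{u_\e}\flat\mu$ follows (the total degree near each $x_h$ is preserved because the flat limit only sees the net degree, and no degree can escape through the annulus where vorticity is frozen). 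For (b): bond by bond, $|u_\e(\e i)-u_\e(\e j)|^2 \le (|w_\e(\e i)-w_\e(\e j)| + \theta_\e)^2 \le |w_\e(\e i)-w_\e(\e j)|^2 + 2\theta_\e|w_\e(\e i)-w_\e(\e j)| + \theta_\e^2$; summing with the $\e^2$ weight, the cross term is bounded by $C\theta_\e (\sum_\nn \e^2 |w_\e(\e i)-w_\e(\e j)|^2)^{1/2}(\#\text{bonds})^{1/2}\e \lesssim \theta_\e (|\log\e|)^{1/2}\cdot 1 = o(1)$ (using $XY_\e(w_\e)\sim \e^2|\log\e|$ and $\#\text{bonds}\sim \e^{-2}$, so $(\#\text{bonds})^{1/2}\e \sim 1$), and the error term is $\lesssim \theta_\e^2 \e^{-2} = o(1)$; hence $\frac{1}{\e^2}XY_\e(u_\e) \le \frac{1}{\e^2}XY_\e(w_\e) + o(1)$, which gives iii).

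The main obstacle I anticipate is part (a) of step iii): controlling the discrete vorticity of the rounded field, in particular making sure that rounding near the vortex cores $B_r(x_h)$ — where the phase \emph{does} jump by $O(1)$ across a bond and the crude ``differences change by $\le\theta_\e$'' estimate does not prevent the winding number from changing — does not create or destroy net degree. The clean way around this is to \emph{not} round inside $B_r(x_h)$ but instead to choose, already in the construction of $w_\e$, a core profile that is \emph{itself $\S_\e$-valued} on $B_r(x_h)$: since $\theta_\e\to 0$ one can discretize the standard degree-$\pm1$ vortex profile onto $\S_\e$ while changing its core energy by only $o(1)$ as $\e\to 0$ (for fixed $r$), and on the annulus $B_r(x_h)\setminus B_{r/2}(x_h)$ one matches it to the rounded bulk field paying a negligible correction. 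With the core handled separately, the rounding argument on $\Omega\setminus\bigcup_h B_{r/2}(x_h)$ is exactly the benign $\theta_\e\ll\e$ situation above and the degree is manifestly preserved. A secondary technical point, the measurable selection of a nearest-point-of-$\S_\e$ map, is immaterial since $\Omega_\e$ is finite for each $\e$.
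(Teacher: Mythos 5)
Your treatment of parts \emph{i)} and \emph{ii)} coincides with the paper's: since $E_\e=XY_\e$ on $\S_\e$-valued fields, both follow directly from \cite[Theorem~4.2]{Ali-DL-Gar-Pon} (modulo the routine comparison of the two vorticity conventions, Lemma~\ref{lemma:vorticityequiv}), and your overall strategy for \emph{iii)} --- project a good $\SS^1$-valued recovery sequence onto $\S_\e$ --- is also the one the paper follows. However, your step (b), the energy estimate for the projection error, contains a genuine gap. The cross term is
\begin{equation*}
\sum_{\nn}\theta_\e\,|w_\e(\e i)-w_\e(\e j)| \;\le\; \theta_\e\,\big(\#\{\text{bonds}\}\big)^{1/2}\Big(\sum_{\nn}|w_\e(\e i)-w_\e(\e j)|^2\Big)^{1/2}\;\sim\;\frac{\theta_\e}{\e}\,|\log\e|^{1/2}\,,
\end{equation*}
not $\theta_\e|\log\e|^{1/2}$ as you claim: you have misplaced a factor $\e^{-1}$ when converting between $\sum_\nn|\cdot|^2$ and $\e^{-2}XY_\e$. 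Under the sole hypothesis $\theta_\e\ll\e$ the quantity $\tfrac{\theta_\e}{\e}|\log\e|^{1/2}$ need not vanish (take $\theta_\e=\e|\log\e|^{-1/4}$), so a single global Cauchy--Schwarz does not close the argument. Your fallback of handling the core on a ball of \emph{fixed} radius $r$ runs into the same obstruction, because the core minimizer itself carries energy $\sim 2\pi|\log\e|$ spread over $\sim r^2\e^{-2}$ bonds, giving again a cross term of order $r\,\tfrac{\theta_\e}{\e}|\log\e|^{1/2}$.

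The paper resolves this by splitting the cross term over three regions and exploiting the explicit structure of the recovery field rather than only its energy. It takes an $\e$-dependent core radius $r_\e=|\log\e|^{-1/2}$, so that Cauchy--Schwarz on $B_{r_\e}(x_h)$ yields $C\,r_\e\tfrac{\theta_\e}{\e}|\log\e|^{1/2}=C\tfrac{\theta_\e}{\e}\to 0$ (see~\eqref{eq:2211182222}); on the perforated domain $\Omega_\mu^{\eta'}$ the energy is $O(1)$ and Cauchy--Schwarz gives $C\tfrac{\theta_\e}{\e}\to0$; and on the intermediate annulus $B_{\eta'}(x_h)\sm B_{r_\e}(x_h)$, where Cauchy--Schwarz is too rough but the field is the explicit profile $\alpha_h\odot\big(\tfrac{x-x_h}{|x-x_h|}\big)^{d_h}$, each bond difference is $\lesssim \e/|x-x_h|$, so the cross term is controlled by $C\tfrac{\theta_\e}{\e}\int \tfrac{1}{|x|}\d x\le C\tfrac{\theta_\e}{\e}\eta'$, an $L^1$-gradient bound that is \emph{not} a consequence of the energy bound alone (see~\eqref{eq:2711181209}). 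This region-dependent treatment, together with the shrinking core radius and the $\e$-dependent core energy Lemmas~\ref{lemma:core energy eps dependent} and~\ref{lemma:nonzerodcore}, is the missing ingredient in your proposal. Your identification of $\mu_{u_\e}\flat\mu$ is only sketched, but the idea is sound; the paper's rigorous version compares Jacobians of piecewise affine interpolations via \cite[Lemma~3.1 and Proposition~5.2]{Ali-Cic-Pon} and rests on the estimate $\theta_\e|\log\e|^{1/2}\to0$, which does hold in this regime.
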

	
	We summarize all the results obtained in this paper, in~\cite{CicOrlRuf}, and in~\cite{CicOrlRuf3} in Table~\ref{table:energy behavior}.
	
	\begin{table}[H]
		\renewcommand{\arraystretch}{2}
		\scalebox{0.9}{
			\begin{tabular}{ | c | c | c | c |  }
				\hline  
				Regime & Energy bound & Limit of $\mu_{u_\e}$  & Energy behavior \\
				\hline 
				\hline
				$\theta_\e$ finite & $\frac{1}{\e} E_\e \leq C$ & not relevant & interfaces \\
				\hline
				$\e |\log \e| \ll \theta_\e$ & $\frac{1}{\e \theta_\e} E_\e \leq C$  & not relevant & $BV$ \\
				\hline
				&  & &  vortices    \\[-1.3em] 
				$\theta_\e \sim \e |\log \e|$ & $\frac{1}{\e \theta_\e} E_\e \leq C$ & $\mu_{u_\e} \flat \mu$ & + \\[-1.3em]
				& & & $BV$ + concentration \\
				\hline 
				&  & &  vortices    \\[-1.3em] 
				$\e \ll \theta_\e \ll \e |\log \e|$ & $\frac{1}{\e \theta_\e} E_\e  - 2\pi M |\log \e| \frac{\e}{\theta_\e}\leq C$ & $\mu_{u_\e} \flat \mu$ & + \\[-1.3em]
				& & & $BV$ + concentration \\
				\hline
				$\theta_\e \ll \e$ & $\frac{1}{\e^2} E_\e  - 2\pi M |\log \e| \leq C$  & $\mu_{u_\e} \flat \mu$  & $XY$ \\
				\hline
			\end{tabular}
		}
		\renewcommand{\arraystretch}{1}
		\caption{In this table we summarize our results. By ``interfaces'' we mean that the energy concentrates on $1$-dimensional domain walls that separate the different phases~\cite{CicOrlRuf3}, while ``$BV$'' denotes a $BV$-type total variation, Theorem~\ref{thm:e log smaller theta}. The expression ``$BV$+concentration'' indicates the presence in a $BV$-type energy of a surface term of the form $\mathcal{J}(\mu,u;\Omega)$ which accounts for concentration effects on 1-dimensional surfaces, as in~\cite{CicOrlRuf} and Theorem~\ref{thm:theta equal e log}. By ``vortices'' we mean that a logarithmic energy is carried by the system for the creation of vortex-like singularities in the limit. Finally, ``$XY$'' expresses the fact that the energy is a good approximation (at first order) of the classical $XY$ model, Theorem~\ref{thm:theta smaller eps}.}
		\label{table:energy behavior}
	\end{table}
	
	It is worth mentioning that the paper also contains an extension result for Cartesian currents, Lemma~\ref{lemma:extension of currents}, that we reckon to be of independent interest. 

\section{Notation and preliminaries}
In order to avoid confusion with lattice points we denote the imaginary unit by $\iota$. We shall often identify $\RR^2$ with the complex plane~$\CC$. Given a vector $a = (a_1,a_2) \in \RR^2$, its $1$-norm is $|a|_1 = |a_1|+|a_2|$. We define the $(2,1)$-norm of a matrix $A = (a_{ij}) \in \RR^{2 \times 2}$ by
\begin{equation*}
|A|_{2,1} := \big(a^2_{11} + a^2_{21}\big)^{1/2} + \big(a^2_{12} + a^2_{22}\big)^{1/2}.
\end{equation*}
If $u, v \in \SS^1$, their geodesic distance on $\SS^1$ is denoted by $\geo(u,v)$. It is given by the angle in $[0,\pi]$ between the vectors $u$ and $v$, i.e., $\geo(u,v) = \arccos(u\cdot v)$. Observe that
\begin{equation} \label{eq:geo and eucl}
\tfrac{1}{2}|u - v| = \sin\big( \tfrac{1}{2}\geo(u,v) \big)  \quad\text{and}\quad|u - v| \leq \geo(u,v) \leq \frac{\pi}{2}|u - v|.
\end{equation}
Given two sequences $\alpha_{\e}$ and $\beta_{\e}$, we write $\alpha_{\e}\ll \beta_{\e}$ if $\lim_{\e \to 0}\tfrac{\alpha_{\e}}{\beta_{\e}}=0$ and $\alpha_\e \sim \beta_\e$ if $\lim_{\e \to 0}\tfrac{\alpha_{\e}}{\beta_{\e}} \in (0,+\infty)$. We will use the notation $\deg(u)(x_0)$ to denote the topological degree of a continuous map $u \in C(B_\rho(x_0) \sm \{x_0\}; \SS^1)$, i.e., the topological degree of its restriction $u|_{\de B_r(x_0)}$, independent of $r < \rho$.

We denote by $I_\lambda(x)$ the half-open squares given by 
\begin{equation} \label{eq:half-open cubes}
	I_{\lambda}(x) = x + [0,\lambda)^2.
\end{equation}

\subsection{BV-functions}\label{subsec:BV}
In this section we recall basic facts about functions of bounded variation. For more details we refer to the monograph \cite{Amb-Fus-Pal}.

Let $O\subset\RR^d$ be an open set. A function $u\in L^1(O;\RR^n)$ is a function of bounded variation if its distributional derivative $\DD u$ is given by a finite matrix-valued Radon measure on $O$. We write $u\in BV(O;\RR^n)$.	
\\
The space $BV_{{\rm loc}}(O;\mathbb{R}^n)$ is defined as usual. The space $BV(O;\RR^n)$ becomes a Banach space when endowed with the norm $\|u\|_{BV(O)}=\|u\|_{L^1(O)}+|\DD u|(O)$, where $|\DD u|$ denotes the total variation measure of $\DD u$. The total variation with respect to the anisotropic norm $| \cdot |_{2,1}$ is denoted by $|\DD u|_{2,1}$. When $O$ is a bounded Lipschitz domain, then $BV(O;\RR^n)$ is compactly embedded in $L^1(O;\RR^n)$. We say that a sequence $u_n$ converges weakly$^*$ in $BV(O;\RR^n)$ to $u$ if $u_n\to u$ in $L^1(O;\RR^n)$ and $\DD u_n\overset{*}{\rightharpoonup}\DD u$ in the sense of measures.

Given $x\in O$ and $\nu\in \SS^{d-1}$ we set
\begin{equation*}
B^{\pm}_{\rho}(x,\nu)=\{y\in B_{\rho}(x):\;\pm (y-x) \cdot \nu >0\}\,.
\end{equation*}
We say that $x\in O$ is an approximate jump point of $u$ if there exist $a\neq b\in\mathbb{R}^n$ and $\nu\in \SS^{d-1}$ such that
\begin{equation*}
\lim_{\rho\to 0}\frac{1}{\rho^d}\int_{B_{\rho}^+(x,\nu)}|u(y)-a|\,\mathrm{d}y=\lim_{\rho\to 0}\frac{1}{\rho^d}\int_{B^-_{\rho}(x,\nu)}|u(y)-b|\,\mathrm{d}y=0 \, .
\end{equation*}
The triplet $(a,b,\nu)$ is determined uniquely up to the change to $(b,a,-\nu)$. We denote it by $(u^+(x),u^-(x),\nu_u(x))$ and let $J_u$ be the set of approximate jump points of $u$. The triplet $(u^+,u^-,\nu_u)$ can be chosen as a Borel function on the Borel set $J_u$. Denoting by $\nabla u$ the approximate gradient of $u$, we can decompose the measure $\DD u$ as
\begin{equation*}
\DD u(B)=\int_B\nabla u\,\mathrm{d}x+\int_{J_u\cap B}(u^+(x)-u^-(x))\otimes\nu_u(x)\,\mathrm{d}\mathcal{H}^{d-1}+\DD^{(c)}u(B) \, ,
\end{equation*}
where $\DD^{(c)}u$ is the so-called Cantor part and $\DD^{(j)}u = (u^+ - u^- )\otimes\nu_u \mathcal{H}^{d-1} \mres J_u$ is the so-called jump part.
 
\subsection{Results for the classical $XY$ model}
We recall here some results about the classical~$XY$ model, namely when the spin field $u_\e \colon \Omega_\e \to \SS^1$ is not constrained to take values in a discrete set.

Following~\cite{Ali-Cic-Pon}, in order to define the discrete vorticity of the spin variable, it is convenient to introduce the projection $Q \colon \RR \to 2 \pi \ZZ$ defined by
\begin{equation} \label{eq:def of projection}
Q(t) := \argmin \{|t - s| \ : \ s \in 2 \pi \ZZ \} \, ,
\end{equation}
with the convention that, if the argmin is not unique, then we choose the one with minimal modulus. Then for every $t \in \RR$ we define (see Figure~\ref{fig:graph of Psi})
\begin{equation} \label{eq:def of Psi}
\Psi(t) := t - Q(t)  \in [-\pi, \pi] \, .
\end{equation}

\begin{figure}[H]
    \includegraphics{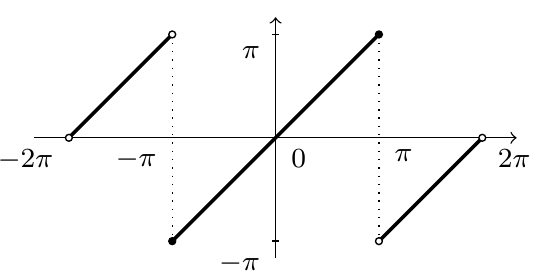}
	
	
	\caption{Graph of the function $\Psi$ for $t \in (-2 \pi, 2 \pi)$. Observe that $\Psi$ is an odd function.}
	\label{fig:graph of Psi}	
\end{figure}
Let $u  \colon \e \ZZ^2 \to \SS^1$  and let $\varphi \colon \e\ZZ^2 \to [0, 2\pi)$ be the phase of $u$ defined by the relation $u = \exp(\iota \varphi)$.  The discrete vorticity of $u$ is defined for every $\e i \in \e \ZZ^2$ by 
\begin{equation} \label{eq:discrete vorticity}
\begin{split}
d_u(\e i) := & \frac{1}{2\pi} \Big[ \Psi\big(\varphi(\e i + \e e_1) - \varphi(\e i) \big) + \Psi\big(\varphi(\e i + \e e_1 + \e e_2) - \varphi(\e i + \e e_1) \big)  \\
& \quad + \Psi\big(\varphi(\e i + \e e_2) - \varphi(\e i + \e e_1 + \e e_2) \big) + \Psi\big(\varphi(\e i) - \varphi(\e i + \e e_2) \big)  \Big] \, . 
\end{split}
\end{equation}
As already noted in~\cite{Ali-Cic-Pon}, the discrete vorticity $d_u$ only takes values in $\{-1,0,1\}$, i.e., only singular vortices can be present in the discrete setting. We introduce the discrete measure representing all vortices of the discrete spin field defined by
\begin{equation} \label{eq:discrete vorticity measure}
\mu_u := \sum_{\e i \in \e \ZZ^2} d_u(\e i) \delta_{\e i + (\e,\e)} \, . 
\end{equation} 

\begin{remark}\label{eq:shiftedvorticity}
In \cite{Ali-Cic-Pon, Ali-DL-Gar-Pon} the vorticity measure $\mathring\mu_u$ is supported in the centers of the squares completely contained in $\Omega$, i.e., 
\begin{equation*}
\mathring\mu_{u}=\sum_{\substack{\e i\in \e\ZZ^2\\ \e i+[0,\e]^2\subset\Omega}}d_u(\e i)\delta_{\e i + 1/2(\e,\e)}\,.
\end{equation*}
In this paper we prefer definition \eqref{eq:discrete vorticity measure} since it fits well with our definition of discrete currents in Section \ref{s.discretecurrents} on the whole set $\Omega$. However, as we will borrow some results from \cite{Ali-Cic-Pon,Ali-DL-Gar-Pon}, we have to ensure that these definitions are asymptotically equivalent with respect to the flat convergence defined below. 
\end{remark}
\begin{definition}[Flat convergence] Let $O \subset \RR^2$ be an open set. A sequence of measures $\mu_j \in \M_b(O)$ converges {\em flat} to $\mu \in \M_b(O)$, denoted by $\mu_j \flat \mu$, if 
\begin{equation}\label{eq:defflatconv}
\sup_{\substack{\psi\in C^{0,1}_c(O)\\ \|\psi\|_{C^{0,1}}\leq 1}} \left|\integral{O}{ \psi}{ \d \mu_j}-\integral{O}{ \psi}{ \d \mu}\right| \to 0 \, .
\end{equation}
\end{definition}
Observe that the flat convergence is weaker than the weak* convergence. The two notions are equivalent when the measures~$\mu_j$ have equibounded total variations.

The two vorticity measures $\mu_{u}$ and $\mathring\mu_{u}$ are then close as explained in Lemma~\ref{lemma:vorticityequiv} below. For $A \subset \RR^2$ we shall use the localized energy given by 
\begin{equation*}
	E_\e(u;A) :=  \frac{1}{2} \sum_{\substack{\langle i, j \rangle \\ \e i, \e j \in A }} \e^2 |u(\e i) - u(\e j)|^2 .
\end{equation*} 
We shall adopt the same notation for the $XY_\e$ energy. We work with spin fields $u_{\e} \colon \e\ZZ^2\to\SS^1$ defined on the whole lattice $\e \ZZ^2$. We can always assume that $XY_\e(u_\e;\overline{\Omega}^\e) \leq C XY_\e(u_\e;\Omega)$, where $\overline{\Omega}^\e$ is the union of the squares $\e i + [0,\e]^2$ that intersect $\Omega$. (If not, thanks to the Lipschitz regularity of $\Omega$, we modify $u_\e$ outside $\Omega$ in such a way that the energy estimate is satisfied, see~\cite[Remark~2]{Ali-Cic}.)

\begin{lemma}\label{lemma:vorticityequiv}
Assume that $u_{\e} \colon \e\ZZ^2\to\SS^1$ is a sequence such that $\frac{1}{\e^2}XY_{\e}(u_{\e})\leq C|\log \e |$. Then $\mu_{u_{\e}}\mres\Omega-\mathring{\mu}_{u_{\e}}\flat 0$. 
\end{lemma}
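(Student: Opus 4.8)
The plan is to compare the two measures cell by cell: they carry the \emph{same} vorticities $d_{u_\e}(\e i)\in\{-1,0,1\}$, their Dirac masses sit at points differing only by the vanishing shift $\tfrac12(\e,\e)$, and the indices over which they are summed differ only inside an $O(\e)$-layer around $\partial\Omega$. Writing $d_\e:=d_{u_\e}$, I would decompose
\begin{equation*}
	\mu_{u_\e}\mres\Omega-\mathring\mu_{u_\e}=A_\e+B_\e ,
\end{equation*}
with $A_\e:=\sum_{\e i+[0,\e]^2\subset\Omega}d_\e(\e i)\big(\delta_{\e i+(\e,\e)}-\delta_{\e i+\frac12(\e,\e)}\big)$ collecting the ``matched'' squares (note that if $\e i+[0,\e]^2\subset\Omega$ then both $\e i+(\e,\e)$ and $\e i+\tfrac12(\e,\e)$ lie in $\Omega$) and $B_\e:=\sum d_\e(\e i)\,\delta_{\e i+(\e,\e)}$ the remaining sum, over those $i$ with $\e i+(\e,\e)\in\Omega$ but $\e i+[0,\e]^2\not\subset\Omega$. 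For such $i$ the square meets $\Omega$ without being contained in it, so it meets $\partial\Omega$ and $\dist(\e i+(\e,\e),\partial\Omega)\le\sqrt2\,\e$; hence $B_\e$ is supported in the strip $S_\e:=\{x\in\Omega:\dist(x,\partial\Omega)\le\sqrt2\,\e\}$. In both cases the square meets $\Omega$, hence lies with its four edges inside $\overline{\Omega}^\e$.

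The quantitative core is the vortex bound $n_\e\le C|\log\e|$, where $n_\e:=\#\{\e i:\ d_\e(\e i)\neq0,\ \e i+[0,\e]^2\subset\overline{\Omega}^\e\}$. To prove it, note that when $d_\e(\e i)=\pm1$, with $\alpha_1,\dots,\alpha_4$ the consecutive phase increments and $\{i_k,j_k\}$ the edges of $\e i+[0,\e]^2$, \eqref{eq:geo and eucl} gives
\begin{equation*}
	2\pi=\Big|\sum_{k=1}^4\Psi(\alpha_k)\Big|\le\sum_{k=1}^4\geo\big(u_\e(\e i_k),u_\e(\e j_k)\big)\le\frac{\pi}{2}\sum_{k=1}^4|u_\e(\e i_k)-u_\e(\e j_k)| ,
\end{equation*}
so $\sum_{k=1}^4|u_\e(\e i_k)-u_\e(\e j_k)|\ge4$ and, by Cauchy--Schwarz, $\sum_{k=1}^4|u_\e(\e i_k)-u_\e(\e j_k)|^2\ge4$. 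Summing over all $n_\e$ such squares and using that every edge of $\e\ZZ^2$ belongs to at most two squares,
\begin{equation*}
	4 n_\e\le2\sum_{\text{edges of }\e\ZZ^2\text{ in }\overline{\Omega}^\e}|u_\e(\e i)-u_\e(\e j)|^2=\frac{2}{\e^2}XY_\e(u_\e;\overline{\Omega}^\e)\le\frac{C}{\e^2}XY_\e(u_\e;\Omega)\le C|\log\e| ,
\end{equation*}
where the last two inequalities are the standing assumption $XY_\e(u_\e;\overline{\Omega}^\e)\le C\,XY_\e(u_\e;\Omega)$ and the hypothesis.

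Finally I would bound the supremum in~\eqref{eq:defflatconv} over $\psi\in C^{0,1}_c(\Omega)$ with $\|\psi\|_{C^{0,1}}\le1$. Since such $\psi$ is $1$-Lipschitz and $|(\e,\e)-\tfrac12(\e,\e)|=\tfrac{\sqrt2}{2}\e$, the matched part obeys $\big|\int_\Omega\psi\,\mathrm{d}A_\e\big|\le\tfrac{\sqrt2}{2}\e\,n_\e$, uniformly in $\psi$. For $B_\e$, such $\psi$ additionally vanishes on $\RR^2\sm\Omega$, whence $|\psi(x)|\le\dist(x,\partial\Omega)\le\sqrt2\,\e$ for $x\in S_\e$, so $\big|\int_\Omega\psi\,\mathrm{d}B_\e\big|\le\sqrt2\,\e\,n_\e$. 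Adding the two estimates, the supremum in~\eqref{eq:defflatconv} is $\le C\e\,n_\e\le C\e|\log\e|\to0$, i.e.\ $\mu_{u_\e}\mres\Omega-\mathring\mu_{u_\e}\flat0$. I expect the vortex count $n_\e\le C|\log\e|$ — equivalently, that each vortex costs a fixed amount of energy per unit cell — to be the main point; the rest is then forced, the only slightly delicate bit being that $B_\e$ is controlled \emph{uniformly} in the test function, which works precisely because a compactly supported $1$-Lipschitz function decays at least linearly in $\dist(\cdot,\partial\Omega)$.
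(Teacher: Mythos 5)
Your proof is correct and follows the same strategy as the paper's: the flat norm of the difference is controlled by $C\e$ times the number of vortex cells, which in turn is bounded by $C\e^{-2}XY_\e(u_\e;\overline{\Omega}^\e)\le C|\log\e|$. The only differences are that you prove the counting estimate from scratch via the pointwise bound $\sum_k|u_\e(\e i_k)-u_\e(\e j_k)|^2\ge 4$ on vortex cells (the paper simply cites \cite[Remark~3.4]{Ali-Cic-Pon} for $|\mathring{\mu}_{u_\e}|(\overline{\Omega}^\e)\le C\e^{-2}XY_\e(u_\e;\overline{\Omega}^\e)$), and that you treat the unmatched boundary squares explicitly through the decomposition $A_\e+B_\e$ together with the observation that a compactly supported $1$-Lipschitz function satisfies $|\psi(x)|\le\dist(x,\partial\Omega)$ — a detail the paper's one-line estimate leaves implicit.
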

\begin{proof}
Note that, for any $\psi\in C_c^{0,1}(\Omega)$ with $\|\psi\|_{C^{0,1}}\leq 1$, we have
\begin{equation*}
|\langle \mu_{u_{\e}}\mres\Omega-\mathring{\mu}_{u_\e},\psi\rangle|\leq \sum_{\e i\in\e\ZZ^2 \cap \overline \Omega^\e}|d_{u_\e}(\e i)|\frac{\e}{\sqrt{2}}
\leq \frac{\e}{\sqrt{2}}|\mathring{\mu}_{u_\e}|(\overline \Omega^\e)\leq C\e \frac{1}{\e^2}XY_{\e}(u;\overline \Omega^\e)\leq C\e|\log \e|\,,
\end{equation*}
where in the last but one inequality we used \cite[Remark 3.4]{Ali-Cic-Pon}. This proves the claim. 
\end{proof}

We recall the following compactness and lower bound for the $XY$ model. 

\begin{proposition} \label{prop:XY classical}
Let $u_\e \colon \e \ZZ^2 \to \SS^1$ and assume that there is a constant $C > 0$ such that $\frac{1}{\e^2|\log \e|} XY_\e(u_\e) \leq C$. Then there exists a measure $\mu \in \M_b(\Omega)$ of the form $\mu=\sum_{h=1}^Md_h\delta_{x_h}$ with $d_h\in\ZZ$ and $x_h\in\Omega$, and a subsequence (not relabeled) such that $\mu_{u_\e} \mres \Omega \flat \mu$. Moreover 
\begin{equation*}
2 \pi |\mu|(\Omega)  \leq \liminf_{\e \to 0} \frac{1}{\e^2|\log \e|} XY_\e(u_\e) \, .
\end{equation*}
\end{proposition}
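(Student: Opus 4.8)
The statement is essentially a known compactness-and-lower-bound result for the $XY$ model (see \cite{Ali-Cic, Ali-Cic-Pon, Ali-DL-Gar-Pon}), transported to the shifted vorticity measure $\mu_{u_\e}$ via Lemma \ref{lemma:vorticityequiv}. So the plan is to reduce to the version for $\mathring\mu_{u_\e}$ and then quote the literature. Concretely: first, as explained in the paragraph preceding the proposition, we may assume without loss of generality that $XY_\e(u_\e;\ol\Omega^\e) \le C\, XY_\e(u_\e;\Omega) \le C\e^2|\log\e|$, modifying $u_\e$ outside $\Omega$ using the Lipschitz regularity of $\partial\Omega$ if necessary; this does not affect $\mu_{u_\e}\mres\Omega$ in the limit. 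Then I would invoke the compactness result of \cite[Theorem 3.1]{Ali-Cic-Pon} (or the corresponding statement in \cite{Ali-DL-Gar-Pon}): under the bound $\tfrac{1}{\e^2|\log\e|}XY_\e(u_\e;\ol\Omega^\e)\le C$, up to a subsequence $\mathring\mu_{u_\e}\flat\mu$ for some $\mu=\sum_{h=1}^M d_h\delta_{x_h}$ with $d_h\in\ZZ$, $x_h\in\ol\Omega$. By Lemma \ref{lemma:vorticityequiv}, $\mu_{u_\e}\mres\Omega - \mathring\mu_{u_\e}\flat 0$, hence $\mu_{u_\e}\mres\Omega\flat\mu$ as well. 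A minor point to address is that the $x_h$ lie in $\Omega$ (not just $\ol\Omega$): if a mass point sits on $\partial\Omega$, a standard argument (slightly enlarging/shrinking the domain, or noting that the flat limit of measures supported on $\e\ZZ^2\cap\Omega$ against $C^{0,1}_c(\Omega)$ test functions only sees the interior) lets us drop it, so that $\mu\in\M_b(\Omega)$ of the claimed form.

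For the lower bound, I would again reduce to $\mathring\mu_{u_\e}$ and use the ball-construction / lower-bound estimate for the $XY$ model. The cleanest route: by \cite[Proposition 5.2 or Theorem 4.1]{Ali-DL-Gar-Pon} (lower bound with vortices), for any sequence with $\mathring\mu_{u_\e}\flat\mu$ and the energy bound one has $2\pi|\mu|(\Omega)\le \liminf_{\e\to0}\tfrac{1}{\e^2|\log\e|}XY_\e(u_\e)$. Since $\mu_{u_\e}\mres\Omega$ and $\mathring\mu_{u_\e}$ have the same flat limit $\mu$, the inequality in the statement follows verbatim. If one prefers a self-contained argument, the key steps are: (1) localize around each $x_h$ in a small ball $B_\rho(x_h)$ with disjoint closures in $\Omega$; (2) on each such ball, use that the flat convergence $\mathring\mu_{u_\e}\flat\mu$ forces $|\mathring\mu_{u_\e}|(B_\rho(x_h))\ge |d_h|$ for $\e$ small, i.e.\ the discrete spin field carries total degree at least $|d_h|$ around $x_h$; (3) apply the lower bound $\tfrac{1}{\e^2}XY_\e(u_\e;B_\rho(x_h))\ge 2\pi |d_h|\,|\log\e| - C_\rho$, which is the standard logarithmic cost of a degree-$d_h$ configuration on a disc (Ginzburg–Landau / $XY$ ball construction, \cite{Bet-Bre-Hel, San-Ser-book, Ali-Cic-Pon}); (4) sum over $h$ and let $\e\to0$, then $\rho\to0$.

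The main obstacle is purely bookkeeping rather than conceptual: making sure the transl(shift) from $\mathring\mu_{u_\e}$ (supported at centers of squares contained in $\Omega$) to $\mu_{u_\e}\mres\Omega$ (supported at $\e i+(\e,\e)$, restricted to $\Omega$) does not create or destroy mass near $\partial\Omega$, and that the limiting atoms stay inside $\Omega$. This is exactly what Lemma \ref{lemma:vorticityequiv} together with the energy-extension remark before the proposition is designed to handle, so the proof will consist of citing those two facts plus the classical $XY$ compactness and lower bound, with a one-line remark on boundary atoms. I do not expect any genuinely new estimate to be needed here.
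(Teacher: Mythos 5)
Your proposal is correct and follows essentially the same route as the paper: the authors simply quote \cite[Theorem~3.1-(i)]{Ali-DL-Gar-Pon} for the flat compactness and lower bound of $\mathring\mu_{u_\e}$ and then transfer the result to $\mu_{u_\e}\mres\Omega$ via Lemma~\ref{lemma:vorticityequiv}. The extra details you supply (boundary atoms, the self-contained ball-construction sketch) are not needed but do no harm.
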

\begin{proof}
In~\cite[Theorem~3.1-(i)]{Ali-DL-Gar-Pon} it is proven that (up to a subsequence) the discrete vorticity measures~$\mathring{\mu}_{u_{\e}}$ converge flat to a measure of the claimed form satisfying also the lower bound. The claim thus follows from Lemma \ref{lemma:vorticityequiv}. 
\end{proof}

\section{Currents}

For the theory of currents and cartesian currents we refer to the books~\cite{Gia-Mod-Sou-I, Gia-Mod-Sou-II}. We recall here some notation, definitions, and basic facts about currents. We additionally prove some technical lemmata that we need in this paper and that were also used in \cite{CicOrlRuf}.

\subsection{Definitions and basic facts} \label{sec:currentsbasics}
Given an open set $O \subset \RR^d$, we denote by $\D^k(O)$ the space of $k$-forms $\omega \colon O \mapsto \Lambda^k \RR^d$ that are $C^\infty$ with compact support in $O$. A {\em $k$-current} $T \in \D_k(O)$ is an element of the dual of~$\D^k(O)$. The duality between a $k$-current and a $k$-form $\omega$ will be denoted by $T(\omega)$. The {\em boundary} of a $k$-current $T$ is the $(k{-}1)$-current $\de T \in \D_{k-1}(O)$ defined by $\de T(\omega) := T(\! \d \omega)$ for every $\omega \in \D^{k-1}(O)$  (or $\partial T:=0$ if $k=0$). As for distributions, the {\em support} of a current $T$ is the smallest relatively closed set $K$ in $O$ such that $T(\omega) = 0$ if $\omega$ is supported outside $K$. 
Given a  smooth map $f \colon O \to O' \subset \RR^{N'}$ such that $f$ is proper\footnote{that means, $f^{-1}(K)$ is compact in $O$ for all compact sets $K\subset O'$.}, $f^\# \omega  \in \D^k(O)$ denotes the pull-back of a $k$-form $\omega \in \D^k(O')$ through $f$. The {\em push-forward} of a $k$-current $T \in \D_k(O)$ is the $k$-current $f_\# T \in \D_k(O')$ defined by $f_\# T(\omega) := T(f^\# \omega)$. Given a $k$-form $\omega \in \D^k(O)$, we can write it via its components
\begin{equation*}
\omega = \sum_{|\alpha| = k} \omega_\alpha \d x^\alpha  \, , \quad \omega_\alpha \in C^\infty_c(O) \, ,
\end{equation*}
 where the expression $|\alpha|=k$ denotes all multi-indices $\alpha = (\alpha_1, \dots, \alpha_k)$ with $1 \leq \alpha_i \leq d$, and $\d x^\alpha = \d x^{\alpha_1} \w \dots \w \d x^{\alpha_k}$. The norm of $\omega(x)$ is denoted by $|\omega(x)|$ and it is the Euclidean norm of the vector with components $(\omega_\alpha(x))_{|\alpha|=k}$. The {\em total variation} of a $k$-current $T \in \D_k(O)$ is defined by
\begin{equation*}
|T|(O) := \sup \{T(\omega) \ : \ \omega \in \D^k(O), \ |\omega(x)| \leq 1 \} \, .
\end{equation*}
If $T \in \D_k(O)$ with $|T|(O) < \infty$, then we can define the measure $|T| \in \M_b(O)$ 
\begin{equation*}
|T|(\psi) := \sup \{T(\omega) \ : \ \omega \in \D^k(O), \ |\omega(x)| \leq \psi(x) \}
\end{equation*}
for every $\psi \in C_0(O)$, $\psi \geq 0$. As a consequence of Riesz's Representation Theorem (see \cite[2.2.3, Theorem 1]{Gia-Mod-Sou-I}) there exists a $|T|$-measurable function $\vec{T} \colon O \mapsto \Lambda_k \RR^d$ with $|\vec{T}(x)| = 1$ for $|T|$-a.e.\ $x \in O$ such that 
\begin{equation} \label{eq:representation}
T(\omega) = \integral{O}{\langle \omega(x), \vec{T}(x) \rangle}{ \d |T|(x) }
\end{equation}
for every $\omega \in \D^k(O)$. We note that if $T$ has finite total variation, then it can be extended to a linear functional acting on all forms with bounded, Borel-measurable coefficients  via the dominated convergence theorem. 
In particular, in this case the push-forward $f_\# T$ can be defined also for $f\in C^1(O,O')$ with bounded derivatives, cf.\ the discussion in \cite[p. 132]{Gia-Mod-Sou-I}.

A set $\M \subset O$ is a countably $\H^k$-rectifiable set if it can be covered, up to an $\H^k$-negligible subset, by countably many $k$-manifolds of class $C^1$. As such, it admits at $\H^k$-a.e.\ $x \in \M$ a tangent space $\mathrm{Tan}(\M,x)$ in a measure theoretic sense. A current $T \in \D_k(O)$ is an {\em integer multiplicity (i.m.) rectifiable current} if it is representable as 
\begin{equation} \label{eq:im rectifiable}
	T(\omega) = \integral{\M}{\langle \omega(x), \xi(x) \rangle \theta(x)}{\d \H^k(x)} \, ,\quad \text{for } \omega \in \D^k(O) \, ,
\end{equation}
where $\M \subset O$ is a $\H^k$-measurable and countably $\H^k$-rectifiable set, $\theta \colon \M \to \ZZ$ is  locally $\H^k \mres \M$-summable, and $\xi \colon \M \to \Lambda_k \RR^d$ is a $\H^k$-measurable map such that $\xi(x)$ spans $\mathrm{Tan}(\M,x)$ and $|\xi(x)| = 1$ for $\H^k$-a.e.\ $x \in \M$. 
We use the short-hand notation $T=\tau(\mathcal{M},\theta,\xi)$.  
One can always replace $\mathcal{M}$ by the set $\mathcal{M}\cap\theta^{-1}(\{0\})$, so that we may always assume that $\theta\neq 0$. 
Then the triple $(\mathcal{M},\theta,\xi)$ is uniquely determined up to $\mathcal{H}^k$-negligible modifications. Moreover, one can show, according to the Riesz's representation in~\eqref{eq:representation}, that $\vec{T}=\xi$ and the total variation\footnote{For i.m.\ rectifiable currents, the total variation coincides with the so-called mass. Hence, we will not distinguish between these two concepts.} is given by $|T|=|\theta|\mathcal{H}^k\mres\mathcal{M}$.

If $T_j$ are i.m.\ rectifiable currents and $T_j \weak T$ in $\D_k(O)$ with $\sup_j (|T_j|(V) + |\de T_j|(V) ) < \infty$ for every $V \compact O$, then by the Closure Theorem~\cite[2.2.4, Theorem 1]{Gia-Mod-Sou-I} $T$ is an i.m.\ rectifiable current, too. By $\llbracket \M \rrbracket$ we denote the current defined by integration over $\M$.

\subsection{Currents in product spaces} Let us introduce some notation for currents defined on the product space~$\RR^{d_1} \x \RR^{d_2}$. We will denote by $(x,y)$ the points in this space. The standard basis of the first space $\RR^{d_1}$ is $\{ e_1,\ldots, e_{d_1} \}$, while $\{ \bar e_1,\ldots, \bar e_{d_2} \}$ is the standard basis of the second space $\RR^{d_2}$. Given $O_1\subset \RR^{d_1},O_2 \subset \RR^{d_2}$ open sets, $T_1 \in \D_{k_1}(O_1)$, $T_2 \in \D_{k_2}(O_2)$ and a $(k_1+k_2)$-form $\omega \in \D^{k_1+k_2}(O_1 \x O_2)$ of the type
\begin{equation*}
\begin{split}
\omega(x,y) = \sum_{\substack{|\alpha|=k_1\\ |\beta|=k_2}} \omega_{\alpha \beta}(x,y) \d x^\alpha \w \d y^\beta \, ,
\end{split}
\end{equation*} 
the product current $T_1 \times T_2 \in \D_{k_1+k_2}(O_1 \x O_2)$ is defined by
\begin{equation*}
T_1 \x T_2( \omega) := T_1 \Big( \sum_{|\alpha|=k_1} T_2\Big(\sum_{|\beta| = k_2} \omega_{\alpha \beta}(x,y) \d y^\beta \Big) \d x^\alpha \Big),
\end{equation*}
while $T_1 \x T_2(\phi \d x^\alpha \w \d y^\beta) = 0$ if $|\alpha|+|\beta| =k_1+k_2$ but $|\alpha|\neq k_1$, $|\beta| \neq k_2$.

\subsection{Graphs} Let $O \subset \RR^d$ be an open set and $u \colon \Omega \to \RR^2$ a Lipschitz map. Then we can consider the $d$-current associated to the graph of $u$ given by $G_u := (\mathrm{id} \x u)_\# \llbracket O \rrbracket \in \D_2(O\x \RR^2)$, where $\mathrm{id} \x u \colon O \to O \x \RR^2$ is the map $(\mathrm{id} \x u)(x) = (x,u(x))$.  Note that by definition we have
\begin{equation*}
G_u(\omega)=\integral{O}{\langle \omega(x,u(x)),M(\nabla u(x))\rangle}{\d x}
\end{equation*}
for all $\omega\in\mathcal{D}^{d}(O\times \RR^2)$, with the $d$-vector \begin{equation}\label{eq:minors}
M(\nabla u)=(e_1+\partial_{x^1} u^1\bar e_1+\partial_{x^1} u^2\bar e_2)\wedge\ldots\wedge(e_d+\partial_{x^d}u^1\bar e_1+\partial_{x^d} u^2\bar e_2)\,.
\end{equation}
The above formula can be extended to the class $\mathcal{A}^1(O;\RR^2)$ defined by
\begin{equation*}
\mathcal{A}^1(O;\RR^2):=\{u\in L^1(O;\RR^2): u\text{ approx.\ diff.\ a.e.\ and all minors of $\nabla u$ are in }L^1(O)\}\,.
\end{equation*}

\begin{remark} \label{rmk:bd of smooth}
We recall that $\de G_u|_{\Omega \x \RR^2} = 0$ when $u\in W^{1,2}(O;\RR^2)\subset\mathcal{A}^1(O;\RR^2)$, see \cite[3.2.1, Proposition 3]{Gia-Mod-Sou-I}. This property however fails for general functions $u\in\mathcal{A}^1(O;\RR^2)$.
\end{remark}

 In Lemma \ref{lemma:extension of currents} we need to interpret the graphs of $W^{1,1}(O;\SS^1)$ as currents. This can be done because of the following observation.

\begin{lemma}\label{lemma:W11aregraphs}
Let $O\subset\RR^d$	be an open, bounded set. Then $W^{1,1}(O;\SS^1)\subset\mathcal{A}^1(O;\RR^2)$.
\end{lemma}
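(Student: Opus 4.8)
The plan is to verify the two defining conditions of $\mathcal{A}^1(O;\RR^2)$: that a map $u \in W^{1,1}(O;\SS^1)$ is approximately differentiable a.e., and that all minors of $\nabla u$ lie in $L^1(O)$. Approximate differentiability a.e.\ is immediate: any Sobolev function in $W^{1,1}$ is approximately differentiable $\L^d$-a.e., with approximate gradient coinciding a.e.\ with the weak gradient $\nabla u \in L^1(O;\RR^{2\times d})$. So the only real content is the integrability of the minors of $\nabla u$. Since $\nabla u$ is a $2 \times d$ matrix, its minors of order $1$ are just the entries of $\nabla u$, which are in $L^1(O)$ by assumption; minors of order $\geq 3$ vanish identically because $\nabla u$ has only two rows. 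Hence the only minors that need attention are the $2\times 2$ minors, i.e.\ the Jacobians $\partial_{x^i} u^1 \, \partial_{x^j} u^2 - \partial_{x^j} u^1 \, \partial_{x^i} u^2$ for $1 \le i < j \le d$.

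The key observation is that the constraint $|u|=1$ a.e.\ forces every such $2\times 2$ minor to vanish a.e. Indeed, differentiating $|u|^2 = (u^1)^2 + (u^2)^2 = 1$ in the Sobolev sense gives $u^1 \nabla u^1 + u^2 \nabla u^2 = 0$ a.e.\ in $O$; that is, at a.e.\ point the vector $(\nabla u^1(x), \nabla u^2(x))^T \in \RR^{2}$ (for each fixed coordinate direction) — more precisely the two rows $\nabla u^1(x)$ and $\nabla u^2(x)$ of $\nabla u(x)$ — satisfy $u^1(x)\,\partial_{x^i}u^1(x) + u^2(x)\,\partial_{x^i}u^2(x) = 0$ for every $i$. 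Since $(u^1(x),u^2(x)) \neq (0,0)$, this means the two rows $\nabla u^1(x)$ and $\nabla u^2(x)$ are both orthogonal to the nonzero vector $(u^1(x),u^2(x))$, hence are linearly dependent (they lie on a line in $\RR^2$, viewed appropriately — more carefully: for each pair $i,j$ the $2\times 2$ determinant $\partial_{x^i}u^1 \partial_{x^j}u^2 - \partial_{x^j}u^1\partial_{x^i}u^2$ equals, up to the scalar factor coming from $|u|=1$, a determinant with a repeated proportional row and thus vanishes). Concretely: from $u^1\partial_{x^i}u^1 = -u^2\partial_{x^i}u^2$ one computes $u^1(\partial_{x^i}u^1\partial_{x^j}u^2 - \partial_{x^j}u^1\partial_{x^i}u^2) = -u^2\partial_{x^i}u^2\partial_{x^j}u^2 + u^2 \partial_{x^j}u^2 \partial_{x^i}u^2 = 0$, and symmetrically with $u^2$, so since $(u^1,u^2)\neq 0$ the minor is zero a.e. Therefore all $2\times 2$ minors of $\nabla u$ are $0 \in L^1(O)$, and together with the first-order minors being in $L^1(O)$ this shows $u \in \mathcal{A}^1(O;\RR^2)$.

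The one point requiring a little care — and the only place where boundedness of $O$ or the precise Sobolev calculus enters — is justifying the chain rule $\DD(|u|^2) = 2(u^1\nabla u^1 + u^2\nabla u^2)$ a.e.: this is the standard chain rule for compositions of Lipschitz functions (here $t\mapsto t^2$, or rather $(a,b)\mapsto a^2+b^2$) with $W^{1,1}$ maps, valid since $u \in W^{1,1}$ and $u$ is essentially bounded (it takes values in $\SS^1$), so $|u|^2 \in W^{1,1}(O)$ with the expected gradient; I would simply cite the relevant result in \cite{Amb-Fus-Pal}. Boundedness of $O$ guarantees constants are integrable so that $|u|^2 \equiv 1 \in W^{1,1}(O)$ makes sense with zero gradient, closing the argument. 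No genuine obstacle arises here; the lemma is essentially a bookkeeping check that the $\SS^1$-constraint kills the Jacobian minors.
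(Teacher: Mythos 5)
Your proof is correct and follows essentially the same strategy as the paper's: both reduce the statement to the observation that the $\SS^1$-constraint forces $\nabla u(x)$ to have rank at most $1$ at a.e.\ point, so that all $2\times 2$ minors vanish, while the $1$-minors are in $L^1$ by definition of $W^{1,1}$ and approximate differentiability is standard. The only difference is in how the rank condition is justified: the paper writes $u = P\circ u$ with $P(x)=x/|x|$ and applies the chain rule for approximate differentials to get $\nabla u = \nabla P(u)\,\nabla u$ with $\nabla P(u)$ of rank at most $1$, whereas you differentiate the constraint $|u|^2=1$ via the product rule in $W^{1,1}\cap L^\infty$ and conclude that every column of $\nabla u$ is orthogonal to the nonzero vector $u(x)$ — both routes are valid and of comparable length.
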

\begin{proof}
It is well-known that Sobolev functions are approximately differentiable a.e.\ Moreover, all $1$-minors of $\nabla u$ are in $L^1(O)$. We argue that all $2$-minors vanish at a.e.\ point. To this end, denote by $P\colon \RR^2\setminus\{0\}\to\RR^2$ the smooth mapping $P(x)=x/|x|$. Since for $u\in W^{1,1}(O;\SS^1)$ we have $u=P\circ u$ almost everywhere, for a.e.\ $x\in O$ the chain rule for approximate differentials yields 
\begin{equation*}
\nabla u(x)= \nabla P(u(x))\nabla u(x).
\end{equation*}
Since $\nabla P(u(x))$ has at most rank $1$, also $\nabla u(x)$ has at most rank $1$ and therefore all $2$-minors have to vanish as claimed.
\end{proof}

We will use the orientation of the graph of a smooth function $u\colon O\subset\RR^2\to\SS^1$ (cf. \cite[2.2.4]{Gia-Mod-Sou-I}). For such maps we have $|G_u| = \H^2 \mres \M$, where $\M = (\mathrm{id} \x u)(\Omega)$, and, for every $(x,y) \in \M$,
\begin{equation} \label{eq:smooth components}
\begin{split}
\sqrt{1+|\nabla u(x)|^2} \ \vec{G}_u(x,y) = & \ e_1 \wedge e_2 \\
+ & \ \de_{x^2} u^1(x) e_1 \wedge \bar e_1 + \de_{x^2} u^2(x) e_1 \wedge \bar e_2 \\
- & \ \de_{x^1}  u^1(x)  e_2 \wedge \bar e_1 - \de_{x^1}  u^2(x) e_2 \wedge \bar e_2 \, .
\end{split} 
\end{equation}

\subsection{Cartesian currents}
Let $O\subset\RR^d$ be a bounded, open set. We recall that the class of {\em cartesian currents} in~$O\x \RR^2$ is defined by
\begin{equation*}
\begin{split}
\cart(O \x \RR^2) := \{ & T \in \D_d(O \x \RR^2) \ : \ T \text{ is i.m.\ rectifiable, } \de T|_{O \x \RR^2} = 0, \\
& \pi^O_\# T = \llbracket O \rrbracket \, , \ T |_{\d x} \geq 0 \, , \ |T| < \infty \, , \ \|T\|_1 < \infty \} \, ,
\end{split}
\end{equation*}
where $\pi^O \colon O \x \RR^2 \to O$ denotes the projection on the first component, $T|_{\d x} \geq 0$ means that $T(\phi(x,y) \d x) \geq 0$ for every $\phi \in C^\infty_c(O \x \RR^2)$ with $\phi \geq 0$, and 
\begin{equation*}
\| T \|_1 = \sup \{ T(\phi(x,y)|y| \d x ) \ : \ \phi \in C^\infty_c(O \x \RR^2) \, , \ |\phi| \leq 1 \}  \, . 
\end{equation*}
Note that, if for some function $u$
\begin{equation} \label{eq:norm 1}
T(\phi(x,y) \d x) = \integral{O}{\phi(x,u(x))}{\d x} \quad \text{then} \quad \| T \|_1  = \integral{O}{|u|}{\d x} \, .
\end{equation}

The class of {\em cartesian currents} in $O \x \SS^1$ is 
\begin{equation*}
\cart(O \x \SS^1) := \{ T \in \cart(O \x \RR^2) \ : \ \supp(T) \subset \ol O \x \SS^1 \} \,,
\end{equation*}
(cf.\ \cite[6.2.2]{Gia-Mod-Sou-II} for this definition). We recall the following approximation theorem which explains that cartesian currents in $O \x \SS^1$ are precisely those currents that arise as limits of graphs of $\SS^1$-valued smooth maps. The proof, based on a regularization argument on the lifting of $T$, can be found in~\cite[Theorem 7]{Gia-Mod-Sou-S1}.\footnote{Notice that some results in~\cite{Gia-Mod-Sou-S1} require $O$ to have smooth boundary. This is not the case for this theorem, which is based on a local construction.}

\begin{theorem}[Approximation Theorem] \label{thm:approximation}
Let $T \in \cart(O \x \SS^1)$. Then there exists a sequence of smooth maps $u_h \in C^\infty(O;\SS^1)$ such that 
\begin{equation*}
G_{u_h} \weak T \quad \text{in } \D_d(O \x \RR^2)\quad\text{and}\quad|G_{u_h}|(O\x \RR^2) \to |T|(O \x \RR^2) \, .
\end{equation*}
\end{theorem}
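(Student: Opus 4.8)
The plan is to build the approximating maps by mollifying a scalar lifting of $T$. Since $T$ is supported in $\ol O\x\SS^1$ and has vanishing boundary in $O\x\RR^2$, it should be, at least locally, the push-forward of the complete graph of an $\RR$-valued $BV$ function under $(x,t)\mapsto(x,\exp(\iota t))$; mollifying that lifting produces smooth $\SS^1$-valued maps, and the local pieces glue because convolution commutes with the addition of constants and two liftings of the same current differ on overlaps by a locally constant $2\pi\ZZ$-valued function. Concretely, I would first cover $O$ by open balls $B_\alpha\compact O$ and, on each $B_\alpha$, apply the structure theory of cartesian currents into $\SS^1$ (cf.\ \cite[6.2.2]{Gia-Mod-Sou-II}): the restriction $T\mres(B_\alpha\x\RR^2)$ is again cartesian with zero boundary in $B_\alpha\x\RR^2$, and since $B_\alpha$ is simply connected this yields $\varphi_\alpha\in BV(B_\alpha;\RR)$ with $T\mres(B_\alpha\x\RR^2)=(\mathrm{id}\x\exp)_\#\mathcal{G}_{\varphi_\alpha}$, where $\mathcal{G}_{\varphi_\alpha}$ is the complete graph of $\varphi_\alpha$ (a current of finite mass, so the push-forward is legitimate although $\mathrm{id}\x\exp$ is not proper, since $\mathrm{id}\x\exp$ has bounded derivatives). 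The hypothesis $\de T|_{O\x\RR^2}=0$ is exactly what excludes point singularities of nonzero degree that would obstruct a lifting, and it also allows $\varphi_\alpha$ to be chosen \emph{efficient}, so that $\mathrm{id}\x\exp$ preserves vertical lengths and $|T|(B_\alpha\x\RR^2)=|\mathcal{G}_{\varphi_\alpha}|(B_\alpha\x\RR)$ (the latter being the usual Lebesgue $+$ Cantor $+$ jump decomposition of the mass of a complete graph). On overlaps $\exp(\iota\varphi_\alpha)=\exp(\iota\varphi_\beta)$ a.e., so $\varphi_\alpha-\varphi_\beta$ is a.e.\ a locally constant $2\pi\ZZ$-valued function.

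Next I would fix standard mollifiers with a Whitney-type variable radius $r(x)\to0$ as $x\to\de O$, set $\varphi_\alpha^r:=\varphi_\alpha*\rho_{r(\cdot)}$, and observe that the transition relation on overlaps persists after convolution, so the smooth maps $\exp(\iota\varphi_\alpha^r)$ glue to a single $u_r\in C^\infty(O;\SS^1)$. As $r\to0$ one has $\varphi_\alpha^r\to\varphi_\alpha$ in $L^1_{\mathrm{loc}}$ and $\DD\varphi_\alpha^r\wstar\DD\varphi_\alpha$; hence $u_r$ converges in $L^1_{\mathrm{loc}}(O;\RR^2)$ to the natural limit map and $G_{u_r}\weak T$ in $\D_d(O\x\RR^2)$ by the continuity of the graph map under mollification — the jump part of $T$ being recovered because the thin transitions of the mollified lifting across a jump of $\varphi_\alpha$ converge, as currents, to the corresponding vertical arcs (cf.\ \cite{Gia-Mod-Sou-I}).

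For the mass, since $u_r$ is smooth and $\SS^1$-valued all $2$-minors of $\nabla u_r$ vanish (as in Lemma~\ref{lemma:W11aregraphs}), so $|G_{u_r}|(O\x\RR^2)=\int_O\sqrt{1+|\nabla\varphi^r|^2}\,\d x$, where $\varphi^r$ denotes the local lifting. Writing $(1,\nabla\varphi^r)$ as the convolution of the $\RR^{1+d}$-valued measure $(\L^d,\DD\varphi)$ and applying Jensen's inequality for the convex, positively $1$-homogeneous integrand $(a,w)\mapsto\sqrt{a^2+|w|^2}$ (whose associated functional of measures is nonincreasing under convolution) gives $\limsup_{r\to0}|G_{u_r}|(O\x\RR^2)\le|\mathcal{G}_{\varphi}|(O\x\RR)=|T|(O\x\RR^2)$, the boundary-layer correction being bounded by $|T|$ of a shrinking neighbourhood of $\de O$, hence negligible. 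Lower semicontinuity of mass under weak convergence of currents provides the matching lower bound $|T|(O\x\RR^2)\le\liminf_{r\to0}|G_{u_r}|(O\x\RR^2)$, so $|G_{u_r}|(O\x\RR^2)\to|T|(O\x\RR^2)$; extracting a sequence $r=r_h\to0$ yields the desired maps $u_h$.

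I expect the main obstacle to be the first step: producing the efficient $BV$ lifting on each simply connected piece together with the exact identity $|T|=|\mathcal{G}_{\varphi_\alpha}|$. Existence of the lifting rests on $\de T|_{O\x\RR^2}=0$ (absence of vortices), and the mass matching forces one to choose the minimal lifting across the jump and Cantor parts of $T$; this is the substantive input from the theory of cartesian currents into $\SS^1$. That the whole argument is carried out on balls, with $2\pi\ZZ$-valued transition functions, is precisely what makes it insensitive to the regularity of $\de O$, as noted in the footnote to the statement, while the regularization, the convergence of currents, and the convergence of masses are then routine.
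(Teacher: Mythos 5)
Your proposal is correct and follows essentially the same route as the proof the paper relies on: the paper does not prove Theorem~\ref{thm:approximation} itself but cites \cite[Theorem 7]{Gia-Mod-Sou-S1}, describing that proof precisely as ``a regularization argument on the lifting of $T$'', which is exactly your scheme (local $BV$ lifting with $T=\chi_\# G_\varphi$ and $|T|=|G_\varphi|$, mollification with area-strict convergence of the lifting, and passage to the limit in the graphs and in the masses). The substantive input you correctly isolate --- existence of the $BV$ lifting and the mass identity under $\chi_\#$ --- is the content of \cite[Corollary 1, p.~105]{Gia-Mod-Sou-S1}, which the paper also invokes in the proof of Proposition~\ref{prop:components of T}.
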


Using the above approximation result, we now prove an extension result for cartesian currents, which we could not find in the literature.
\begin{lemma}[Extension of cartesian currents] \label{lemma:extension of currents}
Let $O \subset \RR^d$ be a bounded, open set with Lipschitz boundary and let $T \in \cart(O \x \SS^1)$. Then there exist an open set $\tilde O \Supset O$ and a current $T \in \cart(\tilde O \x \SS^1)$ such that $\tilde T|_{O \x \RR^2} = T$ and $|\tilde T|(\de O \x \RR^2) = 0$. 
\end{lemma}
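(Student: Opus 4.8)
The plan is to reduce the problem to an extension statement for maps by using the Approximation Theorem, then extend maps across a Lipschitz boundary, and finally pass to the limit in the sense of currents. First I would invoke Theorem~\ref{thm:approximation} to obtain smooth maps $u_h \in C^\infty(O;\SS^1)$ with $G_{u_h} \weak T$ in $\D_d(O \x \RR^2)$ and $|G_{u_h}|(O \x \RR^2) \to |T|(O \x \RR^2)$. Since $O$ has Lipschitz boundary, a standard reflection/extension argument produces an open set $\tilde O \Supset O$ and extensions $\tilde u_h \colon \tilde O \to \RR^2$ with uniformly controlled $W^{1,1}$-norm on a fixed neighbourhood; the subtle point is that a naive extension need not be $\SS^1$-valued, so I would compose with the nearest-point projection $P(x) = x/|x|$ after arranging (by a reflection that preserves the constraint, or by mollification) that the extended maps avoid the origin. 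By Lemma~\ref{lemma:W11aregraphs}, each $\tilde u_h \in W^{1,1}(\tilde O;\SS^1) \subset \mathcal{A}^1(\tilde O;\RR^2)$, so its graph $G_{\tilde u_h}$ is a well-defined $d$-current in $\D_d(\tilde O \x \RR^2)$ with $G_{\tilde u_h}|_{O \x \RR^2} = G_{u_h}$.

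Next I would extract a limit. The key is a uniform mass bound: $|G_{\tilde u_h}|(\tilde O \x \RR^2) \leq C$, which follows from the $W^{1,1}$-bound on the extensions (the $2$-minors vanish since the maps are $\SS^1$-valued, so the mass is controlled by $\int_{\tilde O}(1 + |\nabla \tilde u_h|_1)\d x$ up to constants). The boundaries $\de G_{\tilde u_h}$ are supported in $\de \tilde O \x \RR^2$ by Remark~\ref{rmk:bd of smooth} applied to the $W^{1,2}$ (indeed smooth after mollification) extensions, so after shrinking $\tilde O$ slightly one has equibounded masses of boundaries on compact subsets. The Closure Theorem then gives a subsequence with $G_{\tilde u_h} \weak \tilde T$ for some i.m.\ rectifiable current $\tilde T \in \D_d(\tilde O \x \RR^2)$, and lower semicontinuity of mass plus the construction forces $\tilde T|_{O \x \RR^2} = T$ (since restriction commutes with the weak convergence on the open set $O$, where the limit is already identified as $T$). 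One checks $\tilde T \in \cart(\tilde O \x \SS^1)$ directly: $\pi^{\tilde O}_\# \tilde T = \llbracket \tilde O \rrbracket$ and $\tilde T|_{\d x} \geq 0$ pass to the limit, $\supp(\tilde T) \subset \ol{\tilde O} \x \SS^1$ because each $G_{\tilde u_h}$ is supported in $\ol{\tilde O} \x \SS^1$, and $\de \tilde T|_{\tilde O \x \RR^2} = 0$ because $\de G_{\tilde u_h}|_{\tilde O \x \RR^2} = 0$.

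The main obstacle — and where care is needed — is the requirement $|\tilde T|(\de O \x \RR^2) = 0$, i.e.\ that no mass of the extended current concentrates exactly on the old boundary. I would handle this by a slicing/good-radius argument: since $|\tilde T|$ is a finite measure on $\tilde O \x \RR^2$, for a.e.\ choice of a slightly enlarged Lipschitz domain $O_t$ with $O \subset O_t \subset \tilde O$ in a one-parameter family exhausting the collar between $O$ and $\tilde O$, one has $|\tilde T|(\de O_t \x \RR^2) = 0$; but to hit $\de O$ itself one instead perturbs the original extension so that the transition layer of $\tilde u_h$ is pushed strictly outside $\ol O$, making $\tilde u_h$ locally constant (equal to its trace) in a one-sided neighbourhood of $\de O$ and hence $G_{\tilde u_h}$, and therefore $\tilde T$, carries no vertical mass over $\de O$; combined with the fact that $u \in W^{1,1}$ has no jump concentration on the $\L^d$-null, $\H^{d-1}$-finite set $\de O$ transverse to the graph, this yields $|\tilde T|(\de O \x \RR^2) = 0$. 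Alternatively, and perhaps more cleanly, one extends instead to $\tilde O$ and then observes that by the coarea/good-set argument the bad set of radii is null, so after an arbitrarily small dilation one may assume the estimate holds with $\de O$ replaced by the new inner boundary and rename; this is the step I expect to require the most delicate bookkeeping.
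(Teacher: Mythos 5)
Your overall strategy coincides with the paper's: approximate $T$ by smooth graphs via Theorem~\ref{thm:approximation}, extend by reflection across the Lipschitz boundary, and pass to the limit using the mass bound and the closure theorem. However, there is a genuine gap at the central step, namely the claim that $\de G_{\tilde u_h}$ vanishes in the interior of $\tilde O$. You justify this by ``Remark~\ref{rmk:bd of smooth} applied to the $W^{1,2}$ (indeed smooth after mollification) extensions'', but neither hypothesis holds. The approximating maps $u_h$ are smooth only in the open set $O$ and are controlled only in $W^{1,1}(O)$; their gradients may blow up at $\de O$, so the reflected extension is merely $W^{1,1}$ in any neighbourhood of $\de O$, and for general maps in $\mathcal{A}^1$ the interior boundary of the graph need \emph{not} vanish (the map $x/|x|$ is the standard counterexample, and Remark~\ref{rmk:bd of smooth} explicitly warns of this). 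Mollifying does not rescue the argument: it destroys the $\SS^1$-constraint, and re-projecting or otherwise modifying $\tilde u_h$ near $\de O$ would break the identity $\tilde T|_{O\x\RR^2}=T$. The paper closes exactly this gap with a local argument near each boundary point: on the (simply connected) subgraph-type neighbourhood $C_x\cap O$ it lifts $u_h=\exp(\iota\varphi_h)$ with $\varphi_h\in W^{1,1}$, extends the \emph{scalar} lifting by the same reflection, and identifies $G_{\tilde u_h}$ locally as $\chi_\# G_{\tilde\varphi_h}$, which is a cartesian current in $B_{\delta_x}(x)\x\SS^1$ and hence boundaryless there. Without some such device your extension could acquire interior boundary concentrated on $\de O$, and the limit would fail to be cartesian.

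The final step $|\tilde T|(\de O\x\RR^2)=0$ is also not established by your sketch. Pushing a ``transition layer'' outside $\ol O$ is not meaningful here ($u_h$ is not constant near $\de O$ and has no pointwise trace to freeze), and the underlying map of a general cartesian current is only $BV$, so mass can perfectly well concentrate on the $(d-1)$-dimensional set $\de O$. The paper's mechanism is quantitative: by the reflection structure, $|G_{\tilde u_h}|\big((\de O+B_\eta(0))\x\RR^2\big)\le C_\Gamma|G_{u_h}|(O^{\rm in}_{\eta'}\x\RR^2)$, and the mass convergence $|G_{u_h}|(O\x\RR^2)\to|T|(O\x\RR^2)$ from the Approximation Theorem (which you invoke but do not use at this point) upgrades to convergence on a.e.\ inner collar $O^{\rm in}_{\eta'}$; lower semicontinuity of mass and $\eta,\eta'\to0$ then give the claim. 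Your ``good radius'' alternative is the right instinct but needs precisely this uniform collar estimate to be carried out.
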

\begin{proof}
Applying Theorem \ref{thm:approximation} we find a sequence $u_k\in C^{\infty}(O;\SS^1)$ such that $G_{u_k}\weak T$ in $\D_d(O\x\RR^2)$ and $|G_{u_k}|(O\x\RR^2)\to |T|(O\x\RR^2)$. In particular, the sequence $|G_{u_k}|(O\x\RR^2)$ is bounded, which implies that
\begin{equation}\label{eq:massbound}
\sup_k \integral{O}{|\nabla u_k|}{\mathrm{d}x}<C\,.
\end{equation}	
Next we extend the functions $u_k$. To this end, note that there exists $t>0$ and a bi-Lipschitz map $\Gamma\colon (\partial O\x(-t,t))\to \Gamma(\partial O\x (-t,t))$ such that $\Gamma(x,0)=x$ for all $x\in\partial O$, $\Gamma(\partial O\x (-t,t))$ is an open neighborhood of $\partial O$ and
\begin{equation}\label{eq:bicollar}
\Gamma(\partial O\x (-t,0))\subset O,\quad\quad \Gamma(\partial O\times(0,t))\subset\RR^2\setminus\ol O\,.
\end{equation}
This result is a consequence of \cite[Theorem 7.4 \& Corollary 7.5]{Luu-Vae}; details can be found for instance in \cite[Theorem 2.3]{Lic}. The extension of $u_k$ is then achieved via reflection. More precisely, for a sufficiently small $t'>0$ we define it on $O'$ with $O'=O+B_{t'}(0)$ by
\begin{equation}\label{eq:extensionbyreflection}
\tilde{u}_k(x)=
\begin{cases}
u_k(\Gamma(P(\Gamma^{-1}(x)))) &\mbox{if $x\notin O$} \, ,
\\
u_k(x) &\mbox{otherwise,}
\end{cases}
\end{equation}
where $P(x,\tau)=(x,-\tau)$. Since $\Gamma$ is bi-Lipschitz, we have that $\tilde{u}_k\in W^{1,1}(O';\SS^1)$ and by a change of variables we can bound the $L^1$-norm of its gradient via
\begin{equation}\label{eq:gradbound}
\integral{O'}{|\nabla \tilde{u}_k|}{\d x}\leq \integral{O}{|\nabla u_k|}{\d x}+C_{\Gamma} \integral{O'\setminus O}{|(\nabla u_k)\circ \Gamma\circ P\circ\Gamma^{-1}|}{\d x}\leq C_{\Gamma}\integral{O}{|\nabla u_k|}{\d x} \, ,
\end{equation}
where the constant $C_{\Gamma}$ depends only on the bi-Lipschitz properties of $\Gamma$ and the dimension. Lemma \ref{lemma:W11aregraphs} implies that $\tilde{u}_k\in\mathcal{A}^1(O';\RR^2)$.
In particular, the current $G_{\tilde{u}_k}\in\mathcal{D}_d(O'\times\RR^2)$ is well-defined in the sense of
\begin{equation*}
G_{\tilde{u}_k}(\omega )=\integral{O'}{\langle\omega(x,\tilde{u}_k(x)),M(\nabla\tilde{u}_k(x))\rangle}{\d x}\,,
\end{equation*}
with $M(\nabla\tilde{u}_k)$ given by \eqref{eq:minors}. We next prove that $G_{\tilde{u}_k}\in\cart(O'\times\SS^1)$. First note that whenever $\omega\in\mathcal{D}^d(O'\times\RR^2)$ is a form with $\supp(\omega)\compact O'\times\RR^2\setminus (O'\times\SS^1)$, then the definition yields $G_{\tilde{u}_k}(\omega)=0$. It then suffices to prove that $\partial G_{\tilde{u}_k}|_{O'\times\RR^2}=0$. We will argue locally. For each $x\in O'$ we choose a rotation $Q_x$, radii $r_x>0$, and heights $h_x>0$ such that the cylinders $C_x:=x+Q_x \left((-r_x,r_x)^{d-1}\times (-h_x,h_x)\right)$ satisfy
\begin{itemize}
	\item[(i)] $C_x\compact O$ if $x\in O$;
	\item [(ii)] $C_x\compact O'\setminus \overline{O}$ if $x\in O'\setminus\ol{O}$;
	\item[(iii)] $C_x\compact O'$ if $x\in\partial O$ and 
	\begin{equation*}
		C_x\cap O =C_x\cap \left(x+Q_x\{(x',x_d)\in\RR^d:\,x'\in (-r_x,r_x)^{d-1}: -h_x<x_d<\psi(x')\}\right)
	\end{equation*}
		for some $\psi\in {\rm Lip}(\RR^{d-1})$. 
\end{itemize}
For $x\in O$ we have $\partial G_{\tilde{u}_k}|_{ C_x\times\RR^2}=\partial G_{u_k}|_{C_x\times\RR^2}=0$ since $u_k\in C^{\infty}(C_x;\SS^1)$. Next consider the second case, namely $x\in O'\setminus\ol{O}$. Since $C_x\compact O'\setminus\ol{O}$, the properties in~\eqref{eq:bicollar} imply that $\Gamma\circ P\circ\Gamma^{-1}(C_x)\compact O$. In particular, by the smoothness of $u_k$ on $O$ we have that $\tilde{u}_k\in W^{1,\infty}(C_x)$, so that by Remark \ref{rmk:bd of smooth} again $\partial G_{\tilde{u}_k}|_{C_x\times\RR^2}=0$. Finally, we consider $x\in\partial O$. Since $C_x\cap O$ is (up to a rigid motion) the subgraph of a Lipschitz function, it is in particular simply connected. By classical lifting theory, we find a sequence of scalar functions $\varphi_k\in C^{\infty}(C_x\cap O)$ such that $u_k(x)=\exp(\iota\varphi_k(x))$. In particular, using the chain rule we see that $\varphi_k\in W^{1,1}(C_x\cap O)$. Now fix $0<\delta_x < r_x$ small enough such that $B_{\delta_x}(x)\subset C_x$ and
\begin{equation*}
(\Gamma\circ P\circ \Gamma^{-1})(B_{\delta_x}(x)\cap (O'\setminus \ol{O}))\subset C_x\cap O\,,
\end{equation*}
which is possible due to \eqref{eq:bicollar}. We then extend the lifting $\varphi_k$ to a function $\tilde{\varphi}_k\in W^{1,1}(B_{\delta_x}(x))$ via the same reflection construction as in \eqref{eq:extensionbyreflection}, which is well-defined due to the above inclusion. Observe that this definition guarantees that $\tilde{u}_k(y)=\exp(\iota\tilde{\varphi}_k(y))$ for almost every $y\in B_{\delta_x}(x)$. Expressed in terms of currents this means that 
\begin{equation*}
G_{\tilde{u}_k}|_{B_{\delta_x}(x)\times\RR^2}=\chi_\# G_{\tilde{\varphi}_k}|_{B_{\delta_x}(x)\times \RR^2}\,, 
\end{equation*}
where $G_{\tilde{\varphi}_k} \in \D_d(B_{\delta_x}(x)\x \RR)$ is the current associated to the graph of $\tilde{\varphi}_k$ and $\chi \colon \RR^d \x \RR \to \RR^d \x \SS^1$ is the covering map defined by $\chi(x,\vartheta) := (x, \cos(\vartheta), \sin(\vartheta))$. In particular, by \cite[Theorem 2, p. 97 \& Proposition 1 (i), p. 100]{Gia-Mod-Sou-S1} we have $G_{\tilde{u}_k}|_{B_{\delta_x}(x)\times\RR^2}\in\cart(B_{\delta_x}(x)\times\SS^1)$, so that by the definition of cartesian currents we have $\partial G_{\tilde{u}_k}|_{B_{\delta_x}(x)\times\RR^2}=0$.
	
Thus we have shown that for every $x\in O'$ there exists a ball $B_{\delta_x}(x)\subset O'$ such that $\partial G_{\tilde{u}_k}|_{B_{\delta_x}(x)\times\RR^2}=0$. Using a partition of unity to localize the support of any form $\omega\in\mathcal{D}^{d-1}(O'\times\RR^2)$ with respect to the $x$-variable, we conclude that $\partial G_{\tilde{u}_k}|_{O'\times\RR^2}=0$ and therefore $G_{\tilde{u}_k}\in\cart(O'\times\SS^1)$. As seen in the proof of Lemma \ref{lemma:W11aregraphs}, all $2$-minors of $\DD u$ vanish, so that the bounds \eqref{eq:massbound} and \eqref{eq:gradbound} yield
\begin{equation*}
|G_{\tilde{u}_k}|(O'\x \RR^2)=\integral{O'}{|M(\nabla\tilde{u}_k)|}{\d x}\leq \integral{O'}{\left(1+ |\nabla\tilde{u}_k|^2\right)^{\frac{1}{2}}}{\d x}\leq C\,. 
\end{equation*}
Hence, up to a subsequence, we can assume that $G_{\tilde{u}_k}\weak\tilde{T}$ in $\mathcal{D}_d(O'\x\RR^2)$, see~\cite[2.2.4 Theorem~2]{Gia-Mod-Sou-I}. From \cite[4.2.2. Theorem 1]{Gia-Mod-Sou-I} it follows that $\tilde{T}\in\cart(O'\x \SS^1)$. Since $\tilde{u}_k=u_k$ on $O$, we find that $\tilde{T}|_{O\x\RR^2}=T$. It remains to show that $|\tilde{T}|(\partial O\times\RR^2)=0$. To this end, note that for $0<\eta < \eta' < 1$ (and $\eta$ small enough), by the bi-Lipschitz continuity of $\Gamma$ and \eqref{eq:bicollar} we have that
\begin{equation*}
(\Gamma\circ P\circ\Gamma^{-1})(O_{\eta}^{\rm out})\subset O_{\eta'}^{\rm in}\,,
\end{equation*}
where the sets $O_{\eta}^{\rm out}$ and $O_{\eta'}^{\rm in}$ are defined as
\begin{equation*}
O_{\eta}^{\rm out}:=\{x\in O'\setminus\ol{O}:\,\dist(x,\partial O)<\eta\},\quad O_{\eta'}^{\rm in}=\{x\in O:\,\dist(x,\partial O)<\eta'\}\,.
\end{equation*}
Hence, similar to \eqref{eq:gradbound} we obtain that
\begin{equation}\label{eq:boundtubularnbhd}
|G_{\tilde{u}_k}|((\partial O+B_{\eta}(0))\x\RR^2)\leq C_{\Gamma}\integral{O_{\eta'}^{\rm in}}{\left(1+|\nabla u_k|^2\right)^{\frac{1}{2}}}{\d x}=C_{\Gamma}|G_{u_k}|(O_{\eta'}^{\rm in}\x\RR^2)\,.
\end{equation}
Since $|G_{u_k}|(O\x\RR^2)\to |T|(O\x\RR^2)$ and $|T|$ is a finite measure, for a.e.\ $\eta' \in (0,1)$ we have $|G_{u_k}|(O_{\eta'}^{\rm in}\x\RR^2)\to |T|(O_{\eta'}^{\rm in}\x\RR^2)$.  Applying the lower semicontinuity of the mass with respect to weak convergence of currents in \eqref{eq:boundtubularnbhd}, we infer that 
\begin{equation*}
|\tilde{T}|((\partial O+B_{\eta}(0))\x\RR^2)\leq C_{\Gamma}|T|(O_{\eta'}^{\rm in}\x\RR^2)\,.
\end{equation*}
Sending first $\eta\to 0$ and then $\eta'\to 0$ we conclude that $|\tilde{T}|(\partial O\x\RR^2)=0$ as claimed. 
\end{proof}

We will also use the structure theorem for cartesian currents in $O \x \SS^1$ that has been proven in~\cite[Section 3, Theorems 1, 5, 6]{Gia-Mod-Sou-S1}.\footnote{As for the Approximation Theorem, no boundary regularity is required for this result.} However, to simplify notation, from now on we focus on dimension two. Recall that $\Omega\subset\RR^2$ is a bounded, open set with Lipschitz boundary. To state the theorem, we recall the following decomposition for a current $T \in \cart(\Omega \x \SS^1)$. Letting $\M$ be the countably $\H^2$-rectifiable set where $T$ is concentrated, we denote by $\M^{(a)}$ the set of points  $(x,y) \in \M$ at which the tangent plane $\mathrm{Tan}(\M,(x,y))$ does not contain vertical vectors (namely, the Jacobian of the projection $\pi^\Omega$ restricted to $\mathrm{Tan}(\M,(x,y))$ has maximal rank), by $\M^{(jc)} := (\M \sm \M^{(a)}) \cap (J_T \x \SS^1)$, where $J_T := \{ x \in \Omega \ : \ \frac{\d \pi^\Omega_\# |T|}{\d \H^1}(x) > 0 \}$, and by $\M^{(c)} := \M \sm (\M^{(a)} \cup \M^{(jc)})$. Then we can split the current as
\begin{equation*}
T = T^{(a)} + T^{(c)} + T^{(jc)} \, ,
\end{equation*}
where $T^{(a)} := T \mres \M^{(a)}$, $T^{(c)} := T \mres \M^{(c)}$, $T^{(jc)} := T \mres \M^{(jc)}$ are mutually singular measures, and we denote by $\mres$ the restriction of the Radon measure $T$.
Hereafter we use the notation $ \widehat x^1 =  x^2$ and $ \widehat x^2 = x^1$. 

\begin{theorem}[Structure Theorem for $\cart(\Omega \x \SS^1)$] \label{thm:structure} 
Let $T \in \cart(\Omega \x \SS^1)$. Then there exists a unique map $u_T \in BV(\Omega;\SS^1)$ and an (not unique) i.m.\ rectifiable 1-current $L_T =\tau(\L,k,\vec{L}_T) \in \D_1(\Omega)$  such that $T^{(jc)} = T^{(j)} + L_T \x \llbracket \SS^1 \rrbracket$ and
\begin{align}
T(\phi(x,y) \d x) & = T^{(a)}(\phi(x,y) \d x)  = \integral{\Omega}{\phi(x,u_T(x))}{\d x} \, ,   \label{eq:T horiz} \\
T^{(a)}(\phi(x,y) \d \widehat x^l \w \d y^m) & = (-1)^{2-l} \integral{\Omega}{\phi(x,u_T(x)) \de^{(a)}_{x^l} u_T^m(x)}{\d x} \, , \\
T^{(c)}(\phi(x,y) \d \widehat x^l \w \d y^m) &= (-1)^{2-l} \integral{\Omega}{\phi(x,\tilde u_T(x)) }{\d \de^{(c)}_{x^l} u_T^m(x)} \, , \label{eq:T cantor} \\
T^{(j)}(\phi(x,y) \d \widehat x^l \w \d y^m) &= (-1)^{2-l} \integral{J_{u_T}}{\bigg\{\integral{\gamma_x}{\phi(x,y)}{\d y^m}\bigg\} \nu_{u_T}^l(x)}{\d \H^1(x)} \label{eq:T jump}
\end{align}
for every $\phi \in C^\infty_c(\Omega \x \RR^2)$, $\gamma_x$ being the (oriented) geodesic arc in $\SS^1$ that connects $u_T^-(x)$ to $u_T^+(x)$ and $\tilde u_T$ being the precise representative of $u_T$.
\end{theorem}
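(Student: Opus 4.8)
The plan is to deduce the statement from the general structure theory for cartesian currents in~\cite{Gia-Mod-Sou-S1}, specialised to dimension $d=2$ and to the target $\SS^1$; the paragraphs below are the road map for that reduction.

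\emph{Construction and uniqueness of $u_T$.} First I would apply the Approximation Theorem~\ref{thm:approximation} to obtain $u_h \in C^\infty(\Omega;\SS^1)$ with $G_{u_h} \weak T$ in $\D_2(\Omega\x\RR^2)$ and $|G_{u_h}|(\Omega\x\RR^2) \to |T|(\Omega\x\RR^2)$. Since each $u_h$ is $\SS^1$-valued, all $2$-minors of $\nabla u_h$ vanish (as in the proof of Lemma~\ref{lemma:W11aregraphs}), hence $|M(\nabla u_h)| = (1+|\nabla u_h|^2)^{1/2}$ and $\int_\Omega |\nabla u_h|\,\d x \leq |G_{u_h}|(\Omega\x\RR^2) \leq C$. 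By the compact embedding $BV(\Omega;\RR^2)\hookrightarrow L^1(\Omega;\RR^2)$ we may extract a subsequence with $u_h \to u_T$ in $L^1$ and a.e., with $u_T \in BV(\Omega;\SS^1)$. Testing $G_{u_h} \weak T$ against forms $\phi(x,y)\,\d x$ and passing to the limit by dominated convergence yields~\eqref{eq:T horiz}; since $T|_{\d x}\ge 0$ fixes the orientation and $\pi^\Omega_\# T = \llbracket\Omega\rrbracket$, this identity determines $u_T$ uniquely (independently of the approximating sequence), and $\|T\|_1 = \int_\Omega|u_T|\,\d x$ by~\eqref{eq:norm 1}.

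\emph{Identification of the three parts.} I would then match the decomposition $T = T^{(a)} + T^{(c)} + T^{(jc)}$ with the Radon--Nikod\'ym splitting $\DD u_T = \nabla u_T\,\L^2 + \DD^{(c)}u_T + \DD^{(j)}u_T$. The part $T^{(a)}$ is the ``non-vertical'' piece, so passing to the limit in the minors formula for $G_{u_h}$ and using the orientation~\eqref{eq:smooth components} gives~\eqref{eq:T horiz}--\eqref{eq:T cantor}, the sign $(-1)^{2-l}$ coming from $\d\widehat x^1 = \d x^2$, $\d\widehat x^2 = \d x^1$. The delicate point is $T^{(jc)}$: one disintegrates it along $\pi^\Omega$ and observes that for $\H^1$-a.e.\ $x\in J_T$ the slice $\langle T,\pi^\Omega,x\rangle$ is an i.m.\ rectifiable $1$-current supported in $\{x\}\x\SS^1$; since $\de T|_{\Omega\x\RR^2}=0$ its boundary is $\delta_{(x,u_T^+(x))}-\delta_{(x,u_T^-(x))}$. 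Any such $1$-current in $\SS^1$ equals the oriented geodesic arc $\gamma_x$ from $u_T^-(x)$ to $u_T^+(x)$ plus an integer multiple $k(x)\,\llbracket\SS^1\rrbracket$. Declaring the geodesic part to be $T^{(j)}$ (which gives~\eqref{eq:T jump}) and collecting the leftover full loops over $J_{u_T}$, together with any vertical rectifiable pieces of $\M^{(jc)}$ lying over points with $u_T^+=u_T^-$, produces the i.m.\ rectifiable $1$-current $L_T=\tau(\L,k,\vec L_T)$ with $T^{(jc)} = T^{(j)} + L_T\x\llbracket\SS^1\rrbracket$. The non-uniqueness of $L_T$ reflects the freedom of replacing $\gamma_x$ by $\gamma_x\pm\llbracket\SS^1\rrbracket$, which changes $L_T$ but not $T$.

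\emph{Main obstacle.} The technical core is the slicing/disintegration analysis of $T^{(jc)}$: showing that $\M^{(jc)}$ projects onto an $\H^1$-rectifiable set, that the vertical slices are genuine i.m.\ rectifiable $1$-currents for $\H^1$-a.e.\ base point, and that their boundaries are exactly the jump pairs of $u_T$; the measurability of $x\mapsto k(x)$ and the rectifiability of $L_T$ must also be checked. This is precisely the content of~\cite[Section~3, Theorems~1,~5,~6]{Gia-Mod-Sou-S1}, so in the write-up I would carry out the reduction to the two-dimensional, $\SS^1$-valued case explicitly and then invoke those theorems for the structure of $T^{(jc)}$, rather than redevelop the geometric-measure-theoretic machinery; the construction and uniqueness of $u_T$ and the $BV$ bound, by contrast, follow directly from the Approximation Theorem as sketched above.
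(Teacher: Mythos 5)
Your overall strategy coincides with the paper's: this theorem is not reproved there but imported from \cite[Section~3, Theorems~1, 5, 6]{Gia-Mod-Sou-S1}, and your plan likewise defers the geometric-measure-theoretic core (rectifiability of the projection of $\M^{(jc)}$, structure of the vertical slices, measurability of $k$) to those theorems, while the construction of $u_T$ via Theorem~\ref{thm:approximation} and the identification of the absolutely continuous and Cantor parts are standard and correctly sketched.

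There is, however, one genuine gap in the reduction. The statement you are asked to prove is \emph{not} literally the GMS theorem: in \cite{Gia-Mod-Sou-S1} the jump part $T^{(j)}$ is written over the jump set $J_\varphi$ of a local $BV$ lifting $\varphi$ of $T$, using the counterclockwise arc from $\varphi^-(x)$ to $p^+(x)$ with $0\le p^+-\varphi^-<2\pi$, whereas \eqref{eq:T jump} is written over $J_{u_T}$ with the \emph{geodesic} arc $\gamma_x$ from $u_T^-(x)$ to $u_T^+(x)$. Passing from one to the other is exactly what the paper does in Remark~\ref{r.differentL}: one replaces $(p^+,k')$ by $(q^+,k)$ depending on whether $p^+-\varphi^-$ is smaller or larger than $\pi$, fixes a convention at antipodal traces via the function $\Psi$ of \eqref{eq:def of Psi} (where ``the geodesic arc'' is otherwise ambiguous), uses the $BV$ chain rule to see that $J_\varphi$ and $J_{u_T}$ differ only by points where $q^+=\varphi^-$ (which are then absorbed into $\L$), and checks that the resulting formula is invariant under swapping $(u_T^+,u_T^-,\nu_{u_T})\to(u_T^-,u_T^+,-\nu_{u_T})$. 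Your sentence ``any such $1$-current in $\SS^1$ equals the oriented geodesic arc plus $k(x)\llbracket\SS^1\rrbracket$'' is correct for non-antipodal traces but hides precisely this bookkeeping, and your explanation of the non-uniqueness of $L_T$ (``replacing $\gamma_x$ by $\gamma_x\pm\llbracket\SS^1\rrbracket$'') is not quite right: once the geodesic convention is fixed, $\gamma_x$ is not free; the non-uniqueness comes from the dependence of $(\L,k,\vec L_T)$ on the local lifting and on the chosen orientation of $\nu_{u_T}$. A minor additional slip: the slice $\langle T,\pi^\Omega,x\rangle$ of a $2$-current by a map into $\RR^2$ at a point is a $0$-current, not a $1$-current; the vertical fibres over $J_T$ are obtained by disintegrating $|T^{(jc)}|$ along $J_T$ (or, as in GMS, by lifting $T$ to the boundary of the subgraph of $\varphi$), not by slicing with the full projection.
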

 \begin{remark}\label{r.differentL}
In \cite[Theorem 6]{Gia-Mod-Sou-S1}	the structure of $T^{(j)}$ is formulated in a slightly different way, using the counter-clockwise arc $\gamma_{\varphi^-,\varphi^+}$ between $(\cos(\varphi^-),\sin(\varphi^-))$ and $(\cos(\varphi^+),\sin(\varphi^+))$ and replacing $J_{u_T}$ by $J_{\varphi}$, where $\varphi\in BV(\Omega)$ is a local lifting of $T$.
More precisely, the notion of lifting is understood in the sense that $T=\chi_{\#}G_{\varphi}$, where $\chi \colon \RR^2\x \RR\to\RR^2\x \SS^1$ is the covering map $(x,\vartheta)\mapsto (x,\cos(\vartheta),\sin(\vartheta))$  and $G_\varphi \in \cart(\Omega \x \RR)$ is the cartesian current given by the boundary of the subgraph of $\varphi$ (hence the push-forward via $\chi$ is well-defined as $G_{\varphi}$ has finite mass, see Section \ref{sec:currentsbasics}). 
To explain how to deduce~\eqref{eq:T jump}, we recall the local construction in~\cite{Gia-Mod-Sou-S1}: for every $x\in J_{\varphi}$ one chooses $p^+(x)\geq 0$ and $k'(x)\in\mathbb{N}\cup\{0\}$ such that
\begin{equation*}
	\varphi^+(x)=p^+(x)+2\pi k'(x),\quad\quad 0\leq p^+(x)-\varphi^-(x)<2\pi\,,
\end{equation*}
where we recall that in the scalar case the traces (and the normal to the jump set) are arranged to satisfy $\varphi^-<\varphi^+$ on $J_{\varphi}$. Then, locally, the $1$-current $L_T'$ in \cite[Theorem 6]{Gia-Mod-Sou-S1} is given by $L_T'=\tau(\mathcal{L}',k'(x),\vec{L}_T')$, where $\mathcal{L}'\subset J_{\varphi}$ denotes the set of points with $k'(x)\geq 1$ and $\vec{L}_T'$ is the orientation of $\mathcal{L}'$ defined via $\vec{L}_T'= \nu_{\varphi}^2e_1 - \nu_{\varphi}^1e_2$. To obtain the representation via geodesics, we let 
\begin{equation*}
	(q^+(x), k(x))=
	\begin{cases}
		(p^+(x),k'(x)) &\mbox{if $p^+(x)-\varphi^-(x)<\pi$} \, ,
		\\
		(p^+(x)-2\pi,k'(x)+1) &\mbox{if $p^+(x)-\varphi^-(x)>\pi$} \, ,
	\end{cases}
\end{equation*}
The case $p^+(x)-\varphi^-(x)=\pi$, i.e, antipodal points, needs special care. In this case we define~$q^+(x)$ and $k(x)$ according to the following rule: let $\tilde{\varphi}^\pm(x):=\varphi^{\pm}(x)\mod 2\pi \in [0,2 \pi)$. Then
\begin{equation*}
	(q^+(x),k(x))=
	\begin{cases}
		(p^+(x),k'(x)) &\mbox{if $\Psi(\tilde{\varphi}^+(x) - \tilde{\varphi}^-(x)) = \pi$} \, ,
		\\
		(p^+(x)-2\pi,k'(x)+1) &\mbox{if $\Psi(\tilde{\varphi}^+(x) - \tilde{\varphi}^-(x)) = -\pi$} \, ,
	\end{cases}
\end{equation*}
with the function $\Psi$ defined in \eqref{eq:def of Psi}. Replacing $(p^+(x),k'(x))$ by $(q^+(x), k(x))$, the modified structure of $T^{(j)}$ can be proven following exactly the lines of \cite[p.107-108]{Gia-Mod-Sou-S1}, noting that by the chain rule in $BV$ \cite[Theorem 3.96]{Amb-Fus-Pal} we have $J_{\varphi}=J_{u_T}\cup\{x\in J_{\varphi}:\,q^+(x)=\varphi^-(x)\}$. In particular, 
\begin{equation}\label{eq:localstructureofL}
	L_T=\tau(\mathcal{L}, k,\vec{L}_T) \, , \quad\quad\mathcal{L}=\{x\in J_{\varphi}:\, k(x)\geq 1\} \, ,\quad\quad\vec{L}_T=\nu_{\varphi}^2e_1-\nu_{\varphi}^1e_2
\end{equation} 
still depend on the local lifting $\varphi$, but in~\eqref{eq:T jump} the curves $\gamma_{\varphi^-,\varphi^+}$ are replaced by the more intrinsic geodesic arcs $\gamma_x$ connecting $u_T^{-}(x)=(\cos(\varphi^{-}(x)),\sin(\varphi^{-}(x)))$ to $u_T^{+}(x)=(\cos(\varphi^{+}(x)),\sin(\varphi^+(x)))$ (these formulas are consistent with the choice $\nu_{u_T}(x)=\nu_{\varphi}(x)$). In particular, exchanging $u_T^-(x)$ and $u_T^+(x)$ will change the orientation of the arc (also in the case of antipodal points) and of the normal $\nu_{u_T}(x)$,\footnote{More precisely, assume that $u_1, u_2, \nu \in \SS^1$ and assume that the geodesic arc from $u_1$ to $u_2$ is counterclockwise. If $(u_T^+(x), u_T^-(x),\nu_{u_T}(x)) = (u_2,u_1,\nu)$, then $\gamma_x$ is oriented counterclockwise. If, instead,  $(u_T^+(x), u_T^-(x),\nu_{u_T}(x)) = (u_1,u_2,-\nu)$ (equivalent to the first choice, according to the definition of jump point), then $\gamma_x$ is oriented clockwise.} so that the formula for $T^{(j)}$ is invariant, hence well-defined without the use of local liftings.	
\end{remark}

It is convenient to recast the jump-concentration part of $T \in \cart(\Omega \x \SS^1)$ in the following way. Let $L_T = \tau(\L,k,\vec{L}_T)$ as in Theorem~\ref{thm:structure}. We introduce for $\H^1$-a.e.\ $x \in J_T$ the normal $\nu_T(x)$ to the 1-rectifiable set $J_T = J_{u_T} \cup \L$ as 
\begin{equation}\label{eq:defnormal}
\nu_T(x)=
\begin{cases}
\nu_{u_T}(x) &\mbox{if $x\in J_{u_T}$} \, ,
\\
(-\vec{L}^2_T(x), \vec{L}^1_T(x)) &\mbox{if $x\in\mathcal{L}\setminus J_{u_T}$} \, ,
\end{cases}
\end{equation}
where we choose $\nu_{u_T}(x) = (-\vec{L}^2_T(x), \vec{L}^1_T(x))$ if $x \in \mathcal{L} \cap J_{u_T}$. For $\H^1$-a.e.\ $x \in J_T$ we consider the curve~$\gamma^T_x$ given by: the (oriented) geodesic arc $\gamma_x$ which connects~$u_T^-(x)$ to~$u_T^+(x)$ if $x \in J_{u_T} \sm \L$ (in the sense of Remark~\ref{r.differentL} in case of antipodal points); the whole~$\SS^1$ turning $k(x)$ times if $x \in \L \sm J_{u_T}$, $k(x)$ being the integer multiplicity of $L_T$; the sum (in the sense of currents)\footnote{In this case, a more elementary way of defining $\gamma_x^T$ is the following: let $\gamma_x \colon [0,1] \to \SS^1$ be the geodesic arc, and let $\varphi_x \colon [0,1] \to \RR$ be a continuous function (unique up to translations of an integer multiple of $2 \pi$) such that $\gamma_x(t) = \exp(\iota \varphi_x(t))$. Then $\gamma_x^T(t) = \exp\big(\iota (1-t) \varphi_x(0) + \iota t(\varphi_x(1) + 2 \pi k(x) )\big)$.} of the oriented geodesic arc~$\gamma_x$ and of $\SS^1$ with multiplicity $k(x)$ if $x \in J_{u_T} \cap \L$. Then
\begin{equation} \label{eq:jc part of T}
T^{(jc)}(\phi(x,y) \d \widehat x^l \w  \d y^m) = (-1)^{2-l}\integral{J_T}{\Big\{ \integral{ \gamma^T_x}{\phi(x,y)}{\d y^m} \Big\} \nu_T^{l}(x)}{\d \H^1(x)}  \, .
\end{equation}
The integration over $\gamma_x^T$ with respect to the form $\d y^m$ in the formula above is intended with the correct multiplicity of the curve $\gamma_x^T$ defined for $\H^1$-a.e.\ $x \in J_T$ by the integer number 
\begin{equation} \label{eq:jc multiplicity}
\mathfrak{m}(x,y) := \begin{cases}
\pm 1 \, , & \text{if } x \in J_{u_T} \sm \L \, ,\  y \in \supp(\gamma_x) \, , \\
k(x) \, , & \text{if } x \in \L \sm J_{u_T} \, ,\ y \in \SS^1,\\
k(x) \pm 1 \, , & \text{if } x \in \L \cap J_{u_T} \, ,\  y \in \supp(\gamma_x)  \, ,  \\
k(x) \, , & \text{if } x \in \L \cap J_{u_T} \, ,\  y \in \supp(\gamma^T_x) \sm \supp(\gamma_x)  \, ,
\end{cases}
\end{equation}
where $\pm = +/ -$ if the geodesic arc $\gamma_x$ is oriented counterclockwise/clockwise, respectively. More precisely,
\begin{equation} \label{eq:from dym to H1}
\integral{ \gamma^T_x}{\phi(y)}{\d y^m} = (-1)^m \,  \!\!\! \integral{ \supp(\gamma^T_x)}{\phi(y) \widehat y^m \mathfrak{m}(x,y)}{\d \H^1(y)}  \, .
\end{equation}
\begin{remark}\label{rmk:choiceoforientation}
Note that we constructed $\mathfrak{m}(x,y)$ based on the orientation \eqref{eq:defnormal} of $\nu_T$. As discussed in Remark \ref{r.differentL}, changing the orientation of $\nu_{u_T}$ changes the orientation of the geodesic $\gamma_x$, while a change of the orientation of $\vec{L}_T$ switches the sign of $k(x)$. Hence changing the orientation of $\nu_T(x)$ changes $\mathfrak{m}(x,y)$ into $-\mathfrak{m}(x,y)$. If we choose locally $\nu_T=\nu_{\varphi}$ as in Remark~\ref{r.differentL}, our construction above yields $\mathfrak{m}(x,y)\geq 0$.
\end{remark}
In the proposition below, we derive an explicit formula for the 2-vector $\vec{T}$ of a cartesian current. 

\begin{proposition} \label{prop:components of T}
Let $T \in \cart(\Omega \x \SS^1)$, let $u_T$ be the $BV$ function associated to $T$. Then $|T^{(a)}|=\H^2 \mres \M^{(a)}$, $|T^{(c)}|=\H^2 \mres \M^{(c)}$, $|T^{(jc)}|= |\mathfrak{m}| \H^2 \mres \M^{(jc)}$, and
\begin{equation} \label{eq:ac components}
\begin{split}
\sqrt{1+|\nabla u_T(x)|^2} \ \vec{T}(x,y) = & \ e_1 \wedge e_2 \\
+ & \ \de_{x^2}^{(a)} u_T^1(x) e_1 \wedge \bar e_1 + \de_{x^2}^{(a)} u_T^2(x) e_1 \wedge \bar e_2 \\
- & \ \de_{x^1}^{(a)} u_T^1(x)  e_2 \wedge \bar e_1 - \de_{x^1}^{(a)} u_T^2(x) e_2 \wedge \bar e_2 \, ,
\end{split} 
\end{equation}
for $\H^2$-a.e.\ $(x,y) \in \M^{(a)}$,
\begin{equation} \label{eq:cantor components}
\begin{split}
\vec{T}(x,y) = &  \frac{ \d \de_{x^2}^{(c)}u_T^1}{\d |\DD^{(c)} u_T|}(x)  e_1 \wedge \bar e_1 + \frac{\d \de_{x^2}^{(c)} u_T^2}{\d |\DD^{(c)} u_T|}(x) e_1 \wedge \bar e_2 \\
 - &\frac{\d \de_{x^1}^{(c)} u_T^1}{\d |\DD^{(c)} u_T|}(x)  e_2 \wedge \bar e_1 - \frac{\d \de_{x^1}^{(c)} u_T^2}{\d |\DD^{(c)} u_T|}(x) e_2 \wedge \bar e_2 \, ,
\end{split} 
\end{equation}
for $\H^2$-a.e.\ $(x,y) \in \M^{(c)}$, and 
\begin{equation} \label{eq:jc components}
\begin{split}
{\rm sign}(\mathfrak{m}(x,y))\vec{T}(x,y) = - & \ \nu^2_{T}(x) y^2 e_1 \wedge \bar e_1 + \nu^2_{T}(x) y^1 e_1 \wedge \bar e_2 \\
 + & \ \nu^1_{T}(x) y^2  e_2 \wedge \bar e_1 - \nu^1_{T}(x) y^1 e_2 \wedge \bar e_2 \, ,
\end{split} 
\end{equation}
for $\H^2$-a.e.\ $(x,y) \in \M^{(jc)}$, where $\mathfrak{m}(x,y)$ is the integer defined in~\eqref{eq:jc multiplicity}.
\end{proposition}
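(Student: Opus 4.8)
The plan is to combine the Structure Theorem \ref{thm:structure} with the Riesz representation \eqref{eq:representation}. Recall that for a cartesian current $T \in \cart(\Omega \x \SS^1)$ one has the decomposition $T = T^{(a)} + T^{(c)} + T^{(jc)}$ into mutually singular i.m.\ rectifiable currents, and that $|T| = |\vec T| |T|$-a.e.\ with $|\vec T| = 1$. First I would treat each piece separately, testing $T$ against the basis forms $\d x = \d x^1 \w \d x^2$, $\d \widehat x^l \w \d y^m$ (for $l,m \in \{1,2\}$), and $\d y = \d y^1 \w \d y^2$. Note that, since $T$ is concentrated on $\ol\Omega \x \SS^1$, the component of $\vec T$ along $\bar e_1 \w \bar e_2$ must vanish $|T|$-a.e.: indeed the projection $\pi^{\SS^1}_\# T$ is supported on the $1$-dimensional set $\SS^1$, so it has no $2$-dimensional part, which forces $T(\phi(x,y)\,\d y) = 0$ for all test $\phi$; hence $\langle \vec T, \bar e_1 \w \bar e_2\rangle = 0$. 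So $\vec T$ has at most the five components appearing in \eqref{eq:ac components}--\eqref{eq:jc components}.

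For the absolutely continuous part, formulas \eqref{eq:T horiz}--\eqref{eq:T jump} of Theorem \ref{thm:structure} directly give the pairings of $T^{(a)}$ with $\d x$ and $\d\widehat x^l \w \d y^m$ in terms of $\nabla u_T$; since $\pi^\Omega_\#(T^{(a)} \mres \M^{(a)})$ has a positive density (no vertical tangent vectors), $\langle \vec T, e_1 \w e_2\rangle \ne 0$, and one reads off the remaining components by dividing. The normalization $|\vec T| = 1$ then fixes the density $|T^{(a)}|$: the squared norm of the right-hand side of \eqref{eq:ac components} (before dividing) is $1 + |\nabla u_T|^2$ because the cross terms $e_i \w \bar e_j$ are mutually orthogonal and orthogonal to $e_1\w e_2$, whence $|T^{(a)}| = \sqrt{1+|\nabla u_T|^2}\,|\langle\vec T, e_1\w e_2\rangle|\,\H^2\mres\M^{(a)}$ combined with \eqref{eq:T horiz} gives $\langle \vec T, e_1\w e_2\rangle\, |T^{(a)}| = \L^2$-density $1$ pushed to the graph, i.e.\ $|T^{(a)}| = \H^2 \mres \M^{(a)}$. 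The Cantor part is handled identically using \eqref{eq:T cantor}: here there is no horizontal component (the projection of $T^{(c)}$ to $\Omega$ is $\DD^{(c)}u_T$-singular w.r.t.\ $\L^2$, so $\langle \vec T, e_1\w e_2\rangle = 0$ on $\M^{(c)}$), and \eqref{eq:T cantor} expresses the $\d\widehat x^l\w \d y^m$-pairings against the polar vector $\d\de^{(c)}_{x^l}u_T^m/\d|\DD^{(c)}u_T|$; since $u_T \in BV(\Omega;\SS^1)$ has $|\DD^{(c)}u_T|$-polar vector of Euclidean norm $1$ (as $u_T$ takes values in $\SS^1$, the chain rule forces the polar vector to be a unit vector, using that the columns are orthogonal to $u_T$), the expression \eqref{eq:cantor components} already has norm $1$ and hence $|T^{(c)}| = \H^2\mres\M^{(c)}$.

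The jump-concentration part is the main obstacle and requires the explicit parametrization from Remark \ref{r.differentL} and formulas \eqref{eq:jc part of T}--\eqref{eq:from dym to H1}. Here $\M^{(jc)} \subset J_T \x \SS^1$ and over $\H^1$-a.e.\ $x \in J_T$ the slice is (the support of) the curve $\gamma_x^T$, so the tangent plane to $\M^{(jc)}$ at $(x,y)$ is spanned by the horizontal direction $\tau_T(x) = (-\nu_T^2(x),\nu_T^1(x))$ along $J_T$ and by the vertical direction $(-y^2,y^1)$ tangent to $\SS^1$ at $y$; thus $\vec T(x,y) = \pm \tau_T(x) \w (-y^2,y^1)$ up to sign and orientation, and expanding this wedge in the basis $e_i \w \bar e_j$ yields exactly \eqref{eq:jc components} (the overall sign being ${\rm sign}(\mathfrak m(x,y))$ because the multiplicity $\theta$ in the i.m.\ representation of $T^{(jc)}$ is $\mathfrak m(x,y)$, cf.\ \eqref{eq:jc multiplicity} and Remark \ref{rmk:choiceoforientation}). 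To make this rigorous I would use \eqref{eq:from dym to H1} to rewrite $T^{(jc)}(\phi\,\d\widehat x^l \w \d y^m)$ as an integral over $\M^{(jc)}$ against $\H^2$ (via the coarea/slicing formula $\H^2\mres\M^{(jc)} = \int_{J_T}\H^1\mres \supp(\gamma_x^T)\,\d\H^1(x)$, valid since $J_T$ and $\gamma_x^T$ are rectifiable and the product has unit Jacobian in suitable coordinates), and then match coefficients with the Riesz representation \eqref{eq:representation} to extract $\langle \vec T, e_l \w \bar e_m\rangle$ and the density $|T^{(jc)}| = |\mathfrak m|\,\H^2\mres\M^{(jc)}$. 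The delicate points are: (i) checking that the slicing identity for $\H^2\mres\M^{(jc)}$ holds with the stated multiplicity, which follows from the structure of $\M^{(jc)}$ as a "vertical cylinder" over $J_T$ with fibers $\gamma_x^T$; (ii) keeping the signs $(-1)^{2-l}$, $(-1)^m$, the factors $\widehat x^l, \widehat y^m$, and the orientation conventions of Remark \ref{rmk:choiceoforientation} consistent throughout; and (iii) verifying $|\vec T| = 1$, i.e.\ that $(\nu_T^2 y^2)^2 + (\nu_T^2 y^1)^2 + (\nu_T^1 y^2)^2 + (\nu_T^1 y^1)^2 = |\nu_T|^2|y|^2 = 1$, which is immediate. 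Once all three pieces are identified, \eqref{eq:ac components}--\eqref{eq:jc components} and the mass formulas follow, and mutual singularity of the three measures is inherited from the decomposition in Theorem \ref{thm:structure}.
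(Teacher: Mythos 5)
Your strategy is sound and it is genuinely different from the paper's. The paper does not work downstairs from the Structure Theorem: it first lifts $T$ to a scalar $BV$ function $\varphi$ with $T=\chi_\# G_\varphi$, where $G_\varphi=\de\llbracket SG_\varphi\rrbracket$ is the boundary of the subgraph, uses the classical description of the reduced boundary of a subgraph (the normal $n$ in \eqref{eq:normalvector}, the orientation $\vec G_\varphi=-n^3e_1\w e_2+n^2e_1\w e_3-n^1e_2\w e_3$, and $|G_\varphi|=\H^2\mres\de^-SG_\varphi$), and then pushes forward through the covering map $\chi$. Because $\d\chi$ has tangential Jacobian $1$, the area formula delivers in one stroke both the orientation $\vec T=\d\chi\,\vec G_\varphi$ (which, after the $BV$ chain rule relating $\DD u_T$ to $\DD\varphi$, is exactly \eqref{eq:ac components}--\eqref{eq:jc components}) and the densities: on the jump part the multiplicity comes out as the counting function $N(x,y)=\#\{\vartheta\in[\varphi^-(x),\varphi^+(x)]:\chi(\vartheta)=y\}$, which is then identified with $\mathfrak m(x,y)$. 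Your route instead reads $\vec T\,|T|$ off componentwise from \eqref{eq:T horiz}--\eqref{eq:T jump} and normalizes using $|\vec T|=1$; this is cleaner conceptually and avoids the lifting, but it shifts all the real work into the identities you list as ``delicate points'': namely the area-formula statements $\H^2\mres\M^{(a)}=(\mathrm{id}\x u_T)_\#\big(\sqrt{1+|\nabla u_T|^2}\,\L^2\big)$ and $\H^2\mres\M^{(c)}=(\mathrm{id}\x\tilde u_T)_\#|\DD^{(c)}u_T|$, and on $\M^{(jc)}$ the slicing identity together with the fact that the integer multiplicity of the i.m.\ rectifiable current $T^{(jc)}$ at $(x,y)$ is precisely $\mathfrak m(x,y)$ and not some other integer consistent with \eqref{eq:T jump}. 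These are exactly the facts the paper obtains for free from the unit-Jacobian covering map, and they are not consequences of the Structure Theorem as stated, so in a full write-up you would have to prove them (essentially by reintroducing a local lifting, or by a direct rectifiability argument on each piece).

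Two smaller points. First, your normalization argument for $|T^{(a)}|$ and $|T^{(c)}|$ is circular as phrased: $|\vec T|=1$ is automatic from Riesz, so it cannot by itself determine whether the $\H^2$-density of $|T^{(\sigma)}|$ is $1$; you need the pushforward identities above (equivalently, that the integer multiplicity $\theta$ equals $1$ on $\M^{(a)}\cup\M^{(c)}$). Second, check the orientation on the jump part: with $\tau_T=-\nu_T^2e_1+\nu_T^1e_2$ and the counterclockwise vertical tangent $-y^2\bar e_1+y^1\bar e_2$, the wedge $\tau_T\w(-y^2\bar e_1+y^1\bar e_2)$ is the \emph{negative} of the right-hand side of \eqref{eq:jc components}; the conventions of Remark~\ref{rmk:choiceoforientation} resolve this, but it is exactly the kind of sign that must be pinned down rather than left to ``up to sign and orientation''.
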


\begin{proof}
 Assume $\Omega$ simply connected (if not, the following arguments can be repeated locally). Let us consider the covering map $\chi \colon \Omega \x \RR \to \Omega \x \SS^1$ defined by $\chi(x,\vartheta) := (x, \cos(\vartheta), \sin(\vartheta))$. By \cite[Corollary 1, p.~105]{Gia-Mod-Sou-S1} there exists a lifting of $T$, i.e., there is a function $\varphi \in BV(\Omega;\RR)$ such that $T = \chi_\# G_\varphi$, where $G_\varphi \in \cart(\Omega \x \RR)$ is the cartesian current given by the boundary of the subgraph of $\varphi$. The fine structure of such currents is well known, compare~\cite[Theorem~4.5.9]{Fed},  \cite[4.1.5 \& 4.2.4]{Gia-Mod-Sou-I}. We recall here that, if we consider the subgraph $SG_\varphi := \{ (x,y) \in \Omega \x \RR \ : \ y < \varphi(x) \}$, then $SG_\varphi$ is a set of finite perimeter; $G_\varphi$ is the current $G_\varphi = \de \llbracket SG_\varphi \rrbracket$. The interior normal to $SG_\varphi$ is given by 
\begin{equation}\label{eq:normalvector}
\begin{split}
n(x,\varphi(x)) &= \frac{\d (\DD \varphi, -\L^2)}{\d |(\DD \varphi, - \L^2)|}(x) \, , \quad \text{for } x \in \Omega \sm J_\varphi \, , \\
n(x,\vartheta) &= (\nu_{\varphi}(x),0) \, , \quad \text{for } x\in J_\varphi \, , \ \vartheta \in [ \varphi^-(x), \varphi^+(x)] \, ,
\end{split}
\end{equation}
where $\nu_\varphi$ is the normal to the jump set $J_\varphi$  and $\L^2$ denotes the two-dimensional Lebesgue measure. Moreover, the current $G_\varphi$ can be represented as $G_\varphi = \vec{G}_\varphi |G_\varphi|$ where $|G_\varphi|$ is concentrated on the reduced boundary $\de^- SG_u$, $|G_\varphi| = \H^2 \mres \de^- S G_u$, and $\vec{G}_\varphi$ is the 2-vector in $\RR^3$ such that $-\vec{G}_\varphi(x,\vartheta) \wedge n(x,\vartheta) = e_1 \wedge e_2 \wedge  e_3$, i.e.,
\begin{equation*}
\vec{G}_\varphi = -n^3 e_1 \wedge e_2 + n^2 e_1 \wedge e_3 - n^1 e_2 \wedge e_3 \, .
\end{equation*}
Finally, letting
\begin{align*}
\Sigma^{(a)} & := \{ (x,  \tilde \varphi(x)) \ : \ x \in \Omega \sm J_\varphi , \ n^3(x,u(x)) \neq 0\} \, , \\
\Sigma^{(c)} & := \{ (x, \tilde \varphi(x)) \ : \ x \in \Omega \sm J_\varphi , \ n^3(x,u(x)) = 0\} \, , \\
\Sigma^{(j)} & := \{ (x, \vartheta) \ : \ x \in J_\varphi , \ \vartheta \in [\varphi^-(x), \varphi^+(x)], \  n^3(x,\vartheta) = 0\} \, ,
\end{align*}
we have $\de^- SG_\varphi = \Sigma^{(a)} \cup \Sigma^{(c)}  \cup \Sigma^{(j)}$  and, denoting $G_\varphi^{(a)} = G_\varphi \mres \Sigma^{(a)}$, $G_\varphi^{(c)} = G_\varphi \mres \Sigma^{(c)}$, $G_\varphi^{(j)} = G_\varphi \mres \Sigma^{(j)}$,
by \cite[formulas (2) and (16)]{Gia-Mod-Sou-S1} we have on the one hand that $ u_T = (\cos(\varphi),\sin(\varphi))$ a.e.\ and 
\begin{equation}\label{eq:liftedcurrents}
\chi_\# G^{(a)}_\varphi = T^{(a)} \, , \quad \chi_\# G^{(c)}_\varphi = T^{(c)} \, ,\quad \chi_\# G^{(j)}_\varphi = T^{(jc)}\, .
\end{equation}
On the other hand, observe that the Jacobian of $\d \chi \colon \mathrm{Tan}(\de^{-}SG_\varphi,x) \mapsto \RR^4$ equals 1 (indeed~$\d \chi$ maps any pair of orthonormal vectors of $\RR^3$ to a pair of orthonormal vectors in~$\RR^4$).   Hence by the area formula, for $\sigma\in \{a,c,j\}$ we obtain
\begin{align}\label{eq:afterareaformula}
\chi_{\#}G^{(\sigma)}_{\varphi}(\omega)&=\integral{\Sigma^{(\sigma)}}{ \langle \chi^{\#}\omega,\vec{G}_{\varphi}\rangle}{\d\mathcal{H}^2(x,\vartheta)}\nonumber
\\
&=\integral{\chi(\Sigma^{(\sigma)})}{\hspace{0em}\langle \omega(x,y),\hspace{-1.2em} \sum_{(x,\vartheta)\in\chi^{-1}(x,y)} \hspace{-1.2em} \d\chi(x,\vartheta)\vec{G}_{\varphi}(x,\vartheta)\rangle}{\d\mathcal{H}^2(x,y)} \, .
\end{align}
Next, note that for $\sigma\in\{a,c\}$ the map $\chi\colon \Sigma^{(\sigma)}\to\chi(\Sigma^{(\sigma)})$ is one-to-one and for any $(x,\tilde{\varphi}(x))\in\Sigma^{(a)}\cup\Sigma^{(c)}$ we have 
\begin{align}\label{eq:formuladirection}
\d\chi(x,\tilde{\varphi}(x))\vec{G}_{\varphi}(x,\tilde{\varphi}(x))=&-n^3(x,\tilde{\varphi}(x))e_1\wedge e_2\nonumber
\\
&-n^2(x,\tilde{\varphi}(x))\sin(\tilde{\varphi}(x))e_1\wedge \bar e_1+n^2(x,\tilde{\varphi}(x))\cos(\tilde{\varphi}(x))e_1\wedge \bar e_2\nonumber
\\
&+n^1(x,\tilde{\varphi}(x))\sin(\tilde{\varphi}(x))e_2\wedge \bar e_1-n^1(x,\tilde{\varphi}(x))\cos(\tilde{\varphi}(x)e_2\wedge \bar e_2\,.
\end{align}
Since $|n|=1$ we see that $|\d\chi(x,\tilde{\varphi}(x))\vec{G}_{\varphi}(x,\tilde{\varphi}(x))|=1$, too. Moreover, for $\mathcal{H}^2$-a.e. $(x,y)\in \chi(\Sigma^{(\sigma)})$ the vector $\d\chi(x,\tilde{\varphi}(x))\vec{G}_{\varphi}(x,\tilde{\varphi}(x))$ orients the tangent space at $(x,y)$. Hence \eqref{eq:liftedcurrents} and the uniqueness of the representation of i.m.\ rectifiable currents (cf. Section \ref{sec:currentsbasics}) implies $\chi(\Sigma^{(\sigma)})=\mathcal{M}^{(\sigma)}$ up to a $\mathcal{H}^2$-negligible set, $|T^{(\sigma)}|= \mathcal{H}^2\mres \mathcal{M}^{(\sigma)}$, and 
\begin{equation*}
\vec{T}(\chi(x,\tilde{\varphi}(x)))=\d\chi(x,\tilde{u}(x))\vec{G}_{\varphi}(x,\tilde{\varphi}(x))
\end{equation*}
for $\mathcal{H}^2$-almost every $(x,y)=\chi(x,\tilde{\varphi}(x))\in\Sigma^{(\sigma)}$. By the chain rule in $BV$~\cite[Theorem~3.96]{Amb-Fus-Pal} we deduce that 
\begin{align*}
\nabla u_T & = \begin{pmatrix}  - u_T^2 \\ u_T^1 \end{pmatrix}  \otimes \nabla \varphi  \, , 	\quad \DD^{(c)} u_T  = \begin{pmatrix} - \tilde u_T^2 \\ \tilde u_T^1  \end{pmatrix} \otimes \DD^{(c)} \varphi \, .
\end{align*}
Combined with the formula for $n$ given by \eqref{eq:normalvector}, the formulas \eqref{eq:ac components} and \eqref{eq:cantor components} then follow from \eqref{eq:formuladirection} by a straightforward calculation.

In order to treat the case $\sigma=j$, note that due to \eqref{eq:normalvector} we have for any $(x,y)=\chi(x,\vartheta)\in\chi(\Sigma^{(j)})$
\begin{equation}\label{eq:formuladirection_jump}
	\begin{split}
		\d\chi(x,\vartheta)\vec{G}_{\varphi}(x,\vartheta) & =  - \, \nu_{\varphi}^2(x)y^2e_1\wedge \bar e_1+\nu_{\varphi}^2(x)y^1e_1\wedge\bar e_2\nonumber \\
		& \hphantom{ = }  \, \, +  \nu_{\varphi}^1(x)y^2e_2\wedge\bar e_1-\nu_{\varphi}^1(x)y^1 e_2\wedge\bar e_2
		\\
		& =:  \xi(x,y) \, .
	\end{split}
\end{equation}
Again $|\xi(x,y)|=1$ and $\xi(x,y)$ orients the tangent space at $\mathcal{H}^2$-a.e.\ $(x,y)\in\chi(\Sigma^{(j)})$. Thus~\eqref{eq:afterareaformula} and the uniqueness of the representation of i.m.\ rectifiable currents imply (up to $\mathcal{H}^2$-negligible sets) that $\mathcal{M}^{(jc)}=\chi(\Sigma^{(j)})$, $\vec{T}=\xi$ on $\mathcal{M}^{(jc)}$, and $|T^{(jc)}|=N(x,y)\mathcal{H}^2\mres \M^{(jc)}$, with
\begin{equation*}
N(x,y)=\#\{\vartheta\in[\varphi^-(x),\varphi^+(x)] \ : \ (\cos(\vartheta),\sin(\vartheta))=y\}\,.
\end{equation*}
To conclude, we have to relate $\mathfrak{m}(x,y)$ to $N(x,y)$ and $\nu_{T}(x)$ to $\nu_{\varphi}(x)$. First note that the proof of the structure theorem (sketched in Remark~\ref{r.differentL}) yields $J_{u_T}\cup\mathcal{L}=J_{\varphi}$ and, combined with the definition of the curves $\gamma_x^T$ (cf.~\eqref{eq:jc part of T}), implies that $\chi[\varphi^-(x),\varphi^+(x)]=\supp(\gamma_{x}^T)$ for $x\in J_{\varphi}$. Hence
\begin{equation} \label{eq:Mjc is a product}
\M^{(jc)}=\chi(\Sigma^{(j)})=\{(x,y)\in\Omega\x\RR^2 \ : \ x\in J_{u_T}\cup\mathcal{L} 	\, ,\ y\in\supp(\gamma_x^T)\}\,.
\end{equation}
Moreover, provided we orient $J_{u_T}$ the same way as $J_{\varphi}$ and $\mathcal{L}$ according to \eqref{eq:localstructureofL},  equation~\eqref{eq:defnormal} also yields $\nu_T=\nu_{\varphi}$ and $\mathfrak{m}(x,y)=N(x,y)$ (a detailed proof of the latter requires to distinguish different cases, which we omit here).\footnote{As noted in Remark~\ref{rmk:choiceoforientation}, the choice $\nu_T(x) = \nu_\varphi(x)$ always yields $\mathfrak{m}(x,y) \geq 0$. The factor $\mathrm{sign}(\mathfrak{m}(x,y))$ in~\eqref{eq:jc components} makes the formula invariant under the change of $\nu_T(x)$.} Inserting this equality in \eqref{eq:formuladirection_jump} concludes the proof of \eqref{eq:jc components}.
\end{proof}

Finally, we recall the following result, proven in~\cite[Section 4]{Gia-Mod-Sou-S1}.
\begin{proposition} \label{prop:supporting BV}
If $u \in BV(\Omega;\SS^1)$, then there exists a $T \in \cart(\Omega \x \SS^1)$ such that $u_T = u$ a.e.\ in $\Omega$.
\end{proposition}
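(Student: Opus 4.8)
The approach is the one underlying the Structure Theorem, run in reverse: lift $u$, form the subgraph current of the lifting, and push it forward through the covering map $\chi(x,\vartheta) := (x,\cos\vartheta,\sin\vartheta)$. I will first assume $\Omega$ simply connected; the general case follows by carrying out the construction below on a simply connected open cover of $\Omega$ and gluing the resulting currents with a partition of unity on forms, the local pieces agreeing on overlaps because two local liftings of $u$ differ by a constant in $2\pi\ZZ$, which is annihilated by the $2\pi$-periodic map $\chi$.

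So let $\Omega$ be simply connected. The only nontrivial input is the $BV$-lifting theorem for circle-valued maps, i.e.\ the same ingredient used in the proof of Theorem~\ref{thm:structure} (cf.~\cite[Section 4]{Gia-Mod-Sou-S1}): there exists $\varphi \in BV(\Omega;\RR)$ with $u = (\cos\varphi,\sin\varphi)$ a.e.\ in $\Omega$. Let $SG_\varphi := \{(x,\vartheta) \in \Omega\x\RR : \vartheta < \varphi(x)\}$. Since $\varphi \in BV(\Omega)$, the set $SG_\varphi$ has locally finite perimeter in $\Omega\x\RR$, so that $G_\varphi := \partial\llbracket SG_\varphi\rrbracket \in \D_2(\Omega\x\RR)$ is the cartesian current associated to the subgraph of $\varphi$ discussed in Remark~\ref{r.differentL} and in the proof of Proposition~\ref{prop:components of T}. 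With the orientation induced by the standard orientation of $\Omega\x\RR$ (so that its horizontal part is nonnegative), it satisfies $\partial G_\varphi|_{\Omega\x\RR} = 0$ (being a boundary), $\pi^\Omega_\# G_\varphi = \llbracket\Omega\rrbracket$, the identity $G_\varphi(\phi(x,\vartheta)\,\d x) = \int_\Omega \phi(x,\varphi(x))\,\d x$ for all $\phi \in C^\infty_c(\Omega\x\RR)$, and it has finite mass
\begin{equation*}
|G_\varphi|(\Omega\x\RR) = \integral{\Omega}{\sqrt{1+|\nabla\varphi|^2}}{\d x} + |\DD^{(c)}\varphi|(\Omega) + \integral{J_\varphi}{|\varphi^+ - \varphi^-|}{\d\H^1} < \infty \, .
\end{equation*}

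Since $G_\varphi$ has finite mass and $\chi$ is smooth with bounded derivatives, the push-forward $T := \chi_\# G_\varphi \in \D_2(\Omega\x\RR^2)$ is well-defined (cf.\ Section~\ref{sec:currentsbasics} and Remark~\ref{r.differentL}); being the push-forward of an i.m.\ rectifiable current through a Lipschitz map it is again i.m.\ rectifiable, and $\supp(T) \subset \ol\Omega \x \SS^1$ because $\chi$ takes values in $\Omega\x\SS^1$. I would then check the remaining axioms of $\cart(\Omega\x\SS^1)$ one by one: $\partial T|_{\Omega\x\RR^2} = \chi_\#(\partial G_\varphi)|_{\Omega\x\RR^2} = 0$; from $\pi^\Omega\circ\chi = \pi^\Omega$ we get $\pi^\Omega_\# T = \pi^\Omega_\# G_\varphi = \llbracket\Omega\rrbracket$; for every $\phi \geq 0$,
\begin{equation*}
T(\phi(x,y)\,\d x) = G_\varphi\big(\phi(x,\cos\vartheta,\sin\vartheta)\,\d x\big) = \integral{\Omega}{\phi(x,u(x))}{\d x} \geq 0 \, ,
\end{equation*}
so $T|_{\d x} \geq 0$; the Jacobian of $\d\chi$ restricted to any $2$-plane equals $1$ (it sends orthonormal pairs to orthonormal pairs), whence $|T| \leq |G_\varphi| < \infty$; and by the displayed identity together with~\eqref{eq:norm 1}, $\|T\|_1 = \int_\Omega |u|\,\d x = \L^2(\Omega) < \infty$. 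Hence $T \in \cart(\Omega\x\SS^1)$. Finally, comparing the displayed identity $T(\phi(x,y)\,\d x) = \int_\Omega \phi(x,u(x))\,\d x$ with the characterization~\eqref{eq:T horiz} of $u_T$ in Theorem~\ref{thm:structure} and invoking the uniqueness of $u_T$ stated there, we conclude $u_T = u$ a.e.\ in $\Omega$.

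The main obstacle is the $BV$-lifting of $\SS^1$-valued $BV$ maps on a simply connected domain: this is the one place where the fine structure of $BV$ genuinely enters, since one must produce a $BV$ phase whose Cantor part and jump heights are under control. Everything else reduces to bookkeeping about push-forwards of finite-mass currents under the covering map $\chi$ and to the routine localization argument needed in the non-simply-connected case.
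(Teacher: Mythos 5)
Your simply connected case is correct, and it is exactly the route of \cite[Section 4]{Gia-Mod-Sou-S1}, which is all the paper offers for this statement (it is quoted without proof); it is also the same lifting--subgraph--pushforward machinery used in Remark~\ref{r.differentL} and in the proof of Proposition~\ref{prop:components of T}, so every ingredient you invoke (finite perimeter of $SG_\varphi$, unit Jacobian of $\d\chi$ on tangent planes, well-posedness of $\chi_\#$ for finite-mass currents, identification of $u_T$ via \eqref{eq:T horiz}) is consistent with the paper's toolkit. The one genuine gap is your reduction to general $\Omega$. Two $BV$ liftings $\varphi_1,\varphi_2$ of the same $u$ on a connected open set do \emph{not} differ by a constant in $2\pi\ZZ$: their difference is only a $2\pi\ZZ$-valued $BV$ function, e.g.\ $\varphi_2=\varphi_1+2\pi\mathds{1}_E$ is again a lifting for any set $E$ of finite perimeter. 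Correspondingly $\chi_\# G_{\varphi_2}=\chi_\# G_{\varphi_1}+L\x\llbracket\SS^1\rrbracket$ with $L$ carried by $\de^* E$, so the local currents need not coincide on overlaps, and the partition-of-unity gluing $T(\omega):=\sum_k T_k(\rho_k\omega)$ then has nonzero interior boundary: $\de T(\eta)=-\sum_k T_k(\d\rho_k\w\eta)$, which reduces to pairings of the differences $T_k-T_{k'}$ against $\d\rho_k\w\eta$ and does not vanish in general. This destroys membership in $\cart(\Omega\x\SS^1)$.

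The fix is to avoid gluing: since $\Omega\subset\RR^2$ is bounded with Lipschitz boundary, one can cut it along finitely many Lipschitz arcs $\Sigma$ so that $\Omega\sm\Sigma$ is simply connected, lift $u$ there, and note that the resulting $\varphi$ belongs to $BV(\Omega)$ globally, because its two one-sided traces on $\Sigma$ are in $L^1(\H^1\mres\Sigma)$ and hence the jump contribution $\int_\Sigma|\varphi^+-\varphi^-|\,\d\H^1$ is finite. Your argument then runs verbatim on all of $\Omega$; the extra vertical mass of $G_\varphi$ over $\Sigma$ only produces an admissible concentration part $L_T\x\llbracket\SS^1\rrbracket$ in $T=\chi_\# G_\varphi$ and does not affect the horizontal part, so $u_T=u$ still follows from \eqref{eq:T horiz} and the uniqueness of $u_T$.
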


\subsection{Currents associated to discrete spin fields}\label{s.discretecurrents}

 We introduce the piecewise constant interpolations of spin fields. For a set $S$, we define 
\begin{equation*}
\PC_\e(S) := \{u \colon \RR^2 \to S \ : \ u(x) = u(\e i)   \text{ if } x \in \e i + [0, \e)^2 \text{ for some } i \in  \e \ZZ^2 \} 
\end{equation*}
Given $u \colon \Omega_\e \to \SS^1$, we can always identify it with its piecewise constant interpolation belonging to $\PC_\e(\SS^1)$, arbitrarily extended to $\RR^2$. Note that the piecewise constant interpolation of $u$ coincides with $u$ on the bottom-left corners of the squares of the lattice $\e \ZZ^2$.

We associate to $u\in \PC_\e(\SS^1)$ the current $G_{u} \in \D_2(\Omega \x \RR^2)$ defined by
\begin{align}
G_{u}(\phi(x,y) \d x^1 \w  \d x^2) &:= \integral{\Omega}{\phi(x,u(x))}{\d x} \, , \label{eq:Gu ac} \\
G_{u}(\phi(x,y) \d \widehat x^l \w  \d y^m) &:= (-1)^{2-l}\integral{J_{u}}{\bigg\{ \integral{\gamma_x}{ \phi(x,y)}{\d y^m} \bigg\} \nu^l_{u}(x) }{\d \H^1(x)} \, , \label{eq:Gu j}\\
G_{u}(\phi(x,y) \d y^1 \w  \d y^2) &:= 0 \, ,  \label{eq:Gu v}
\end{align}
for every $\phi \in C^\infty_c(\Omega \x \RR^2)$, where $J_{u}$ is the jump set of $u$, $\nu_{u}(x)$ is the normal to $J_{u}$ at $x$, and $\gamma_x \subset \SS^1$ is the (oriented) geodesic arc which connects the two traces $u^-(x)$ and $u^+(x)$. If $u^+(x)$ and $u^-(x)$ are opposite vectors, the choice of the geodesic arc $\gamma_x \subset \SS^1$ is done consistently with the choice made in~\eqref{eq:def of projection} for the values $\Psi(\pi)$ and $\Psi(-\pi)$ as follows: let $\varphi^\pm(x) \in [0,2 \pi)$ be the phase of $u^\pm(x)$; if  $\Psi(\varphi^+(x) - \varphi^-(x)) = \pi$, then $\gamma_x$ is the arc that connects $u^-(x)$ to $u^+(x)$ counterclockwise; if $ \Psi(\varphi^+(x) - \varphi^-(x)) =  -\pi$, then $\gamma_x$ is the arc that connects $u^-(x)$ to~$u^+(x)$ clockwise. Note that the choice of the arc $\gamma_x$ is independent of the orientation of the normal~$\nu_{u}(x)$. 

We define for $\H^1$-a.e.\ $x \in J_{u}$  the integer number $\mathfrak{m}(x) = \pm 1$, where $\pm = +/ -$ if the geodesic arc $\gamma_x$ is oriented counterclockwise/clockwise, respectively. Then
\begin{equation} \label{eq:from dym to H1 2}
\integral{ \gamma_x}{\phi(x,y)}{\d y^m} = (-1)^m \, \mathfrak{m}(x) \integral{  \supp(\gamma_x)}{\phi(x,y) \widehat y^m}{\d \H^1(y)}  \, .
\end{equation}
 
\begin{figure}[H]
\scalebox{0.7}{
    \includegraphics{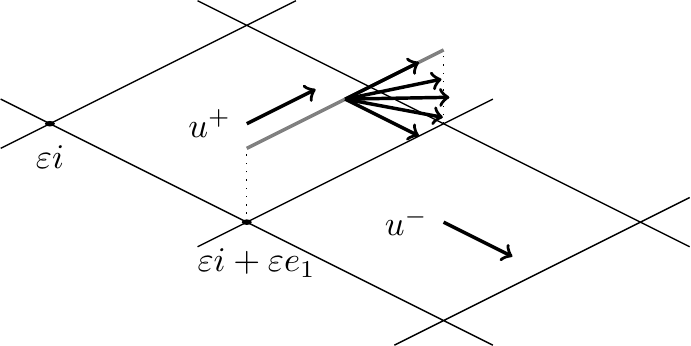}





			  


}
\caption{The current $G_{u}$ has vertical parts concentrated on the jump set $J_{u}$, where a transition from $u^-$ to $u^+$ occurs.}
\end{figure}
In the proposition below we characterize the current $G_u$ associated to a discrete spin field in terms of the decomposition $G_u=\vec{G}_u|G_u|$.
\begin{proposition} \label{prop:Gu is im}
Let $u \in \PC_\e(\SS^1)$ and let $G_{u} \in \D_2(\Omega \x \RR^2)$ be the current defined in \eqref{eq:Gu ac}--\eqref{eq:Gu v}. Then $G_{u}$ is an i.m.\ rectifiable current and, according to the representation formula~\eqref{eq:representation}, $G_{u} = \vec{G}_u   |G_{u}|$, where $|G_{u}| = \H^2 \mres \M$,
\begin{equation*}
\M = \M^{(a)} \cup \M^{(j)} = \{(x,u(x)) \ : \ x \in \Omega \sm J_{u} \} \cup  \{(x,y) \ : \ x \in J_{u}, \ y \in \gamma_x \} \, ,
\end{equation*}
and 
\begin{equation} \label{eq:orientation of Gu 1}
\begin{split}
\vec{G}_{u}(x,y) =  e_1  \wedge e_2
\end{split}
\end{equation}
for $\H^2$-a.e.\ $(x,y) \in \M^{(a)}$ and
\begin{equation} \label{eq:orientation of Gu 2}
\begin{split}
\vec{G}_{u}(x,y) = \mathrm{sign}(\mathfrak{m}(x)) \big[  - & \  \nu^2_{u}(x) y^2 e_1 \wedge \bar e_1 +  \nu^2_{u}(x) y^1 e_1 \wedge \bar e_2  \\
 + & \ \nu^1_{u}(x) y^2 e_2 \wedge \bar e_1 - \nu^1_{u}(x) y^1 e_2 \wedge \bar e_2 \big ] \, 
\end{split}
\end{equation}
for $\H^2$-a.e.\ $(x,y) \in \M^{(j)}$.
\end{proposition}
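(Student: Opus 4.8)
The plan is to exhibit $G_u$ explicitly as a finite sum of elementary i.m.\ rectifiable currents — flat horizontal pieces and products of a lattice segment with a circular arc — and then simply read off $\M$, $|G_u|$ and $\vec{G}_u$ from this decomposition. First I would record the geometry of a discrete spin field: since $u \in \PC_\e(\SS^1)$ is constant on each half-open square $\e i + [0,\e)^2$ and takes only finitely many values on the bounded set $\Omega$, its jump set $J_u$ is a finite union of (relatively open) horizontal and vertical segments $\sigma$, each contained in an edge of the lattice. On each such $\sigma$ the traces $u^\pm$ are constant vectors $a_\sigma,b_\sigma \in \SS^1$, the normal $\nu_u \equiv \nu_\sigma$ is constant, the oriented geodesic arc $\gamma_x \equiv \gamma_\sigma$ connecting $u^-$ to $u^+$ is constant (with the counterclockwise/clockwise choice in the antipodal case dictated by the convention fixing $\Psi(\pm\pi)$ in \eqref{eq:def of projection}), and hence the integer $\mathfrak{m}(x) \equiv \mathfrak{m}_\sigma \in \{-1,+1\}$ from \eqref{eq:from dym to H1 2} is constant along $\sigma$.

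Next I would set $\M^{(a)} := \{(x,u(x)) \, : \, x \in \Omega \sm J_u\}$ and $\M^{(j)} := \bigcup_\sigma \big(\sigma \x \supp(\gamma_\sigma)\big)$, and check that $\M := \M^{(a)} \cup \M^{(j)}$ is countably $\H^2$-rectifiable with $\H^2 \mres \M$ finite. Indeed $\M^{(a)}$ is a finite union of pieces of horizontal planes $\RR^2 \x \{c\}$ with $\H^2(\M^{(a)}) = \L^2(\Omega)$, while each $\sigma \x \supp(\gamma_\sigma)$ is a smooth $2$-surface-with-boundary (a product of a segment and a circular arc) with $\H^2 \mres (\sigma \x \supp(\gamma_\sigma)) = (\H^1\mres\sigma) \otimes (\H^1 \mres \supp(\gamma_\sigma))$, so $\H^2(\M^{(j)}) \leq \pi\, \H^1(J_u) < \infty$; moreover $\M^{(a)}$ and $\M^{(j)}$ have disjoint projections onto the $x$-variable, and two distinct segments $\sigma \neq \sigma'$ meet only at lattice vertices, so the various pieces overlap in $\H^2$-negligible sets only. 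On $\M^{(a)}$ I define $\vec{G}_u := e_1 \w e_2$ and on $\M^{(j)} \cap (\sigma \x \RR^2)$ I define $\vec{G}_u$ to be the right-hand side of \eqref{eq:orientation of Gu 2}. Using $|y| = 1$ one gets $|\vec{G}_u| = 1$, and rewriting \eqref{eq:orientation of Gu 2} as $\pm(\text{unit tangent to }\sigma) \w (\text{counterclockwise unit tangent to }\SS^1\text{ at }y)$ shows that it orients $\mathrm{Tan}(\M^{(j)},(x,y))$; note also that reversing the orientation of $\nu_u$ flips both the bracket in \eqref{eq:orientation of Gu 2} (which is linear in $\nu_u$) and the sign of $\mathfrak{m}$ (since it swaps $u^\pm$ and hence reverses $\gamma_x$), so $\vec{G}_u$ is well defined.

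The core step is to verify that the i.m.\ rectifiable current $\tau(\M,1,\vec{G}_u)$ coincides with $G_u$ as defined by \eqref{eq:Gu ac}--\eqref{eq:Gu v}. Since both currents have finite mass, it suffices to test against $\phi\,\d x^1 \w \d x^2$, against $\phi\,\d y^1 \w \d y^2$, and against the mixed forms $\phi\,\d\widehat{x}^l \w \d y^m$ for $l,m \in \{1,2\}$ and arbitrary $\phi \in C^\infty_c(\Omega \x \RR^2)$. For the first, only $\M^{(a)}$ contributes, and $\int_{\M^{(a)}} \phi(x,y)\langle \d x^1 \w \d x^2, e_1 \w e_2\rangle\,\d\H^2 = \int_\Omega \phi(x,u(x))\,\d x$ because $J_u$ is $\L^2$-negligible, matching \eqref{eq:Gu ac}; for the second, $\langle\d y^1 \w \d y^2, \cdot\rangle$ annihilates both $e_1 \w e_2$ and every $e_l \w \bar e_m$, giving $0$ as in \eqref{eq:Gu v}. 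For a mixed form, the contribution of $\M^{(a)}$ again vanishes (since $e_1 \w e_2$ is annihilated), and Fubini on each product piece $\sigma \x \supp(\gamma_\sigma)$ together with \eqref{eq:from dym to H1 2} reduces the required identity to the pointwise equality $\langle \d\widehat{x}^l \w \d y^m, \vec{G}_u(x,y)\rangle = (-1)^{2-l}(-1)^m\,\mathfrak{m}_\sigma\,\widehat{y}^m\,\nu_\sigma^l$ for $x$ in $\sigma$ and $y$ in $\supp(\gamma_\sigma)$, which one checks for each of the four choices of $(l,m)$ directly from \eqref{eq:orientation of Gu 2} (using $\mathrm{sign}(\mathfrak{m}_\sigma) = \mathfrak{m}_\sigma$). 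This shows $G_u = \tau(\M,1,\vec{G}_u)$, hence $G_u$ is i.m.\ rectifiable, and the uniqueness of the representation of i.m.\ rectifiable currents recalled in Subsection~\ref{sec:currentsbasics} yields $|G_u| = \H^2 \mres \M$ together with the formulas \eqref{eq:orientation of Gu 1}--\eqref{eq:orientation of Gu 2}. I expect the only genuine difficulty to be the sign bookkeeping — keeping track of the hatted indices $\widehat{x}^l$, the factors $(-1)^{2-l}$ and $(-1)^m$, the orientation of the geodesic arc encoded by $\mathfrak{m}$, and, in the borderline case of antipodal traces, the compatibility of the arc chosen in \eqref{eq:Gu j} with the convention fixing $\Psi(\pm\pi)$; everything else is routine once the piecewise structure of $J_u$ and the product structure of $\M^{(j)}$ are in place.
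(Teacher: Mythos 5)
Your proposal is correct and follows essentially the same route as the paper's proof: identify $\M$ as the union of the flat horizontal graph pieces and the products $\sigma\x\supp(\gamma_\sigma)$ over the lattice segments of $J_u$, test $G_u$ against the basis forms $\phi\,\d x$, $\phi\,\d\widehat x^l\w\d y^m$, $\phi\,\d y^1\w\d y^2$, and use \eqref{eq:from dym to H1 2} to convert the arc integrals into $\H^2\mres\M^{(j)}$ integrals with the orientation \eqref{eq:orientation of Gu 2}. The extra bookkeeping you supply (Fubini on the product pieces, the $\H^2$-negligibility of overlaps, the sign checks for each $(l,m)$) is exactly what the paper's more condensed computation is implicitly relying on.
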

\begin{proof}
 First note that the set $\M$ is countably $\mathcal{H}^2$-rectifiable. Since $u$ is piecewise constant, for horizontal forms we have
\begin{equation*}
G_{u}(\phi(x,y) \d x) = \integral{\Omega}{\phi(x,u(x))}{\d x}  = \integral{\Omega \x \RR^2}{\phi(x,y)}{\d \H^2\mres \M^{(a)}(x,y)} \, .
\end{equation*}
By~\eqref{eq:from dym to H1 2} we deduce that for $l,m=1,2$
\begin{equation*}
\begin{split}
G_{u}(\phi(x,y) \d \widehat x^l \w \d y^m) & = (-1)^{2-l}\integral{J_{u}}{\bigg\{ \integral{\gamma_x}{ \phi(x,y)}{\d y^m} \bigg\} \nu^l_{u}(x) }{\d \H^1(x)} \\
& = (-1)^{2-l+m}\integral{J_{u}}{\bigg\{ \integral{\supp(\gamma_x)}{ \phi(x,y) \widehat y^m}{\d \H^1(y)} \bigg\} \nu^l_{u}(x) \mathfrak{m}(x)}{\d \H^1(x)} \\
& = (-1)^{2-l+m}\integral{\Omega \x \RR^2}{\phi(x,y) \widehat y^m \nu^l_{u}(x)\mathfrak{m}(x) }{\d \H^2\mres \M^{(j)}(x,y)}  \, .
\end{split}
\end{equation*}
Then for every $\omega \in \D_2(\Omega \x \RR^2)$ we have
\begin{equation*}
G_{u}(\omega) = \integral{\Omega \x \RR^2}{\langle \omega, \vec{G}_{u} \rangle}{ \d  \H^2 \mres \M} \, 
\end{equation*}
for $\vec{G}_{u}$ defined as in~\eqref{eq:orientation of Gu 1}--\eqref{eq:orientation of Gu 2} and moreover $\vec{G}_u(x,y)$ is associated to the tangent space at $(x,y)\in\mathcal{M}$. Since also $|\vec{G}_{u}(x,y)| = 1$ for $|G_{u}|$-a.e.\ $(x,y) \in \Omega \x \RR^2$, we conclude the proof.
\end{proof}
The next proposition is crucial since it relates the boundary of the current $G_u$ associated to a discrete spin field to the vorticity measure $\mu_u$.
\begin{proposition} \label{prop:bd of Gu is mu}
Let $u \in \PC_\e(\SS^1)$ and let $G_{u} \in \D_2(\Omega \x \RR^2)$ be the current defined in \eqref{eq:Gu ac}--\eqref{eq:Gu v}. Then 
\begin{equation*}
\de G_{u}|_{\Omega \x \RR^2} = - \mu_{u} \x \llbracket \SS^1 \rrbracket  \, ,
\end{equation*}
where $\mu_{u}$ is the discrete vorticity measure defined in~\eqref{eq:discrete vorticity measure} for $u|_{\e \ZZ^2} \colon \e \ZZ^2 \to \SS^1$.
\end{proposition}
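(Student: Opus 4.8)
The plan is to reduce the claim to a local computation on a single lattice square. Since $\partial G_u|_{\Omega \x \RR^2}(\omega) = G_u(\d\omega)$ for a $1$-form $\omega \in \D^1(\Omega \x \RR^2)$, and $G_u$ only sees horizontal forms $\d x$ and mixed forms $\d\widehat x^l \w \d y^m$ (by \eqref{eq:Gu ac}--\eqref{eq:Gu v}), I would first write $\omega = \phi_1\,\d x^1 + \phi_2\,\d x^2 + \psi_1 \,\d y^1 + \psi_2\,\d y^2$, compute $\d\omega$, and sort the resulting $2$-forms into the three types. The $\d y^1 \w \d y^2$ component of $\d\omega$ is annihilated by \eqref{eq:Gu v}, so only the $\d x^1 \w \d x^2$ terms (coming from $\partial_{x^1}\phi_2 - \partial_{x^2}\phi_1$) and the mixed terms (coming from $\partial_{x^l}\psi_m$ and from $\partial_{y^m}\phi_l$) survive. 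This produces an explicit expression for $\partial G_u|_{\Omega \x \RR^2}(\omega)$ as a sum of integrals over $\Omega$ and over $J_u$.

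Next I would exploit the structure of $J_u$ for a piecewise constant interpolation $u \in \PC_\e(\SS^1)$: $J_u$ is contained in the union of the edges of the lattice $\e\ZZ^2$, with normal $\nu_u$ equal to $\pm e_1$ or $\pm e_2$ on each edge. I would split $\Omega$ into the closed squares $Q_{\e i} = \e i + [0,\e]^2$ and, using a partition-of-unity / additivity argument, localize the computation of $\partial G_u$ near a single interior vertex $\e i + (\e,\e)$, i.e., near the common corner of four adjacent squares. Around such a vertex the four traces of $u$ are the values $u(\e i)$, $u(\e i + \e e_1)$, $u(\e i + \e e_1 + \e e_2)$, $u(\e i + \e e_2)$, and the four geodesic arcs $\gamma_x$ attached to the four incident edges concatenate (with the correct orientations, dictated by $\mathfrak m(x)$ and the counterclockwise/clockwise convention fixed via $\Psi(\pm\pi)$) into a closed loop in $\SS^1$. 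The key point is that this loop, read as a $1$-cycle in $\SS^1$, has winding number exactly $d_u(\e i)$ — this is precisely what the combination $\Psi(\varphi(\e i + \e e_1) - \varphi(\e i)) + \cdots$ in \eqref{eq:discrete vorticity} measures, since $\Psi(\varphi^+ - \varphi^-)$ is the signed length (in $(-\pi,\pi]$) of the geodesic arc from $u^-$ to $u^+$. Hence the "vertical" contribution of $\partial G_u$ localized at $\e i + (\e,\e)$ is $-d_u(\e i)$ times integration over $\SS^1$ oriented counterclockwise, i.e., $-d_u(\e i)\,\delta_{\e i + (\e,\e)} \x \llbracket \SS^1 \rrbracket$. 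Summing over all interior vertices gives $-\mu_u \x \llbracket \SS^1 \rrbracket$.

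The two remaining pieces must be shown to cancel or vanish. The horizontal ($\d x^1 \w \d x^2$) part of $\partial G_u$ vanishes on $\Omega \x \RR^2$ because $u$ is locally constant off $J_u$, so $G_u$ restricted to any square behaves like the graph of a constant map whose boundary, inside the open square, is zero; the only boundary contributions are on the edges, and those are exactly the mixed-form contributions already accounted for. More carefully, on each square $Q_{\e i}^\circ$ the current $G_u \mres (Q_{\e i}^\circ \x \RR^2)$ equals $\llbracket Q_{\e i}^\circ\rrbracket \x \llbracket\{u(\e i)\}\rrbracket$, whose boundary inside $Q_{\e i}^\circ \x \RR^2$ is zero; summing these and adding the edge (jump) contributions, the terms involving $\partial_{x^l}\psi_m$ integrate by parts along each edge of $J_u$ and telescope, leaving only the vertex terms. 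The main obstacle — and the step deserving the most care — is bookkeeping the orientations: getting the sign of $\mathfrak m(x)$, the orientation of $\nu_u$ on each of the four edges, and the direction of traversal of each geodesic arc to line up so that the four arcs genuinely concatenate into a loop winding $d_u(\e i)$ times, including the delicate antipodal case where the convention for $\Psi(\pm\pi)$ (matching the arc choice fixed after \eqref{eq:def of projection}) is what makes the count come out right. Once the orientation conventions are checked on one square and shown to be consistent across shared edges, the global statement follows by additivity over the lattice, noting that edges meeting $\partial\Omega$ contribute nothing to $\partial G_u|_{\Omega \x \RR^2}$ since test forms have compact support in $\Omega \x \RR^2$.
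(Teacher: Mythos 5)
Your proposal is correct in structure, but it takes a genuinely different route from the paper. The paper localizes with a partition of unity into three cases (interior of a square, interior of an edge, neighborhood of a vertex) and in each case \emph{approximates} $G_u$ weakly by graphs of maps $u_k$ that are Lipschitz, or Lipschitz outside the vertex $p$: near an edge the arc $\gamma_x$ is realized as a thin Lipschitz transition, so $\de G_{u_k}=0$ there by Stokes; near a vertex the four transitions are glued into an angular profile $v_k\colon\SS^1\to\SS^1$ whose degree is computed to be exactly $d_u(\e i)$ via $\sum_h\Psi(\widehat\varphi_{\sigma(h+1)}-\widehat\varphi_{\sigma(h)})$, and the known formula $\de G_{u_k}|_{B_\rho(p)\x\RR^2}=-\deg(v_k)\,\delta_p\x\llbracket\SS^1\rrbracket$ from \cite[3.2.2, Example 2]{Gia-Mod-Sou-I} is transported to $G_u$ by proving $G_{u_k}\weak G_u$ and using continuity of $\de$ under weak convergence. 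You instead compute $G_u(\d\omega)$ directly: the cancellation between the integration by parts of the horizontal part on each square and the $\de_{y^m}\phi_l$ contributions on $J_u$ (via the fundamental theorem of calculus along $\gamma_x$), then the tangential derivative $\nu^2\de_{x^1}-\nu^1\de_{x^2}$ of $x\mapsto\int_{\gamma_x}\psi_m(x,y)\,\d y^m$ telescoping along each edge to vertex terms, and finally the homological identification of the four concatenated arcs as the cycle $d_u(\e i)\llbracket\SS^1\rrbracket$ because their total signed length is $2\pi d_u(\e i)$ by \eqref{eq:discrete vorticity}. Both arguments hinge on the same two facts (FTC along the arcs, and the arcs around a vertex closing up with winding number $d_u(\e i)$), but your version is more elementary and avoids constructing the approximating maps and verifying their weak convergence; the price is that the orientation bookkeeping you rightly flag (sign of $\mathfrak m(x)$, choice of $\nu_u$ on each edge, the antipodal convention for $\Psi(\pm\pi)$, and the matching of the telescoping signs with the counterclockwise cyclic order $u_3\to u_4\to u_1\to u_2\to u_3$ in \eqref{eq:discrete vorticity}) must all be carried out explicitly rather than being absorbed into the construction of $v_k$. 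I see no step that would fail; to turn the sketch into a proof you would need to write out that bookkeeping on one vertex and check consistency of the edge orientations across adjacent vertices.
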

\begin{proof}
Let us fix $0 < \rho < \min\{\e/4, \dist(\Omega_\e, \de \Omega)\}$ and $\eta \in \D^1(\Omega \x \RR^2)$. With a partition of unity we can split $\eta$ into the sum of 1-forms depending on their supports. We discuss here all the possibilities for the supports. 

{\ul{Case 1}}: $\supp(\eta) \subset (\e i + (0,\e)^2) \x\RR^2$ for some $ i \in \ZZ^2$. Since $u$ is constant in $(\e i + (0,\e)^2)$, we get automatically $\de G_{u} (\eta) = 0$ by Remark~\ref{rmk:bd of smooth}.

{\ul{Case 2}}: Let $H$ be the side of the square $\e i + [0,\e]^2$ connecting two vertices $p, q \in \e \ZZ^2$ and let $U$ be the $\rho/2$-neighborhood of $H \sm \big(B_\rho(p) \cup B_\rho(q)\big)$. Assume that $\supp(\eta) \subset U \x \RR^2$. We claim that 
\begin{equation} \label{eq:bd of Gu close to side}
\de G_{u}(\eta) = 0 \, .
\end{equation}
To prove this, we approximate the pure-jump function $u$ by means of a sequence of Lipschitz functions $u_j$. Let $u^\pm$ be the traces of $u$ on the two sides of $H$ and let $\nu_H$ be the normal to~$H$ oriented as $\nu_{u}$. We let $\widehat \varphi^\pm \in [0,2 \pi)$ be the phases of $u^\pm$ defined by $u^\pm = \exp(\iota \widehat \varphi^\pm)$. We set $\varphi^- := \widehat \varphi^-$ and $\varphi^+ := \widehat \varphi^- + \Psi(\widehat \varphi^+ - \widehat \varphi^-) \in (-\pi, 3\pi)$, where $\Psi$ is the function given by~\eqref{eq:def of Psi}. We then define
\begin{equation*}
\varphi(t) := \begin{cases}
\varphi^- \, , & \text{ if } t \leq - \tfrac{1}{2}\\
\varphi^- + \big(\varphi^+ - \varphi^-\big)(t+\tfrac{1}{2})  & \text{ if } -\tfrac{1}{2} < t < \tfrac{1}{2} \\
\varphi^+ & \text{ if } t \geq  \tfrac{1}{2} \, ,
\end{cases}
\end{equation*}
and $\varphi_k(s) := \varphi(k s)$ for $k$ large enough. Note that the curve $t \in (-1/2,1/2) \mapsto \exp(\iota \varphi(t))$ parametrizes the geodesic arc $\gamma_{\pm} \subset \SS^1$ which connects $u^-$ to $u^+$, consistently with the choice done in formula~\eqref{eq:Gu j}. Then we put
\begin{equation*}
u_k(x) := \exp\big( \iota \varphi_k(\nu_H \cdot  (x-p)) \big) \quad \text{for } x \in U \, .
\end{equation*}
We prove that $G_{u_k} \weak G_{u}$ in $\D_2(U \x \RR^2)$. Let us fix $\phi \in C^\infty_c(U \x \RR^2)$. Since $u_k \to u$ in measure, we have that
\begin{equation*}
G_{u_k}(\phi(x,y) \d x) = \integral{U}{\phi(x,u_k(x))}{\d x} \to \integral{U}{\phi(x,u(x))}{\d x} = G_{u}(\phi(x,y) \d x) \, .
\end{equation*} 
Writing $x \in U$ as $x = x' + s \nu_H$ with $x' \in H$, $s \in \RR$, for $l=1,2$ we further obtain that 
\begin{equation*}
\begin{split}
& G_{u_k}(\phi(x,y) \d \widehat x^l \w \d y^1) = (-1)^{2-l}\integral{U}{\phi(x,u_k(x)) \de_{x_l} u_k^1(x)}{\d x} \\
& \quad = (-1)^{3-l}\integral{U}{\phi(x,u_k(x)) \sin(\varphi_k(\nu_H \cdot  (x-p))) \varphi'_k(\nu_H \cdot (x-p)) \nu_H^l}{\d x}  \\
&\quad = (-1)^{3-l}\integral{H}{\bigg\{ \int \limits_{-1/2k}^{1/2k} \phi\Big(x' + s \nu_H,\exp\big(\iota \varphi_k(s)\big)\Big) \sin\big(\varphi_k(s)\big) \varphi'_k(s) \d s \bigg\} \nu_H^l }{\d \H^1(x')} \\
&\quad = (-1)^{3-l}\integral{H}{\bigg\{ \int \limits_{-1/2}^{1/2} \phi\Big(x' + \tfrac{t}{k} \nu_H,\exp\big(\iota \varphi(t)\big)\Big) \sin\big(\varphi(t)\big) \varphi'(t) \d t \bigg\} \nu_H^l }{\d \H^1(x')} \\
&\quad = (-1)^{2-l}\integral{H}{\bigg\{ \int \limits_{\gamma_{\pm}} \phi\Big(x' + \tfrac{t}{k} \nu_H,y\Big) \d y^1 \bigg\} \nu_H^l }{\d \H^1(x')} \\
&\quad \to (-1)^{2-l}\integral{H}{\bigg\{ \int \limits_{\gamma_{\pm}} \phi\big(x',y\big) \d y^1 \bigg\} \nu_H^l}{\d \H^1(x')}  = G_{u}(\phi(x,y) \d \widehat x^l \w \d y^1) \, ,
\end{split}
\end{equation*}
where $\gamma_{\pm} \subset \SS^1$ is the geodesic arc connecting $u^-$ to $u^+$. With analogous computations one proves $G_{u_k}(\phi(x,y) \d \widehat x^l \w \d y^2) \to G_{u}(\phi(x,y) \d \widehat x^l \w \d y^2)$. 

Hence, due to Stokes' Theorem we have that
\begin{equation*}
0 = \de G_{u_k}(\eta) = G_{u_k}(\! \d \eta) \to G_{u}(\! \d \eta) = \de G_{u}(\eta) \, ,
\end{equation*}
which proves~\eqref{eq:bd of Gu close to side}.

{\ul{Case 3}}: $\supp(\eta) \subset B_\rho(p) \x \RR^2$, where $p = \e i + \e e_1 + \e e_2$ for some $i \in \ZZ^2$. In this case we will approximate the current $G_{u}$ with graphs of a sequence of functions $u_k$ which are Lipschitz outside the point $p$. For notation simplicity we let $\widehat \varphi_1, \widehat \varphi_2, \widehat \varphi_3, \widehat \varphi_4 \in [0,2\pi)$ be the phases defined by the relations 
\begin{equation} \label{eq:phases of vortex}
\begin{aligned}
& u(\e i + \e e_1 + \e e_2) =: u_1 = \exp(\iota \widehat \varphi_1)\, , && u(\e i + \e e_2) =: u_2 = \exp(\iota \widehat \varphi_2) \, ,\\
&  u( \e i) =: u_3 = \exp(\iota \widehat \varphi_3) \, ,  && u( \e i+\e e_1 ) =: u_4 =  \exp(\iota \widehat \varphi_4) \, .
\end{aligned}
\end{equation}
We define the auxiliary angles 
\begin{equation} \label{eq:auxiliary angles}
\widetilde{\varphi}_{\sigma(h+1)} := \widehat \varphi_{\sigma(h)} + \Psi(\widehat \varphi_{\sigma(h + 1)} - \widehat \varphi_{\sigma(h)})  \, ,
\end{equation}
for $h = 1,2,3,4$, where $\sigma(h) \in \{1,2,3,4\}$ is such that $\sigma(h) \equiv h$ $\mathrm{mod} \ 4$ (the term $\Psi(\widehat \varphi_{\sigma(h + 1)} - \widehat \varphi_{\sigma(h)})$ is the oriented angle in $[-\pi,\pi]$ between the two vectors $u_{\sigma(h)}$ and $u_{\sigma(h+1)}$). We introduce the $2\pi$-periodic function $\varphi_k \colon \RR \to \RR$
\begin{equation*}
\varphi_k(\vartheta) :=  \begin{cases}
\widehat \varphi_{\sigma(h)} \, , & \text{if } -\tfrac{\pi}{4}+h \tfrac{\pi}{2} < \vartheta \leq h \tfrac{\pi}{2} - \tfrac{1}{2k} \, , \\
\widehat \varphi_{\sigma(h)} + k\big(\widetilde{\varphi}_{\sigma(h+1)} - \widehat \varphi_{\sigma(h)}\big)(\vartheta - h \tfrac{\pi}{2} + \tfrac{1}{2k} ) \, , & \text{if } h \tfrac{\pi}{2} - \tfrac{1}{2k} < \vartheta < h \tfrac{\pi}{2} + \tfrac{1}{2k} \\
\widetilde{\varphi}_{\sigma(h+1)} \, , & \text{if } h \tfrac{\pi}{2} + \tfrac{1}{2k} \leq \vartheta < \tfrac{\pi}{4}+h \tfrac{\pi}{2} \, .
\end{cases}
\end{equation*}
for $\vartheta \in \big(-\tfrac{\pi}{4}+h \tfrac{\pi}{2},\tfrac{\pi}{4}+h \tfrac{\pi}{2})$, $h \in \ZZ$. The function $\varphi_k$ might have jumps at the points $\tfrac{\pi}{4}+h \tfrac{\pi}{2}$, $h \in \ZZ$; note, however, that according to~\eqref{eq:def of projection} the amplitude of the jump is given  by
\begin{equation*}
\widehat \varphi_{\sigma(h+1)} - \widetilde\varphi_{\sigma(h+1)} = \widehat \varphi_{\sigma(h+1)} - \widehat \varphi_{\sigma(h)} - \Psi(\widehat \varphi_{\sigma(h+1)} - \widehat \varphi_{\sigma(h)}) = Q(\widehat \varphi_{\sigma(h+1)} - \widehat \varphi_{\sigma(h)}) \in 2 \pi \ZZ \, .
\end{equation*}
\begin{figure}[H]
\scalebox{0.7}{
    \includegraphics{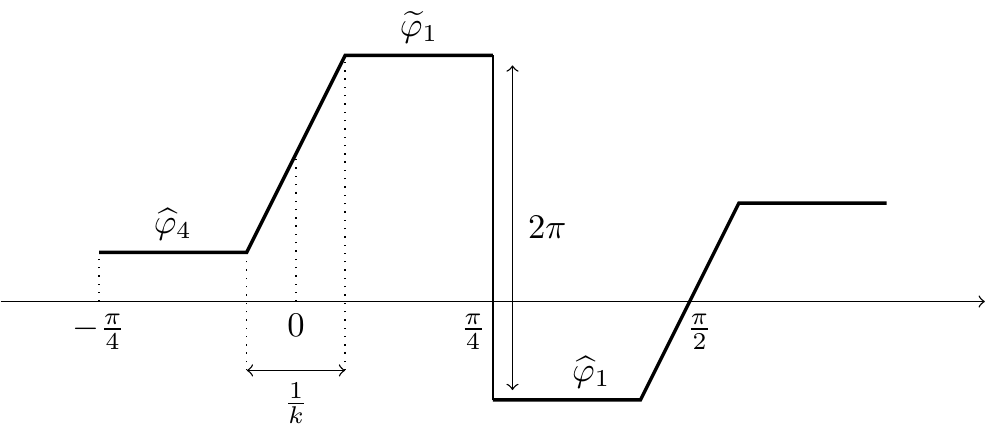}
}
\caption{Example for the definition of $\varphi_k$ for $h = 0$.}
\end{figure}

We now define a map $v_k \colon \SS^1 \to \SS^1$. Given $y \in \SS^1$, let $\vartheta(y) \in [0, 2\pi)$ be the angle such that $y = \exp(\iota \vartheta(y))$ and set 
\begin{equation*}
v_k(y) := \exp\big(\iota \varphi_k(\vartheta(y))\big) \, .
\end{equation*} 
The definition actually does not depend on the choice of the phase $\vartheta(y)$, due to the $2\pi$-periodicity of~$\varphi_k$. Thus we could also choose $\vartheta(y) \in [2\pi h, 2\pi(h+1))$ for any $h \in \ZZ$. Note that $v_k$ is continuous: indeed the possible jumps of $\varphi_k$ have amplitude in $2 \pi \ZZ$, and thus are not seen by $v_k$. In particular, we can compute the degree of the map $v_k$ via the formula 
\begin{equation*}
\begin{split}
\deg(v_k) 2 \pi &= \deg(v_k) \int \limits_{\SS^1} \omega_{\SS^1}  = \int \limits_{\SS^1} v_k^\# \omega_{\SS^1} = \sum_{h=0}^3 \widetilde\varphi_{\sigma(h+1)} - \widehat \varphi_{\sigma(h)} \\
& = \sum_{h=0}^3 \Psi(\widehat \varphi_{\sigma(h+1)} - \widehat \varphi_{\sigma(h)}) = d_{u}(\e i) 2\pi \,,
\end{split}
\end{equation*}
where $\omega_{\SS^1}$ is the volume form on $\SS^1$ and  $d_{u}(\e i)$ is the discrete vorticity defined in~\eqref{eq:discrete vorticity}.

We now define the map $u_k \colon B_\rho(p) \to \SS^1$ by
\begin{equation*}
u_k(x) := v_k \big(\tfrac{x-p}{|x-p|} \big) \, .
\end{equation*}
Note that, if $(r,\vartheta)$ are polar coordinates for the point $x-p$, then the polar coordinates of $u_k(x)$ are $(1, \varphi_k(\vartheta))$. 

\begin{figure}[H]
\scalebox{0.7}{
    \includegraphics{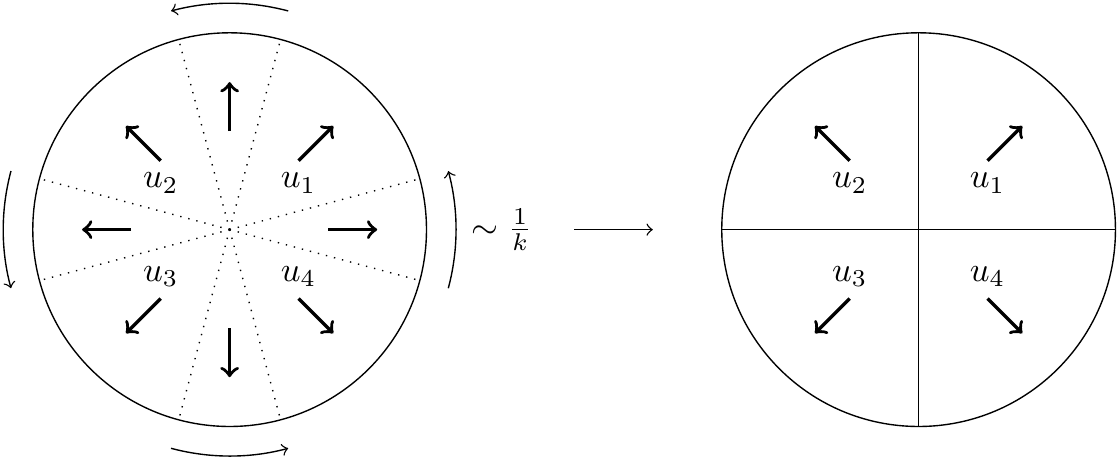}
}
\caption{Example of the approximation $u_k$ (on the left) of the function $u$ (on the right). The jump set of the function $u$ is expanded and a transition between the jumps of $u$ is constructed using the geodesic arcs in $\SS^1$ between the traces. If $u$ has a nontrivial discrete vorticity as in the picture, then the graph $G_{u_k}$ of the function $u_k$ has a hole in the center, as it happens for the graph of the map $x \mapsto \frac{x}{|x|}$. The hole is then preserved in the passage to limit to $G_{u}$, see formula~\eqref{eq:hole}.}
\end{figure}

By \cite[3.2.2, Example 2]{Gia-Mod-Sou-I} we get that
\begin{equation} \label{eq:hole}
\de G_{u_k}|_{B_\rho(p) \x \RR^2} = - \deg(v_k) \delta_{p} \x \llbracket \SS^1 \rrbracket = - d_{u}(\e i) \delta_{p} \x \llbracket \SS^1 \rrbracket = - \mu_{u} \x \llbracket \SS^1 \rrbracket|_{B_\rho(p) \x \RR^2} \, .
\end{equation}
Therefore, to conclude the proof it suffices to show the convergence $G_{u_k} \weak G_{u}$ in $\D_2(\Omega \x \RR^2)$, so that 
\begin{equation*}
- \mu_{u} \x \llbracket \SS^1 \rrbracket (\eta) = \de G_{u_k}(\eta) \to \de G_{u}(\eta) \, .
\end{equation*}
To do so, let us fix $\phi \in C^\infty_c(B_\rho(p) \x \RR^2)$. Since $u_k \to u$ in measure, we have that
\begin{equation*}
G_{u_k}(\phi(x,y) \d x) = \integral{B_\rho(p)}{\phi(x,u_k(x))}{\d x} \to \integral{B_\rho(p)}{\phi(x,u(x))}{\d x} = G_{u}(\phi(x,y) \d x) \, .
\end{equation*} 
To compute the limit on forms of the type $\phi(x,y) \d \widehat x^l \w \d y^m$, observe that $u_k$ is not constant only in the 4 sectors of $B_\rho(p)$ given in polar coordinates by 
\begin{equation*}
A_k^h  -p  := \big\{(r,\vartheta) \ : \ r \in (0,\rho), \ \vartheta \in \big(h \tfrac{\pi}{2} - \tfrac{1}{2k}, h \tfrac{\pi}{2} + \tfrac{1}{2k}\big) \big\} \, , \quad h \in \{0,1,2,3\} \, ,
\end{equation*}
thus, for $l,m=1,2$,
\begin{equation*}
\begin{split}
 (-1)^{2-l} G_{u_k}(\phi(x,y) \d \widehat x^l \w \d y^m)& = \integral{B_\rho(p)}{\phi(x, u_k(x)) \de_{x^l} u^m_k(x)}{\d x} 
\\
& = \sum_{h=0}^3 \integral{A_k^h}{\phi(x, u_k(x)) \de_{x^l} u^m_k(x)}{\d x} \, .
\end{split} 
\end{equation*}
The integrals on the sets $A_k^h$ can be computed in polar coordinates. We show the computations for $h=0$ and $m=1$, the other cases being analogous. Changing variables in the integral on the interval $(-1/2k,1/2k)$ we obtain  
\begin{equation*}
\begin{split}
& \integral{A_k^0}{\phi(x, u_k(x))  \de_{x^2} u^1_k(x)}{\d x} \\
&\quad = - \int \limits_0^\rho \int \limits_{-1/2k}^{1/2k} \phi\big(p+ r \exp\big(\iota \vartheta\big), \exp\big(\iota \varphi_k(\vartheta)\big) \big)\sin(\varphi_k(\vartheta)) \varphi'_k(\vartheta)\cos(\vartheta) \d \vartheta  \d r \\
& \quad = - \int \limits_0^\rho \int \limits_{-1/2}^{1/2} \phi\big(p+ r \exp\big(\iota \tfrac{t}{k}\big), \exp(\iota\varphi_1(t)) \big) \sin(\varphi_1(t)) \varphi'_{1}(t)\cos\big(\tfrac{t}{k}\big) \d t  \d r \\
& \quad \to - \int \limits_0^\rho \int \limits_{-1/2}^{1/2} \phi\big(p+(r,0), \exp(\iota\varphi_1(t))\big)\sin(\varphi_1(t)) \varphi'_{1}(t)  \d t  \d r 
\\
& \quad = \int \limits_{J_{41}} \bigg\{ \int \limits_{\gamma_{41}} \phi(x, y)  \d y^1 \bigg\} \nu^2 \d \H^1(x) \, ,
\end{split}
\end{equation*}
where $t \in (-1/2,1/2) \mapsto \gamma_{41}(t) := \exp\big(\iota (\widehat \varphi_4 + (\widetilde\varphi_1 - \widehat \varphi_4)(t + \tfrac{1}{2}))$ is a parametrization of the geodesic arc $\gamma_{41} \in \SS^1$ which connects $u_4$ to $u_1$ (cf.\ the definition of~$\widehat \varphi_4$ in~\eqref{eq:phases of vortex} and of $\widetilde\varphi_1$ in~\eqref{eq:auxiliary angles}) and $J_{41}$ is the subset of $J_{u} \cap B_\rho(p)$ where $u$ jumps from $u_4$ to $u_1$, oriented with normal $\nu = (0,1)$. Moreover,
\begin{equation*}
\begin{split}
\integral{A_k^0}&{\phi(x, u_k(x))  \de_{x^1} u^1_k(x)}{\d x}
\\
&=\int \limits_0^\rho \int \limits_{-1/2k}^{1/2k} \phi\big(p+ r \exp\big(\iota \vartheta\big), \exp\big(\iota \varphi_k(\vartheta)\big) \big)\sin(\varphi_k(\vartheta)) \varphi'_k(\vartheta)\sin(\vartheta) \d \vartheta  \d r \\
& = \int \limits_0^\rho \int \limits_{-1/2}^{1/2}\phi\big(p+  r \exp\big(\iota \tfrac{t}{k}\big), \exp(\iota\varphi_1(t)) \big) \sin(\varphi_1(t)) \varphi'_{1}(t) \sin\big(\tfrac{t}{k}\big) \d t  \d r  \to  0 \, .
\end{split}
\end{equation*}
This concludes the proof.
\end{proof}

In the elementary lemma below we show that the flat convergence of the vorticity measure implies convergence of the boundaries of the graphs associated to the corresponding spin field.

\begin{lemma} \label{lemma:flat implied D1}
Let $\mu_\e, \mu  \in \M_b(\Omega)$ and assume that $\mu_\e \flat \mu$ in $\Omega$. Then $\mu_\e \x \llbracket \SS^1 \rrbracket \weak \mu \x \llbracket \SS^1 \rrbracket$ in $\D_1(\Omega \x \RR^2)$.
\end{lemma}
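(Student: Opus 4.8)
The plan is to unravel the definitions and reduce the convergence of currents to the flat convergence of the measures. Fix an arbitrary test form $\omega \in \D^1(\Omega \x \RR^2)$ and write it as $\omega = \sum_{i=1,2} a_i(x,y)\, \d x^i + \sum_{m=1,2} b_m(x,y)\, \d y^m$ with $a_i, b_m \in C^\infty_c(\Omega \x \RR^2)$. Viewing $\mu_\e$ as a $0$-current and $\llbracket \SS^1 \rrbracket$ as a $1$-current, the defining property of the product current in the subsection on currents in product spaces kills the horizontal part (a $\d x^i$-term has multi-index degrees $(1,0) \neq (0,1)$), so only the vertical part contributes and
\[
\mu_\e \x \llbracket \SS^1 \rrbracket(\omega) = \mu_\e\Big( \llbracket \SS^1 \rrbracket\big( \textstyle\sum_{m} b_m(x,\cdot)\, \d y^m\big)\Big) = \integral{\Omega}{G_\omega(x)}{\d \mu_\e(x)} \, ,
\]
where $G_\omega(x) := \int_{\SS^1} \sum_{m} b_m(x,y)\, \d y^m$ is obtained by integrating, for each fixed $x$, the $1$-form $y \mapsto \sum_m b_m(x,y)\, \d y^m$ over $\SS^1$ with its counterclockwise orientation. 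The very same computation gives $\mu \x \llbracket \SS^1 \rrbracket(\omega) = \integral{\Omega}{G_\omega(x)}{\d \mu(x)}$.

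Next I would check that $G_\omega \in C^\infty_c(\Omega) \subset C^{0,1}_c(\Omega)$. For the support, note that $\supp(\omega)$ is compact in $\Omega \x \RR^2$, hence its projection onto the first factor is a compact subset $K$ of $\Omega$, and $G_\omega$ vanishes identically outside $K$; thus $\supp(G_\omega) \subset K \compact \Omega$. For the regularity, since each $b_m$ is smooth with compact support, all $x$-derivatives of $(x,y) \mapsto b_m(x,y)$ are bounded uniformly for $y$ in the fixed compact curve $\SS^1$, so differentiation under the integral sign over $\SS^1$ shows that $G_\omega$ is $C^\infty$, and in particular Lipschitz on its compact support.

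Finally, denoting by $d_F(\mu_\e,\mu) := \sup\{|\langle \mu_\e - \mu, \psi\rangle| : \psi \in C^{0,1}_c(\Omega),\ \|\psi\|_{C^{0,1}} \leq 1\}$ the flat distance, homogeneity gives $|\langle \mu_\e - \mu, G_\omega\rangle| \leq \|G_\omega\|_{C^{0,1}}\, d_F(\mu_\e,\mu)$, so that
\[
\big| \mu_\e \x \llbracket \SS^1 \rrbracket(\omega) - \mu \x \llbracket \SS^1 \rrbracket(\omega) \big| = \Big| \integral{\Omega}{G_\omega}{\d \mu_\e} - \integral{\Omega}{G_\omega}{\d \mu} \Big| \leq \|G_\omega\|_{C^{0,1}}\, d_F(\mu_\e,\mu) \to 0 \, ,
\]
because $\mu_\e \flat \mu$ means precisely $d_F(\mu_\e,\mu)\to 0$. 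Since $\omega \in \D^1(\Omega \x \RR^2)$ was arbitrary, this is exactly $\mu_\e \x \llbracket \SS^1 \rrbracket \weak \mu \x \llbracket \SS^1 \rrbracket$ in $\D_1(\Omega \x \RR^2)$. The only step that requires any care is the regularity claim for $G_\omega$ — namely that fibrewise integration over the fixed circle $\SS^1$ preserves the smoothness and the compactness of support of the $x$-dependence — but this is entirely routine, which is why the statement is an elementary lemma and no genuine obstacle arises.
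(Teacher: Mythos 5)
Your proof is correct and follows essentially the same route as the paper's: the product-current definition annihilates the horizontal components, the vertical components reduce to testing $\mu_\e-\mu$ against the fibrewise integral $x\mapsto\int_{\SS^1}\phi(x,y)\,\d y^m$, which is Lipschitz with compact support, and flat convergence concludes. The extra care you take with the support, smoothness, and the homogeneity of the flat norm is all sound.
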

\begin{proof}
Let us fix $\phi \in C^\infty_c(\Omega\x \RR^2)$. For $l =1,2$, by the very definition of the product of a $0$-current and a $1$-current we infer
\begin{equation*}
\mu_\e \x \llbracket \SS^1 \rrbracket (\phi(x,y) \d x^l) = 0 = \mu \x \llbracket \SS^1 \rrbracket(\phi(x,y) \d x^l) \, .
\end{equation*}
Next note that $\psi^m(x) := \int_{ \SS^1} \phi(x,y) \d y^m$ belongs to $C^{0,1}_c(\Omega)$ for $m=1,2$. Hence
\begin{equation*}
\begin{split}
\mu_\e \x \llbracket \SS^1 \rrbracket (\phi(x,y) \d y^m) & = \integral{\Omega}{ \llbracket \SS^1 \rrbracket (\phi(x,y) \d y^m) }{\d \mu_\e(x)} = \integral{\Omega}{ \bigg\{ \integral{\SS^1}{\phi(x,y)}{\d y^m} \bigg\} }{\d \mu_\e(x)} \\
& = \integral{\Omega}{ \psi^m(x) }{\d \mu_\e(x)} \to \integral{\Omega}{ \psi^m(x) }{\d \mu(x)}.
\end{split}
\end{equation*}
\end{proof}

\section{A compactness result}\label{sec:compactness}

In this section we prove a general compactness result that includes the statement in Theorem~\ref{thm:theta equal e log}-{\em i)} but can be also applied in other regimes, as in~\cite{CicOrlRuf}.  For this reason, in each result we give precisely the assumptions on $\theta_{\e}$ for which the statements hold true. The notation $\theta_{\e}\lesssim\e|\log\e|$ stands for $\lim_{\e \to 0} \frac{\theta_\e}{\e |\log \e|} \in [0,+\infty)$.  Given a measure $\mu = \sum_{h=1}^M d_h \delta_{x_h}$ and an open set $A$, we adopt the notation 
\begin{equation*}
A_\mu := A \sm \supp(\mu) = A \sm \{x_1, \dots, x_M\}
\end{equation*}
and $A_\mu^\rho := A\sm \bigcup_{h=1}^M B_\rho(x_h)$.

Our first goal is to prove a compactness result for the graphs $G_{u_{\e}}$ in the class of i.m.\ rectifiable currents. To state the result, given $\mu = \sum_{h=1}^M d_h \delta_{x_h}$ with $d_h \in \ZZ$ and~$u \in BV(\Omega;\SS^1)$, we introduce the set of admissible currents  
\begin{equation} \label{eq:def of Adm}
	\begin{split}
		\mathrm{Adm}(\mu,u;\Omega)  :=  \bigg\{ & T \in \D_2(\Omega \x \RR^2)\ : \ T \in \cart(\Omega_\mu \x \SS^1)\, , \\[-1em]
		& \hspace{2em} \de T|_{\Omega \x \RR^2} = - \mu \x \llbracket \SS^1 \rrbracket \, , \ u_T = u \text{ a.e.\ in } \Omega  \bigg\}.
	\end{split}
\end{equation} 

This is the main result in this section. 

\begin{proposition}[Compactness in the sense of currents] \label{prop:current compactness}
	Assume that $u_\e \colon \e \ZZ^2 \to \SS^1$ satisfies $\frac{1}{\e\theta_{\e}}E_{\e}(u_{\e})\leq C$ with $\theta_{\e}\lesssim\e|\log\e|$. Let $G_{u_\e} \in \D_2(\Omega \x \RR^2)$ be the current associated to $u_{\e}$ as in~\eqref{eq:Gu ac}--\eqref{eq:Gu v} and let~$\mu_{u_{\e}}$ the discrete vorticity measure associated to $u_{\e}$ as in~\eqref{eq:discrete vorticity measure}. Then there exists a subsequence (not relabeled) and
	\begin{itemize}
		\item[i)] $\mu = \sum_{h=1}^M d_h \delta_{x_h}$ with $d_h \in \ZZ$ such that $\mu_{u_\e} \flat \mu$;
		\item[ii)] $u \in BV(\Omega;\SS^1)$ such that $u_\e \to u$ in $L^1(\Omega;\RR^2)$ and $u_\e \wstar u$ in $BV_{\mathrm{loc}}(\Omega;\RR^2)$; 
		\item[iii)] $T \in \mathrm{Adm}(\mu,u;\Omega) $ such that  $G_{u_\e} \weak T$ in $\D_2(\Omega \x \RR^2)$.
	\end{itemize}
	In particular, if $\theta_\e \ll \e |\log \e|$, then $\mu = 0$ and $T \in \cart(\Omega \x \SS^1)$. 
	\end{proposition}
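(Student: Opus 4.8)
Since $\theta_\e \lesssim \e|\log\e|$, the identity
$\frac{1}{\e^2|\log\e|}XY_\e(u_\e) = \frac{\theta_\e}{\e|\log\e|}\,\frac{1}{\e\theta_\e}E_\e(u_\e)$
shows $\frac{1}{\e^2|\log\e|}XY_\e(u_\e)\leq C'$, so Proposition~\ref{prop:XY classical} gives, up to a subsequence, a measure $\mu = \sum_{h=1}^M d_h\delta_{x_h}$ with $\mu_{u_\e}\mres\Omega\flat\mu$; this is i). For ii) the plan is to exploit the $\S_\e$-constraint: if $u_\e(\e i)\neq u_\e(\e j)$ their phases differ by at least $\theta_\e$, hence by \eqref{eq:geo and eucl} $|u_\e(\e i)-u_\e(\e j)|\geq 2\sin(\theta_\e/2)\geq\tfrac{2}{\pi}\theta_\e$, so that $|u_\e(\e i)-u_\e(\e j)|^2\geq\tfrac{2}{\pi}\theta_\e|u_\e(\e i)-u_\e(\e j)|$ (trivially also when the values coincide). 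Summing over nearest neighbors,
\[
|\DD u_\e|(\Omega)\leq\sum_\nn\e|u_\e(\e i)-u_\e(\e j)|\leq\frac{\pi}{2\theta_\e}\sum_\nn\e|u_\e(\e i)-u_\e(\e j)|^2=\frac{\pi}{\e\theta_\e}E_\e(u_\e)\leq\pi C\,.
\]
Since also $|u_\e|\equiv 1$, the sequence is bounded in $BV(\Omega;\RR^2)$; by the compact embedding into $L^1$, a subsequence converges to some $u$ in $L^1(\Omega;\RR^2)$ and $u_\e\wstar u$ in $BV(\Omega;\RR^2)$, in particular in $BV_{\mathrm{loc}}$. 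As $u_\e\to u$ a.e.\ along a further subsequence, $|u|=1$ a.e., i.e.\ $u\in BV(\Omega;\SS^1)$; this is ii).

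\textbf{Compactness of the graphs and the boundary.} By Proposition~\ref{prop:Gu is im}, $|G_{u_\e}|=\H^2\mres\M$ with $\M=\M^{(a)}\cup\M^{(j)}$, so, using $\geo(a,b)\leq\tfrac{\pi}{2}|a-b|$ from \eqref{eq:geo and eucl},
\[
|G_{u_\e}|(\Omega\x\RR^2)=|\Omega|+\int_{J_{u_\e}}\geo(u_\e^-,u_\e^+)\,\d\H^1\leq|\Omega|+\tfrac{\pi}{2}|\DD u_\e|(\Omega)\leq C\,.
\]
Hence, up to a further subsequence, $G_{u_\e}\weak T$ in $\D_2(\Omega\x\RR^2)$ with $|T|(\Omega\x\RR^2)<\infty$. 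By Proposition~\ref{prop:bd of Gu is mu}, $\de G_{u_\e}|_{\Omega\x\RR^2}=-\mu_{u_\e}\x\llbracket\SS^1\rrbracket$, and since $\mu_{u_\e}\flat\mu$, Lemma~\ref{lemma:flat implied D1} yields $\mu_{u_\e}\x\llbracket\SS^1\rrbracket\weak\mu\x\llbracket\SS^1\rrbracket$; passing to the limit in the boundary, $\de T|_{\Omega\x\RR^2}=-\mu\x\llbracket\SS^1\rrbracket$, in particular $\de T|_{\Omega_\mu\x\RR^2}=0$. Testing on horizontal forms, $T(\phi(x,y)\,\d x)=\lim_\e\int_\Omega\phi(x,u_\e(x))\,\d x=\int_\Omega\phi(x,u(x))\,\d x$, which already gives $\pi^\Omega_\#T=\llbracket\Omega\rrbracket$, $T|_{\d x}\geq 0$, $\|T\|_1=|\Omega|<\infty$, $\supp T\subset\ol\Omega\x\SS^1$, and — once $T$ is known to be cartesian — $u_T=u$ a.e.

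\textbf{Main obstacle: rectifiability of $T$.} The delicate point is that the Closure Theorem cannot be applied to $G_{u_\e}$ directly, because $|\de G_{u_\e}|(\Omega_\mu^\rho\x\RR^2)=2\pi|\mu_{u_\e}|(\Omega_\mu^\rho)$ and $|\mu_{u_\e}|(\Omega)$ may diverge (at rate $|\log\e|$) when dipoles of opposite degree collapse at scale $\e$, possibly clustering far from $\supp\mu$. I would therefore first perform a surgery: via a ball-construction near these small-scale clusters, produce $\tilde u_\e\colon\e\ZZ^2\to\S_\e$ with $|\mu_{\tilde u_\e}|(\Omega)\leq C$, $\mu_{\tilde u_\e}\flat\mu$, $\tilde u_\e\to u$ in $L^1$, and — crucially — $|G_{u_\e}-G_{\tilde u_\e}|(\Omega\x\RR^2)\to 0$, the last estimate holding because the modification affects $G_{u_\e}$ only on $\bigcup_k B_k$ with $\big|\bigcup_k B_k\big|+|\DD u_\e|\big(\bigcup_k B_k\big)\to 0$. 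Then $\sup_\e\big(|G_{\tilde u_\e}|(V)+|\de G_{\tilde u_\e}|(V)\big)<\infty$ for every $V\compact\Omega_\mu\x\RR^2$, so the Closure Theorem \cite[2.2.4, Theorem~1]{Gia-Mod-Sou-I} applied to $G_{\tilde u_\e}$ (whose weak limit is again $T$) shows that $T$ is i.m.\ rectifiable on $\Omega_\mu\x\RR^2$; together with the properties collected in the previous paragraph and \cite[4.2.2, Theorem~1]{Gia-Mod-Sou-I} (as in the proof of Lemma~\ref{lemma:extension of currents}) this gives $T|_{\Omega_\mu\x\RR^2}\in\cart(\Omega_\mu\x\SS^1)$ with $u_T=u$, i.e.\ $T\in\mathrm{Adm}(\mu,u;\Omega)$, which is iii). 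I expect designing this surgery — simultaneously cleaning the vorticity and perturbing the graph only negligibly — to be the technical heart of the argument.

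\textbf{The case $\theta_\e\ll\e|\log\e|$.} Here $\frac{1}{\e^2|\log\e|}XY_\e(u_\e)=\frac{\theta_\e}{\e|\log\e|}\,\frac{1}{\e\theta_\e}E_\e(u_\e)\to 0$, so the lower bound in Proposition~\ref{prop:XY classical} forces $2\pi|\mu|(\Omega)\leq 0$, i.e.\ $\mu=0$; then $\Omega_\mu=\Omega$, $\de T|_{\Omega\x\RR^2}=0$ and $T\in\cart(\Omega\x\SS^1)$.
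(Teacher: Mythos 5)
Your proposal is correct and follows essentially the same route as the paper: part i) via Proposition~\ref{prop:XY classical}; part ii) via the observation that the $\S_\e$-constraint forces nonzero jumps of size at least of order $\theta_\e$ (the paper packages this, together with the mass bound on $G_{u_\e}$, in Lemma~\ref{lemma:bound with parametric integral}); the boundary identification via Propositions~\ref{prop:Gu is im}, \ref{prop:bd of Gu is mu} and Lemma~\ref{lemma:flat implied D1}; and rectifiability via exactly the surgery you anticipate, which is the paper's Lemma~\ref{lemma:modification} (a ball-construction modification $\overline u_\e$ with $G_{\overline u_\e}-G_{u_\e}\weak 0$, comparable masses, and boundary supported on finitely many points $z_1,\dots,z_J$), followed by the Closure Theorem on sets compactly contained in the complement of those points. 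The only cosmetic differences are that the paper works with local bounds on $A\compact\Omega$ plus a diagonal argument rather than your global $BV$ bound --- a point worth keeping in mind since near $\de\Omega$ the arbitrary extension of $u_\e$ is not controlled by the energy --- and that its modification merely localizes the residual vorticity at finitely many points rather than making $|\mu_{\overline u_\e}|(\Omega)$ uniformly bounded.
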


We postpone the proof, since we need some preliminary results. To deduce a bound on the mass $|G_{u_\e}|$ (and thus compactness in $\D_2(\Omega \x \RR^2)$), we rewrite the energy as a parametric integral of the currents $G_{u_\e}$. Specifically, defining the convex and positively $1$-homogeneous function $\Phi \colon \Lambda_2 (\RR^2 \x \RR^2) \mapsto \RR$ by
\begin{equation} \label{eq:Phi is 2,1}
\Phi(\xi) := \sqrt{(\xi^{21})^2 + (\xi^{22})^2 }  + \sqrt{(\xi^{11})^2 + (\xi^{12})^2 }
\end{equation}
for every
\begin{equation*}
\xi = \xi^{\ol 0 0} e_1 \wedge e_2 + \xi^{21} e_1 \wedge \bar e_1 + \xi^{22} e_1 \wedge \bar e_2 + \xi^{11} e_2 \wedge \bar e_1 + \xi^{12} e_2 \wedge \bar e_2 + \xi^{0 \ol 0} \bar e_1 \wedge \bar e_2  \in \Lambda_2 (\RR^2 \x \RR^2) \, ,
\end{equation*}
we have the following representation proven in~\cite[Lemma 4.3]{CicOrlRuf}.  
\begin{lemma} \label{lemma:bound with parametric integral}
Assume that $\theta_\e \ll 1$. Let $\sigma \in (0,1)$ and $A \compact \Omega$. Then for $\e$ small enough
\begin{equation*}
 \frac{1}{\e \theta_\e}E_\e(u_\e) \geq (1-\sigma)\integral{J_{u_\e} \cap A}{\geo(u_\e^-,u_\e^+)|\nu_{u_\e}|_1}{\d \H^1} = (1-\sigma) \integral{A \x \RR^2}{\Phi(\vec G_{u_\e})}{\d |G_{u_\e}|} \, .
\end{equation*}
\end{lemma}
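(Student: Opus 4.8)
\noindent\emph{Proof plan.} The plan is to prove the inequality edge by edge on the lattice, reduce it to an elementary scalar estimate, and then identify the jump integral with the parametric integral using the explicit structure of $G_{u_\e}$ from Proposition~\ref{prop:Gu is im}. For the first step, I would use that the jump set $J_{u_\e}$ of the piecewise constant interpolation lies in the union of the edges of the squares $\e i + [0,\e]^2$ and that along any such edge the normal $\nu_{u_\e}$ is a coordinate direction, so $|\nu_{u_\e}|_1 = 1$. Thus, if $\e$ is small enough that every lattice edge meeting the compactly contained set $A$ has both endpoints in $\Omega_\e$, the integral $\integral{J_{u_\e} \cap A}{\geo(u_\e^-,u_\e^+)|\nu_{u_\e}|_1}{\d \H^1}$ is at most $\sum \e\,\geo(u_\e(\e i), u_\e(\e j))$, the sum running over the (unordered) lattice edges $\langle i,j\rangle$ contained in $\Omega$, while the full energy $E_\e(u_\e)$ equals $\sum \e^2|u_\e(\e i) - u_\e(\e j)|^2$ over the same edges. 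Hence it suffices to prove, for each edge, the pointwise bound $|u_\e(\e i) - u_\e(\e j)|^2 \ge (1-\sigma)\,\theta_\e\,\geo(u_\e(\e i), u_\e(\e j))$.

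To establish this, I would set $\alpha := \geo(u_\e(\e i), u_\e(\e j))$ and use $|u-v|^2 = 2\big(1-\cos\geo(u,v)\big)$, cf.~\eqref{eq:geo and eucl}, so the bound becomes $2(1-\cos\alpha) \ge (1-\sigma)\,\theta_\e\,\alpha$. Because $u_\e$ takes values in $\S_\e$, either $\alpha = 0$ (trivial) or $\alpha$ is a positive integer multiple of $\theta_\e$, so $\alpha \in [\theta_\e,\pi]$. Studying $h(\alpha) := 2(1-\cos\alpha)/\alpha$, one checks that $h$ is strictly increasing on $(0,\alpha^*)$ and strictly decreasing on $(\alpha^*,\pi)$ for some $\alpha^* \in (\pi/2,\pi)$ (the sign of $h'$ equals that of $\alpha\sin\alpha - (1-\cos\alpha)$, whose derivative is $\alpha\cos\alpha$), so $\min_{[\theta_\e,\pi]} h = \min\{h(\theta_\e), h(\pi)\}$. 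Since $h(\pi) = 4/\pi$ and $h(\theta_\e) = 2(1-\cos\theta_\e)/\theta_\e = \theta_\e\,(1 + o(1))$ as $\theta_\e \to 0$, for $\e$ small we obtain $\min_{[\theta_\e,\pi]} h = h(\theta_\e) \ge (1-\sigma)\,\theta_\e$, which gives the edgewise bound and hence the stated lower estimate.

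For the equality, I would recall from Proposition~\ref{prop:Gu is im} that $|G_{u_\e}| = \H^2 \mres \M$ with $\M = \M^{(a)} \cup \M^{(j)}$, that $\vec G_{u_\e} = e_1 \wedge e_2$ on $\M^{(a)}$, and that $\vec G_{u_\e}$ on $\M^{(j)}$ is given by~\eqref{eq:orientation of Gu 2}. Since $\Phi(e_1 \wedge e_2) = 0$ — the $e_1\wedge e_2$ component does not enter~\eqref{eq:Phi is 2,1} — the part $\M^{(a)}$ contributes nothing, consistently with the left-hand side being concentrated on $J_{u_\e}$. On $\M^{(j)}$, plugging~\eqref{eq:orientation of Gu 2} into~\eqref{eq:Phi is 2,1} and using $|y| = 1$ for $y \in \SS^1$ gives $\Phi(\vec G_{u_\e}(x,y)) = |\nu_{u_\e}^1(x)|\,|y| + |\nu_{u_\e}^2(x)|\,|y| = |\nu_{u_\e}(x)|_1$. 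Finally, $\M^{(j)} = \{(x,y) : x \in J_{u_\e},\ y \in \gamma_x\}$ is, over each segment of $J_{u_\e}$, the isometric product of that segment with the geodesic arc $\gamma_x$, so Fubini (equivalently, the coarea formula) gives
\[
\integral{(A \x \RR^2) \cap \M^{(j)}}{|\nu_{u_\e}(x)|_1}{\d \H^2(x,y)} = \integral{J_{u_\e} \cap A}{|\nu_{u_\e}(x)|_1 \, \H^1(\gamma_x)}{\d \H^1(x)} = \integral{J_{u_\e} \cap A}{\geo(u_\e^-,u_\e^+)|\nu_{u_\e}|_1}{\d \H^1} \, ,
\]
since $\H^1(\gamma_x) = \geo(u_\e^-(x), u_\e^+(x))$, which is the claimed identity.

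The main obstacle is really just bookkeeping: tracking precisely which lattice edges are counted after intersecting with $A$ and checking that the edges straddling $\de A$ produce only a harmless one-sided error (only the lower bound is needed), together with making rigorous the Fubini/coarea step on the $2$-rectifiable set $\M^{(j)}$. The scalar estimate itself is elementary once the worst case $\alpha = \theta_\e$ has been isolated.
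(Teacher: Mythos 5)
Your argument is correct and is essentially the intended one: the paper does not prove this lemma itself but imports it from \cite[Lemma~4.3]{CicOrlRuf}, and the proof there is of exactly this type — an edgewise reduction (using that $J_{u_\e}$ consists of lattice edges of length $\e$ with $|\nu_{u_\e}|_1=1$ and both endpoints in $\Omega_\e$ once $\e<\dist(A,\de\Omega)$ is small) to the elementary inequality $2(1-\cos\alpha)\ge(1-\sigma)\theta_\e\,\alpha$ for $\alpha\in\theta_\e\NN\cap[0,\pi]$, combined with the explicit form of $\vec G_{u_\e}$ from Proposition~\ref{prop:Gu is im} to compute $\Phi(\vec G_{u_\e})=|\nu_{u_\e}|_1$ on $\M^{(j)}$ and $0$ on $\M^{(a)}$. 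I see no gaps: the monotonicity analysis of $h(\alpha)=2(1-\cos\alpha)/\alpha$ correctly isolates the worst case $\alpha=\theta_\e$, and the Fubini step on the product set $\M^{(j)}$ is legitimate.
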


One of the features of the limit current $T$ is that it is an i.m.\ rectifiable current. This will follow from the Closure Theorem~\cite[2.2.4, Theorem 1]{Gia-Mod-Sou-I}. However, we first need a technical lemma to circumvent the fact that, in general, the masses $|\de G_{u_\e}|$ are not equibounded. By Proposition~\ref{prop:bd of Gu is mu} the boundaries $\de G_{u_\e}$ are indeed related to the vorticity~$\mu_{u_\e}$, which thanks to the well-known ball construction is equivalent to a sequence of measures with equibounded masses. The precise statement suited for our purposes is the following.

\begin{lemma} \label{lemma:modification}
	Assume that $u_\e \colon \e \ZZ^2 \to \SS^1$ satisfies $\frac{1}{\e^2 |\log \e|} XY_\e(u_\e;\Omega) \leq C$. Let $G_{u_\e} \in \D_2(\Omega \x \RR^2)$ be the current associated to $u_\e$ defined as in~\eqref{eq:Gu ac}--\eqref{eq:Gu v}. Let $\Omega' \subset \subset \Omega$. Then there exist a subsequence (not relabeled), points $z_1,\dots,z_J \in \Omega'$, and $\overline u_\e \colon \e \ZZ^2 \to \SS^1$ such that
	\begin{itemize}
		\item[(i)] $G_{\overline u_\e} - G_{u_\e} \weak 0$ in $\D_2(\Omega' \x \RR^2)$;
		\item[(ii)] $\sup_\e  |G_{\overline u_\e}|(\Omega'\x \RR^2) \leq \sup_\e  |G_{u_\e}|(\Omega'\x \RR^2) +1 $ for $\e$ small enough;
		\item[(iii)] $\de G_{\overline u_\e}|_{(\Omega' \sm \{z_1,\dots,z_J\}) \x \RR^2} = 0$ for $\e$ small enough.
	\end{itemize}
\end{lemma}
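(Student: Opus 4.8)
The idea is to run the standard ball construction (in the form used by Alicandro--Cicalese--Ponsiglione and Alicandro--De Luca--Garroni--Ponsiglione) on the vorticity measure $\mu_{u_\e}$, which is legitimate because $\frac{1}{\e^2|\log\e|}XY_\e(u_\e;\Omega)\le C$, and then to \emph{locally trivialise} the spin field inside the (finitely many) small balls produced by the construction so that the resulting field $\overline u_\e$ has no vorticity there, hence $\de G_{\overline u_\e}$ is supported on the finite set of ball centres. Concretely, first I would apply the ball construction on a slightly larger set $\Omega''$ with $\Omega'\compact\Omega''\compact\Omega$: at a suitable scale one obtains a finite family of pairwise disjoint balls $\{B_{r_{\e,i}}(x_{\e,i})\}$ covering $\supp(\mu_{u_\e})\cap\overline{\Omega''}$, with $\sum_i r_{\e,i}\le C_\rho\,\e^{\sigma}$ for some small exponent (so the total radius is negligible) and with $\sum_i|\mu_{u_\e}|(B_{r_{\e,i}}(x_{\e,i}))\le C$ uniformly. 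By a compactness argument for the (equibounded) number of balls of non-negligible radius one extracts a subsequence along which the ``macroscopic'' balls converge to finitely many points $z_1,\dots,z_J\in\Omega'$ (here one uses $\Omega'\compact\Omega''$ and that radii vanish, so eventually all balls charging $\Omega'$ sit well inside $\Omega''$). Outside $\bigcup_i B_{r_{\e,i}}(x_{\e,i})$ the piecewise constant interpolation of $u_\e$ has zero winding on every lattice square, so $\mu_{u_\e}$ vanishes there.

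Next I would define $\overline u_\e$. Outside the union of balls, set $\overline u_\e=u_\e$. Inside each ball $B_{r_{\e,i}}(x_{\e,i})$, replace $u_\e$ by a map with no vorticity: since $u_\e$ restricted to the lattice points on (a discrete approximation of) $\de B_{r_{\e,i}}(x_{\e,i})$ may have nonzero total degree, one cannot simply extend it continuously, but one can first modify $u_\e$ on an annular corona $B_{2r_{\e,i}}\sm B_{r_{\e,i}}$ to kill the degree — replacing the phase by a new phase that is constant in the angular variable on the inner circle — and then extend it as a constant (or as any fixed $\SS^1$-valued map) inside $B_{r_{\e,i}}$. This is the classical ``dipole removal / degree-killing on a corona'' step; the energetic cost of the modification on the corona is controlled because $\frac{1}{\e^2}XY_\e(u_\e;B_{2r_{\e,i}}\sm B_{r_{\e,i}})$ plus the cost of introducing a boundary layer of width $\sim r_{\e,i}$ is $O(r_{\e,i}|\log\e|\,|d_{\e,i}|)$-type, which is summable and small, while the \emph{mass} $|G_{\overline u_\e}|$ added is controlled by the area of the balls plus the perimeter of the new jump sets, both $o(1)$ by the total-radius bound — this gives (ii). For (iii), by construction $\overline u_\e$ is a (lattice) spin field with $d_{\overline u_\e}\equiv 0$ on every square of $\e\ZZ^2$ contained in $\Omega'\sm\{z_1,\dots,z_J\}$ once $\e$ is small enough (so that each macroscopic ball has shrunk inside an arbitrarily small neighbourhood of its limit centre $z_k$, and the leftover microscopic balls have disappeared into the centres as well); hence by Proposition~\ref{prop:bd of Gu is mu}, $\de G_{\overline u_\e}|_{(\Omega'\sm\{z_1,\dots,z_J\})\x\RR^2}=-\mu_{\overline u_\e}\x\llbracket\SS^1\rrbracket|_{(\Omega'\sm\{z_1,\dots,z_J\})\x\RR^2}=0$.

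Finally, for (i), the weak convergence $G_{\overline u_\e}-G_{u_\e}\weak 0$ in $\D_2(\Omega'\x\RR^2)$ follows because $\overline u_\e$ and $u_\e$ differ only on $\bigcup_i B_{2r_{\e,i}}(x_{\e,i})$, a set whose Lebesgue measure and whose contribution to the mass of both currents tend to $0$ (again by $\sum_i r_{\e,i}\to 0$ together with the energy bound on the coronas): testing against a fixed $\phi\in C^\infty_c(\Omega'\x\RR^2)$, the horizontal part is controlled by $\|\phi\|_\infty\,|\bigcup_i B_{2r_{\e,i}}|\to0$, and the vertical/mixed parts by $\|\phi\|_\infty(|G_{u_\e}|+|G_{\overline u_\e}|)(\bigcup_i B_{2r_{\e,i}}\x\RR^2)$, which is $o(1)$. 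I expect the main obstacle to be the bookkeeping in the degree-killing step: one must ensure simultaneously that (a) the added mass is $o(1)$ — not merely bounded — which forces the modification to live on coronas of radius comparable to $r_{\e,i}$ with $\sum_i r_{\e,i}\to0$, and (b) the construction is genuinely a lattice spin field (so one works with discrete circulations on lattice squares, not smooth circles), and (c) the finitely many limit points $z_k$ are well-defined along a single subsequence and lie in $\Omega'$; reconciling (a) with the discreteness in (b) is where the ball-construction estimates from \cite{Ali-Cic-Pon} and \cite{Ali-DL-Gar-Pon} (in the precise form giving $\sum_i r_{\e,i}\le\e^\sigma$ and $\sum_i|d_{\e,i}|\le C$) must be invoked carefully.
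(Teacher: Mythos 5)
Your overall architecture (ball construction on the vorticity, local modification inside the balls, error estimates on a set of vanishing total radius) matches the paper's, but the central modification step is misconceived. If the winding of $u_\e$ along the discrete boundary of $B_{2r_{\e,i}}(x_{\e,i})$ equals $d\neq 0$, then \emph{any} lattice spin field agreeing with $u_\e$ outside that ball has total discrete vorticity $d$ inside it: the ``degree-killing on a corona'' does not remove the vorticity, it merely relocates it to some squares of the corona. Hence your claim that $d_{\overline u_\e}\equiv 0$ on every square of $\Omega'\sm\{z_1,\dots,z_J\}$ cannot hold, and modifying the nonzero-degree balls buys nothing over leaving them untouched. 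The paper makes exactly the opposite choice: it modifies \emph{only} those balls of the ball-construction family whose total vorticity vanishes (there a vorticity-free extension is topologically possible), and leaves the finitely many balls carrying nonzero vorticity alone; their centers converge, along a subsequence, to the points $z_1,\dots,z_J$, which is precisely where $\de G_{\overline u_\e}$ is allowed to survive. Moreover, even for the zero-degree balls, zero \emph{net} degree on the discrete boundary is not by itself enough to extend inside with $d_{\overline u_\e}\equiv 0$ on every square: one needs a thin annulus near the boundary where the energy is of order $\e^2$, so that neighbouring spins are almost aligned, a single-valued discrete phase with small oscillation exists, and the extension Lemma~\ref{lemma:extension} applies. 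The paper produces such an annulus from property (4) of the ball construction (a vorticity-free layer of width $2\alpha_\e$) combined with a pigeonhole argument over $\sim\alpha_\e/\e^p$ sub-annuli; your proposal skips this ingredient entirely.

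There is also a quantitative gap. The added mass and the error in the weak convergence (i) are governed by the length of the jump set of a piecewise constant lattice field inside a ball of radius $r$, which is of order $r^2/\e$ (with vertical length up to $2\pi$ per unit of jump), so after summing one needs $\mathcal{R}_\e^2/\e\to 0$, i.e.\ a total-radius bound $\mathcal{R}_\e=\sum_i r_{\e,i}\ll\e^{1/2}$ --- not merely $\mathcal{R}_\e\le\e^{\sigma}$ ``for some small exponent''. This is why the paper stops the ball construction at time $t_\e=\e^{p-1}-1$ with $p\in(\tfrac34,1)$, which yields $\mathcal{R}(\mathcal{B}_\e(t_\e))\le C\e^{2p-1}|\log\e|^2$ and final errors of order $\e^{4p-3}|\log\e|^4\to 0$. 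With these two corrections --- modify only the zero-vorticity balls via the low-energy-annulus extension, and calibrate the stopping time so that $\mathcal{R}_\e\ll\e^{1/2}$ --- your argument becomes the paper's.
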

\begin{proof}
	The proof relies on some arguments for the discrete vorticity measure that can be adapted in this from~\cite{BacCicKreOrl}. We provide some details for the sake of completeness. 
	
	We consider an auxiliary discrete vorticity measure $\mu_{u_\e}^{\triangle}$ defined through a triangulation with respect to the lattice $\e \ZZ^2$. (We do this since the results in~\cite{BacCicKreOrl} are stated on the triangular lattice.) More precisely, let $\varphi_\e \colon \e \ZZ^2 \to [0,2\pi)$ be such that $u_\e(x) = \exp(\iota \varphi_\e(x))$. As in~\eqref{eq:discrete vorticity}, for $\pm \in \{+,-\}$ in the triangle $\mathrm{conv}\{\e i, \e i \pm \e e_1, \e i \pm \e e_2\}$ we set 
	\begin{equation*}
		d_{u_\e}^\pm(\e i) := \frac{1}{2\pi} \Big[ \Psi\big( \varphi(\e i \pm \e e_1) - \varphi(\e i) \big) +   \Psi\big( \varphi(\e i \pm \e e_2) - \varphi(\e i \pm \e e_1) \big) + \Psi\big( \varphi(\e i) - \varphi(\e i \pm \e e_2) \big)   \Big]
	\end{equation*}
	and 
	\begin{equation*}
		\mu_{u_\e}^{\triangle} \mres \mathrm{conv}\{\e i, \e i \pm \e e_1, \e i \pm \e e_2\} := d_{u_\e}^\pm(\e i) \delta_{\e i \pm (1-\frac{\sqrt{2}}{2})\e e_1 \pm (1-\frac{\sqrt{2}}{2})\e e_2} \, ,
	\end{equation*}
	where $\e i \pm (1-\frac{\sqrt{2}}{2})\e e_1 \pm (1-\frac{\sqrt{2}}{2})\e e_2$ is the incenter of the triangle $\mathrm{conv}\{\e i, \e i \pm \e e_1, \e i \pm \e e_2\}$. Note that, if $d_{u_\e}^\pm(\e i) \neq 0$, then $\frac{1}{\e^2} XY_\e(u_\e;\mathrm{conv}\{\e i, \e i \pm \e e_1, \e i \pm \e e_2\}) \geq c_0$ for some universal constant $c_0$. Thus $|\mu_{u_\e}^\triangle|(\Omega') \leq \frac{C}{\e^2} XY_\e(u_\e;\Omega) \leq C |\log \e|$ for every $\Omega' \subset \subset \Omega$. Moreover,  
	\begin{equation} \label{eq:mu on triangles}
		|\mu_{u_\e}^\triangle|(A') = 0 \implies |\mu_{u_\e}|(A) = 0 \quad \text{for } A \subset \subset A'  \text{ and } \e \text{ small enough.}
	\end{equation}
	Indeed, if $\mu_{u_\e}^{\triangle} \mres \mathrm{conv}\{\e i, \e i + \e e_1, \e i + \e e_2\} = 0$ and $\mu_{u_\e}^{\triangle} \mres \mathrm{conv}\{\e i + \e e_1 + \e e_2, \e i + \e e_1, \e i + \e e_2\} = 0$, then $\mu_{u_\e}  \mres (\e i + (0,\e]^2)=0$. 

	Let us fix $\Omega' \subset \subset \Omega$. We define the family of balls
	\begin{equation*}
		\mathcal{B}_\e := \{B_{(1-\frac{\sqrt{2}}{2})\e}(x) : x \in \supp(\mu_{u_\e}^\triangle) \cap \Omega' \}
	\end{equation*}
	and we let $\mathcal{R}(\mathcal{B}_\e) := \sum_{B_r(x) \in \mathcal{B}_\e} r$. Each ball in $\mathcal{B}_\e$ is contained in a triangle of the lattice~$\e \ZZ^2$. Since $|\mu_{u_\e}^\triangle|(\Omega') \leq C |\log \e|$, we have that $\# \mathcal{B}_\e \leq C |\log \e|$. For every $0<r<R$ and for every $x \in \mathbb{R}^2$ we set $A_{r,R}(x) := B_R(x) \sm \overline B_r(x)$. If $A_{r,R}(x) \cap \bigcup_{B \in \mathcal{B}_\e} B =\emptyset$ we set 
	\begin{align*}
		\mathcal{E}_\e(A_{r,R}(x)) := |\mu_{u_\e}^\triangle(B_r(x))| \log\frac{R}{r}\,,
	\end{align*}
	and we extend $\mathcal{E}_\e$ to every open set $A$ by  
	\begin{align}\label{def: setfunction}
		\begin{split}
			\mathcal{E}_\e(A) := \sup\Big\{ \sum_{j=1}^N \mathcal{E}_\e(A^j)  : & \ N \in \mathbb{N} \, , \ A^j =A_{r_j,R_j}(x_j)\, , \  A^j \cap \bigcup_{B \in \mathcal{B}_\e} B =\emptyset \, ,\\
			&  \hspace{2em} A^j \cap A^k =\emptyset \text{ for } j \neq k \, , \ A^j \subset A  \text{ for all } j \Big\}\,.
		\end{split}
	\end{align}
	The set function $\mathcal{E}_\e$ is increasing, superadditive and equals $-\infty$ iff $A\subset\bigcup_{B \in \mathcal{B}_\e} B$.  Let $\Omega''$ be such that $\Omega' \subset \subset \Omega'' \subset \subset \Omega$. As in \cite[Lemma~7.1]{BacCicKreOrl} one can prove that
	\begin{equation} \label{eq:bound on E}
		\mathcal{E}_\e(\Omega'') \leq \frac{C}{\e^2} XY_\e(u_\e;\Omega) \leq C |\log \e| \, .
	\end{equation}
	
	We apply the ball construction to the triplet $(\mathcal{E}_\e,\mu_{v_\e}^\triangle,\mathcal{B}_\e)$. The form which suits most the arguments here is the one stated in~\cite[Lemma~6.1]{BacCicKreOrl}. To keep track of the constants, we let $\overline C$ be such that $XY_\e(u_\e;\Omega) \leq \overline C \e^2 |\log \e|$. We fix $p\in(\frac{3}{4},1)$ and we set $\alpha_\e := \overline C \e^p |\log \e|$. Then there exists a family $\{\mathcal{B}_\e(t)\}_{t\geq 0}$ which satisfies
	\begin{itemize}
		\setlength\itemsep{1em}
		\item[{\rm (1)}]
		
		$\displaystyle\bigcup_{B \in \mathcal{B}_\e} B \subset \bigcup_{B \in \mathcal{B}_\e(t_1)} \!  B \subset \bigcup_{B \in \mathcal{B}_\e(t_2)}\! B \, , \quad  \text{for every } 0 \leq t_1 \leq t_2$ ;
		
		\item[{\rm (2)}]   $\overline B\cap \overline B^\prime = \emptyset$  for every $B,B^\prime \in \mathcal{B}_\e(t)$, $B\neq B^\prime$,   and $t \geq 0$;  
		\item[{\rm (3)}] for every $0\leq t_1 \leq t_2$ and every open set $U$ we have that
		\begin{align*} 
			\mathcal{E}_\e\Big(U \cap \Big( \bigcup_{B \in \mathcal{B}_\e(t_2)}B \setminus \bigcup_{B \in \mathcal{B}_\e(t_1)}\overline{B}\Big)\Big) \geq  {\sum_{ \substack{ B \in\mathcal{B}_\e(t_2) \\ B \subset U}}} |\mu_{u_\e}^\triangle(B)| \log\frac{1+t_2}{1+t_1}\, ; 
		\end{align*}
		\item[{\rm (4)}] for $B=B_r(x) \in \mathcal{B}_\e(t)$ and $t\geq 0$, we have that $r > \alpha_\e$ and $|\mu_{u_\e}^\triangle|(B_{r+\alpha_\e}(x)\setminus \overline B_{r-\alpha_\e}(x))=0$;
		\item[{\rm (5)}] for every $t\geq 0$ we have that $\mathcal{R}(\mathcal{B}_\e(t)) \leq (1+t) \big( \mathcal{R}(\mathcal{B}_\e) + \#\mathcal{B}_\e  \alpha_\e \big)$, where $\mathcal{R}(\mathcal{B}_\e(t)) := \sum_{B_r(x) \in \mathcal{B}_\e(t)} r$;
	\end{itemize}
	Note that in general $\mathcal{B}_\e(0)$ is not $\mathcal{B}_\e$. We let $t_\e:=\e^{p-1}-1$ and we define the measures 
	\begin{equation*}
		\tilde \mu_\e := \sum_{B_r(x) \in \mathcal{B}_\e(t_\e)} \mu_{u_\e}^\triangle(B_r(x)) \delta_{x} \, .
	\end{equation*}
	Since $\# \mathcal{B}_\e \leq C |\log \e|$, by property (5) above we have that
	\begin{equation} \label{eq:bound on radii}
		\mathcal{R}(\mathcal{B}_\e(t_\e)) \leq  (1+t_\e) \big( \mathcal{R}(\mathcal{B}_\e) + \#\mathcal{B}_\e \alpha_\e \big) \leq C\e^{2p-1}  |\log \e|^2 .
	\end{equation}
	Moreover, since $\mathcal{E}_\e$ is an increasing set function, by~\eqref{eq:bound on E}, and property~(3) in the ball construction, for $\e$ small enough we have that 
	\begin{equation*}
		C|\log \e| \geq \mathcal{E}_\e(\Omega'') \geq \sum_{\substack{B \in \mathcal{B}_\e(t_\e) \\ B \subset \Omega''}} |\mu_{u_\e}^\triangle(B)| \log (1+t_\e) \geq |\tilde \mu_\e|(\Omega') (1-p) |\log \e|  
	\end{equation*}
	and thus
	\begin{equation} \label{eq:mu is bounded}
		|\tilde \mu_\e|(\Omega') \leq C \, .
	\end{equation}
	We consider the following two subclasses of $\mathcal{B}_\e(t_{\e})$: 
	\begin{align}\label{def:subclasses}
		\begin{split}
			&\mathcal{B}_\e^{=0} := \{ B_r(x) \in \mathcal{B}_\e(t_{\e}) : \mu_{u_\e}^\triangle(B_r(x))=0 \, , \ x \in \Omega^{\prime}\}\,, \\
			& \mathcal{B}_\e^{\neq 0} := \{ B_r(x) \in \mathcal{B}_\e(t_{\e}) : \mu_{u_\e}^\triangle(B)\neq 0  \, , x \in \Omega^\prime \}\,.
		\end{split}
	\end{align}
	Let $B_{r_\e}(x_\e) \in \mathcal{B}_\e^{=0}$. Thanks to property~(4) in the ball construction, $|\mu_{u_\e}^\triangle|(B_{r_\e+\alpha_\e}(x_\e)\setminus \overline B_{r_\e-\alpha_\e}(x_\e))=0$. We set $K_\e := \lfloor \frac{\alpha_\e}{4\e^p} \rfloor \geq 1$ and $r_k := r_\e-\frac{\alpha_\e}{2} + k \e^p $ for $k = 0, \dots , K_\e$. Note that $r_{k} \leq r_\e - \frac{\alpha_\e}{4}$. For every~$\e$ there exists $k_\e \in \{1, \dots, K_\e\}$ such that
	\begin{equation*}
		\begin{split}
			\overline C \e^2 |\log \e| & \geq XY_\e(u_\e;B_{r_\e+\alpha_\e}(x_\e)\setminus \overline B_{r_\e-\alpha_\e}(x_\e)) \geq \sum_{k=1}^{K_\e}  XY_\e(u_\e;A_{r_{k-1},r_k}(x_\e)) \\
			&  \geq K_\e  XY_\e(u_\e;A_{r_{k_\e-1},r_{k_\e}}(x_\e)) \geq   \frac{\alpha_\e}{8\e^p}   XY_\e(u_\e;A_{r_{k_\e-1},r_{k_\e}}(x_\e))\, .
		\end{split}
	\end{equation*}
	Rearranging terms the definition of $\alpha_{\e}$ yields the bound $XY_\e(u_\e;A_{r_{k_\e-1},r_{k_\e}}(x_\e)) \leq C_1 \e^2$, with $C_1 := 8$. Hence we can apply Lemma~\ref{lemma:extension} below to find $\overline u_\e \colon \e \ZZ^2 \to \SS^1$  such that,  for $\e < (r_{k_\e}-r_{k_\e-1})\big(\frac{2\pi}{3}\big)^2 \frac{1}{C_0 C_1}$, we have
	\begin{equation*}
	\overline u_\e = u_\e\text{ on }\e \ZZ^2 \sm \overline B_{\frac{r_{k_\e-1}+r_{k_\e}}{2}}(x_\e) \quad\text{and}\quad |\mu_{\overline u_\e}^\triangle|( B_{r_{k_\e}}(x_\e)) = 0 \, .	
	\end{equation*}
	The condition $\e < (r_{k_\e}-r_{k_\e-1})\big(\frac{2\pi}{3}\big)^2 \frac{1}{C_0 C_1}$ is satisfied as $\e < \e^p \big(\frac{2\pi}{3}\big)^2 \frac{1}{8 C_0}$ for $\e$ small enough. Since $r_{k_\e} \leq r_\e - \frac{\alpha_\e}{4} \ll r_\e - \sqrt{2} \e$, we also have that the piecewise constant functions $\overline u_\e$ and $u_\e$ coincide outside~$B_{r_\e}(x_\e)$.
	
	We apply the modification described above to every $B_{r_\e}(x_\e) \in \mathcal{B}_\e^{=0}$. In this way, for every~$\e$ we construct   $\overline u_\e \colon \e \ZZ^2 \to \SS^1$  such that $\overline u_\e = u_\e$ (as piecewise constant functions)  in $\RR^2 \sm \bigcup_{B \in \mathcal{B}_\e^{=0}} B$ and $|\mu_{\overline u_\e}^\triangle|(A) = 0$ for every open set $A$ such that $A \subset \subset \Omega' \sm \bigcup_{B \in \mathcal{B}_\e^{\neq 0}} B$. By~\eqref{eq:mu is bounded}--\eqref{def:subclasses} and by the definition of $\tilde \mu_\e$, we have that $\# \mathcal{B}_\e^{\neq 0}$ is equibounded and, up to a subsequence, we can assume $\mathcal{B}_\e^{\neq 0} = \{B_{r_1^\e}(x_1^\e), \dots , B_{r_M^\e}(x_M^\e)\}$. There exists a set of points $\{z_1, \dots, z_J\} \subset \overline{\Omega}'$ with $J \leq M$ such that, up to a subsequence, each sequence $x_m^\e$ converges to a point $z_h$ as $\e \to 0$. (The points belonging to $\de \Omega'$ are actually not relevant for the following discussion.) From now on, we work with this subsequence.
	
	Let us show that $G_{\overline u_\e} - G_{u_\e} \weak 0$ in $\D_2(\Omega' \x \RR^2)$. Let $\phi \in C^\infty_c(\Omega' \x \RR^2)$. By~\eqref{eq:bound on radii} we have  
	\begin{equation*}
		\begin{split}
			& | (G_{\overline u_\e} - G_{u_\e})(\phi(x,y) \d x)| \leq \int_{\Omega'}\! | \phi(x,\overline u_\e(x)) - \phi(x, u_\e(x)) | \d x  \\
			& \quad \leq \sum_{B \in \mathcal{B}_\e^{=0}}\int_{B} \! | \phi(x,\overline u_\e(x)) - \phi(x, u_\e(x)) | \d x \leq 2 \|\phi\|_{L^\infty} \pi  \mathcal{R}(\mathcal{B}_\e(t_\e))^2 \leq C \|\phi\|_{L^\infty}  \e^{4p-2} |\log \e|^4.
		\end{split}
	\end{equation*}
	For the next estimate we observe that, given a ball $B_r(x)$, we have $\H^1(J_{u_\e} \cap B_r(x)) \leq C\frac{r^2}{\e}$. Indeed, since $u_\e$ is piecewise constant on the squares of $\e \ZZ^2$, the measure of its jump set in $B_r(x)$ can be roughly estimated by $4 \e$ times the number of squares that intersect $B_r(x)$, which is of the order of $\frac{r^2}{\e^2}$, at least for $\e\lesssim r$. The same holds true for $\overline u_\e$. For $x \in J_{\overline u_\e}$ (resp., $x \in J_{u_\e}$), let $\overline \gamma_x$ (resp., $\gamma_x$) be the (oriented) geodesic arc that connects~$\overline u_\e^+(x)$ and~$\overline u_\e^-(x)$ (or $u_\e^+(x)$ and $u_\e^-(x)$). Then, using that $\overline u_\e = u_\e$ on $\e \ZZ^2 \sm \bigcup_{B \in \mathcal{B}_\e^{=0}} B$, 
	\begin{equation*}
		\begin{split}
			& |(G_{\overline u_\e} - G_{u_\e})(\phi(x,y) \d \hat x^l \w \d y^m)| \leq \\
			& \quad  \leq  \bigg| \sum_{B \in \mathcal{B}_\e^{=0}}  \int \limits_{J_{\overline u_\e}\cap B} \bigg\{ \int \limits_{\overline \gamma_x} \phi(x,y) \d y^m \bigg\} \nu^l_{\overline u_\e}(x) \d \H^1(x)  - \hspace{-0.5em}\int \limits_{J_{u_\e} \cap B} \bigg\{ \int \limits_{ \gamma_x} \phi(x,y) \d y^m \bigg\} \nu^l_{ u_\e}(x) \d \H^1(x)  \bigg| \\
			& \quad  \leq 2 \pi \|\phi\|_{L^\infty} \sum_{B \in \mathcal{B}_\e^{=0}} \big( \H^1(J_{\overline u_\e} \cap B) + \H^1(J_{u_\e} \cap B) \big) \leq C  \|\phi\|_{L^\infty}  \frac{2}{\e} \mathcal{R}(\mathcal{B}_\e(t_\e))^2 \\
			& \quad \leq C \|\phi\|_{L^\infty}   \e^{4p-3} |\log \e|^4.
		\end{split}
	\end{equation*}
	 The two previous inequalities and the fact that $p\in(\frac{3}{4},1)$ imply that $G_{\overline u_\e} - G_{u_\e} \weak 0$ in $\D_2(\Omega' \x \RR^2)$. Moreover, taking the supremum over 2-forms with $L^\infty$-norm less than 1, they also imply that $ |G_{\overline u_\e}|( \Omega' \x \RR^2) \leq  |G_{u_\e}|(\Omega' \x \RR^2) + 1$ for $\e$ small enough.
	
	Let $A \subset \subset A' \subset \subset \Omega' \sm  \{z_1,\dots, z_J\}$.  By~\eqref{eq:bound on radii}, for $\e$ small enough it follows that $A' \subset \subset \Omega' \sm \bigcup_{B \in \mathcal{B}_\e^{\neq 0}} B$ and thus $|\mu_{\overline u_\e}^\triangle|(A') = 0$. By~\eqref{eq:mu on triangles}, we obtain that $|\mu_{\overline u_\e}|(A) = 0$, i.e., $\de G_{\overline u_\e}|_{A \x \RR^2} = 0$. By the arbitrariness of $A$ and $A'$ we get $\de G_{\overline u_\e}|_{(\Omega' \sm \{z_1,\dots,z_J\}) \x \RR^2} = 0$.
\end{proof}

In the proof we applied the following extension lemma proven  in~\cite[Lemma 3.5 and Remark 3.6]{BacCicKreOrl}.  

\begin{lemma}\label{lemma:extension}
	There exists a constant $C_0>0$ such that the following holds true. Let $\e >0$,  $x_0 \in \RR^2$,and $R > r > \e$, let $C_1 > 1$ and $u_\e \colon \e \ZZ^2 \to \SS^1$ with $XY_\e(u_\e;B_R(x_0) \sm \overline B_r(x_0)) \leq C_1 \e^2$, $\mu_{u_\e}^\triangle(B_{r}(x_0)) = 0$, and $|\mu_{u_\e}^\triangle|(\e i + (0,\e]^2)= 0$ whenever $(\e i + (0,\e]^2) \cap (B_R(x_0) \sm \overline B_r(x_0)) \neq 0$. Then there exists $\overline u_\e \colon \e \ZZ^2 \to \SS^1$  such that  for $\e < \frac{R-r}{C_0C_1}\big(\frac{2\pi}{3}\big)^2$: 
	\begin{itemize}
		\item  $\overline u_\e = u_\e$ on $\e \ZZ^2  \sm \overline B_{\frac{r+R}{2}}(x_0)$;  
		\item $|\mu_{\overline u_\e}^\triangle|( B_{R}(x_0)) = 0$;
	\end{itemize}
\end{lemma}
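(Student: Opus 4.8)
The plan is to locate, inside the annulus $B_R(x_0)\setminus\overline B_r(x_0)$, a concentric discrete circle $\mathcal A$ along which $u_\e$ oscillates very little; to lift $u_\e$ on $\mathcal A$ to a single-valued real phase $\psi$, which is possible because the vorticity enclosed by $\mathcal A$ is zero; and then to re-define $u_\e$ on the disc bounded by $\mathcal A$ as the complex exponential of a controlled extension $\tilde\psi$ of $\psi$. This destroys the vortices of net charge $0$ sitting in $B_r(x_0)$ while leaving $u_\e$ untouched near the outer boundary.

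\emph{Good circle and lifting.} Since $\e<\frac{R-r}{C_0C_1}(2\pi/3)^2$, the annulus $B_{(r+R)/2}(x_0)\setminus\overline B_r(x_0)$ has width larger than $\e$ and hence contains about $\frac{R-r}{2\e}$ concentric ``square annuli'' of the lattice $\e\ZZ^2$. By averaging, one of them, say $\mathcal A$ at radius $\rho\in(r,(r+R)/2)$, carries energy $XY_\e(u_\e;\mathcal A)\le\frac{2C_1\e^3}{R-r}$, so $|u_\e(\e i)-u_\e(\e j)|^2\le\frac{4C_1\e}{R-r}$ on every bond of $\mathcal A$, whence by \eqref{eq:geo and eucl} $\geo(u_\e(\e i),u_\e(\e j))\le\frac{2\pi^2}{3\sqrt{C_0}}<\tfrac{2\pi}{3}$ on every bond of $\mathcal A$, provided $C_0$ is a large enough universal constant. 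Now the winding number of $u_\e$ along $\mathcal A$ equals the sum of the triangular vorticities $d^{\pm}_{u_\e}$ over the lattice faces enclosed by $\mathcal A$: each such face is either contained in $B_r(x_0)$, where the partial sum equals $\mu_{u_\e}^{\triangle}(B_r(x_0))=0$, or it meets $B_R(x_0)\setminus\overline B_r(x_0)$, where all triangular vorticities vanish by hypothesis. Hence the winding is $0$, and since every bond of $\mathcal A$ carries a geodesic increment $<\tfrac{2\pi}{3}<\pi$ we may write $u_\e=\exp(\iota\psi)$ on the vertices of $\mathcal A$ with $\psi$ single-valued and $|\psi(\e j)-\psi(\e i)|<\tfrac{2\pi}{3}$ on every bond of $\mathcal A$.

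\emph{Filling the disc.} Let $D$ be the discrete disc enclosed by $\mathcal A$ and put $\overline u_\e:=u_\e$ outside $D$ and $\overline u_\e:=\exp(\iota\tilde\psi)$ on $D$, where $\tilde\psi$ extends $\psi$ as follows: for $\e i\in D$ at graph distance $d$ from $\mathcal A$, i.e.\ at radius $\approx\rho(1-\lambda)$ with $\lambda:=\tfrac{d\e}{\rho}\in[0,1]$, set $\tilde\psi(\e i):=(1-\lambda)\,\psi(p(\e i))+\lambda\,c$, with $p(\e i)$ a vertex of $\mathcal A$ roughly in the direction of $\e i-x_0$ and $c$ a fixed value of $\psi$ (so $\tilde\psi=\psi$ on $\mathcal A$, hence $\overline u_\e=u_\e$ there). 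Two neighbours of $D$ at radius $\approx\rho(1-\lambda)$ have $p(\cdot)$ at distance $\lesssim\tfrac{1}{1-\lambda}$ bonds along $\mathcal A$, so the ``angular'' per-bond increment of $\tilde\psi$ is $\lesssim\max_{\mathcal A}|\Delta\psi|$, while the ``radial'' one is $\lesssim\tfrac{\e}{\rho}\,\mathrm{osc}(\psi)\lesssim\max_{\mathcal A}|\Delta\psi|$ because $\mathrm{osc}(\psi)\le(\#\text{bonds of }\mathcal A)\max_{\mathcal A}|\Delta\psi|\lesssim\tfrac{\rho}{\e}\max_{\mathcal A}|\Delta\psi|$. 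Thus $|\tilde\psi(\e i)-\tilde\psi(\e j)|\le C\max_{\mathcal A}|\Delta\psi|<C\,\tfrac{2\pi^2}{3\sqrt{C_0}}<\pi$ on every bond of $D$ for $C_0$ large enough. On each triangular face of $D$ the three increments of $\tilde\psi$ then lie in $(-\pi,\pi)$, where $\Psi$ is the identity, so their $\Psi$-images sum to $0$ around the closed face and $\overline u_\e$ has vanishing triangular vorticity on every face of $D$; the faces lying outside $\mathcal A$ are unchanged and contained in $B_R(x_0)\setminus\overline B_r(x_0)$, where the vorticity of $u_\e$ already vanishes. Altogether $|\mu_{\overline u_\e}^{\triangle}|(B_R(x_0))=0$, and since $D\subset\overline B_{(r+R)/2}(x_0)$ we also have $\overline u_\e=u_\e$ on $\e\ZZ^2\setminus\overline B_{(r+R)/2}(x_0)$.

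The delicate part is the third step: carrying out the disc-filling extension rigorously on the square lattice, where ``graph distance'', ``radial direction'' and the actual triangular faces do not fit the polar picture exactly, and keeping all constants uniform in $\e,r,R,C_1$ and $x_0$. Thickening $\mathcal A$ to a few lattice layers and choosing $\tilde\psi$ to be literally the canonical lift on $\mathcal A$ removes the only subtle point, namely the faces that straddle $\partial D$ (their $D$-side vertices then sit on $\mathcal A$, where $\overline u_\e=u_\e$, so these faces carry the same vanishing vorticity as for $u_\e$). The averaging argument, the winding computation and the identity $\Psi\equiv\mathrm{id}$ on $(-\pi,\pi)$ are all elementary.
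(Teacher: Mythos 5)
The paper itself does not prove this lemma: it is imported verbatim from \cite[Lemma 3.5 and Remark 3.6]{BacCicKreOrl}, so there is no in-paper argument to compare against. Judged on its own merits, your strategy — pigeonhole a low-energy discrete circle $\mathcal A$ in the inner half of the annulus, deduce zero winding along $\mathcal A$ from the two vorticity hypotheses via the telescoping identity for $\Psi$, lift to a single-valued phase with per-bond increments $O(C_0^{-1/2})$, and fill the disc with a cone-type phase interpolation whose increments stay below $\pi$ so that $\Psi$ acts as the identity and all triangular vorticities telescope to zero — is the standard proof of this statement and is sound in outline.

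Three points need care in a full write-up. First, $\mathcal A$ must be an actual closed circuit of lattice bonds enclosing \emph{every} triangle whose incenter lies in $B_r(x_0)$; since triangles have diameter $\sqrt2\,\e$, take the radius $\rho\ge r+2\e$ (there are $\sim\frac{R-r}{\e}$ candidate annuli, so discarding the innermost few is harmless), and note also that every enclosed triangle with incenter outside $\overline B_r(x_0)$ sits in a square meeting $B_R(x_0)\setminus\overline B_r(x_0)$, so its vorticity vanishes by hypothesis. Second, $\mu^\triangle$ is computed on the triangulation, whose faces contain \emph{diagonal} edges that do not appear in $XY_\e$; your ``$<\pi$ per bond'' bound must also cover the diagonal increment of $\tilde\psi$, which is at most the sum of two axis increments and hence still $<\pi$ after enlarging $C_0$ — this should be stated, since the identity $\sum\Psi(\Delta\tilde\psi)=\sum\Delta\tilde\psi=0$ is applied to all three edges of each triangle. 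Third, the interpolation constant in $|\Delta\tilde\psi|\le C\max_{\mathcal A}|\Delta\psi|$ must be visibly universal; the cleanest fix for the graph-distance/polar mismatch you flag is to define $\tilde\psi$ on $D$ as the restriction to $\e\ZZ^2$ of the continuum function $(s,\theta)\mapsto \frac{s}{\rho}\,\psi_\rho(\theta)+\bigl(1-\frac{s}{\rho}\bigr)c$, where $\psi_\rho$ is the piecewise affine extension of $\psi$ to the circle of radius $\rho$: its Lipschitz constant is bounded by $C\e^{-1}\max_{\mathcal A}|\Delta\psi|$, which yields the per-bond bound at once, while your thickening of $\mathcal A$ correctly disposes of the faces straddling $\partial D$. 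With these routine repairs the argument is complete.
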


We are finally in a position to prove Proposition~\ref{prop:current compactness}.

\begin{proof}[Proof of Proposition~\ref{prop:current compactness}]
	From the assumptions $\frac{1}{\e\theta_{\e}}E_{\e}(u_{\e})\leq C$ and $\theta_{\e}\lesssim\e|\log\e|$ it follows that 
\begin{equation*}
\frac{1}{\e^2 |\log \e|} E_\e(u_\e) =  \frac{\theta_\e}{\e |\log \e|}\frac{1}{\e \theta_\e} E_\e(u_\e) \leq C \, ,
\end{equation*}
so that by Proposition~\ref{prop:XY classical} we get that (up to a subsequence) $\mu_{u_\e} \flat \mu = \sum_{h=1}^M d_h \delta_{x_h}$, proving the first point. 

Applying Lemma~\ref{lemma:bound with parametric integral} with $\sigma = \frac{1}{2}$ we deduce that for every $A \subset \subset \Omega$ 
\begin{equation*}
	C \geq \frac{1}{\e \theta_\e} E_\e(u_\e) \geq \frac{1}{2}\integral{J_{u_\e} \cap A}{\geo(u_\e^-,u_\e^+)|\nu_{u_\e}|_1}{\d \H^1} \geq \frac{1}{2} |\DD u_\e|(A) \, .
\end{equation*}
Hence $u_{\e}$ is bounded in $BV(A;\SS^1)$ and we conclude that (up to a subsequence) $u_{\e}\to u$ in $L^1(A)$  and $u_{\e}\wstar u$ in $BV(A;\RR^2)$ for some $u\in BV(A;\SS^1)$ with $|\DD u|(A)\leq C$. Since $A\compact\Omega$ was arbitrary and the constant $C$ does not depend on $A$, the second point follows from a diagonal argument and the equiintegrability of $u_{\e}$.

Applying Lemma~\ref{lemma:bound with parametric integral} with $\sigma = \frac{1}{2}$ and since $\Phi(\xi) \geq \sqrt{(\xi^{21})^2 + (\xi^{22})^2 + (\xi^{11})^2 + (\xi^{12})^2 }$,  we obtain by Proposition \ref{prop:Gu is im}  that for every $A \subset \subset \Omega$
\begin{equation*} 
\begin{split}
|G_{u_\e}|(A \x \RR^2) & = |G_{u_\e}|(\M^{(a)} \cap A \x \RR^2) + |G_{u_\e}|(\M^{(j)} \cap A \x \RR^2) \\
& \leq  |A| + \integral{A \x \RR^2}{\Phi(\vec{G}_{u_\e})}{\d |G_{u_\e}|} \leq |\Omega| + \frac{2}{\e\theta_\e} E_\e(u_\e) \leq C \, .
\end{split}
\end{equation*}
By the Compactness Theorem for currents~\cite[2.2.3, Proposition~2 and Theorem 1-(i)]{Gia-Mod-Sou-I} we deduce that there exists a subsequence (not relabeled) and a current $T \in \D_2(\Omega \x \RR^2)$ with $|T| < \infty$ such that $G_{u_\e} \weak T$ in $\D_2(\Omega \x \RR^2)$.

It remains to prove that $T \in \mathrm{Adm}(\mu,u;\Omega)$:
\begin{itemize}
	\item $T$ is an i.m.\ rectifiable current: Let $\Omega' \subset \subset \Omega$. We consider the subsequence (not relabeled), the points  $z_1,\dots, z_J \in \Omega'$, and the spin field $\overline u_\e \colon \e \ZZ^2 \to \SS^1$ given by Lemma~\ref{lemma:modification}. By Lemma~\ref{lemma:modification}-{\em i)} we have that $G_{\overline u_\e} \weak T$ in $\D_2(\Omega' \x \RR^2)$. Let us fix $A \subset \subset \Omega' \sm  \{x_1,\dots,x_M,z_1,\dots,z_J\}$.  We have that $\sup_\e |G_{u_\e}|(A \x \RR^2) < +\infty$ and $\de G_{u_\e}|_{A \x \RR^2} = 0$. By the Closure Theorem~\cite[2.2.4, Theorem 1]{Gia-Mod-Sou-I}, the limit $T|_{A \x \RR^2}$ is an i.m.\ rectifiable current. By the arbitrariness of $A$ and $\Omega'$ and since $\{x_h\}\x \SS^1$ and $\{z_j\} \x \SS^1$ are $\H^{2}$-negligible sets, this proves that $T$ is an i.m.\ rectifiable current in $\Omega \x \RR^2$.
	\item $\de T|_{\Omega \x \RR^2} = -\mu \x \llbracket \SS^1 \rrbracket$: by Proposition~\ref{prop:bd of Gu is mu} we have $\de G_{u_\e}|_{\Omega \x \RR^2} = - \mu_{u_\e} \x \llbracket \SS^1 \rrbracket$. Since $\mu_{u_\e} \flat \mu$, by Lemma~\ref{lemma:flat implied D1}, and since $\de G_{u_\e} \weak \de T$ in $\D_1(\Omega \x \RR^2)$, we conclude that $\de T|_{\Omega \x \RR^2} = -\mu \x \llbracket \SS^1 \rrbracket$. In particular, $\de T|_{\Omega_\mu \x \RR^2} = 0$.
	\item $T|_{\d x} \geq 0$: let $\omega \in \D_2(\Omega \x \RR^2)$ be of the form $\omega(x,y) = \phi(x,y) \d x$ with $\phi \in C^\infty_c(\Omega \x \RR^2)$  and  $\phi \geq 0$. Then $G_{u_\e}(\omega) = \int_\Omega \phi(x,u_\e(x)) \d x \geq 0$. Passing to the limit as $\e \to 0$ we get $T(\omega) \geq 0$.
	\item $|T| < \infty$: this is a consequence of the Compactness Theorem for currents (see above).
	\item $\|T\|_1 < \infty$: note that, by~\eqref{eq:norm 1}, $\| G_{u_\e} \|_1 = \int_\Omega |u_\e| \d x = |\Omega|$. By the lower semicontinuity of $\| \cdot \|_1$ with respect to the convergence in $\D_2(\Omega \x \RR^2)$ we deduce that $\|T\|_1 \leq |\Omega|$. 
	\item $\pi^\Omega_\# T = \llbracket \Omega \rrbracket$: let us fix $\omega \in \D^2(\Omega)$, i.e., a 2-form of the type $\omega(x) = \phi(x) \d x$ with $\phi \in C^\infty_c(\Omega)$. Then $G_{u_\e}(\omega) = \int_\Omega \phi(x) \d x$. Thus $\pi^\Omega_\# G_{u_\e} =  \llbracket \Omega \rrbracket$. Passing to the limit as $\e \to 0$ we get the desired condition (cf.\ also \cite[4.2.1, Proposition~3]{Gia-Mod-Sou-I}). 
	\item $\supp(T) \subset \ol \Omega \x \SS^1$: let us fix $\omega \in \D^2(\Omega \x \RR^2)$ with $\supp(\omega) \compact (\Omega \x \RR^2) \sm (\Omega \x \SS^1)$. Then $G_{u_\e}(\omega) = 0$. Passing to the limit as $\e \to 0$, we conclude that $T(\omega) = 0$.
\end{itemize}

To prove that $u_T = u$ a.e.\ we observe that $u_\e \to u$ implies
\begin{equation*}
G_{u_\e}(\phi(x,y) \d x) = \integral{\Omega}{\phi(x,u_\e(x))}{\d x}  \to \integral{\Omega}{\phi(x,u(x))}{\d x} 
\end{equation*}
for every $\phi \in C^\infty_c(\Omega \x \RR^2)$. On the other hand,  due to Theorem \ref{thm:structure}
\begin{equation*}
G_{u_\e}(\phi(x,y) \d x) \to T(\phi(x,y) \d x) = \int_\Omega \phi(x,u_T(x)) \d x \, .
\end{equation*}
By the arbitrariness of $\phi$, we get $u_T = u$ a.e.\ in $\Omega$.

Finally, if $\theta_\e \ll \e |\log \e|$, then Proposition \ref{prop:XY classical} and the assumed energy bound yield that $\mu = 0$, whence $T \in \cart(\Omega \x \SS^1)$.
\end{proof}

\section{Proofs in the regime \texorpdfstring{$\theta_{\e}\sim \e |\log \e|$}{}}\label{sec:theta<<eloge}
In this section we prove Theorem~\ref{thm:theta equal e log}. We remark that the compactness result Theorem~\ref{thm:theta equal e log}-{\em i)} is already covered by Proposition~\ref{prop:current compactness}. Thus, we only need to prove Theorem~\ref{thm:theta equal e log}-{\em ii)} and~{\em iii)}.

From the lower semicontinuity of parametric integrals with respect to the mass bounded weak convergence of currents, \cite[1.3.1, Theorem~1]{Gia-Mod-Sou-II}, we obtain the following asymptotic lower bound. 

\begin{proposition}[Lower bound for the parametric integral]\label{prop:lb for parametric}  Assume that $\theta_{\e} \lesssim \e|\log \e|$ and that $\frac{1}{\e \theta_\e}E_\e(u_\e) \leq C$. Let $G_{u_\e} \in \D_2(\Omega \x \RR^2)$ be the currents associated to $u_\e$ defined as in~\eqref{eq:Gu ac}--\eqref{eq:Gu v} and assume that $G_{u_{\e}}\rightharpoonup T$ with $T \in \D_2(\Omega \x \RR^2)$ represented as $T = \vec{T} |T|$. Then
	\begin{equation} \label{eq:parametric lower bound}
	\integral{A \x \RR^2}{\Phi( \vec{T} )}{\d |T|} \leq \liminf_{\e \to 0} \integral{A \x \RR^2}{\Phi( \vec{G}_{u_\e} )}{\d |G_{u_\e}|} \, .
	\end{equation}
	for every open set $A \compact \Omega$.
	\end{proposition}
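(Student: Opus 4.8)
The plan is to deduce \eqref{eq:parametric lower bound} directly from the lower semicontinuity of parametric integrals with convex, positively $1$-homogeneous integrand under weak convergence of currents with equibounded masses, i.e.\ \cite[1.3.1, Theorem~1]{Gia-Mod-Sou-II} (a version of Reshetnyak's lower semicontinuity theorem, applied here to the $\Lambda_2(\RR^2\x\RR^2)$-valued finite measures $\vec{G}_{u_\e}\,|G_{u_\e}|$). The integrand is the function $\Phi$ of \eqref{eq:Phi is 2,1}: it is continuous, nonnegative, convex and positively $1$-homogeneous on $\Lambda_2(\RR^2 \x \RR^2)$, hence a parametric convex integrand in the required sense; the fact that $\Phi$ does not depend on the components $\xi^{\ol 0 0}$ and $\xi^{0\ol 0}$ is immaterial, since convexity and $1$-homogeneity are imposed on the whole space.

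The single hypothesis that needs to be checked is the equiboundedness of the masses $|G_{u_\e}|(A \x \RR^2)$, and this is where the energy bound enters. First I would apply Lemma~\ref{lemma:bound with parametric integral} with $\sigma = \tfrac12$ to obtain $\int_{A \x \RR^2}\Phi(\vec{G}_{u_\e})\,\d|G_{u_\e}| \leq \frac{2}{\e\theta_\e}E_\e(u_\e) \leq 2C$ for $\e$ small. Then, splitting $G_{u_\e} = G_{u_\e}\mres\M^{(a)} + G_{u_\e}\mres\M^{(j)}$ as in Proposition~\ref{prop:Gu is im}, the horizontal piece has mass $|G_{u_\e}|(\M^{(a)} \cap A \x \RR^2) = |A|$ (its projection onto $\Omega$ is essentially a bijection onto $\Omega \sm J_{u_\e}$, which is $\L^2$-full), while on $\M^{(j)}$ the orientation \eqref{eq:orientation of Gu 2} gives $\Phi(\vec{G}_{u_\e}) = |\nu_{u_\e}|_1 \geq 1$, so $|G_{u_\e}|(\M^{(j)} \cap A \x \RR^2) \leq \int_{A \x \RR^2}\Phi(\vec{G}_{u_\e})\,\d|G_{u_\e}| \leq 2C$. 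Hence $\sup_\e |G_{u_\e}|(A \x \RR^2) \leq |A| + 2C < \infty$.

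Since moreover $G_{u_\e} \weak T$ in $\D_2(\Omega \x \RR^2)$ restricts to weak convergence in $\D_2(A \x \RR^2)$ and all the currents $G_{u_\e}$ (as well as $T$) are supported in the bounded set $\ol\Omega \x \SS^1$, the cited lower semicontinuity theorem applies on $A \x \RR^2$ and yields exactly \eqref{eq:parametric lower bound}. I do not anticipate any serious difficulty: the entire argument consists in verifying the two hypotheses of the quoted theorem — $\Phi$ being a parametric convex integrand, and the masses $|G_{u_\e}|(A \x \RR^2)$ being equibounded — both of which are routine, the latter being the only point where the assumption $\frac{1}{\e\theta_\e}E_\e(u_\e) \leq C$ is invoked (through Lemma~\ref{lemma:bound with parametric integral}). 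In particular, no finer information about the limit $T$ (such as its rectifiability, available from Proposition~\ref{prop:current compactness}) is needed for this inequality; it holds for any weak limit of the $G_{u_\e}$ represented as $T = \vec{T}\,|T|$ with $|T| < \infty$.
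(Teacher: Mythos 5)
Your proposal is correct and follows essentially the same route as the paper: the paper simply invokes the lower semicontinuity of parametric integrals under mass-bounded weak convergence \cite[1.3.1, Theorem~1]{Gia-Mod-Sou-II}, with the equibounded mass $|G_{u_\e}|(A\x\RR^2)\leq |A|+\frac{2}{\e\theta_\e}E_\e(u_\e)$ obtained exactly as you do, from Lemma~\ref{lemma:bound with parametric integral} with $\sigma=\tfrac12$ together with the decomposition of Proposition~\ref{prop:Gu is im} and the fact that $\Phi(\vec G_{u_\e})\geq|\vec G_{u_\e}|=1$ on the jump part. Your closing observation that no rectifiability of $T$ is needed is also consistent with the paper, which states the proposition for an arbitrary weak limit $T=\vec T\,|T|$.
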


	We can write explicitly the parametric integral in the left-hand side of~\eqref{eq:parametric lower bound} in terms of the $BV$ function $u$, limit of the sequence $u_\e$. We recall that by~\eqref{eq:jc part of T} the jump-concentration part of $T$ is given by
	\begin{equation*}
	T^{(jc)}(\phi(x,y) \d \widehat x^l \w \d y^m) = (-1)^{2-l} \integral{J_T}{\bigg\{\integral{ \gamma^T_x}{\phi(x,y)}{\d y^m} \bigg\} \nu_{T}^l(x)}{\d \H^1(x)} \, .
	\end{equation*}
	For $\H^1$-a.e.\ $x \in J_T$ we define the number
	\begin{equation}\label{eq:deflT}
	\ell_T(x) := \mathrm{length}(\gamma^T_x) = \integral{\supp (\gamma^T_x)}{|\mathfrak{m}(x,y)|}{\d \H^1(y)} \, ,
	\end{equation}
	where $\mathfrak{m}(x,y)$ is the integer defined in~\eqref{eq:jc multiplicity}. Notice that by $\mathrm{length}(\gamma^T_x)$ we mean the length of the curve $\gamma_x^T$ counted with its multiplicity and not the $\H^1$ Hausdorff measure of its support. Observe that, in particular, $\ell_T(x) = \geo \big( u^-(x), u^+(x) \big)$ if $x \in J_u \sm \L$, whilst $\ell_T(x) = 2 \pi |k(x)|$ if $x \in \L \sm J_u$. The full form of the parametric integral is contained in the lemma below.  

	\begin{lemma}\label{lemma:parametric in terms of u}
		Let~$\Phi$ be the parametric integrand defined in~\eqref{eq:Phi is 2,1}. Let $\mu = \sum_{h=1}^M d_h \delta_{x_h}$ with $d_h \in \ZZ$ and~$u \in BV(\Omega;\SS^1)$. Let $T \in \mathrm{Adm}(\mu, u;\Omega)$. Then
	\begin{equation*}
	\begin{split}
	\integral{\Omega \x \RR^2}{\Phi( \vec{T} )}{\d |T|}  & = \integral{\Omega}{|\nabla u|_{2,1} }{\d x} + |\DD^{(c)} u|_{2,1}(\Omega) +  \integral{J_T \cap \Omega}{\ell_T(x) |\nu_T(x)|_1}{\d \H^1(x)} \, . 
	\end{split}
	\end{equation*}
	\end{lemma}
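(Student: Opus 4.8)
The statement is essentially a bookkeeping computation: we must evaluate the parametric integral $\int_{\Omega\x\RR^2}\Phi(\vec T)\,\d|T|$ by splitting $T$ into its absolutely continuous, Cantor, and jump-concentration parts and using the explicit formulas for $\vec T$ and $|T|$ from Proposition~\ref{prop:components of T}. Since $T\in\mathrm{Adm}(\mu,u;\Omega)$, we have $T\in\cart(\Omega_\mu\x\SS^1)$, so locally away from $\supp(\mu)$ the Structure Theorem~\ref{thm:structure} and Proposition~\ref{prop:components of T} apply; the finitely many points of $\supp(\mu)$ are $\H^2$-negligible in $\Omega\x\RR^2$ and contribute nothing to $|T|$, so we may integrate over $\Omega\x\RR^2$ freely.

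First I would treat the absolutely continuous part. By Proposition~\ref{prop:components of T}, $|T^{(a)}|=\H^2\mres\M^{(a)}$ and $\sqrt{1+|\nabla u|^2}\,\vec T$ has the five components listed in~\eqref{eq:ac components}. Plugging into $\Phi$ from~\eqref{eq:Phi is 2,1}, the $e_1\w e_2$ component is $1$ after dividing by $\sqrt{1+|\nabla u|^2}$, and
\[
\Phi\big(\sqrt{1+|\nabla u|^2}\,\vec T\big)=\sqrt{(\de_{x^1}^{(a)}u^1)^2+(\de_{x^1}^{(a)}u^2)^2}+\sqrt{(\de_{x^2}^{(a)}u^1)^2+(\de_{x^2}^{(a)}u^2)^2}=|\nabla u|_{2,1},
\]
using positive $1$-homogeneity of $\Phi$. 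Since $\pi^\Omega$ restricted to $\M^{(a)}$ is essentially a graph, the area/coarea formula (or~\eqref{eq:T horiz}) gives $\int_{\M^{(a)}}\Phi(\vec T)\,\d\H^2 = \int_\Omega \sqrt{1+|\nabla u|^2}\,\Phi(\vec T)\,\d x=\int_\Omega|\nabla u|_{2,1}\,\d x$. The Cantor part is handled identically: by~\eqref{eq:cantor components} the $\vec T$ components involve the Radon–Nikodym derivatives of $\de^{(c)}_{x^l}u^m$ w.r.t.\ $|\DD^{(c)}u|$, there is no $e_1\w e_2$ term, and $\Phi$ of that vector equals the $(2,1)$-norm of the matrix of derivatives, which integrates against $|\DD^{(c)}u|$ to give $|\DD^{(c)}u|_{2,1}(\Omega)$ (this is exactly the polar decomposition characterization of the anisotropic total variation).

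For the jump-concentration part, Proposition~\ref{prop:components of T} gives $|T^{(jc)}|=|\mathfrak{m}|\,\H^2\mres\M^{(jc)}$ and formula~\eqref{eq:jc components} for $\mathrm{sign}(\mathfrak{m})\vec T$; note its only nonzero components are the four ``vertical'' ones $e_i\w\bar e_j$, with no $e_1\w e_2$ term. A direct computation of $\Phi$ on~\eqref{eq:jc components}, using $|y|=1$ on $\SS^1$ and $|\nu_T|^2=(\nu_T^1)^2+(\nu_T^2)^2=1$, yields $\Phi(\vec T)=|\nu_T^1|+|\nu_T^2|=|\nu_T|_1$. Hence, using~\eqref{eq:Mjc is a product} that $\M^{(jc)}=\{(x,y): x\in J_T,\ y\in\supp(\gamma_x^T)\}$ and disintegrating $\H^2\mres\M^{(jc)}$ over $J_T$ (the fibers are one-dimensional, $\H^1$-measurable pieces of $\SS^1$),
\[
\int_{\M^{(jc)}}\Phi(\vec T)\,|\mathfrak{m}|\,\d\H^2=\int_{J_T}\Big(\int_{\supp(\gamma_x^T)}|\mathfrak{m}(x,y)|\,\d\H^1(y)\Big)|\nu_T(x)|_1\,\d\H^1(x)=\int_{J_T}\ell_T(x)\,|\nu_T(x)|_1\,\d\H^1(x),
\]
by the definition~\eqref{eq:deflT} of $\ell_T$. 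Summing the three contributions, and recalling that these three measures are mutually singular so $|T|=|T^{(a)}|+|T^{(c)}|+|T^{(jc)}|$, gives the claimed identity.

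The main obstacle is not any single step but making the disintegration of $\H^2\mres\M^{(jc)}$ over the fibers $\supp(\gamma_x^T)$ rigorous, i.e.\ justifying the ``slicing'' identity $\int_{\M^{(jc)}}f\,\d\H^2=\int_{J_T}\int_{\supp(\gamma_x^T)}f(x,y)\,\d\H^1(y)\,\d\H^1(x)$ and confirming that the multiplicity $|\mathfrak{m}(x,y)|$ counted fiberwise reproduces exactly $\ell_T(x)=\mathrm{length}(\gamma_x^T)$ including the cases $x\in\L\cap J_u$ (where $\gamma_x^T$ wraps around $\SS^1$ extra times) and antipodal traces. Here one invokes the structure of $\M^{(jc)}$ as a product-type rectifiable set from~\eqref{eq:Mjc is a product}, the coarea/Fubini theorem for rectifiable sets, and the explicit description of $\mathfrak{m}$ in~\eqref{eq:jc multiplicity}; the computation of $\Phi$ on each component formula is then routine algebra.
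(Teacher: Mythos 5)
Your proposal is correct and follows essentially the same route as the paper: decompose $|T|$ into the mutually singular parts $|T^{(a)}|,|T^{(c)}|,|T^{(jc)}|$ via Proposition~\ref{prop:components of T}, evaluate $\Phi$ on the explicit $2$-vectors (giving $|\nabla u|_{2,1}$, $\big|\tfrac{\d \DD^{(c)}u}{\d|\DD^{(c)}u|}\big|_{2,1}$, and $|\nu_T|_1$ respectively), and use the area formula resp.\ the coarea formula for rectifiable sets together with~\eqref{eq:Mjc is a product} and the multiplicity $|\mathfrak m|$ to reduce each piece to an integral over $\Omega$ or $J_T$. The paper's only additional care is in the reduction from $\Omega$ to $\Omega_\mu$, where it verifies via a cut-off argument that $|T|(\Omega\x\RR^2)=|T|(\Omega_\mu\x\RR^2)$ before applying the structure results on the punctured domain — the same observation you make, just spelled out.
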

	\begin{proof}
		We first prove the statement in the case $\mu = 0$, namely $T \in \cart(\Omega\x\SS^1)$. We employ the mutually singular decomposition given by Theorem~\ref{thm:structure} and Proposition~\ref{prop:components of T}, so that $|T| = \H^2 \mres \M^{(a)} +  \H^2 \mres \M^{(c)} + |T^{(jc)}|$. First of all note that by~\eqref{eq:ac components} and \eqref{eq:T horiz} for every $\psi \in C^\infty_c(\Omega)$ we have
		\begin{equation} \label{eq:area for ac}
		\integral{\M^{(a)}}{ \psi(x) \frac{1}{\sqrt{1 + |\nabla u(x)|^2}}}{\d \H^2(x,y)} = \integral{\Omega}{ \psi(x)}{\d x} \, ,
		\end{equation}
		since both integrals are equal to $T^{(a)}(\psi(x) \d x)$. By approximation, the above equality is true for every $\psi \colon \Omega \to \RR$ such that $(x,y) \mapsto \psi(x)$ is $\H^2\mres \M^{(a)}$-measurable and $x \mapsto \psi(x)$ is $\L^2$-measurable. Note that $x \mapsto |\nabla u(x)|_{2,1}$ satisfies these measurability properties thanks to~\eqref{eq:ac components}. In particular, we deduce that 
		\begin{equation*}
		\begin{split}
		\integral{\Omega \x \RR^2}{\Phi( \vec{T}(x,y) )}{\d \H^2\mres \M^{(a)}(x,y)} & = \integral{ \M^{(a)}}{|\nabla u(x)|_{2,1} \frac{1}{\sqrt{1 + |\nabla u(x)|^2}}}{\d \H^2(x,y)} \\
		& = \integral{\Omega}{|\nabla u(x)|_{2,1}}{\d x} \, .
		\end{split}
		\end{equation*}
		
		Next note that, by~\eqref{eq:T cantor}, for every function $\psi \in C_c(\Omega)$ it holds true
		\begin{equation} \label{eq:two suprema}
			\sup_{\substack{|\tilde \omega(x,y)| \leq \psi(x) \\ \tilde \omega \in \D^2(\Omega\x \RR^2)}} T^{(c)}(\tilde \omega) = \hspace{-2em} \sup_{\substack{|\omega(x)| \leq \psi(x)\\ \omega \ |\DD^{(c)}u|-\text{measurable}}} \hspace{-2em} T^{(c)}(\omega)  \, .
		\end{equation}
		Indeed given $\tilde \omega \in \D^2(\Omega\x \RR^2)$ such that $|\tilde{\omega}(x,y)| \leq \psi(x)$, one can define the $|\DD^{(c)}u|$-measurable 2-form $\omega(x) := \tilde \omega(x,\tilde u_T(x))$ to prove that the left-hand side is greater than or equal to the right-hand side. For the reverse inequality, given a $|\DD^{(c)}u|$-measurable 2-form~$\omega$ such that $|\omega(x)| \leq \psi(x)$, one can regularize it and then define the 2-form $\tilde \omega(x,y) := \omega(x) \zeta(y)$, where $\zeta \in C^\infty_c(B_2)$ is such that $\zeta(y) = 1$ for $|y| \leq 1$ (note that~$\zeta$ does not affect the value of $T^{(c)}(\omega)$ thanks to~\eqref{eq:T cantor}).
		
		Since $|T^{(c)}|=\H^2 \mres \M^{(c)}$ and by~\eqref{eq:T cantor}, equality~\eqref{eq:two suprema} implies that 
		\begin{equation} \label{eq:area cantor}
			\integral{\M^{(c)}}{\psi(x)}{\d \H^2(x,y)} = \integral{\Omega}{\psi(x)}{\d |\DD^{(c)} u|(x)} \, ,
		\end{equation}
		for every function $\psi \in C_c(\Omega)$. By approximation, \eqref{eq:area cantor} holds true for every $\psi \colon \Omega \to \RR$ such that $\psi$ is $\H^2 \mres \M^{(c)}$-measurable and $|\DD^{(c)}u|$-measurable. The function $x \mapsto \big|\frac{\d \DD^{(c)} u}{\d |\DD^{(c)} u|} (x) \big|_{2,1}$ satisfies these measurability properties, cf.~\eqref{eq:cantor components}, thus \eqref{eq:cantor components} implies
		\begin{equation*}
		\begin{split}
		\integral{\Omega \x \RR^2}{\Phi( \vec{T}(x,y) )}{\d \H^2\mres \M^{(c)}(x,y)}  & = \integral{\M^{(c)}}{\Big|\frac{\d \DD^{(c)} u}{\d |\DD^{(c)} u|} (x) \Big|_{2,1}}{\d \H^2(x,y)} \\
		& = |\DD^{(c)} u|_{2,1}(\Omega) \, .
		\end{split}
		\end{equation*}
		
		Finally, by~\eqref{eq:jc components} we get that 
		\begin{equation*}
		\begin{split}
		\integral{\Omega \x \RR^2}{\Phi( \vec{T}(x,y) )}{\d |T^{(jc)}|(x,y)} &= \integral{\Omega \x \RR^2}{ |\mathfrak{m}(x,y)| |\nu_T(x)|_1}{\d \H^2\mres \M^{(jc)}(x,y)} \\
		& = \integral{J_T}{ \bigg\{   \integral{\SS^1}{ \mathds{1}_{\M^{(jc)}}(x,y) |\mathfrak{m}(x,y)| |\nu_T(x)|_1  }{\d \H^1(y)} \bigg\}  }{\d \H^1(x)}   \\
		& = \integral{J_T}{ \bigg\{   \integral{\supp(\gamma_x^T)}{\!  |\mathfrak{m}(x,y)| }{\d \H^1(y)} \bigg\} |\nu_T(x)|_1  }{\d \H^1(x)}   \\
		& = \integral{J_T}{\ell_T(x) |\nu_T(x)|_1}{\d \H^1(x)} \, .
		\end{split}
		\end{equation*}
		In the second equality we employed the coarea formula for rectifiable sets~\cite[Theorem~3.2.22]{Fed} (applied with $W = J_T \x \SS^1$, $Z = J_T$, $f$ given by the projection $J_T \x \SS^1 \to J_T$, and $g = \mathds{1}_{\M^{(jc)}} |\mathfrak{m}| |\nu_T|_1$) and in the third equality we used~\eqref{eq:Mjc is a product}.

		Let us prove the general case $T \in \mathrm{Adm}(\mu,u;\Omega)$. We observe that a current $T \in \cart(\Omega_\mu \x \SS^1)$ can be extended to a current $T \in \D_2(\Omega  \x \RR^2)$. Indeed, since $T \in \cart(\Omega_\mu \x \SS^1)$, it can be represented as 
		\begin{equation*}
			T(\omega) = \integral{\Omega_\mu \x \RR^2}{\langle \omega, \xi \rangle \theta}{\d \H^2 \mres \M}\, , \quad \text{for } \omega \in \D^2(\Omega_\mu \x \RR^2) \, ,
		\end{equation*}
		according to the notation in~\eqref{eq:im rectifiable}, where $\M \subset \Omega_\mu \x \SS^1$ $\H^2$-a.e. (cf.\ the proof of Proposition~\ref{prop:components of T} for the last fact). The integral above can be extended to a linear functional on forms $\omega \in \D^2(\Omega \x \RR^2)$, namely
		\begin{equation*}
			T(\omega) = \integral{\Omega \x \RR^2}{\langle \omega, \xi \rangle \theta}{\d \H^2 \mres \M}\, , \quad \text{for } \omega \in \D^2(\Omega \x \RR^2) \, .
		\end{equation*}
		To prove the continuity of this functional, let us fix a form $\omega \in \D^2(\Omega \x \RR^2)$ with $ \sup_{x}| \omega(x)| \leq 1$. We have the bound
		\begin{equation} \label{eq:comparing full with punctured}
			\begin{split}
				|T(\omega)| & \leq |T((1-\zeta) \omega)| + \Big|\integral{\Omega \x \RR^2}{\zeta \langle \omega, \xi \rangle \theta}{\d \H^2 \mres \M} \Big| \\
				&  \leq |T|(\Omega_\mu \x \RR^2) + \sum_{h=1}^N \integral{B_\rho(x_h) \x \RR^2}{|\theta|}{\d \H^2 \mres \M}
			\end{split}
		\end{equation}
		where $\zeta \in C^\infty_c(\Omega)$ is such that $0 \leq \zeta \leq 1$, $\supp(\zeta) \subset \bigcup_{h=1}^N B_\rho(x_h)$, and $\zeta \equiv 1$ on $B_{\rho/2}(x_h)$ for every $h=1,\dots,N$. Letting $\rho \to 0$ in the inequality above, we get $|T(\omega)| \leq |T|(\Omega_\mu \x \RR^2)$ since $\H^2\big(\M \cap (\{x_h\} \x \RR^2)\big) \leq  \H^2\big(\{x_h\} \x \SS^1\big) = 0$ for $h=1,\dots,N$ and $\theta$ is $\mathcal{H}^2\mres\M$-summable. This shows that $T \in \D_2(\Omega \x \RR^2)$.
		
		Moreover, by the arbitrariness of $\omega$ in \eqref{eq:comparing full with punctured} we deduce that $|T|(\Omega \x \RR^2)=|T|(\Omega_\mu \x \RR^2)$ and, in particular, from the first step of the proof applied to $\Omega_{\mu}$ we infer that
		\begin{align*}
			\integral{\Omega \x \RR^2}{\Phi( \vec{T} )}{\d |T|} =& \integral{\Omega_\mu \x \RR^2}{\Phi( \vec{T} )}{\d |T|} 
			\\
			=& \integral{\Omega}{|\nabla u|_{2,1} }{\d x} + |\DD^{(c)} u|_{2,1}(\Omega) +  \integral{J_T}{\ell_T(x) |\nu_T(x)|_1}{\d \H^1(x)} \, .
		\end{align*}
		\end{proof}

We are now in a position to prove the lower bound in the regime $\theta_\e \sim \e |\log \e|$. We recall that the asymptotic lower bound is written in terms of the energy
\begin{equation} \label{eq:def of surface L}
\mathcal{J}(\mu,u;\Omega) := \inf \bigg\{ \integral{J_T}{\ell_T(x) |\nu_T(x)|_1}{\d \H^1(x)} \ : \ T \in \mathrm{Adm}(\mu,u;\Omega)  \bigg\}
\end{equation}
with $\ell_T(x)$ defined in \eqref{eq:deflT} and $\mathrm{Adm}(\mu,u;\Omega)$ in~\eqref{eq:def of Adm}. We remark that $\mathrm{Adm}(\mu,u;\Omega)$ is non-empty.\footnote{By Proposition~\ref{prop:supporting BV} there exists a current $T \in \cart(\Omega \x \SS^1)$ such that $u_T = u$. Let $\gamma_1 , \dots, \gamma_M$ be pairwise disjoint unit speed Lipschitz curves  such that $\gamma_h$ connects $x_h$ to $\de \Omega$. Define $L_h$ to be the 1-current $\tau(\supp(\gamma_h), -d_h, \dot \gamma_h)$, so that $\de L_h = d_h \delta_{x_h}$. Then $T + \sum_{h=1}^M L_h \x \llbracket \SS^1 \rrbracket \in \mathrm{Adm}(\mu, u;\Omega)$.} Moreover,
\begin{equation} \label{eq2:J greater than jump}
	\mathcal{J}(\mu,u;\Omega) \geq \integral{J_{u} \cap \Omega}{\geo(u^-,u^+)|\nu_u|_1}{\d \H^1}.
\end{equation}
Indeed, for $\H^1$-a.e.\ $x \in J_T$ we have $\geo(u^-(x),u^+(x)) \leq \mathrm{length}(\gamma^T_x) = \ell_T(x)$, since $\gamma^T_x$ is a curve connecting $u^-(x)$ and $u^+(x)$ in $\SS^1$.

Using the previous results, we obtain the $\Gamma$-liminf inequality.

\begin{proof}[Proof of Theorem~\ref{thm:theta equal e log}-ii)]
	Let $u_\e \colon \Omega_\e \to \S_\e$ (extended arbitrarily to $\e \ZZ^2$), $u \in BV(\Omega;\SS^1)$, and $\mu = \sum_{h=1}^M d_h \delta_{x_h}$ be as in the statement of the theorem. Let $T \in \mathrm{Adm}(\mu,u;\Omega)$ be given by Proposition~\ref{prop:current compactness} and fix a set $A \subset \subset \Omega_\mu$.  Let $\rho > 0$ be such that the balls $\{B_\rho(x_h)\}_{h=1}^M$ are pairwise disjoint and $A \subset \subset \Omega_\mu^\rho$. 
	Let $\sigma \in (0,1)$. Then, by Lemma~\ref{lemma:bound with parametric integral}
	\begin{equation} \label{eq:split}
		\frac{1}{\e\theta_\e} E_\e(u_\e) \geq \sum_{h=1}^M \frac{1}{\e^2 |\log \e|} E_\e(u_\e;B_\rho(x_h)) + (1-\sigma) \integral{A \x \RR^2}{\Phi(\vec G_{u_\e})}{\d |G_{u_\e}|} \, .
	\end{equation}
	To estimate the first term, we exploit the localized lower bound for the $XY$-model~\cite[Theorem 3.1]{Ali-DL-Gar-Pon}, which yields the existence of a constant~$\tilde C \in \RR$ such that 
	\begin{equation*} 
	\liminf_{\e \to 0} \bigg[ \frac{1}{\e^2} E_\e(u_\e; B_\rho(x_h)) - 2 \pi |d_h| \log \frac{\rho}{\e} \bigg] \geq \tilde C 
	\end{equation*}
	and, in particular, 
	\begin{equation}\label{eq:localXYlb}
		\liminf_{\e \to 0}  \frac{1}{\e^2|\log \e|} E_\e(u_\e; B_\rho(x_h)) \geq 2 \pi |d_h|  \, .
		\end{equation}
	By~\eqref{eq:split}, Proposition~\ref{prop:lb for parametric}, letting $\sigma \to 0$ and $A \nearrow \Omega_\mu$, and by Lemma~\ref{lemma:parametric in terms of u} we infer that  
		\begin{equation*}
			\begin{split}
				\liminf_{\e \to 0}\frac{1}{\e\theta_\e} E_\e(u_\e) & \geq \sum_{h=1}^M 2 \pi |d_h| + \integral{\Omega_\mu \x \RR^2}{\Phi( \vec{T} )}{\d |T|}  \\
				& = 2 \pi |\mu|(\Omega) + \integral{\Omega}{|\nabla u|_{2,1} }{\d x} + |\DD^{(c)} u|_{2,1}(\Omega) +  \integral{J_T \cap \Omega}{\ell_T(x) |\nu_T(x)|_1}{\d \H^1(x)} \,  .
			\end{split}
		\end{equation*}
		Taking the infimum over all $T \in \mathrm{Adm}(\mu,u;\Omega)$ we deduce the claim. 
\end{proof}

Let us prove the $\Gamma$-limsup inequality. In the definition of the recovery sequence we use a map that projects vectors of $\SS^1$ on~$\S_\e$. Given $u\in\SS^1$ we let $\varphi_u\in [0,2\pi)$ be the unique angle such that $u=\exp(\iota\varphi_u)$. We define $\mathfrak{P}_{\e}\colon \SS^1\to\S_\e$ by 
\begin{equation}\label{eq:defproj}
\mathfrak{P}_{\e}(u)=\exp\left(\iota \theta_{\e}\left\lfloor\tfrac{\varphi_u}{\theta_{\e}}\right\rfloor\right).
\end{equation}

\begin{proof}[Proof of Theorem~\ref{thm:theta equal e log}-iii)]
To construct the recovery sequence, we closely follow the proof of~\cite[Proposition~4.22]{CicOrlRuf} done for the regime $\e \ll \theta_\e \ll \e |\log \e|$. Most of the arguments hold true also when $\theta_\e = \e |\log \e|$, see \cite[Remark~4.23]{CicOrlRuf}. Here we will sketch the proof and provide more details for the steps that need to be adapted.  

Let us fix $\mu = \sum_{h=1}^M d_h \delta_{x_h}$ and $u \in BV(\Omega;\SS^1)$ as in the statement. The function $u$ is gradually approximated as explained in the following.

\ul{Step 1} (Approximation with currents) 
Fix $\sigma > 0$. By the definition~\eqref{eq:def of surface L} of $\mathcal{J}$ there exists $T \in \mathrm{Adm}(\mu,u;\Omega)$ such that 
\begin{equation} \label{eq:from BV to cart}
\integral{\Omega \x \RR^2}{\Phi(\vec{T})}{\d |T|} \leq \integral{\Omega}{|\nabla u|_{2,1} }{\d x} + |\DD^{(c)} u|_{2,1}(\Omega) + \mathcal{J}(\mu, u;\Omega) + \sigma \, .
\end{equation}

\ul{Step 2} (Approximation with $\SS^1$-valued maps with finitely many singularities) 
Exploiting the extension Lemma~\ref{lemma:extension of currents} and the approximation Theorem~\ref{thm:approximation}, we find an open set $\tilde \Omega \supset  \supset \Omega$ and a sequence of maps $u_k \in C^\infty(\tilde \Omega_\mu;\SS^1) \cap W^{1,1}( \tilde{\Omega};\SS^1)$ such that $u_k \to u$ in $L^1(\Omega;\RR^2)$, $|G_{u_k}|(\Omega \x \RR^2) \to |T|(\Omega \x \RR^2)$, and $\deg(u_k)(x_h) = d_h$ for $h = 1,\dots,N$. We refer to~\cite[Lemma~4.17]{CicOrlRuf} for a detailed proof. Reshetnyak's Continuity Theorem implies that 
\begin{equation} \label{eq:from cart to smooth}
\integral{\Omega}{|\nabla u_k|_{2,1}}{\d x} = \integral{\Omega \x \RR^2}{\Phi(\vec{G}_{u_k})}{\d |G_{u_k}|} \leq \integral{\Omega \x \RR^2}{\Phi(\vec{T})}{\d |T|} + \sigma \, ,
\end{equation}
for $k$ large enough. In the first equality we applied Lemma~\ref{lemma:parametric in terms of u}. Thanks to this step (and via a diagonal argument as $\sigma \to 0$), it is enough to prove the $\Gamma$-limsup inequality assuming the stronger regularity $u \in C^\infty(\tilde \Omega_\mu;\SS^1) \cap W^{1,1}( \tilde{\Omega};\SS^1)$. 

\ul{Step 3} (Splitting of the degree) Without loss of generality, hereafter we shall assume that $|\deg(u)(x_h)| = 1$ for $h=1,\dots,N$. If this is not the case, we split each singularity $x_h$ with degree $d_h$ into~$|d_h|$ singularities of degree with modulus 1, without increasing the energy asymptotically. More precisely, by~\cite[Lemma~4.18]{CicOrlRuf}, for $0 < \tau \ll 1$ there exist measures $\mu^\tau$ and $u^\tau \in C^\infty(\tilde \Omega_{\mu^\tau};\SS^1) \cap W^{1,1}( \tilde{\Omega};\SS^1)$  with $\mu^\tau = \sum_{h=1}^{N^\tau} \deg(u^\tau)(x_h^\tau) \delta_{x_h^\tau}$ and $|\deg(u^\tau)(x_h^\tau)|=1$ such that $u^\tau \to u$ in $L^1(\Omega;\RR^2)$, $\mu^\tau \flat \mu$, and $\int_\Omega |\nabla u^\tau|_{2,1} \d x \to \int_\Omega |\nabla u|_{2,1} \d x$, as $\tau \to 0$.

\ul{Step 4} (Moving singularities on a lattice) We introduce the additional parameter $\lambda_n := 2^{-n}$, $n \in \NN$, which will be used later to obtain a piecewise constant approximation. Without loss of generality, we shall assume that $x_h \in \lambda_n \ZZ^2$ for $h=1,\dots,N$. If this is not the case, we find an approximation of~$u$ in the $W^{1,1}(\tilde \Omega)$-norm satisfying that property as follows. For every $n$ and $h=1,\dots,N$ we choose $x^{n}_h \in \lambda_n \ZZ^2 \cap \Omega$ such that $x^{n}_h \to x_h$ as $n \to +\infty$. For every $n$ there exists a diffeomorphism~$\psi_n \colon \tilde \Omega \to \tilde \Omega$ such that $\psi_n(x^n_h) = x_h$ for $h=1, \dots, N$ (see, e.g., \cite[p.\ 210]{Guo-Ran-Jos} for an explicit construction). We remark that it is possible to construct $\psi_n$ in such a way that~$\|\psi_n - \mathrm{id}\|_{C^1}$ and $\|\psi_n^{-1} - \mathrm{id}\|_{C^1}$ are controlled by $ \max_{h} |x^n_h - x_h|$ for every $n$. In particular, $\|\psi_n - \mathrm{id}\|_{C^1}, \|\psi_n^{-1} - \mathrm{id}\|_{C^1} \to 0$ for $n \to +\infty$. We define $u^n := u \circ \psi_n \in C^\infty(\tilde \Omega \sm \{x^n_1, \dots, x^n_N\};\SS^1) \cap W^{1,1}(\tilde \Omega;\SS^1)$. Then $u^n \to u$ strongly in $W^{1,1}(\tilde \Omega;\SS^1)$ as $n \to +\infty$. Let us fix $\rho>0$ such that the balls $\ol B_{\rho}(x_h)$ are pairwise disjoint and contained in $\tilde \Omega$. For $n$ large enough, we have that $x^n_h \in B_{\rho/4}(x_h)$ for $h=1,\dots,N$. Let $\zeta \in C^\infty_c(B_{\rho}(x_h))$ such that $\zeta \equiv 1$ on $B_{\rho/2}(x_h)$. By~\cite[Theorem~B.1]{BreCorLie} we have that 
\begin{equation*}
	2\pi \deg(u^n)(x^n_h) = - \hspace{-1em}\integral{B_{\rho}(x_h)}{\hspace{-0.5em}(u^n \x \nabla u^n )^\perp \cdot \nabla \zeta}{\d x} \overset{n\to +\infty}{\to}
	 - \hspace{-1em}\integral{B_{\rho}(x_h)}{\hspace{-0.5em}(u  \x \nabla u  )^\perp \cdot \nabla \zeta}{\d x} = 2\pi \deg(u)(x_h)
\end{equation*}
where $(u \x \nabla u)^\perp = ( u_1 \de_2 u_2 - u_2 \de_2 u_1, u_2 \de_1 u_1 - u_1 \de_1 u_2)$.

\ul{Step 5} (Modification near singularities) Let us fix $\sigma > 0$. Then there exists $\eta_0 > 0$ (small enough) and $u^\sigma \in C^\infty(\tilde \Omega_{\mu};\SS^1) \cap W^{1,1}( \tilde{\Omega};\SS^1)$ such that $\int_\Omega |\nabla u^\sigma|_{2,1} \d x \leq \int_\Omega |\nabla u|_{2,1} \d x + \sigma$, $u^\sigma(x) = u(x)$ in $\tilde \Omega \sm \bigcup_{h=1}^M \overline  B_{\sqrt{\eta_0}}(x_h)$, and $u^\sigma(x) = \big( \frac{x-x_h}{|x-x_h|} \big)^{d_h}$ in $B_{\eta_0}(x_h) \sm \{x_h\}$ (where the power is meant in the sense of complex functions). We refer to~\cite[Lemma~4.21]{CicOrlRuf} for a proof of this modification result. Thus, up to a diagonal argument as $\sigma \to 0$, we assume that $u$ has the structure of $u^\sigma$ with singularities $x_h \in \lambda_n \ZZ^2$. 

\ul{Step 6} (Recovery sequence near singularities) By the assumption in Step~5, $u(x) := \big(\frac{x-x_h}{|x-x_h|}\big)^{d_h}$ in $B_{\eta_0}(x_h)$, where $d_h = \pm 1$. In~\cite[Formula~(4.75)]{CicOrlRuf} we showed that the projection $\mathfrak{P}_\e(u)$ is concentrating the energy of a vortex near the singularity. More precisely, for every $\eta \in (0, \eta_0)$ we have that 
\begin{equation} \label{eq2:step6final}
	\limsup_{\e \to 0} \Big( \frac{1}{\e \theta_\e} E_\e(\mathfrak{P}_\e(u); B_\eta(x_h)) - 2 \pi |\log \e| \frac{\e}{\theta_\e} \Big) \leq C \eta \, ,
\end{equation}
for some universal constant $C$. In the regime $\theta_\e = \e |\log \e|$, this yields 
\begin{equation*}
	\limsup_{\e \to 0} \frac{1}{\e \theta_\e} E_\e\Big(\mathfrak{P}_\e(u); \bigcup_{h=1}^M B_\eta(x_h)\Big)  \leq 2\pi M + C \eta = 2 \pi |\mu|(\Omega) + C \eta \, .
\end{equation*}

\ul{Step 7} (Recovery sequence far from singularities)
Fix $\eta \in (0,\eta_0)$. We consider a suitable square centered at the singularities $x_h$ and with corners on $\lambda_n \ZZ^2$. More precisely, let $m(\lambda_n) \in \NN$ be the maximal integer such that $Q(\lambda_n,x_h) := x_h + [-2^{m(\lambda_n)} \lambda_n , 2^{m(\lambda_n)} \lambda_n]^2 \subset B_{\eta/2}(x_h)$, so that the estimate in Step~5 holds true in $\bigcup_{h=1}^M Q(\lambda_n,x_h)$. We also consider the square $Q_0(\lambda_n,x_h) := x_h + [(-2^{m(\lambda_n)}+1) \lambda_n , (2^{m(\lambda_n)}-1) \lambda_n]^2$. The squares $Q(\lambda_n,x_h)$ and~$Q_0(\lambda_n,x_h)$ differ by a frame made by 1 layer of squares of $\lambda_n \ZZ^2$. By the choice of $m(\lambda_n)$, one can prove that $B_{\eta/16}(x_h) \subset\subset Q_0(\lambda_n,x_h)$. Far from the singularities, we discretize $u$ on the lattice $\lambda_n \ZZ^2$. Specifically, we exploit the fact that $u \in C^\infty(\tilde \Omega \sm \bigcup_{h=1}^M \overline B_{\eta/32}(x_h);\SS^1)$ to find a sequence of piecewise constant functions $u_n \in \PC_{\lambda_n}(\SS^1)$ such that
	 \begin{gather}
	 u_n \to u \text{ strongly in } L^1\big(\Omega \sm \bigcup_{h=1}^M \overline B_{\eta/16}(x_h)\big) \, ,  \label{eq2:convergence of un} \\
	\limsup_{n \to +\infty}  \integral{J_{u_n} \cap O^{\lambda_n}}{ \geo(u_n^-,u_n^+) |\nu_{u_n}|_1}{ \d \H^1} \leq \integral{\Omega}{|\nabla u|_{2,1}}{\d x}  \label{eq2:limit of piecewise} \, ,
\end{gather}
where $O^{\lambda_n}$ is the union of half-open squares $I_{\lambda_n}(\lambda_n z)$, with $z \in \ZZ^2$, that intersect $\Omega \sm \bigcup_{h=1}^M \overline B_{\eta/16}(x_h)$. Note that, since $B_{\eta/16}(x_h) \subset Q_0(\lambda_n,x_h)$, 
\begin{equation*}
	\Omega \sm \bigcup_{h=1}^M Q_0(\lambda_n,x_h) \subset \Omega \sm \bigcup_{h=1}^M B_{\eta/16}(x_h) \subset  O^{\lambda_n} \subset \tilde \Omega \sm \bigcup_{h=1}^M \overline B_{\eta/32}(x_h) \, .
\end{equation*}
For a detailed proof of this discretization result see~\cite[Lemma~4.13]{CicOrlRuf}. 

We consider a recovery sequence $u_\e' \in \PC_\e(\S_\e)$ for $u_n$ satisfying $u_\e' \to u_n$ strongly in $L^1(O^{\lambda_n})$ and 
\begin{equation} \label{eq2:estimate outside Qlambda}
	\limsup_{\e \to 0} \frac{1}{\e\theta_\e} E_\e \Big(u_\e' ; \Omega \sm \bigcup_{h=1}^M  Q_0(\lambda_n,x_h) \Big) \leq  \integral{J_{u_n} \cap O^{\lambda_n}}{ \geo(u_n^-,u_n^+) |\nu_{u_n}|_1}{ \d \H^1}.
\end{equation}
The recovery sequence $u_\e'$ is defined as in the regime $\e \ll \theta_\e \ll \e |\log \e|$ in the case of no vortex-like singularities, exploiting the piecewise constant structure of $u_n$. The details of this construction can be found in~\cite[Proposition~4.16]{CicOrlRuf}.

\ul{Step 8} (Joining the two constructions) A careful dyadic decomposition of the square $Q(\lambda_n,x_h)$ leads to the construction of a spin field $u_\e \colon \e \ZZ^2 \cap Q(\lambda_n,x_h) \to \S_\e$ such that $u_\e = \mathfrak{P}_\e(u)$ on $B_{\eta/16}(x_h) \subset Q_0(\lambda_n,x_h)$,  while $u_\e(\e i) = u_\e'(\e i)$ if $\e i \in \e \ZZ^2 \cap Q(\lambda_n,x_h)$ satisfies $\dist(\e i, \de Q(\lambda_n,x_h)) \leq \e$, and 
 \begin{equation} \label{eq2:estimate inside Qlambda}
	\limsup_{\e \to 0} \frac{1}{\e \theta_\e} E_\e\Big(u_\e; \bigcup_{h=1}^M Q(\lambda_n,x_h)\Big)  \leq 2 \pi |\mu|(\Omega) + C \eta \, .
 \end{equation}
 Since $u_\e$ and $u_\e'$ agree close to $\de Q(\lambda,x_h)$, we define a global spin field $u_\e \in \PC_\e(\S_\e)$ (equal to $u_\e'$ outside $\bigcup_{h=1}^M Q(\lambda_n,x_h)$) satisfying, thanks to~\eqref{eq2:estimate outside Qlambda} and~\eqref{eq2:estimate inside Qlambda},
 \begin{equation} \label{eq2:split limsup}
	\begin{split}
		 \limsup_{\e \to 0} \frac{1}{\e \theta_\e} E_\e(u_\e) & \leq \limsup_{\e \to 0} \Big[ \frac{1}{\e\theta_\e} E_\e \Big(u_\e' ; \Omega \sm \bigcup_{h=1}^M  Q_0(\lambda_n,x_h) \Big) + \frac{1}{\e \theta_\e} E_\e\Big(u_\e; \bigcup_{h=1}^M Q(\lambda_n,x_h)\Big) \Big] \\
		 &  \leq \integral{J_{u_n} \cap O^{\lambda_n}}{ \geo(u_n^-,u_n^+) |\nu_{u_n}|_1}{ \d \H^1} +  2 \pi |\mu|(\Omega) + C \eta \, .
	\end{split}
 \end{equation}
 We refer to~\cite[Steps 2--4 in the proof of Proposition~4.22]{CicOrlRuf} for the details about this construction.

 \ul{Step 9} (Identifying the $L^1$-limit of $u_\e$) As $\e \to 0$, the spin fields $u_\e$ converge in $L^1(\Omega;\RR^2)$ to the map $\mathring u_n \in L^1(\Omega;\SS^1)$ given by 
 \begin{equation*}
	\mathring u_n(x) := \begin{cases}
		u_n(x) \, , & \text{if } x \in \Omega \sm \bigcup_{h=1}^M  Q(\lambda_n,x_h) \, , \\
		\big(\frac{x-x_h}{|x-x_h|}\big)^{d_h} , & \text{if } x \in Q_\infty(\lambda_n,x_h) \text{ for } h=1,\dots,M \, , \\
		u_0^{\lambda_n}(x) \, , & \text{if } x \in Q(\lambda_n,x_h) \sm Q_\infty(\lambda_n,x_h) \text{ for } h=1,\dots,M \, ,
	\end{cases}
 \end{equation*}
 where $u_0^{\lambda_n}$ is an $\SS^1$-valued map whose value is not relevant here and $Q_\infty(\lambda_n,x_h) := x_h + [(-2^{m(\lambda_n)}+2) \lambda_n , (2^{m(\lambda_n)}-2) \lambda_n] \subset Q(\lambda_n,x_h)$. (This notation is used in agreement to~\cite[Proposition~4.22]{CicOrlRuf}.) Since $Q(\lambda_n,x_h) \sm Q_\infty(\lambda_n,x_h)$ is a frame of width $2 \lambda_n$ and $Q(\lambda_n,x_h) \subset B_{\eta/2}(x_h)$, where $u(x) = \big(\frac{x-x_h}{|x-x_h|}\big)^{d_h}$ by our assumptions on $u$ in Step~5, by~\eqref{eq2:convergence of un} we get that 
 \begin{equation} \label{eq2:lambdalimit}
	\mathring u_n \to u \text{ in } L^1(\Omega;\RR^2) \, .
 \end{equation}
Hence, via a diagonal argument as $n\to+\infty$ we find a subsequence such that $u_\e \to u$ in $L^1(\Omega;\RR^2)$ and, by~\eqref{eq2:limit of piecewise} and \eqref{eq2:split limsup}, that
 \begin{equation*}
	\limsup_{\e \to 0} \frac{1}{\e \theta_\e} E_\e(u_\e) \leq  \integral{\Omega}{|\nabla u|_{2,1}}{\d x} + 2 \pi |\mu|(\Omega) + C \eta \, ,
 \end{equation*}
 which will give the claim after an additional diagonal argument as $\eta \to 0$. 
 
 \ul{Step 10} (Identifying the flat limit of $\mu_{u_\e}$)
 In order to implement the diagonal arguments proven in Step~9, we need to identify the flat limit of $\mu_{u_\e}$ for $n$ fixed. After the diagonal argument we will obtain the desired convergence $\mu_{u_\e} \flat \mu$. As this is the major difference in the regime $\theta_\e = \e |\log \e|$, we provide all the details. Since $\theta_{\e}=\e|\log \e|$, the energy bound~\eqref{eq2:split limsup} reads
 \begin{equation}\label{eq2:criticalscaling}
 \limsup_{\e \to 0}\frac{1}{\e^2|\log \e|}E_{\e}(u_{\e};\Omega)\leq \int_{J_{u_{n}}\cap O^{\lambda_n}}{ \geo(u_{n}^-,u_{n}^+)|\nu_{u_n}|_1}{\mathrm{d}\mathcal{H}^1}  + 2\pi |\mu|(\Omega) +  C \eta \,.
 \end{equation}
 Note that the left hand side agrees with the unconstrained scaled $XY$-model. In particular, by Proposition \ref{prop:XY classical} we deduce that there exists a measure $\mu_{n}\in\mathcal{M}_b(\Omega)$ of the form $\mu_{n}=\sum_{k=1}^{K_{n}}d_{k,n}\delta_{x_{k,n}}$ with $d_{k,n}\in\ZZ\setminus\{0\}$ such that, up to a subsequence, $\mu_{u_{\e}}\flat\mu_{n}$. Thus the (already proven) lower bound in Theorem~\ref{thm:theta equal e log}-{\em ii)}, the convergence $u_\e \to \mathring{u}_n$, and~\eqref{eq2:criticalscaling} yield
	\begin{align*}
	&\integral{\Omega}{|\nabla \mathring u_{n}|_{2,1} }{\d x} + |\DD^{(c)} \mathring u_{n}|_{2,1}(\Omega)+\mathcal{J}(\mu_{n},\mathring u_{n};\Omega)+2\pi|\mu_{n}|(\Omega)
	\\
	& \quad \leq \int_{J_{u_{n}}\cap O^{\lambda_n}}{\hspace{-1em}\geo(u_{n}^-,u_{n}^+)|\nu_{u_n}|_1}{\mathrm{d}\mathcal{H}^1} + 2\pi |\mu|(\Omega)+ C \eta \,.
	\end{align*}
	Since the last term in the above estimate is controlled via \eqref{eq2:limit of piecewise}, we deduce that $|\mu_{n}|(\Omega)$ is equibounded in $n$. In particular, up to a subsequence, there exists a measure $\mu_0\in\mathcal{M}_b(\Omega)$ that has the structure $\mu_0=\sum_{k=1}^{K}d_k\delta_{x_k}$ for some $d_k\in\ZZ\setminus\{0\}$ such that $\mu_{n}\flat\mu_0$ (and weakly* in the sense of measures). We next want to use a lower semicontinuity property of the left hand side. However, due to the mixed term $\mathcal{J}(\mu,u;\Omega)$, this is not straightforward, so we estimate the left hand side from below with a negligible error when $\lambda_n\to 0$. Indeed, by~\eqref{eq2:J greater than jump} we have that
	\begin{equation*}
	\mathcal{J}(\mu_{n},\mathring u_{n};\Omega)\geq \integral{J_{\mathring u_{n}} \cap \Omega}{\geo(\mathring u_{n}^-,\mathring u_{n}^+)|\nu_{\mathring u_{n}}|_1}{\mathrm{d}\mathcal{H}^1}.
	\end{equation*} 
	Inserting this lower bound in the previous estimate we obtain that
	\begin{align*}
	&\integral{\Omega}{|\nabla  \mathring u_{n}|_{2,1} }{\d x} + |\DD^{(c)} \mathring u_{n}|_{2,1}(\Omega)+\integral{J_{\mathring u_n} \cap \Omega}{\geo(\mathring u_{n}^-,\mathring u_{n}^+)|\nu_{\mathring u_{n}}|_1}{\mathrm{d}\mathcal{H}^1}+2\pi|\mu_{n}|(\Omega)
	\\
	& \quad \leq  C\,\eta+2\pi |\mu|(\Omega)+\int_{J_{u_{n}}\cap O^{\lambda_n}}{\hspace{-1em} \geo(u_{n}^-,u_{n}^+)|\nu_{u_n}|_1}{\mathrm{d}\mathcal{H}^1}.
	\end{align*}
	The left hand side is lower semicontinuous with respect to the $L^1(\Omega;\RR^2)$-convergence of $\mathring u_n$ (see Proposition~\ref{prop:H is lsc} below) and the weak*-convergence of $\mu_{n}$. The limit of the right hand side is given by~\eqref{eq2:limit of piecewise}. Due to~\eqref{eq2:lambdalimit} and the fact that $u\in W^{1,1}(\Omega;\SS^1)$ (recall the standing assumption in Step~5) we conclude that
	\begin{equation*}
	\int_{\Omega}|\nabla u|_{2,1}\,\mathrm{d}x+2\pi|\mu_0|\leq C\eta+\int_{\Omega}|\nabla u|_{2,1}\,\mathrm{d}x+2\pi |\mu|(\Omega)\,,
	\end{equation*}
	which implies that $|\mu_0|(\Omega)\leq |\mu|(\Omega)$ (recall that $\eta<\eta_0$ can be chosen arbitrary small, while the constant $C$ is bounded uniformly).
	
	We finish the proof by showing that all measures~$\mu_{n}$ have mass $1$ in a uniform neighborhood of each of the points $x_h$ given by the target measure $\mu=\sum_{h=1}^M\deg(u)(x_h)\delta_{x_h}$. Indeed, by Step~6, $u_{\e}=\mathfrak{P}_\e(u)$ on each $B_{\eta/16}(x_h)$. Due to~\eqref{eq2:step6final} we have for $\e$ small enough
	\begin{equation*}
	\frac{1}{\e^2}E_{\e}(\mathfrak{P}_{\e}(u);B_{\eta}(x_h))\leq 2C\eta\frac{\theta_{\e}}{\e}+2\pi|\log \e|\leq C|\log \e|\,.
	\end{equation*}
	This allows us to apply~\cite[Proposition 5.2]{Ali-Cic-Pon}, so that the flat convergence of discrete vorticities is equivalent to the flat convergence of the (normalized) Jacobians of the piecewise affine interpolations. Denote by $\widehat v_\e$ and $\widehat u(\e)$ the piecewise affine interpolations associated to $\mathfrak{P}_\e(u)$ and $u$ on $B_{\eta/16}(x_h)$, respectively. (We adopt the notation $\widehat u(\e)$ to stress that the interpolated function is independent of $\e$.) We have that
    \begin{equation*}  
        \begin{split}
            & \|\widehat{v}_{\e}-\widehat{u}(\e)\|_{L^2(B_{\eta/16}(x_h))}\left(\|\nabla\widehat{v}_{\e}\|_{L^2(B_{\eta/16}(x_h))}+\|\nabla\widehat{u}(\e)\|_{L^2(B_{\eta/16}(x_h))}\right)\\ 
            & \leq C\eta \theta_{\e}\left(\frac{1}{\e^2}E_{\e}(\mathfrak{P}_\e(u);B_\eta(x_h))+\frac{1}{\e^2}E_{\e}\big(\big(\tfrac{x-x_h}{|x-x_h|}\big)^{d_h};B_\eta(x_h)\big)\right)^{\! \frac{1}{2}}\leq C \theta_{\e}|\log \e|^{\frac{1}{2}}\,.
        \end{split}
	\end{equation*} 
	As $\theta_{\e}=\e|\log \e|$, the right hand side vanishes when $\e\to 0$. Hence \cite[Lemma 3.1]{Ali-Cic-Pon} yields that $\mathrm{J}\widehat{v}_{\e}-\mathrm{J}\widehat{u}(\e)\flat 0$ on $B_{\eta/16}(x_h)$. Since $u=\big(\tfrac{x-x_h}{|x-x_h|}\big)^{\pm 1}$ on $B_{\eta/16}(x_h)$, Step 1 of the proof of \cite[Theorem 5.1 (ii)]{Ali-Cic} implies that $\tfrac{1}{\pi}\mathrm{J}\widehat{u}(\e)\flat \deg(u)(x_h)\delta_{x_h}$. Choosing an arbitrary $\varphi\in C_c^{0,1}(B_{\eta/16}(x_h))$, the above arguments imply 
	\begin{equation*}
	\langle \mu_{n}\mres B_{\eta/16}(x_h),\varphi\rangle=\lim_{\e \to 0}\langle \mu_{u_{\e}},\varphi\rangle = \lim_{\e \to 0}\langle \tfrac{1}{\pi}\mathrm{J}\widehat{v}_{\e},\varphi\rangle=\lim_{\e \to 0}\langle \tfrac{1}{\pi}\mathrm{J}\widehat{u}(\e),\varphi\rangle=\deg(u)(x_h)\varphi(x_h) \, .
	\end{equation*}
	Letting $n\to +\infty$ in the above equality we infer that $\mu_0\mres B_{\eta/16}(x_h)=\deg(u)(x_h)\delta_{x_h}$. Now consider the decomposition of $\mu_0$ into the mutually singular measures
	\begin{equation*}
	\mu_0=\sum_{h=1}^M\deg(u)(x_h)\delta_{x_h}+\mu_0\mres\Big(\Omega\setminus\bigcup_{h=1}^M B_{\eta/16}(x_h)\Big)\,.
	\end{equation*} 
	From mutual singularity we deduce that
	\begin{align*}
	|\mu|(\Omega)&\geq |\mu_0|(\Omega)=\sum_{h=1}^M|\deg(u)(x_h)|+\Big|\mu_0\mres\Big(\Omega\setminus\bigcup_{h=1}^M B_{\eta/16}(x_h)\Big)\Big|
	\\
	&\geq |\mu|(\Omega)+\Big|\mu_0\mres\Big(\Omega\setminus\bigcup_{h=1}^M B_{\eta/16}(x_h)\Big)\Big|\,.
	\end{align*}
	Hence $\mu_0\mres\big(\Omega\setminus\bigcup_{h=1}^M B_{\eta/16}(x_h)\big)=0$ and therefore $\mu_0=\sum_{h=1}^M\deg(u)(x_h)\delta_{x_h}=\mu$. In conclusion, we have proved that $\mu_{u_{\e}}\flat\mu_{n}$ as $\e \to 0$ and $\mu_{n}\flat\mu$ as $n \to +\infty$ which justifies the diagonal arguments in Step 9.
\end{proof}

\section{Proofs in the regime \texorpdfstring{$\e |\log \e| \ll \theta_\e \ll 1$}{}} \label{sec:e log smaller theta}
In the present scaling regime the discrete vorticity measures $\mu_{u_{\e}}$ for sequences with bounded energy are not necessarily compact. Hence we cannot use the parametric integral as a comparison, but we will work directly with the spin variable $u_{\e}$. We recall the following lower-semicontinuity result proven in the $d$-dimensional case in \cite[Lemma 3.2]{CicOrlRuf3} via a slicing argument.
\begin{proposition} \label{prop:H is lsc}
For every open set $A \subset \Omega$ define the functional $E(\cdot;A):L^1(A;\RR^2)\to [0,+\infty]$ with domain $BV(A;\SS^1)$, on which it is given by 
\begin{equation*}
E(u;A) := \integral{A}{|\nabla w|_{2,1}}{\d x} + |\DD^{(c)} w|_{2,1}(A) + \hspace{-0.3em}\integral{J_{u} \cap A}{\geo\big(u^+,u^-\big)|\nu_u|_1}{\mathrm{d}\mathcal{H}^1} \,.
\end{equation*}
Then $E(\cdot;A)$ is lower semicontinuous with respect to strong convergence in $L^1(A;\RR^2)$.
\end{proposition}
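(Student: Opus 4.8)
The plan is to prove Proposition~\ref{prop:H is lsc} by reducing the lower semicontinuity of $E(\cdot;A)$ to a one-dimensional statement via slicing. First I would observe that, since $u\in BV(A;\SS^1)$ and $|u|=1$, the three contributions that make up $E(u;A)$ can each be recovered from the one-dimensional sections of $u$ along the coordinate directions $e_1$ and $e_2$. Concretely, for $\nu=e_1,e_2$ and for $\H^1$-a.e.\ point $y$ in the hyperplane $\nu^\perp$, the slice $u^{y,\nu}(t):=u(y+t\nu)$ belongs to $BV$ of the corresponding open section $A^{y,\nu}$, and by the standard slicing theory for $BV$ functions (\cite[Section~3.11]{Amb-Fus-Pal}) the approximate gradient, the Cantor part, and the jump part of $\DD u$ disintegrate over these slices. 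Because the anisotropic norm $|\cdot|_{2,1}$ is by definition the sum of the Euclidean norms of the columns, and because $|\nu_u|_1 = |\nu_u\cdot e_1|+|\nu_u\cdot e_2|$, the functional $E(u;A)$ splits exactly as a sum over $\nu\in\{e_1,e_2\}$ of integrals over $\nu^\perp$ of one-dimensional energies of the slices $u^{y,\nu}$, where the one-dimensional energy of a function $v\in BV(J;\SS^1)$ on an interval $J$ is $\int_J|v'|\,\d t + |\DD^{(c)}v|(J) + \sum_{t\in J_v}\geo(v^+(t),v^-(t))$. (Here I use $|v'|=|\nabla v|$ since in one dimension the Euclidean norm of the scalar derivative is just its absolute value, and $\geo(v^+,v^-)$ is the pointwise geodesic jump cost.)

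Next I would establish the one-dimensional lower semicontinuity: if $v_k\to v$ in $L^1(J;\RR^2)$ with $v_k,v\in BV(J;\SS^1)$, then $\int_J|v'|+|\DD^{(c)}v|(J)+\sum_{J_v}\geo(v^+,v^-)\le \liminf_k\big(\int_J|v_k'|+|\DD^{(c)}v_k|(J)+\sum_{J_{v_k}}\geo(v_k^+,v_k^-)\big)$. The natural way to see this is to pass to a lifting: locally (and in one dimension, globally on each interval) write $v=\exp(\iota\varphi)$, $v_k=\exp(\iota\varphi_k)$ with $\varphi,\varphi_k\in BV(J;\RR)$, chosen so that the energy equals the total variation $|\DD\varphi|(J)$ of the lifting when jumps are measured by the \emph{geodesic} representative (i.e.\ choosing the lifting so that each jump of $\varphi$ has amplitude in $[-\pi,\pi]$, consistently with the function $\Psi$ of \eqref{eq:def of Psi}). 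Then the one-dimensional energy of $v$ is precisely $|\DD\varphi|(J)$, and the claim becomes lower semicontinuity of the total variation of liftings under $L^1$ convergence of $v_k$ to $v$; this is where one must be slightly careful, because the liftings need not converge. One handles this exactly as in the cited reference \cite[Lemma~3.2]{CicOrlRuf3}: on each interval where no jump of the limit occurs one uses ordinary $BV$ lower semicontinuity, while the geodesic choice of the jump amplitude guarantees that the jump contribution is itself lower semicontinuous (the map $(a,b)\mapsto\geo(a,b)$ on $\SS^1\times\SS^1$ is continuous, and mass cannot be lost because $\geo$ is the \emph{intrinsic} distance on $\SS^1$, so a jump of $v$ of size $\geo(v^+,v^-)$ must be approximated by total variation at least $\geo(v^+,v^-)-o(1)$ of the $v_k$ near that point).

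Finally I would assemble the full statement. Given $u_k\to u$ in $L^1(A;\RR^2)$ with bounded $E(u_k;A)$, for each direction $\nu$ a subsequence of the slices $u_k^{y,\nu}$ converges in $L^1(A^{y,\nu})$ to $u^{y,\nu}$ for $\H^1$-a.e.\ $y$ (after passing to a subsequence realizing the liminf and using Fubini together with the $L^1$ convergence). Applying the one-dimensional lower semicontinuity slice by slice and then Fatou's lemma in the $y$ variable yields
\[
\sum_{\nu\in\{e_1,e_2\}}\ \integral{\nu^\perp}{E^{1D}(u^{y,\nu};A^{y,\nu})}{\d\H^1(y)}\ \le\ \liminf_{k\to\infty}\ \sum_{\nu\in\{e_1,e_2\}}\ \integral{\nu^\perp}{E^{1D}(u_k^{y,\nu};A^{y,\nu})}{\d\H^1(y)},
\]
and by the slicing identity for $E$ established in the first step this is exactly $E(u;A)\le\liminf_k E(u_k;A)$. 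The main obstacle, and the step deserving the most care, is the passage from slices back to the full functional for the Cantor and jump parts: one must check that the disintegration of $\DD^{(c)}u$ and of $\DD^{(j)}u=(u^+-u^-)\otimes\nu_u\,\H^1\mres J_u$ over coordinate slices reproduces exactly $|\DD^{(c)}u|_{2,1}$ and $\int_{J_u}\geo(u^+,u^-)|\nu_u|_1\,\d\H^1$ — in particular that the geodesic jump cost, rather than the Euclidean one $|u^+-u^-|$, is what the one-dimensional slices see, which is precisely why the slice energy must be defined with $\geo$ and the liftings chosen with amplitude in $[-\pi,\pi]$. Since this is carried out in full in \cite[Lemma~3.2]{CicOrlRuf3} for general dimension $d$, the present two-dimensional case follows by specialization, and I would simply cite that argument for the slicing identity while spelling out the (short) one-dimensional lower semicontinuity above.
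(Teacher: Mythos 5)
Your argument is correct and follows essentially the same route as the paper, which proves this proposition simply by invoking the $d$-dimensional slicing argument of \cite[Lemma~3.2]{CicOrlRuf3} — the very lemma you cite and whose proof (coordinate slicing of the anisotropic energy, one-dimensional lower semicontinuity via liftings with geodesic jump amplitude, reassembly by Fatou) you reconstruct. Nothing further is needed.
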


\begin{proof}[Proof of Theorem~\ref{thm:e log smaller theta}]
The compactness result in {\em i)} follows by Lemma~\ref{lemma:bound with parametric integral} as in the proof of Proposition~\ref{prop:current compactness}. Indeed, Lemma~\ref{lemma:bound with parametric integral} with $\sigma = \frac{1}{2}$ yields that $u_{\e}$ is bounded in $BV(A;\SS^1)$ for every $A \subset \subset \Omega$.

To prove {\em ii)}, it suffices to consider $u\in BV(\Omega;\SS^1)$ and a sequence $u_\e\in\mathcal{PC}_{\e}(\S_{\e})$ such that $u_\e \to u$ strongly in $L^1(\Omega;\RR^2)$ and
\begin{equation*}
\liminf_{\e \to 0}\frac{1}{\e\theta_{\e}}E_{\e}(u_{\e})\leq C\,.
\end{equation*}
Let us fix an open set $A\compact\Omega$ and $\sigma>0$. By Lemma~\ref{lemma:bound with parametric integral} , for $\e$ small enough we have that 
\begin{equation*}
\frac{1}{\e\theta_{\e}}E_\e(u_\e) \geq (1 - \sigma) \integral{J_{u_\e} \cap A}{\geo(u_\e^+,u_\e^-)|\nu_{u_\e}|_1}{\d \H^1} = (1 - \sigma) E(u_\e; A)  \, .
\end{equation*}
Hence Proposition~\ref{prop:H is lsc} yields that 
\begin{equation*}
\liminf_{\e \to 0} \frac{1}{\e\theta_{\e}}E_\e(u_\e) \geq (1-\sigma) E(u; A)\, .
\end{equation*}
We conclude the proof of the lower bound letting $\sigma \to 0$ and $A \nearrow \Omega$.

\medskip
 	
In order to show the $\Gamma$-limsup inequality in {\em iii)}, we first remark that, following the approximation procedure in \cite[Proof of Proposition 4.3 (Step 1-3)]{CicOrlRuf3}, it suffices to prove the upper bound for a target function $u\in C^{\infty}(\tilde \Omega \sm V;\SS^1) \cap W^{1,1}(\tilde \Omega;\SS^1)$, where $\widetilde{\Omega}\supset\supset\Omega$ has Lipschitz-boundary and  $V=\{x_1,\ldots,x_M\}\subset\widetilde{\Omega}$ is a finite set. Moreover, following Steps~3-5 in the proof of Theorem~\ref{thm:theta equal e log}-{\em iii)} above, we can assume, without loss of generality, that $V \subset \lambda_n \ZZ^2 \cap \tilde \Omega$ with $\lambda_n = 2^{-n}$, that there exists $\Omega \subset \subset \Omega' \subset \subset \tilde \Omega$ such that $V \cap (\Omega' \sm \Omega) = \emptyset$, and that there exists $\eta_0>0$ such that for every $x_i \in V$ we have $u(x) = \big(\frac{x-x_i}{|x-x_i|}\big)^{\pm 1}$ for $x \in \ol B_{\eta_0}(x_i)$. 
 
We are finally in a position to apply \cite[Proposition~4.22]{CicOrlRuf}  (which is valid also in the case $\e \ll \theta_\e$) to the modified field $u  \in C^\infty(\Omega' \sm V;\SS^1) \cap W^{1,1}(\Omega';\SS^1)$. It gives the existence of a recovery sequence $u_\e \in \PC_\e(\S_\e)$ such that $u_\e \to u$ in $L^1(\Omega;\RR^2)$ and 
\begin{equation*}
	\limsup_{\e \to 0}\Big( \frac{1}{\e \theta_\e} E_\e(u_\e) - 2\pi |\mu|(\Omega) |\log \e| \frac{\e}{\theta_\e}  \Big) \leq \int_{\Omega}|\nabla u|_{2,1}\,\mathrm{d} x \, .
\end{equation*} 
Since $ |\log \e| \frac{\e}{\theta_\e} \to 0$, we obtain that
\begin{equation*}
\Gamma\hbox{-}\limsup_{\e \to 0} \frac{1}{\e\theta_{\e}}E_{\e}(u) \leq  \integral{ \Omega}{|\nabla u|_{2,1} }{\d x}\,.
\end{equation*}
Note that the right hand side coincides with the functional claimed to be the $\Gamma$-limit in Theorem \ref{thm:e log smaller theta} since $u\in W^{1,1}(\Omega;\SS^1)$.
\end{proof}
\section{Proofs in the regime \texorpdfstring{$\theta_\e \ll \e$}{}}
We now come to the scaling regime which yields a discretization of $\SS^1$ that is fine enough to commit asymptotically no error compared to the $XY$-model up to the first order development. Throughout this section we shall always assume that
\begin{equation} \label{eq:theta less eps}
\theta_\e \ll \e \, .
\end{equation}
Moreover, we will use the following elementary estimate: for any $x,y\in\RR^2\setminus\{0\}$ it holds that
\begin{equation}\label{eq:xmoxcontinuous}
\left|\frac{x}{|x|}-\frac{y}{|y|}\right|\leq 2\frac{|x-y|}{|y|}.	
\end{equation}
\subsection{Renormalized and core energy} We recall that the {\em renormalized energy} corresponding to the configuration of vortices $\mu = \sum_{h=1}^M d_h \delta_{x_h}$ is defined by 
\begin{equation*} 
	\WW(\mu) = -2 \pi \sum_{h\neq k} d_h d_k \log |x_h - x_k| - 2 \pi \sum_{h} d_h R_0(x_h) \, ,
\end{equation*}
where $R_0$ is harmonic in $\Omega$ and $R_0(x) = - \sum_{h=1}^M d_h \log |x - x_h|$ for $x \in \de \Omega$. The renormalized energy can also be recast as 
\begin{equation} \label{eq:renormalized energy as limit}
	\WW(\mu) = \lim_{\eta \to 0} \big[ \tilde m(\eta,\mu) -  2  \pi |\mu|(\Omega) |\log \eta | \big] \, ,
\end{equation}
where
\begin{equation} \label{eq:min for renormalized energy}
	\tilde m(\eta,\mu) := \min \Big\{   \integral{\Omega_\mu^\eta}{|\nabla w|^2}{\d x} \colon w(x) = \alpha_h \odot \big(\tfrac{x-x_h}{|x-x_h|}\big)^{d_h} \text{ for } x \in \de B_\eta(x_h)\, , |\alpha_h| = 1 \Big\} \, .
\end{equation}

To define the {\it core energy}, we introduce the discrete minimization problem in a ball $B_r$ 
\begin{equation} \label{eq:def of core}
	\gamma(\e,r) := \min \Big\{ \frac{1}{\e^2}E_\e(v;B_r)  \ : \ \ v \colon \e \ZZ^2 \cap B_r \to \SS^1 , \  v(x) = \tfrac{x}{|x|} \text{ for } x \in \de_\e B_r \Big\} \, ,
\end{equation}
where $\de_\e B_r$ is the discrete boundary of $B_r$, defined for a general open set by 
\begin{equation*}
	\de_\e A = \{ \e i \in \e \ZZ^2 \cap A \colon \dist(\e i,\partial A)\leq\e \} \, .
\end{equation*}
Note that $\de_\e B_r \subset \e \ZZ^2\cap B_r \sm B_{r-\e}$. Then the {\em core energy} of a vortex is the number $\gamma$ given by the following lemma.
The result is analogous to~\cite[Theorem~4.1]{Ali-DL-Gar-Pon} with some differences: here we consider $r_\e \to 0$ depending on $\e$ and we use a different notion of discrete boundary of a set. The modifications in the proof are minor, but we give the details for the convenience of the reader. 
 
\begin{lemma}  \label{lemma:core energy eps dependent}
Let $r_\e$ be a family of radii such that $\e \ll r_\e \leq C$. Then there exists
\begin{equation} \label{eq:new core energy}
	\lim_{\e \to 0} \Big[ \gamma(\e,r_\e) - 2 \pi \big| \log \tfrac{\e}{r_\e} \big| \Big] =: \gamma \in \RR \, ,
\end{equation}
where $\gamma$ is independent of  the sequence $r_\e$.
\end{lemma}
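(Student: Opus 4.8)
The plan is to show that the sequence $a_\e := \gamma(\e,r_\e) - 2\pi|\log(\e/r_\e)|$ is Cauchy (along the relevant scale) and that the limit does not depend on $r_\e$, following the strategy of~\cite[Theorem~4.1]{Ali-DL-Gar-Pon} but accounting for the $\e$-dependence of the radius and the modified notion of discrete boundary. The two main ingredients are: (1) a \emph{scaling/comparison} step showing that changing the radius by a bounded factor changes $\gamma(\e,r)$ only by $2\pi$ times the logarithm of that factor plus a vanishing error; and (2) a \emph{monotonicity/almost-subadditivity} argument that pins down the limit.

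First I would reduce to a single "canonical" core energy. For fixed $\rho>0$ set $\gamma_\e(\rho) := \gamma(\e,\rho)$; by the discrete change of variables $x\mapsto x/s$ (which maps $\e\ZZ^2\cap B_{s\rho}$ to $(\e/s)\ZZ^2\cap B_\rho$ and leaves the nearest-neighbour energy invariant, since $E_\e$ is scale invariant in the sense $E_\e(v;B_r)=E_{\e/s}(v(s\,\cdot);B_{r/s})$), one gets $\gamma(\e,r) = \gamma(\e/r,1)$ exactly when $\e/r$ is such that the lattice lines up, and in general up to the discretization of the boundary datum $x\mapsto x/|x|$, which costs $o(1)$. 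So it suffices to study $g(\delta):=\gamma(\delta,1)$ as $\delta\to 0$ and show $g(\delta) - 2\pi|\log\delta| \to \gamma$. The claim~\eqref{eq:new core energy} then follows because $\e\ll r_\e$ forces $\delta_\e := \e/r_\e \to 0$, and the $o(1)$ errors from aligning the lattice and discretizing the boundary condition on $\de_\e B_{r_\e}$ (which differs from the annular boundary layer used in~\cite{Ali-DL-Gar-Pon} only by $O(\e)$-thin sets carrying $O(\e^2/\e^2)=O(1)$ bonds each of energy $O(\e^2)$, hence $o(1)$ total) are negligible.

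The core of the argument is then to prove that $g(\delta) - 2\pi|\log\delta|$ converges. For this I would use the ball-construction lower bound together with an explicit upper bound construction. \emph{Lower bound:} for any competitor $v$ on $B_1$ with $v=x/|x|$ on $\de_\e B_1$, standard arguments (e.g.\ \cite[Theorem 3.1]{Ali-DL-Gar-Pon} or the ball construction used in Lemma~\ref{lemma:modification}) give $\frac1{\e^2}E_\e(v;B_1\sm B_r) \ge 2\pi\log(r/\e) - C$ for $\e\le r\le 1$, and matching the degree-one datum forces the vortex to sit inside $B_r$; optimizing over $r$ and a careful analysis near the vortex core yields $\liminf_{\delta\to0}(g(\delta)-2\pi|\log\delta|) =: \underline\gamma > -\infty$. \emph{Upper bound and convergence:} one shows $g(\delta_1\delta_2) \le g(\delta_1) + g(\delta_2\,\text{-type term}) + 2\pi|\log\delta_2| + o(1)$ by gluing the optimal profile for $g(\delta_1)$ inside $B_{\delta_2}$ with the harmonic-type profile $x/|x|$ (suitably discretized) on the annulus $B_1\sm B_{\delta_2}$, whose energy is $2\pi|\log\delta_2|+O(1)$; combined with the lower bound this gives that $g(\delta)-2\pi|\log\delta|$ has a limit $\gamma\in\RR$ by a standard almost-monotonicity / Fekete-type argument. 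Independence of $r_\e$ is immediate once the limit is expressed purely through $\delta_\e=\e/r_\e\to0$.

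\textbf{Main obstacle.} The delicate point is controlling the boundary-layer discrepancy: the definition~\eqref{eq:def of core} imposes $v=x/|x|$ on $\de_\e B_r = \{\e i\in\e\ZZ^2\cap B_r:\dist(\e i,\de B_r)\le\e\}$, which is a genuinely two-dimensional annular strip of width $\sim\e$ rather than a one-layer discrete curve, and $x/|x|$ is only Lipschitz (not lattice-valued). I would handle this by comparing with the minimization problem in which the datum is imposed on a single discrete circle and on an $O(\e)$-annulus, showing the two differ by $o(1)$: the number of bonds in the strip is $O(r/\e)$, but the Dirichlet datum is smooth there so each bond contributes $O(\e^2\cdot|\nabla(x/|x|)|^2)=O(\e^2/r^2)$, giving total $O(\e/r)=o(1)$, and conversely any near-optimal interior competitor can be modified on this strip at the same cost. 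This is exactly the place where $\e\ll r_\e$ (rather than $\e\sim r_\e$) is used. Everything else is a routine adaptation of~\cite{Ali-DL-Gar-Pon}.
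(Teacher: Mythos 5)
Your overall architecture does match the paper's: reduce by the exact scaling identity $\gamma(\e,r)=\gamma(\e/r,1)$ (no lattice-alignment or discretization correction is needed here --- the dilation $x\mapsto x/r$ maps $\e\ZZ^2$ onto $(\e/r)\ZZ^2$, the datum $x/|x|$ is $0$-homogeneous, and $\de_\e B_r$ rescales exactly, so the identity is an equality, not an ``up to $o(1)$'' statement); then prove an almost-monotonicity inequality for $\delta\mapsto g(\delta)-2\pi|\log\delta|$ by extending a minimizer with the discretized $x/|x|$ on an annulus; and finally rule out the value $-\infty$ by showing that the boundary datum forces a degree-one vortex (which does require the explicit Jacobian/degree computation, since a priori the discrete vorticity could vanish in the limit) and invoking the localized lower bound of Alicandro--De Luca--Garroni--Ponsiglione.

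There is, however, a genuine gap at the one step that carries the whole weight of the lemma. To conclude that $h(\delta):=g(\delta)-2\pi|\log\delta|$ \emph{converges}, rather than merely stays bounded, the comparison inequality must read $h(\delta_1)\le h(\delta_2)+\varrho_{\delta_2}$ with $\varrho_{\delta_2}\to0$; equivalently, the discrete energy of $x/|x|$ on the annulus between the two scales must equal $2\pi$ times the logarithm of the ratio of the radii up to an error that \emph{vanishes} as the inner radius, measured in lattice units, tends to infinity. In your ``core of the argument'' you assert only that this annulus energy is $2\pi|\log\delta_2|+O(1)$; with an $O(1)$ error a Fekete-type argument yields $\limsup h\le\liminf h+C$, which gives boundedness but not existence of the limit (and your earlier phrase ``plus a vanishing error'' is asserted, never justified). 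Establishing the vanishing error is precisely the nontrivial content of the paper's proof: the annulus is decomposed into four ``trimmed quadrants,'' each finite difference of $x/|x|$ is estimated by a line integral of its gradient along the bond, the monotonicity of $1/|x|^2$ on each quadrant is used to dominate the lattice sum by the continuum integral $\int|x|^{-2}\,\mathrm{d}x$ (which produces the sharp constant $2\pi$ with error controlled by the mean value theorem), and the $O(1)$ bonds per dyadic scale left outside the trimmed quadrants are summed into a convergent series tail, hence contribute $\varrho_{\delta_2}$. None of this appears in your proposal. Relatedly, the obstacle you single out --- the $O(\e)$-thick boundary layer $\de_\e B_r$ --- is in fact harmless and is dispatched in one line in the paper's extension step; the genuine difficulty is the sharp prefactor $2\pi$ with vanishing error on the annulus, and that is where your argument is missing the key idea.
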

\begin{proof}
We introduce the function
\begin{equation*}
I(t) = \min \Big\{ E_1(v;B_\frac{1}{t}) \ :  \ v \colon \ZZ^2 \cap B_\frac{1}{t} \to \SS^1 , \, v(x) = \tfrac{x}{|x|} \text{ for } x \in \de_1 B_\frac{1}{t} \Big\} \, .
\end{equation*}
Let us show that 
\begin{equation} \label{eq:monotonicity of I}
I(t_1)  \leq I(t_2) + 2 \pi \log \tfrac{t_2}{t_1} + \varrho_{t_2} \quad \text{for } 0 < t_1 \leq t_2 \, ,
\end{equation}
where $\varrho_{t_2}$ is a generic sequence (which may change from line to line) satisfying $\varrho_{t_2} \to 0$ as $t_2 \to 0$. To this end, let $v_2 \colon \ZZ^2 \cap B_\frac{1}{t_2} \to \SS^1$ be such that $v_2(x) = \tfrac{x}{|x|}$ for $x \in \de_1 B_\frac{1}{t_2}$ and $E_1(v_2;B_\frac{1}{t_2}) = I(t_2)$. We extend $v_2$ to $B_\frac{1}{t_1}$ setting
\begin{equation*}
	v_1(i) := 
	\begin{cases}
		v_2(i) \, , & \text{if } i \in \ZZ^2\cap B_\frac{1}{t_2} \, , \\
		\tfrac{i}{|i|} \, , & \text{if } i \in \ZZ^2\cap B_\frac{1}{t_1} \sm B_\frac{1}{t_2} \, .
	\end{cases}
\end{equation*}
To reduce notation, we define $A(t_1,t_2):=B_{\tfrac{1}{t_1}}\setminus\overline{B}_{\tfrac{1}{t_2}-2}$. Next, note that if $i \in B_\frac{1}{t_2}$ and $j \notin B_\frac{1}{t_2}$ with $|i-j|=1$, then $|i| \geq \tfrac{1}{t_2} - 1 > \tfrac{1}{t_2} - 2$. Hence
\begin{equation*}
	I(t_1) \leq E_1(v_1;B_\frac{1}{t_1})  \leq E_1(v_2;B_\frac{1}{t_2}) + E_1(v_1;A(t_1,t_2))  = I(t_2) + \frac{1}{2} \sum_{\substack{\langle i,j \rangle \\ i,j \in A(t_1,t_2)}} \big| \tfrac{i}{|i|} - \tfrac{j}{|j|} \big|^2 .
\end{equation*}
To control the last sum, we derive an estimate for the finite differences away from the singularity. Set $u(x)=\frac{x}{|x|}$. For any $t\in[0,1]$ and $i,j\in\ZZ^2$ with $|i-j|=1$ we have
\begin{equation*}
|(1-t) \e i+t\e j|\geq |\e i|-\e \, .
\end{equation*}
Hence, by the regularity of $u$ in $\RR^2\setminus\{0\}$, for any $\e i,\e j\in\e\ZZ^2\setminus B_{2\e}$ with $|i-j|=1$, 
\begin{equation*}
|u(\e i)-u(\e j)|\leq \int_0^1|\nabla u(t \e i+(1-t)\e j)(\e i-\e j)|\,\mathrm{d}t \, .
\end{equation*}
Since $i-j\in\{\pm e_1,\pm e_2\}$, a direct computation yields the two cases
\begin{equation}\label{eq:estimatecases}
|u(\e i)-u(\e j)|\leq
\begin{cases}\displaystyle
\int_0^1 \frac{|i\cdot e_2|}{|t i+(1-t) j|^2}\,\mathrm{d}t &\mbox{if $(i-j)\parallel e_1$} \, ,
\\
\\
\displaystyle\int_0^1 \frac{|i \cdot e_1|}{|t i+(1-t) j|^2}\,\mathrm{d}t &\mbox{if $(i-j)\parallel e_2$} \, ,
\end{cases}
\end{equation}
Next note that $|ti+(1-t)j|\geq |i|-1\geq \max\{|i\cdot e_1|,|i\cdot e_2|\}-1$. The left hand side is non-negative if $i\neq 0$, so that we can take the square of this inequality. Using Jensen's inequality, we infer from \eqref{eq:estimatecases} that
\begin{equation}\label{eq:estimatenocases}
|u(\e i)-u(\e j)|^2\leq \frac{k^2}{(k-1)^4}\Big|_{k=\max\{|i\cdot e_1|,|i\cdot e_2|\}} \, .
\end{equation}
We shall use both \eqref{eq:estimatecases} and \eqref{eq:estimatenocases} to estimate the sum of interactions in $A(t_1,t_2)$. To do so, we split the annulus $A(t_1,t_2)$ using trimmed quadrants defined as follows: given a tuple of signs $s=(s_1,s_2)\in\{(+,+), (-,+), (-,-), (+,-)\}$ and $n\in\mathbb{N}$ we define the trimmed quadrants $\mathcal{Q}_n^{s}$ as
\begin{equation} \label{eq:trimmed quadrants}
\mathcal{Q}_n^{s}:=\{x\in\RR^2:s_1 \, x \cdot e_1 \geq n,\ s_2\,  x\cdot e_2 \geq n\}\,.
\end{equation}
\begin{figure}[H]
	\hspace{0cm}\scalebox{0.8}{
        \includegraphics{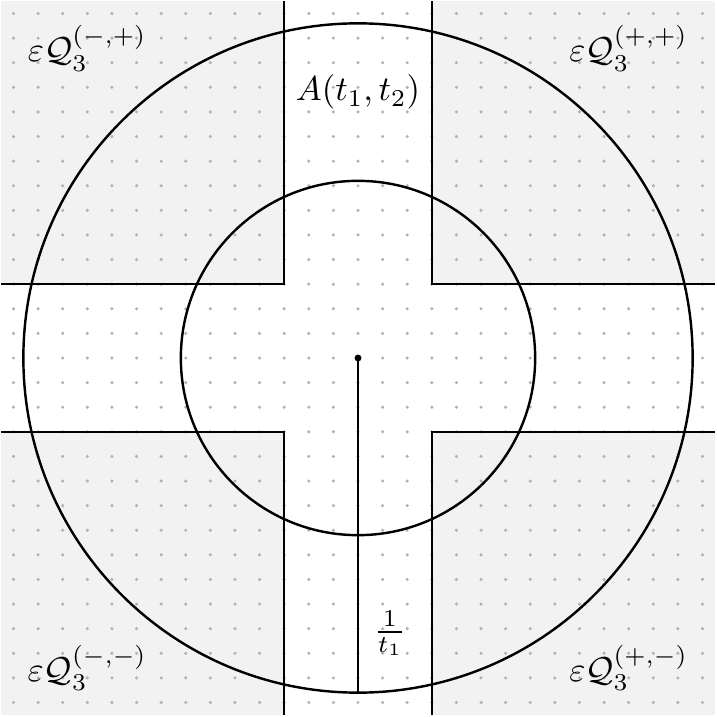}
	}
	\caption{Schematic illustration of the trimmed quadrants $\e \mathcal{Q}^s_3$. }
	\label{fig:trimmingannulus}
\end{figure}
 
Fix $n=3$. We then consider interactions $\langle i,j\rangle$ where both points belong to one trimmed quadrant and the remaining interactions. For the latter we use the estimate \eqref{eq:estimatenocases}, noting that $\max\{|i\cdot e_1|,|i\cdot e_2|\}\geq \tfrac{1}{\sqrt{2}}|i| \geq \tfrac{1}{\sqrt{2}}t_2$ and that for $t_2$ small enough, i.e., the inner circle of the annulus large enough, the interactions outside the trimmed quadrants can be counted along $20$ lines parallel to one of the coordinate axes in a way that the maximal component strictly increases along the line (cf. Figure \ref{fig:trimmingannulus}). Summing over all pairs of signs $s \in \{(+,+),(-,+),(+,-),(-,-)\}$ then yields
\begin{align*}
\frac{1}{2}\sum_{\substack{\langle i,j \rangle \\ i,j \in A(t_1,t_2)}} \big| \tfrac{i}{|i|} - \tfrac{j}{|j|} \big|^2 & \leq \frac{1}{2}\sum_s\sum_{\substack{\langle i,j \rangle \\ i,j \in \mathcal{Q}^s_3 \cap A(t_1,t_2)}}\big| \tfrac{i}{|i|} - \tfrac{j}{|j|} \big|^2 + C \sum_{k = \lfloor \frac{1}{\sqrt{2}t_2}\rfloor}^{\lceil \frac{1}{t_1} \rceil} \frac{k^2}{(k-1)^4}
\\
& \leq \frac{1}{2}\sum_s\sum_{\substack{\langle i,j \rangle \\ i,j \in \mathcal{Q}^s_3 \cap A(t_1,t_2)}}\big| \tfrac{i}{|i|} - \tfrac{j}{|j|} \big|^2 + \rho_{t_2}\,,
\end{align*}
where we used that the series $\sum_{k=2}^{\infty}\tfrac{k^2}{(k-1)^4}$ converges. The contributions on the trimmed cubes have to be treated more carefully since we need the precise pre-factor $2\pi$
 in \eqref{eq:monotonicity of I}. The idea is two switch from the discrete lattice $\ZZ^2$ to a continuum environment that leads to an integral. We have
 \begin{align*}
 I_s:=\frac{1}{2}\sum_{\substack{\langle i,j \rangle \\ i,j \in \mathcal{Q}^s_3 \cap A(t_1,t_2)}}\big| \tfrac{i}{|i|} - \tfrac{j}{|j|} \big|^2=\sum_{\substack{i\in\ZZ^2 \\ i\in\mathcal{Q}^s_3 \cap A(t_1,t_2)}	}\big| \tfrac{i+s_1e_1}{|i+s_1e_1|} - \tfrac{i}{|i|} \big|^2+\big| \tfrac{i+s_2e_2}{|i+s_2e_2|} - \tfrac{i}{|i|} \big|^2.
 \end{align*} 
For each term on the right hand side we apply \eqref{eq:estimatecases} noting that $|t(i+s_re_r)+(1-t)i|=|i+ts_re_r|\geq |i|$ for $i\in \mathcal{Q}^s_3$, $s=(s_1,s_2)$, and $r\in\{1,2\}$, so that by Jensen's inequality
\begin{equation*}
\big| \tfrac{i+s_1e_1}{|i+s_1e_1|}  - \tfrac{i}{|i|} \big|^2+\big|  \tfrac{i+s_2e_2}{|i+s_2e_2|}  - \tfrac{i}{|i|} \big|^2\leq \frac{|i\cdot e_1|^2+|i\cdot e_2|^2}{|i|^4}=\frac{1}{|i|^2}	
\end{equation*}
Note that for $i\in\ZZ^2\cap\mathcal{Q}^s_3$ it holds that $\frac{1}{|x|^2}\geq \frac{1}{|i|^2}$ for all $x\in i-s_1e_1-s_2e_2+[-\tfrac{1}{2},\tfrac{1}{2}]^2$. Since $i-s_1e_1-s_2e_2+[-\tfrac{1}{2},\tfrac{1}{2}]^2\in\mathcal{Q}_2^s$, we can control $I_s$ by
\begin{equation}\label{eq:trickytrimmer}
I_s\leq \sum_{\substack{i\in\ZZ^2 \\ i\in\mathcal{Q}^s_3 \cap A(t_1,t_2)}	}\frac{1}{|i|^2}\leq \integral{\mathcal{Q}_2^s\cap B_\frac{1}{t_1} \sm B_{\frac{1}{t_2}- 4} }{\frac{1}{|x|^2}}{\mathrm{d}x}.	
\end{equation}
Summing this estimate over all couples of signs $s$, we infer that
\begin{equation*}
\sum_s I_s\leq \integral{B_\frac{1}{t_1} \sm \ol B_{\frac{1}{t_2}- 4}}{\frac{1}{|x|^2}}{\d x} + \varrho_{t_2} \leq 2 \pi \log\tfrac{t_2}{t_1} + \varrho_{t_2}
\end{equation*}
where we also used that  by the mean value theorem $|\log\tfrac{1}{t_2}-\log(\tfrac{1}{t_2}-4)|\leq 4 |\tfrac{t_2}{1-4 t_2}|$. 	This proves~\eqref{eq:monotonicity of I}. As a consequence, the limit $\lim_{t \to 0} \big[ I(t) - 2\pi|\log t|\big] =: \gamma$ exists. Since $\gamma(\e,r_\e) = I\big(\tfrac{\e}{r_\e}\big)$, it only remains to show that $\gamma\neq -\infty$. To this end, we show that the boundary conditions in the definition of $\gamma(\e,1)$ force concentration of the Jacobians, so that we can use localized lower bounds. Let $v_{\e} \colon \e\ZZ^2\cap B_1\to\SS^1$ be an admissible minimizer for the problem defining $\gamma(\e,1)$ and extend it to $\e\ZZ^2\setminus B_1$ via $v_{\e}(\e i)=\tfrac{\e i}{|\e i|}$. Then, using the boundary conditions imposed on $v_{\e}$ and \eqref{eq:xmoxcontinuous}, we deduce that
\begin{align}\label{eq:splittingtrick}
\frac{1}{\e^2}E_{\e}(v_{\e};B_3)&\leq \frac{1}{\e^2}E_{\e}(v_{\e};B_1)+\frac{1}{2\e^2}\sum_{\substack{\nn\\ \e i,\e j\in B_3\setminus B_{1/2}}}\e^2\left|\frac{\e i}{|\e i|}-\frac{\e j}{|\e j|}\right|^2\nonumber
\\
&\leq \gamma(\e, 1)+\frac{C}{\e^2} \hspace{-1.5em}\sum_{\substack{\nn\\ \e i,\e j\in B_3\setminus B_{1/2}}} \hspace{-1.5em} \e^4\leq \gamma(\e,1)+C\,.
\end{align}
Since we already proved that $\gamma(\e,1)-2\pi|\log \e |$ remains bounded from above  when $\e\to 0$, Proposition~\ref{prop:XY classical} implies that (up to a subsequence) $\mu_{v_{\e}}\mres B_3\flat \mu$ for some $\mu=d_1\delta_{x_1}$ with $d_1\in\ZZ$ and $x_1\in B_3$. We claim that $\mu\neq 0$. Indeed, let us denote by $\widehat{v}_{\e}$ the piecewise affine interpolation of $v_{\e}$ and let $\eta \colon [0,3]\to\RR$ be the piecewise affine function such that $\eta=1$ on $[0,1]$, $\eta=0$ on $[2,3]$ and $\eta$ is affine on $[1,2]$. Then define the Lipschitz function $\varphi\in C_c^{0,1}(B_3)$ via $\varphi(x)=\eta(|x|)$. Using the flat convergence of $\mu_{v_\e}$, which transfers to the scaled Jacobian $\pi^{-1}\mathrm{J}\widehat{v}_{\e}$ due to \cite[Proposition 5.2]{Ali-Cic-Pon}, we infer that 
\begin{equation*}
\langle\mu,\varphi\rangle=\lim_{\e\to 0}\frac{1}{\pi}\int_{B_3}\mathrm{J}\widehat{v}_{\e}\,\varphi\,\mathrm{d}x=-\lim_{\e \to 0}\frac{1}{2\pi}\int_{B_2\setminus B_1}\begin{pmatrix}(v_{\e})_1\partial_2 (v_{\e})_2-(v_{\e})_2\partial_2 (v_{\e})_1 \\-(v_{\e})_1\partial_1 (v_{\e})_2+(v_{\e})_2\partial_1 (v_{\e})_1\end{pmatrix} \cdot \nabla\varphi\,\mathrm{d}x\,,
\end{equation*}
where in the last equality we integrated by parts due to the fact that in dimension two the Jacobian can be written as a divergence (in two ways). Note that on $B_2\setminus B_1$ the function $v_{\e}$ agrees with the discrete version of $x/|x|$. Hence one can pass to the limit in $\e$ as $\widehat{v}_{\e}$ converges to $x/|x|$ weakly in $H^1(B_2\setminus B_1)$. Moreover, it holds that $\nabla\varphi(x)=-x/|x|$ a.e. in $B_2\setminus B_1$. An explicit computation shows that
\begin{equation}\label{eq:explicitdegree}
\langle \mu,\varphi\rangle = \frac{1}{2\pi}\int_{B_2\setminus B_1}|x|^{-1}\,\mathrm{d}x=1\,.
\end{equation}
Consequently $\mu\mres B_3= \delta_{x_1}$. Let $\sigma<\frac{1}{2}\dist(x_1,\partial B_3)$. Then \cite[Theorem 3.1(ii)]{Ali-DL-Gar-Pon} yields
\begin{align*}
\liminf_{\e \to 0}\left(\frac{1}{\e^2}E_{\e}(v_{\e};B_3)-2\pi |\log \e |\right)&\geq\liminf_{\e \to 0}\left(\frac{1}{\e^2}E_{\e}(v_{\e};B_{\sigma}(x_1))-2\pi\log\tfrac{\sigma}{\e}\right)+2\pi \log \sigma 
\\
&\geq -C+2\pi\log \sigma 
\end{align*}
for some constant $C$. Combining this lower bound with \eqref{eq:splittingtrick} yields that
\begin{equation*}
-C+2\pi\log \sigma \leq \lim_{\e \to 0}\left(\gamma(\e,1)-2\pi|\log \e |\right)+C\,.
\end{equation*}
This shows that $\gamma>-\infty$ and concludes the proof. 
\end{proof}
Below we will use a shifted version of $\gamma(\e,r_{\e})$. More precisely, given $x_0\in\Omega$, set
\begin{equation*}
\gamma_{x_0}(\e,r) := \min \Big\{ \frac{1}{\e^2}E_\e(v;B_r(x_0)): \, v \colon \e \ZZ^2 \cap B_r(x_0) \to \SS^1 , \, v(x) = \tfrac{x-x_0}{|x-x_0|} \text{ on } \de_\e B_r(x_0) \Big\} \, .
\end{equation*}
As shown in the lemma below the asymptotic behavior does not depend on $x_0$.
\begin{lemma}\label{lemma:nonzerodcore}
Let $\gamma\in\RR$ be given by Lemma \ref{lemma:core energy eps dependent} and let $\e\ll r_{\e}\leq C$. Then it holds that
\begin{equation*}
\lim_{\e \to 0}\left(\gamma_{x_0}(\e,r_{\e})-2\pi \big|\log\tfrac{\e}{r_{\e}} \big| \right)=\gamma\,.
\end{equation*}
\end{lemma}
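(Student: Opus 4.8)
The plan is to reduce the shifted problem $\gamma_{x_0}(\e,r_\e)$ to the unshifted one $\gamma(\e,r_\e)$ treated in Lemma~\ref{lemma:core energy eps dependent} by a translation argument, controlling the error coming from the fact that translating by $x_0$ does not in general map $\e\ZZ^2$ to itself. First I would observe that the problem is genuinely about the annular decay of the energy of the discretized $x/|x|$-vortex away from the core, so the discrepancy between centering at $0$ and at $x_0$ should only affect a thin annulus near $\de B_{r_\e}(x_0)$ where the boundary datum lives, plus possibly a bounded-energy region near the center.

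The key steps, in order, are as follows. Step 1: prove the upper bound $\gamma_{x_0}(\e,r_\e) \le \gamma(\e,r_\e') + \varrho_\e$ for a slightly smaller radius $r_\e' = r_\e - C\e$. Take an admissible minimizer $v_\e$ for $\gamma(\e,r_\e')$ (centered at $0$), shift it to get $v_\e(\cdot - x_0^\e)$ where $x_0^\e \in \e\ZZ^2$ is the lattice point closest to $x_0$ (so $|x_0^\e - x_0| \le \sqrt2\,\e$), and extend it between $\de_\e B_{r_\e'}(x_0^\e)$ and $\de_\e B_{r_\e}(x_0)$ by the discrete datum $\tfrac{x-x_0}{|x-x_0|}$. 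The extra interactions lie in an annulus $B_{r_\e}(x_0)\setminus B_{r_\e'-C\e}(x_0^\e)$ at distance $\sim r_\e \gg \e$ from both centers, so by the pointwise bound~\eqref{eq:estimatenocases} (together with~\eqref{eq:xmoxcontinuous} to compare the two centerings, using $|x_0^\e - x_0|\le\sqrt2\e \ll r_\e$) each finite difference is $O(\e^2/r_\e^2)$, and there are $O(r_\e/\e)$ of them, giving a total error $O(\e/r_\e) = \varrho_\e \to 0$. Combining with~\eqref{eq:new core energy} and $2\pi|\log\tfrac{\e}{r_\e'}| = 2\pi|\log\tfrac{\e}{r_\e}| + \varrho_\e$ yields $\limsup_\e(\gamma_{x_0}(\e,r_\e) - 2\pi|\log\tfrac{\e}{r_\e}|) \le \gamma$. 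Step 2: prove the matching lower bound $\liminf_\e(\gamma_{x_0}(\e,r_\e) - 2\pi|\log\tfrac{\e}{r_\e}|) \ge \gamma$ by the symmetric construction: take an admissible $w_\e$ for $\gamma_{x_0}(\e,r_\e)$, shift it by $-x_0^\e$ to (nearly) center it at the origin, glue on the discrete $x/|x|$ datum in a thin outer annulus to make it admissible for $\gamma(\e,r_\e'')$ with $r_\e'' = r_\e + C\e$, estimate the glued interactions exactly as above by $\varrho_\e$, and invoke Lemma~\ref{lemma:core energy eps dependent} again. Step 3: combine Steps 1 and 2 to conclude the claimed limit equals $\gamma$, independent of $x_0$ and of the sequence $r_\e$ (the latter being inherited from Lemma~\ref{lemma:core energy eps dependent}).

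The main obstacle I anticipate is purely bookkeeping: carefully handling the mismatch between the continuum center $x_0$ and its lattice approximation $x_0^\e$, and making sure the glued boundary layer is thick enough (a few lattice units) that the shifted interior configuration is genuinely admissible for the comparison problem at the adjusted radius, while thin enough that its energy contribution is $\varrho_\e$. The estimate~\eqref{eq:xmoxcontinuous} is tailor-made for this: it gives $\big|\tfrac{x-x_0}{|x-x_0|} - \tfrac{x-x_0^\e}{|x-x_0^\e|}\big| \le 2\tfrac{|x_0-x_0^\e|}{|x-x_0^\e|} \le \tfrac{C\e}{r_\e}$ on the relevant annulus, and squaring and summing over $O(r_\e^2/\e^2)$ lattice points in the annulus of width $O(\e)$ — i.e.\ $O(r_\e/\e)$ points — gives $O(\e/r_\e)$, which vanishes precisely because $\e \ll r_\e$. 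No new ideas beyond those in the proof of Lemma~\ref{lemma:core energy eps dependent} are needed; this is a stability-under-translation corollary of that result.
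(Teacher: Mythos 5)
Your proposal is correct and follows essentially the same route as the paper: shift a near-minimizer of the unshifted problem at a slightly adjusted radius by a lattice approximation of $x_0$, and control the resulting thin annular layer via \eqref{eq:xmoxcontinuous}, yielding an error $O(\e/r_\e)$ that vanishes since $\e\ll r_\e$. The paper carries out the upper bound in exactly this way and dismisses the lower bound as "a similar argument," which your symmetric gluing in Step 2 correctly supplies.
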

\begin{proof}
Consider a point $x_{\e}\in\e\ZZ^2$ such that $|x_0-x_{\e}|\leq 2\e$. Then, given a minimizer $v \colon \e\ZZ^2\cap B_{r_{\e}-4\e}\to\SS^1$  for the problem defining $\gamma(\e,r_{\e}-4\e)$ (extended via the boundary conditions on $\e\ZZ^2\setminus B_{r_{\e}-4\e})$ we define $\tilde{v}(\e i)=v(\e i-x_{\e})$. This function is admissible in the definition of $\gamma_{x_0}(\e, r_{\e})$. Hence
\begin{align*}
\gamma_{x_0}(\e,r_{\e})-2\pi\big|\log\tfrac{\e}{r_{\e}}\big|\leq \gamma(\e,r_{\e} - 4\e)-2\pi\big|\log\tfrac{\e}{r_{\e}}\big|+\frac{1}{2\e^2}\sum_{\substack{\nn\\\e i,\e j\in B_{r_{\e}+2\e}\\ \e i,\e j\notin B_{r_{\e}-5\e}}}\e^2 \left|\frac{\e i}{|\e i|}-\frac{\e j}{|\e j|}\right|^2.
\end{align*} 
The last sum can be bounded applying \eqref{eq:xmoxcontinuous} which leads to
\begin{equation*}
\frac{1}{2\e^2}\sum_{\substack{\nn\\\e i,\e j\in B_{r_{\e}+2\e}\\ \e i,\e j\notin B_{r_{\e}-5\e}}}\e^2 \left|\frac{\e i}{|\e i|}-\frac{\e j}{|\e j|}\right|^2\leq \frac{C}{r_{\e}^2} \sum_{\substack{\e i\e\ZZ^2\\\e i\in B_{r_{\e}+2\e}\\ \e i\notin B_{r_{\e}-5\e}}}\e^2\leq \frac{C}{r_{\e}^2}|B_{r_{\e}+4\e}\setminus B_{r_{\e}-7\e}|\leq C\frac{r_{\e}\e+\e^2}{r_{\e}^2}\,,
\end{equation*}
which vanishes due to the assumption that $\e\ll r_{\e}$. Thus we proved that
\begin{equation*}
\limsup_{\e \to 0}\left(\gamma_{x_0}(\e,r_{\e})-2\pi\big|\log\tfrac{\e}{r_{\e}}\big| \right) \leq\gamma\,.
\end{equation*}
The reverse inequality for the $\liminf$ can be proven by a similar argument.
\end{proof}

\subsection{Compactness and $\Gamma$-convergence}
We recall the compactness result and the $\Gamma$-liminf inequality obtained in~\cite[Theorem~4.2]{Ali-DL-Gar-Pon}.
We emphasize that these results also hold in our setting, regarding $u_\e \in \PC_\e(\S_\e)$ as a spin field $u_\e \colon \e \ZZ^2 \to \SS^1$, that means, neglecting the $\S_\e$ constraint. 

\begin{theorem}[Theorem~4.2 in \cite{Ali-DL-Gar-Pon}]\label{thm:alidlgarpon} The following results hold:
	\begin{enumerate}
		\item[i)] (Compactness) Let $M \in \NN$ and let $u_\e \colon \e \ZZ^2 \cap \Omega \to \SS^1$ be such that $\frac{1}{\e^2} E_\e(u_\e) - 2 \pi M |\log \e| \leq C$. 
		Then there exists a subsequence (which we do not relabel) such that $\mu_{u_\e} \flat \mu$ for some $\mu = \sum_{h=1}^{M'} d_h \delta_{x_h}$ with $|\mu|(\Omega) \leq M$. Moreover, if $|\mu|(\Omega) = M$, then $|d_h|=1$.
		\item[ii)] ($\Gamma$-liminf inequality) Let $u_\e \colon \e \ZZ^2 \cap \Omega \to \SS^1$ be such that $\mu_{u_\e} \flat \mu$ with $\mu = \sum_{h=1}^M d_h \delta_{x_h}$, $|d_h|=1$. Then
	\end{enumerate}
	\begin{equation*}
		\liminf_{\e \to 0} \Big[ \frac{1}{\e^2} E_\e(u_\e) - 2 \pi M |\log \e| \Big] \geq \WW(\mu) + M \gamma \, .
	\end{equation*}
		
\end{theorem}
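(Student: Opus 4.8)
The assertion is, word for word, \cite[Theorem~4.2]{Ali-DL-Gar-Pon}, so the plan is to recall why that result transfers to the present setting and how the constant $\gamma$ is to be read. The key observation is that on an admissible field $u_\e \in \PC_\e(\S_\e)$ one has $E_\e(u_\e) = XY_\e(u_\e)$, and that both i) and ii) involve only this energy and the vorticity measure $\mu_{u_\e}$; since the proof in \cite{Ali-DL-Gar-Pon} is carried out for unconstrained $u_\e \colon \e\ZZ^2 \to \SS^1$ and never uses the target set, it applies unchanged. Concretely, for i) I would observe that $\frac{1}{\e^2}E_\e(u_\e) - 2\pi M|\log\e| \le C$ gives $\frac{1}{\e^2|\log\e|}XY_\e(u_\e) \le 2\pi M + o(1)$, so Proposition~\ref{prop:XY classical} yields $\mu_{u_\e} \flat \mu = \sum_h d_h\delta_{x_h}$ with $2\pi|\mu|(\Omega) \le 2\pi M$, i.e.\ $|\mu|(\Omega) \le M$; the refinement ``$|\mu|(\Omega)=M \Rightarrow |d_h|=1$'' then follows from the sharp localized lower bound of \cite[Theorem~3.1]{Ali-DL-Gar-Pon}, since a limiting vortex of degree $|d|\ge 2$ must split into $|d|$ unit vortices collapsing onto $x_h$, whose mutual logarithmic interaction over a vanishing length scale forces the excess $\frac{1}{\e^2}E_\e(u_\e) - 2\pi M|\log\e|$ to diverge, contradicting the hypothesis. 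For ii), given $\mu_{u_\e} \flat \mu$ with $|d_h|=1$, I would split $\Omega$ into the balls $B_\rho(x_h)$ and the remainder $\Omega_\mu^\rho$: on $\Omega_\mu^\rho$ one obtains, after subtracting the logarithm, a lower bound by $\tilde m(\rho,\mu)$, which produces $\WW(\mu)$ as $\rho \to 0$ by \eqref{eq:renormalized energy as limit}; on each $B_\rho(x_h)$ the localized lower bound gives $2\pi\log(\rho/\e) + \gamma - o(1)$, contributing $M\gamma$; gluing the two estimates across the annulus $B_\rho(x_h)\setminus B_{\rho/2}(x_h)$ and letting $\rho \to 0$ yields $\WW(\mu) + M\gamma$.

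The only genuine point requiring care is the identification of the constant $\gamma$: in \cite{Ali-DL-Gar-Pon} it is defined through a minimization on a ball of fixed radius (with $r\to 0$ taken afterwards) and with a slightly different notion of discrete boundary, whereas here $\gamma$ is the constant furnished by Lemma~\ref{lemma:core energy eps dependent}. I would reconcile the two by invoking Lemma~\ref{lemma:core energy eps dependent} itself, which shows that $\lim_{\e\to 0}[\gamma(\e,r_\e) - 2\pi|\log(\e/r_\e)|]$ does not depend on the admissible choice $\e \ll r_\e \le C$; in particular, the choice $r_\e \equiv r$ followed by $r\to 0$ reproduces the value used in \cite{Ali-DL-Gar-Pon}. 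The difference in the definition of the discrete boundary affects only a layer of $O(\e^{-1})$ lattice sites, whose energy for the comparison field $x/|x|$ is $o(1)$ by the elementary estimate \eqref{eq:xmoxcontinuous}, exactly as in the proof of Lemma~\ref{lemma:nonzerodcore}. Hence the $\gamma$ of Lemma~\ref{lemma:core energy eps dependent} is precisely the one that appears in the statement.

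The step I would expect to be the real obstacle, were one to reprove ii) from scratch rather than cite it, is the two-term localized lower bound $\frac{1}{\e^2}XY_\e(u_\e;B_\rho(x_h)) \ge 2\pi\log(\rho/\e) + \gamma - o(1)$ for a unit-degree vortex, together with the matching of this core contribution with the far-field renormalized energy across the intermediate annulus. This is a ball-construction-with-core-radius argument, and it is exactly the content of \cite[Theorems~3.1 and~4.1]{Ali-DL-Gar-Pon} combined with Lemma~\ref{lemma:core energy eps dependent} above; for the purposes of the present paper it may therefore be quoted directly, the only new ingredients being the bookkeeping of the $\S_\e$-constraint (irrelevant, as noted) and the matching of the two definitions of $\gamma$ (handled in the previous paragraph).
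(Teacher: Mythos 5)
Your proposal matches the paper's treatment: the paper gives no proof of this statement, but simply quotes \cite[Theorem~4.2]{Ali-DL-Gar-Pon} and remarks that the $\S_\e$-constraint is irrelevant since any admissible field is in particular $\SS^1$-valued, which is exactly your main point. Your additional reconciliation of the two definitions of the core energy $\gamma$ (via Lemma~\ref{lemma:core energy eps dependent} with $r_\e\equiv r$ and the $o(1)$ boundary-layer estimate) is correct and makes explicit a step the paper only signals in the remark preceding that lemma.
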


For the construction of the recovery sequence our arguments slightly differ from the proof of~\cite[Theorem~4.2]{Ali-DL-Gar-Pon}. For the reader's convenience we give here the detailed proof, which together with Theorem~\ref{thm:alidlgarpon} establishes Theorem~\ref{thm:theta smaller eps}.

\begin{proposition}[$\Gamma$-limsup inequality]
	Let $\mu = \sum_{h=1}^M d_h \delta_{x_h}$ with $|d_h| = 1$. Then there exists a sequence $u_\e \colon \e \ZZ^2 \cap \Omega \to \S_\e$ with $\mu_{u_\e} \flat \mu$ such that 
	\begin{equation} \label{eq:recovery in theta less eps}
		\limsup_{\e \to 0} \Big[ \frac{1}{\e^2}E_\e(u_\e) - 2 \pi M |\log \e|\Big] \leq \WW(\mu) + M \gamma \, .
	\end{equation}
\end{proposition}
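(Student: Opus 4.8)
The plan is to construct the recovery sequence by combining, near each vortex $x_h$, an almost-optimal configuration for the discrete core-energy problem $\gamma_{x_h}(\e,r_\e)$ from Lemma~\ref{lemma:nonzerodcore}, with a continuum almost-minimizer of the renormalized-energy problem $\tilde m(\eta,\mu)$ from~\eqref{eq:min for renormalized energy} away from the vortices, and finally to project everything onto $\S_\e$ via the map $\mathfrak{P}_\e$ defined in~\eqref{eq:defproj}, checking that the projection error is negligible since $\theta_\e \ll \e$. First I would fix a small radius $\eta>0$ so that the balls $\ol B_\eta(x_h)$ are pairwise disjoint and contained in $\Omega$; pick $w^\eta$ admissible for $\tilde m(\eta,\mu)$ with $\int_{\Omega_\mu^\eta}|\nabla w^\eta|^2\,\d x \le \tilde m(\eta,\mu)+\sigma$, so that $w^\eta = \alpha_h\odot\big(\tfrac{x-x_h}{|x-x_h|}\big)^{d_h}$ on $\de B_\eta(x_h)$; and pick, for a sequence of radii $r_\e$ with $\e\ll r_\e\ll \eta$ (say $r_\e = \sqrt\e$), discrete maps $v_\e^h$ admissible for $\gamma_{x_h}(\e,r_\e)$ with $\tfrac{1}{\e^2}E_\e(v_\e^h;B_{r_\e}(x_h)) = \gamma_{x_h}(\e,r_\e)$. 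On the intermediate annulus $B_\eta(x_h)\sm B_{r_\e}(x_h)$ one interpolates between $\tfrac{x-x_h}{|x-x_h|}^{d_h}$ on the inner circle and $w^\eta$ on the outer circle: since these agree up to the constant phase $\alpha_h$, a linear interpolation of the lifted phase on an annulus of fixed width absorbs the phase mismatch with energy $O(\eta)$ uniformly, and the sampling of the smooth map $w^\eta$ on $\Omega_\mu^\eta$ contributes $\int_{\Omega_\mu^\eta}|\nabla w^\eta|^2\,\d x + o(1)$ by the standard discrete-to-continuum comparison for finite differences of $C^1$ maps (using $|u(\e i)-u(\e j)|\le \e\|\nabla u\|_\infty$ on each bond).

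Next I would assemble the global field $\hat u_\e$ on $\e\ZZ^2\cap\Omega$ from these three pieces (glued across the relevant discrete circles, which is possible because adjacent pieces share boundary values on $\de_\e$-layers) and estimate
\begin{equation*}
\frac{1}{\e^2}E_\e(\hat u_\e) \le \sum_{h=1}^M \gamma_{x_h}(\e,r_\e) + \int_{\Omega_\mu^\eta}|\nabla w^\eta|^2\,\d x + o(1) + C\eta \, .
\end{equation*}
By Lemma~\ref{lemma:nonzerodcore}, $\gamma_{x_h}(\e,r_\e) = 2\pi\big|\log\tfrac{\e}{r_\e}\big| + \gamma + o(1) = 2\pi|\log\e| - 2\pi|\log r_\e| + \gamma + o(1)$, so summing over the $M$ vortices and using $\int_{\Omega_\mu^\eta}|\nabla w^\eta|^2\,\d x \le \tilde m(\eta,\mu)+\sigma$ and $\tilde m(\eta,\mu) = \WW(\mu) + 2\pi M|\log\eta| + o_\eta(1)$ from~\eqref{eq:renormalized energy as limit}, the $-2\pi|\log r_\e|$ terms must be matched; here one sees that the annular interpolation on $B_\eta(x_h)\sm B_{r_\e}(x_h)$ of the pure vortex profile $\tfrac{x-x_h}{|x-x_h|}^{d_h}$ contributes precisely $2\pi\log\tfrac{\eta}{r_\e} + O(\eta)$, cancelling the $-2\pi|\log r_\e|$ against a $-2\pi|\log\eta|$, so that altogether
\begin{equation*}
\frac{1}{\e^2}E_\e(\hat u_\e) - 2\pi M|\log\e| \le \WW(\mu) + M\gamma + C\eta + \sigma + o_\eta(1) + o(1) \, .
\end{equation*}
Then I would project: set $u_\e := \mathfrak{P}_\e(\hat u_\e)$. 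Since $|\mathfrak{P}_\e(z) - z|\le \theta_\e$ for every $z\in\SS^1$, for each bond $|u_\e(\e i)-u_\e(\e j)|\le |\hat u_\e(\e i)-\hat u_\e(\e j)| + 2\theta_\e$, hence $E_\e(u_\e)\le E_\e(\hat u_\e) + C\theta_\e\big(\e^2 N_\e + (E_\e(\hat u_\e))^{1/2}(\e^2 N_\e)^{1/2}\big)$ where $\e^2 N_\e \le C$ is $\e^2$ times the number of bonds (so $\le C|\Omega|$); since $E_\e(\hat u_\e)\le C\e^2|\log\e|$ and $\theta_\e\ll\e$, the correction is $o(\e^2)$ — in fact $o(\e^2|\log\e|^{1/2})$ — and thus negligible after dividing by $\e^2$. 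It remains to verify the topological constraint $\mu_{u_\e}\flat\mu$: the energy bound $\tfrac{1}{\e^2}E_\e(u_\e)-2\pi M|\log\e|\le C$ gives by Theorem~\ref{thm:alidlgarpon}-i) a subsequential flat limit $\mu' = \sum d_h'\delta_{x_h'}$ with $|\mu'|(\Omega)\le M$; but by construction $u_\e$ (hence $\mathfrak{P}_\e(u_\e) = u_\e$) carries degree $d_h$ around a small circle about each $x_h$ — more precisely, since on $B_{\eta/2}(x_h)\sm B_{r_\e}(x_h)$ the field $\hat u_\e$ is the discretized pure vortex and $\mathfrak{P}_\e$ preserves degrees of well-resolved maps when $\theta_\e\ll\e$ (arguing as in Step~10 of the proof of Theorem~\ref{thm:theta equal e log}-iii), via~\cite[Proposition 5.2]{Ali-Cic-Pon} and the convergence of normalized Jacobians) — one gets $\mu'\mres B_{\eta/16}(x_h) = d_h\delta_{x_h}$, forcing $|\mu'|(\Omega)\ge M$, whence $\mu' = \mu$ and the whole sequence converges.

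The main obstacle I expect is the bookkeeping of the logarithmic constants across the three regions so that the $-2\pi|\log r_\e|$ produced by $\gamma_{x_h}(\e,r_\e)$ is exactly absorbed: one must ensure the annular interpolation between $B_{r_\e}(x_h)$ and $B_\eta(x_h)$ is built from the \emph{exact} vortex profile $\tfrac{x-x_h}{|x-x_h|}^{d_h}$ (not a perturbation of it) so its Dirichlet energy is $2\pi\log\tfrac{\eta}{r_\e}+O(\eta)$ with no hidden divergence, and that the phase-matching with $\alpha_h$ only costs $O(\eta)$; the remaining continuum piece on $\Omega_\mu^\eta$ is then handled by the near-optimality of $w^\eta$ and the elementary finite-difference estimate. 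A secondary technical point is making the discrete gluing across the circles $\de_\e B_{r_\e}(x_h)$ and $\de_\e B_\eta(x_h)$ rigorous (matching lattice boundary layers), which is routine but must be stated; and checking that after the two diagonal passages $\sigma\to 0$ and then $\eta\to 0$ one recovers the sharp bound $\limsup(\tfrac{1}{\e^2}E_\e(u_\e)-2\pi M|\log\e|)\le\WW(\mu)+M\gamma$ with the constraint $\mu_{u_\e}\flat\mu$ preserved — this uses that $\mu$ is fixed and the degree-localization argument above is uniform in $\eta$ small.
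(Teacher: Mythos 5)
Your architecture coincides with the paper's (core minimizer near each $x_h$, exact vortex profile on an intermediate annulus, near-minimizer $w^\eta$ of $\tilde m(\eta,\mu)$ outside, projection $\mathfrak{P}_\e$, and identification of the vorticity via Jacobians), but the step where you dismiss the projection error contains a genuine gap. A single global Cauchy--Schwarz gives, after dividing by $\e^2$, a remainder of order $\tfrac{\theta_\e}{\e}|\log \e|^{1/2}$ — exactly your own intermediate bound $o(\e^2|\log\e|^{1/2})$ before division — and this does \emph{not} tend to zero under the sole hypothesis $\theta_\e \ll \e$ (take $\theta_\e = \e |\log\e|^{-1/4}$). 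The assertion that the correction is ``$o(\e^2)$ and thus negligible'' therefore does not follow. The paper's proof is forced to split the cross term $\sum_{\nn}\theta_\e|\tilde u_\e(\e i)-\tilde u_\e(\e j)|$ over the three regions: in $B_{r_\e}(x_h)$ Cauchy--Schwarz works because the number of bonds is only $(r_\e/\e)^2$; in $\Omega_\mu^{\eta'}$ it works because the energy there is $O(1)$; but in the annulus $B_{\eta'}(x_h)\sm B_{r_\e}(x_h)$, where both the bond count and the energy are large, Cauchy--Schwarz is explicitly ``too rough'' and is replaced by the pointwise decay $|\tilde u_\e(\e i)-\tilde u_\e(\e j)|\leq C\e/|\e i - x_h|$ of the vortex profile, yielding $C\tfrac{\theta_\e}{\e}\int_{A}\tfrac{1}{|x|}\d x \leq C\tfrac{\theta_\e}{\e}\eta' \to 0$ (cf.~\eqref{eq:2711181209}). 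Without this region-by-region argument the recovery estimate fails in part of the regime $\theta_\e\ll\e$.

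Two further points. First, your phase-matching: a \emph{linear} interpolation of the lifted phase across an annulus of width $\lesssim\eta$ sitting at radius $\sim\eta$ costs $\int|\nabla\phi|^2 \sim c^2$, i.e.\ $O(1)$, not $O(\eta)$; an $O(1)$ error uniform in $\eta$ would survive the final diagonal limit $\eta\to 0$ and ruin the sharp constant. The clean fix (the one the paper uses) is to rotate the core competitor by $\alpha_h^\eta$, which costs nothing since $E_\e$ is invariant under $v\mapsto\alpha\odot v$; alternatively a logarithmic interpolation over an annulus with diverging radius ratio would give $o(1)$, but the linear/``fixed width'' claim as stated is wrong. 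Second, the ``standard finite-difference comparison for $C^1$ maps'' does not apply to $w^\eta$ near $\de\Omega$, where it is only $W^{1,2}$ up to the Lipschitz boundary (free boundary condition in~\eqref{eq:min for renormalized energy}); the paper handles this by extending $w^\eta$ across $\de\Omega$ by reflection and sampling on an averaged shifted lattice \`a la van Schaftingen. Your logarithmic bookkeeping and the Jacobian-localization argument forcing $\mu_{u_\e}\flat\mu$ are otherwise in line with the paper.
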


\begin{proof}
To avoid confusion among infinitesimal sequences, in this proof we denote by $\varrho_\e$ a sequence, which may change from line to line, such that $\varrho_\e \to 0$ when $\e \to 0$.

\medskip
\ul{Step 1} (Construction of the recovery sequence)

\noindent Let us fix $0 < \eta' < \eta < 1$ with $\eta$ small enough such that the balls $\ol B_\eta(x_h)$ are pairwise disjoint and their union is contained in $\Omega$. We denote by~$w^\eta$ a solution to the minimum problem~\eqref{eq:min for renormalized energy} in~$\Omega_\mu^\eta$. Then for $h=1,\dots,M$ there exists $\alpha^\eta_h \in \CC$ with $|\alpha^\eta_h| = 1$ such that $w^\eta(x) = \alpha^\eta_h \odot \big(\tfrac{x - x_h}{|x - x_h|}\big)^{d_h}$ for $x \in \de B_\eta(x_h)$. Extend $w^{\eta}$ to $\Omega_\mu^{\eta'/2}$ by $w^\eta(x) := \alpha^\eta_h \odot \big(\tfrac{x - x_h}{|x - x_h|}\big)^{d_h}$ for $x \in B_{\eta}(x_h)\setminus \ol{B}_{\eta'/2}(x_h)$. To reduce notation, we set $A^\eta_{\eta'}(x_h) := B_\eta(x_h) \sm \ol B_{\eta'}(x_h)$. 
The extension $w^{\eta}$ then belongs to $W^{1,2}(\Omega_{\mu}^{\eta'};\SS^1)$ and its Dirichlet energy is given by
\begin{equation} \label{eq:2311181024}
	\integral{\Omega_\mu^{\eta'}}{|\nabla w^{\eta}|^2}{\d x} = \integral{\Omega_\mu^{\eta}}{|\nabla w^{\eta}|^2}{\d x} + \sum_{h=1}^M \integral{A^\eta_{\eta'}(x_h)}{\frac{1}{|x-x_h|^2}}{\d x}
	= \tilde m(\eta,\mu) + 2 \pi M \log \tfrac{\eta}{\eta'}\, .
\end{equation} 
Set $r_\e = |\log \e|^{-\frac{1}{2}} \gg \e$ and let $\tilde u_\e \colon \e \ZZ^2 \cap B_{r_\e}(x_h) \to \SS^1$ be a function that agrees with $x\mapsto\alpha^\eta_h \odot \big(\tfrac{x - x_h}{|x - x_h|}\big)^{d_h}$ on $\de_\e B_{r_\e}(x_h)$ and such that, cf. Lemmata~\ref{lemma:core energy eps dependent} and~\ref{lemma:nonzerodcore},
\begin{equation} \label{eq:energy is gamma}
	\frac{1}{\e^2} E_\e(\tilde u_\e; B_{r_\e}(x_h)) = \gamma_{x_h}(\e,r_\e) = 2 \pi \log \tfrac{r_\e}{\e} + \gamma + \varrho_\e  \leq 2 \pi |\log \e | + \gamma + \varrho_\e  \, .
\end{equation}
We now extend $\tilde u_\e$ to $\e \ZZ^2 \cap \Omega$ distinguishing two cases: we set 
\begin{equation} \label{eq:def of ol u in annulus}
	\tilde u_\e (\e i) := \alpha^\eta_h \odot \big(\tfrac{\e i - x_h}{|\e i - x_h|}\big)^{d_h}  \quad \text{if } \e i \in \ol B_{\eta'}(x_h) \sm B_{r_\e}(x_h) \, ;
\end{equation}
on $\e\ZZ^2\cap \Omega_{\mu}^{\eta'}$ the definition is more involved since $w^{\eta}$ has only Sobolev regularity up the (Lipschitz)-boundary and we are not aware of any density results preserving the traces on part of the boundary and the $\SS^1$-constraint. First we need to extend $w^{\eta}$ to $\tilde{\Omega}_{\mu}^{\eta}$ for some open set $\tilde{\Omega}\supset\Omega$ with Lipschitz boundary. This can be achieved via a local reflection as in \eqref{eq:extensionbyreflection}, so that we may assume from now on that $w^{\eta}\in W^{1,2}(\tilde\Omega_{\mu}^{\eta'};\SS^1)$. We further extend it to $\RR^2$ with compact support (neglecting the $\SS^1$ constraint outside $\tilde \Omega_{\mu}^{\eta'/2}$). Now let us define the discrete approximation of this extended $w^{\eta}$. Consider the shifted lattice $\ZZ_{\e}^{x}=x+\e\ZZ^2$ with $x\in B_{\e}$ and denote by $\widehat{w}_{\e,x}^{\eta}\in W^{1,2}(\RR^2;\RR^2)$ the piecewise affine interpolation of (the quasicontinuous representative of) $w^{\eta}$ on a standard triangulation associated to $\ZZ_{\e}^{x}$. As shown in the proof of \cite[Theorem 1]{vSc} there exists $x_{\e}\in B_{\e}$ such that $\widehat{w}_{\e,x_{\e}}^{\eta}\to w^{\eta}$ strongly in $W^{1,2}(\RR^2;\RR^2)$ (the proof is given in the scalar-case, but the argument also works component-wise; see also \cite[Section 3.1]{vSc}). Thus it is natural to define $\tilde{u}_{\e} \colon \e\ZZ^2\cap \Omega_{\mu}^{\eta'}\to\SS^1$ by 
\begin{equation}\label{eq:shiftaffine}
\tilde{u}_{\e}(\e i)=w^{\eta}_{\e,x_{\e}}(\e i+x_{\e})\,.
\end{equation}
Observe that since $w^{\eta}$ is defined on $\tilde{\Omega}_{\mu}^{\eta'/2}$ with values in $\SS^1$, for $\e$ small enough $\tilde{u}_{\e}$ is indeed $\SS^1$-valued. Moreover, the strong convergence of the affine interpolations ensures that
\begin{equation}\label{eq:finitedifferenceconv}
\frac{1}{2}\sum_{\substack{\nn\\ \e i,\e j\in\Omega_{\mu}^{\eta'}}}\e^2\left|\frac{\tilde{u}_{\e}(\e i)-\tilde{u}_{\e}(\e j)}{\e}\right|^2\leq \sum_{\substack{T \text{ triangle}\\ \e T\cap\Omega_{\mu}^{\eta'}\neq\emptyset }}\integral{\e T+x_{\e}}{|\nabla \widehat{w}^{\eta}_{\e,x_{\e}}|^2}{\mathrm{d}x}\leq\integral{\Omega_{\mu}^{\eta'}}{|\nabla w^{\eta}|}{\mathrm{d}x}+\varrho_{\e}\,.
\end{equation}
Finally, we define the global sequence $u_\e := \mathfrak{P}_\e(\tilde u_\e) \colon \e \ZZ^2 \cap \Omega \to \S_\e$ with the function $\mathfrak{P}_{\e}$ given by \eqref{eq:defproj}. Note that the piecewise constant maps $u_\e$ and $\tilde u_\e$ actually depend on $\eta'$ and $\eta$. For the following computations however we drop the dependence on these parameters to simplify notation.
	
We start estimating the error in energy due to the projection $\mathfrak{P}_\e$. To this end, we use the elementary inequality
\begin{equation}\label{eq:elementary}
|a|^2-|b|^2\leq |a-b|(|a|+|b|)\leq 2|b||a-b|+|a-b|^2,
\end{equation}
which yields that 
\begin{equation} \label{eq:error of projection}
\frac{1}{\e^2}E_\e(u_\e)  = \frac{1}{2}  \sum_{\substack{\langle i, j \rangle}} \big|u_\e(\e i) - u_\e(\e j) \big|^2 
\leq \frac{1}{\e^2}E_\e(\tilde u_\e ) + C |\Omega| \frac{\theta_\e^2}{\e^2} + 2  \sum_{\substack{\langle i, j \rangle }} \theta_\e \big|\tilde u_\e(\e i) - \tilde u_\e(\e j) \big| \, .
\end{equation}
We shall prove that $\frac{1}{\e^2}E_\e(\tilde u_\e )$ carries the whole energy, that means,
\begin{equation} \label{eq:discretization gives renormalized}
\limsup_{\e \to 0} \Big[ \frac{1}{\e^2}E_\e(\tilde u_\e ) - 2 \pi M |\log \e| \Big] \leq \WW(\mu) + M \gamma  \, ,
\end{equation}
whereas the remainder satisfies
\begin{equation} \label{eq:rest is vanishing}
\lim_{\e \to 0} \sum_{\substack{\langle i, j \rangle }} \theta_\e \big|\tilde u_\e(\e i) - \tilde u_\e(\e j) \big| = 0 \, .
\end{equation}
Inequalities~\eqref{eq:error of projection}, \eqref{eq:discretization gives renormalized}, and~\eqref{eq:rest is vanishing} then yield~\eqref{eq:recovery in theta less eps} thanks to the assumption~\eqref{eq:theta less eps}.

In order to prove both~\eqref{eq:discretization gives renormalized} and~\eqref{eq:rest is vanishing}, we estimate separately the contribution of the energy and that of the remainder in the regions $B_{ r_\e}(x_h)$, $\Omega^{\eta'}_\mu$, and $(B_{\eta'+\e}(x_h) \sm B_{r_\e-\e}(x_h) )$. We remark that this decomposition of $\e \ZZ^2 \cap \Omega$ takes into account all the nearest-neighbors interactions.
 
\medskip
\ul{Step 2} (Estimates close to the singularities)

\noindent Let us start with the estimates inside $B_{ r_\e}(x_h)$. 
Notice that~\eqref{eq:energy is gamma} already gives explicitly the value of $\frac{1}{\e^2}E_\e(\tilde u_\e; B_{r_\e})$, so we only have to estimate the remainder in $B_{ r_\e}(x_h)$.
Combining the Cauchy-Schwarz inequality with~\eqref{eq:energy is gamma}, ~\eqref{eq:theta less eps}, and taking into account that $r_\e = |\log \e|^{-\frac{1}{2}}$, we obtain that 
\begin{equation} \label{eq:2211182222}
\begin{split}
	\sum_{\substack{\langle i, j \rangle \\ \e i, \e j \in B_{r_\e}(x_h)}} \hspace{-1.3em} \theta_\e \big|\tilde u_\e(\e i) -  \tilde u_\e(\e j) \big| & \leq \bigg( \sum_{\substack{\langle i, j \rangle \\ \e i, \e j \in B_{r_\e}(x_h)}}  \hspace{-1.2em} \theta_\e^2  \bigg)^{\!\!\frac{1}{2}} \bigg( \sum_{\substack{\langle i, j \rangle \\ \e i, \e j \in B_{r_\e}(x_h)}} \hspace{-1.2em} \big| \tilde u_\e(\e i) - \tilde u_\e(\e j) \big|^2 \bigg)^{\!\!\frac{1}{2}} \\
	& \leq C r_\e \frac{\theta_\e}{\e} \Big( \frac{1}{\e^2}E_\e(\tilde u_\e; B_{r_\e}(x_h)) \Big)^{\! \frac{1}{2}}  \\
	& \leq  C r_\e \frac{\theta_\e}{\e} \big( 2\pi |\log \e |   + \gamma + \varrho_\e \big)^{\! \frac{1}{2}} \to 0  \quad \text{as } \e \to 0\, .
\end{split}
\end{equation}

\ul{Step 3} (Estimates in the perforated domain)
	
\noindent We go on with the estimates inside $\Omega^{\eta'}_\mu$. In this set the function $\tilde{u}_{\e}$ is given by \eqref{eq:shiftaffine}. 
In particular, by~\eqref{eq:2311181024} and~\eqref{eq:finitedifferenceconv},
\begin{equation} \label{eq:2511181755}
\frac{1}{\e^2}E_\e(\tilde u_\e;\Omega^{\eta'}_\mu) \leq \integral{\Omega^{\eta'}_\mu}{|\nabla w^{\eta}|^2}{\d x} + \varrho_\e = \tilde m(\eta,\mu) + 2 \pi M \log \tfrac{\eta}{\eta'}+ \varrho_\e \, .
\end{equation} 
Concerning the remainder, the Cauchy-Schwarz inequality, \eqref{eq:2511181755}, and~\eqref{eq:theta less eps} imply that 
\begin{equation} \label{eq:2711181519}
	\begin{split}
	\sum_{\substack{\langle i, j \rangle \\ \e i, \e j \in \Omega^{\eta'}_\mu}} \hspace{-0.5em} \theta_\e \big|\tilde u_\e(\e i) -  \tilde u_\e(\e j) \big| & \leq \bigg( \sum_{\substack{\langle i, j \rangle \\ \e i, \e j \in \Omega^{\eta'}_\mu}}  \hspace{-0.5em} \theta_\e^2  \bigg)^{\!\!\frac{1}{2}} \bigg( \sum_{\substack{\langle i, j \rangle \\ \e i, \e j \in \Omega^{\eta'}_\mu}} \hspace{-0.5em} \big| \tilde u_\e(\e i) - \tilde u_\e(\e j) \big|^2 \bigg)^{\!\!\frac{1}{2}} \\
	& \leq C |\Omega|^\frac{1}{2} \frac{\theta_\e}{\e} \Big( \frac{1}{\e^2}E_\e(\tilde u_\e; \Omega^{\eta'}_\mu) \Big)^{\! \frac{1}{2}}  \\
	& \leq  C |\Omega|^\frac{1}{2} \frac{\theta_\e}{\e} \big( \tilde m(\eta,\mu) + 2\pi M \log\tfrac{\eta}{\eta'} + \varrho_\e \big)^{\! \frac{1}{2}} \to 0 \quad \text{as } \e \to 0\,.
\end{split}
\end{equation}

\ul{Step 4} (Estimates in the annulus)

\noindent Finally, we need to estimate the energy $\frac{1}{\e^2}E_\e$ in the set $( B_{\eta'+\e}(x_h) \sm  B_{r_\e-\e}(x_h) )$, where $\tilde u_\e(x) = \alpha^\eta_h \odot \big(\tfrac{x-x_h}{|x-x_h|}\big)^{d_h}$ (or slightly shifted in $B_{\eta'+\e}(x_h)\setminus \ol B_{\eta'}(x_h)$ due to \eqref{eq:shiftaffine}, for which we recall that $w^{\eta}=\alpha^\eta_h \odot \big(\tfrac{x-x_h}{|x-x_h|}\big)^{d_h}$ in $B_{\eta}(x_h)\setminus B_{\eta'/2}(x_h)$). To simplify notation, for $R>r>0$ we denote in this step $A^{R}_{r}(x):=B_{R}(x)\setminus \ol{B}_{r}(x)$. 

We first estimate the contribution involving the shifted function. For any $\e i,\e j$ with $|i-j|=1$ and $\e i\in A^{\eta'+\e}_{\eta'}(x_h)$ the condition $|x_{\e}|\leq\e$ and \eqref{eq:xmoxcontinuous} imply that
\begin{equation*}
|\tilde{u}_{\e}(\e i)-\tilde{u}_{\e}(\e j)|\leq \frac{4 \e}{\eta' -\e}\,.
\end{equation*}
Summing these estimate we deduce that
\begin{equation*}
\frac{1}{\e^2}\hspace{-1em}\sum_{\substack{\nn\\ \e i\in A^{\eta'+\e}_{\eta'}(x_h)}}\hspace{-1em}\e^2|\tilde{u}_{\e}(\e i)-\tilde{u}_{\e}(\e j)|^2\leq C\frac{(\eta'+3\e)^2-(\eta'-2\e)^2}{(\eta' -\e)^2}\leq C\frac{\e\eta'+\e^2}{(\eta' -\e)^2}\to 0\quad\text{ as }\e\to 0\,.
\end{equation*}
Hence we can write 
\begin{align}\label{eq:neglectshift}
\frac{1}{\e^2}E_{\e}(\tilde{u}_{\e}; A^{\eta'+\e}_{r_{\e}-\e}(x_h))\leq& \frac{1}{\e^2}E_{\e}(\tilde{u}_{\e};\ol{A}^{\eta'}_{r_{\e}-\e}(x_h))+\varrho_{\e}\,.
\end{align} 
Note that on the set $\e\ZZ^2\cap\ol{A}^{\eta'}_{r_{\e}-\e}(x_h)$ the function $\tilde{u}_{\e}$ coincides with $x\mapsto \alpha^\eta_h \odot \big(\tfrac{x-x_h}{|x-x_h|}\big)^{d_h}$, so that the invariance of the discrete energy under orthogonal transformations implies that
\begin{equation*}
\frac{1}{\e^2}E_{\e}(\tilde{u}_{\e};\ol{A}^{\eta'}_{r_{\e}-\e}(x_h))=\frac{1}{\e^2}\hspace{-1em}\sum_{\substack{\nn\\ \e i,\e j\in \ol{A}^{\eta'}_{r_{\e}-\e}(x_h)}}\hspace{-1em}\e^2\left|\frac{\e i-x_h}{|\e i-x_h|}-\frac{\e j-x_h}{|\e j-x_h|}\right|^2.
\end{equation*}
Using a shifted version of the trimmed quadrants defined in~\eqref{eq:trimmed quadrants} and summing over all possible pairs of signs $s \in \{(+,+),(-,+),(+,-),(-,-)\}$, we can split the energy as 
\begin{equation}\label{eq:2711181115}
	\begin{split}
	\frac{1}{\e^2} E_\e(\tilde u_\e; \ol{A}^{\eta'}_{r_\e-\e}(x_h)) \leq \sum_{s}\frac{1}{\e^2} E_\e(\tilde u_\e; (\e \mathcal{Q}^s_3+x_h) \cap \ol{A}^{\eta'}_{r_\e-\e}(x_h) ) + C \sum_{k = \big\lfloor\tfrac{r_{\e}}{\e}\big\rfloor -2 }^{\infty} \frac{1}{k^2} \, ,
	\end{split}
\end{equation}
where we used the bound \eqref{eq:xmoxcontinuous} to estimate the contributions not fully contained in one of the trimmed quadrants by the last sum. Since the sum $\sum_{k = 1}^{+\infty} k^{-2}$ is finite and $\frac{r_\e}{\e} \to +\infty$, the second term in the right-hand side is infinitesimal as $\e \to 0$. On each trimmed quadrant we use a shifted version of~\eqref{eq:estimatecases} and a monotonicity argument as in \eqref{eq:trickytrimmer} to deduce that 
\begin{equation*}
	\frac{1}{\e^2} E_\e(\tilde u_\e; (\e \mathcal{Q}^s_3+x_h) \cap \ol{A}^{\eta'}_{r_\e-\e}(x_h) ) \leq \sum_{\substack{ \e i \in \ol{A}^{\eta'}_{r_\e-\e}(x_h)  \\ \e i \in \e\ZZ^2 \cap (\mathcal{Q}^s_3+x_h) }}   \e^2 \frac{1}{|\e i-x_h|^2} \leq \integral{\e \mathcal{Q}^s_2  \cap A^{\eta'}_{r_\e-3\e}}{\frac{1}{|x|^2}}{\d x} \, ,
\end{equation*}
with the annulus $A^{\eta'}_{r_\e-3\e} = B_{\eta'} \sm \ol B_{r_\e - 3\e}$ centered at 0. Summing over all four quadrants, since $\e \ll r_\e$, we get that
\begin{equation*} 
	\sum_s \frac{1}{\e^2} E_\e(\tilde u_\e; (\e \mathcal{Q}^s_3+x_h) \cap \ol{A}^{\eta'}_{r_\e-\e}(x_h) ) \leq \integral{A^{\eta'}_{r_\e - 3\e}}{\frac{1}{|x|^2}}{\d x} = 2\pi \log \tfrac{\eta'}{r_\e -3 \e} = 2\pi \log \tfrac{\eta'}{r_\e} + \varrho_\e \, .
\end{equation*} 
In combination with~\eqref{eq:neglectshift} and~\eqref{eq:2711181115} we conclude that
\begin{equation}\label{eq:2711181451}
\frac{1}{\e^2}E_{\e}(\tilde{u}_{\e}; A^{\eta'+\e}_{r_{\e}-\e}(x_h))\leq 2\pi \log \tfrac{\eta'}{r_\e} + \varrho_\e\,. 
\end{equation}
We now estimate the remainder term in $A^{\eta'+\e}_{r_{\e}-\e}(x_h)$, for which applying the Cauchy-Schwarz inequality as in~\eqref{eq:2211182222} is too rough.  However, note that for any $i\in\ZZ^2$ and $x\in \e i+[0,\e)^2$ with $|\e i-x_h|\gg \e$, we have for $\e$ small enough that
\begin{equation*}
|\e i-x_h|\geq |x-x_h|-\sqrt{2}\e\geq \frac{1}{2}|x-x_h|\,.
\end{equation*}
Hence, using \eqref{eq:xmoxcontinuous} and a change of variables we obtain that 
\begin{align} \label{eq:2711181209}
\sum_{\substack{\langle i, j \rangle \\ \e i, \e j \in A^{\eta'+\e}_{r_\e-\e}(x_h) }} \hspace{-1em}\theta_\e \big|\tilde u_\e(\e i) - \tilde u_\e(\e j) \big|& \leq C\hspace{-1em}\sum_{\e i\in\e\ZZ^2\cap A^{\eta'+\e}_{r_{\e-\e}}(x_h)}\hspace{-1em}\theta_{\e}\frac{\e}{|\e i-x_h|}\leq C\frac{\theta_{\e}}{\e}\integral{A_{r_{\e}-3\e}^{\eta'+3\e}}{\frac{1}{|x|}}{\mathrm{d}x}  \, .
\end{align}
Since the last integral is proportional to $\eta'$, inserting assumption~\eqref{eq:theta less eps} shows that the right-hand side of~\eqref{eq:2711181209} is infinitesimal as $\e \to 0$.

\medskip  

\ul{Step 5} (Proof of~\eqref{eq:discretization gives renormalized} and~\eqref{eq:rest is vanishing} and conclusion)
To prove~\eqref{eq:discretization gives renormalized}, we employ~\eqref{eq:energy is gamma}, \eqref{eq:2511181755},  and~\eqref{eq:2711181451} to split the energy as follows
\begin{equation*}
	\begin{split}
	\frac{1}{\e^2} E_\e(\tilde u_\e) - 2 \pi M |\log \e|  & \leq \frac{1}{\e^2} E_\e(\tilde u_\e; \Omega^{\eta'}_\mu) - 2 \pi M \log \tfrac{\eta}{\eta'} +2\pi M\log \eta  
	\\
	& \quad + \sum_{h=1}^M \Big[ \frac{1}{\e^2} E_\e(\tilde u_\e; B_{r_\e}(x_h)) - 2 \pi  \log \tfrac{r_\e}{\e} \Big]\\
	& \quad + \sum_{h=1}^M \Big[ \frac{1}{\e^2} E_\e(\tilde u_\e; A^{\eta'+\e}_{r_\e-\e}(x_h)) - 2 \pi  \log \tfrac{\eta'}{r_\e} \Big] 
	\\
	& \leq \tilde m(\eta,\mu) + 2 \pi M \log \eta  + M \gamma + \varrho_\e \, . 
	\end{split}
\end{equation*}
Now we stress the dependence of $\tilde u_\e$ on $\eta'$ and $\eta$, denoting the sequence by~$\tilde u_{\e,\eta}$ (set for instance $\eta'=\eta/2$). Letting $\e \to 0$, for $\eta<1$ we deduce that
\begin{equation*}
	\limsup_{\e \to 0} \Big[ \frac{1}{\e^2} E_\e(\tilde u_{\e, \eta}) - 2 \pi M |\log \e| \Big] \leq \tilde m(\eta,\mu) - 2 \pi M |\log \eta |  + M \gamma \, .
\end{equation*}
Moreover, \eqref{eq:rest is vanishing} follows from~\eqref{eq:2211182222}, \eqref{eq:2711181519}, and \eqref{eq:2711181209} splitting the remainder in the same way. Hence, for each $\eta<1$ we found a sequence $u_{\e,\eta}\in\mathcal{PC}_{\e}(\S_{\e})$ such that
\begin{equation}\label{eq:almostthere}
\limsup_{\e \to 0} \Big[ \frac{1}{\e^2} E_\e(u_{\e, \eta}) - 2 \pi M |\log \e| \Big] \leq \tilde m(\eta,\mu) - 2 \pi M |\log \eta |  + M \gamma \, .
\end{equation}
Before we conclude via a diagonal argument, we have to identify the flat limit of the vorticity measure $\mu_{u_{\e,\eta}}$. From the above energy estimate and Proposition \ref{prop:XY classical} we deduce that, passing to a subsequence, $\mu_{u_{\e,\eta}}\flat\mu_{\eta}$ for some $\mu_{\eta}=\sum_{k=1}^{M} d^{\eta}_k\delta_{x^{\eta}_k}$ with $|\mu_{\eta}|(\Omega)\leq M$ (we allow for $d_k^{\eta}=0$ to sum up to $M$). Fix a Lipschitz set $A=A_{\eta}\subset\subset \Omega$ such that ${\rm supp}(\mu_{\eta})\subset A$. Due the logarithmic energy bound we can apply \cite[Proposition 5.2]{Ali-Cic-Pon} and deduce that on the set $A$ it holds that $\pi^{-1}\mathrm{J}\widehat{u}_{\e,\eta}-\mu_{u_{\e,\eta}}\flat 0$, where $\widehat{u}_{\e,\eta}$ denotes the piecewise affine interpolation associated to $u_{\e,\eta}$ (which is at least defined on $A$ for $\e$ small enough). Let now $\widehat{v}_{\e,\eta}$ be the function defined via piecewise affine interpolation of the values~$\tilde u_{\e,\eta}(\e i)$, $\e i \in \e\ZZ^2\cap\Omega$. We argue that on $A$ the Jacobian of $\widehat{u}_{\e,\eta}$ is close with respect to the flat convergence to the Jacobian of $\widehat{v}_{\e,\eta}$, i.e., 
\begin{equation} \label{eq:Jacobianafterprojection}
	\mathrm{J}(\widehat{u}_{\e,\eta}) - \mathrm{J}(\widehat{v}_{\e,\eta}) \flat 0\quad\text{ on }A \, .
\end{equation}
To this end, we apply \cite[Lemma 3.1]{Ali-Cic-Pon} which states that the Jacobians of two functions $u$ and $w$ are close if $\|u-w\|_{L^2}(\|\nabla u\|_{L^2}+\|\nabla w\|_{L^2})$ is small. 
Since by definition $u_{\e,\eta}=\mathfrak{P}_{\e}(\widetilde{u}_{\e,\eta})$, we know that
\begin{equation*}
|u_{\e,\eta}(\e i)-\widetilde{u}_{\e,\eta}(\e i)|\leq\theta_{\e}.
\end{equation*}
Inserting this estimate in the definition of the piecewise affine interpolation one can show that
\begin{equation}
|\widehat{u}_{\e,\eta}(x)-\widehat{v}_{\e,\eta}(x)|\leq C\theta_{\e}\quad\text{for all }x\in A\,.
\end{equation}
Taking into account the energy bounds~\eqref{eq:almostthere} and~\eqref{eq:discretization gives renormalized}, we conclude that
\begin{align}\label{eq:diffpiecewiseaffine}
	&\|\widehat{u}_{\e,\eta}-\widehat{v}_{\e,\eta}\|_{L^2(A)}\left(\|\nabla\widehat{u}_{\e,\eta}\|_{L^2(A)}+\|\nabla\widehat{v}_{\e,\eta}\|_{L^2(A)}\right)\nonumber
	\\
	&\quad  \leq \, C\sqrt{|A|}\theta_{\e}\left(\frac{1}{\e^2}E_{\e}(u_{\e,\eta};\Omega)+\frac{1}{\e^2}E_{\e}(\widetilde{u}_{\e,\eta};\Omega)\right)^{\! \frac{1}{2}}\leq C \sqrt{|A|}\theta_{\e}|\log \e|^{\frac{1}{2}}\,.
\end{align}
The above right hand side vanishes when $\e\to 0$, so that \cite[Lemma 3.1]{Ali-Cic-Pon} implies \eqref{eq:Jacobianafterprojection}. Hence it suffices to study the limit of the Jacobians of $\widehat{v}_{\e,\eta}$. We show that the limit carries mass in each ball $B_{\rho}(x_h)$ for all $\rho>0$ small enough such that $B_{\rho}(x_h)\subset\subset A$. To this end, we test the flat convergence against the Lipschitz cut-off function 
\begin{equation*}
\varphi^h_{\rho}(x) := \min\{\max\{\tfrac{2}{\rho} (\rho - |x-x_h|),0\},1\} \, . 
\end{equation*}
Using the distributional divergence form of the Jacobian and the fact that $\widehat{v}_{\e,\eta}$ agrees with the piecewise affine interpolation of the map $x\mapsto \alpha^\eta_h \odot \big(\tfrac{x-x_h}{|x-x_h|}\big)^{d_h}$ on the support of the gradient of $\varphi_{\rho}^h$ provided $\rho<\eta/4$ and $r_{\e}\ll \rho/2$, we infer that
\begin{align*}
\langle\mu_{\eta},\varphi_{\rho}^h\rangle&=\lim_{\e \to 0}\langle \pi^{-1}\mathrm{J}\widehat{v}_{\e,\eta},\varphi^h_{\rho}\rangle 
\\
&=-\lim_{\e \to 0}\frac{1}{2\pi}\integral{A^{\rho}_{\rho/2}(x_h)}{\begin{pmatrix}(\widehat{v}_{\e,\eta})_1\partial_2 (\widehat{v}_{\e,\eta})_2-(\widehat{v}_{\e,\eta})_2\partial_2 (\widehat{v}_{\e,\eta})_1 \\-(\widehat{v}_{\e,\eta})_1\partial_1 (\widehat{v}_{\e,\eta})_2+(\widehat{v}_{\e,\eta})_2\partial_1 (\widehat{v}_{\e,\eta})_1\end{pmatrix}\nabla\varphi_{\rho}^h}{\,\mathrm{d}x}=d_h\,,
\end{align*}
where the limit can be calculated similarly to \eqref{eq:explicitdegree}. From this equality and the arbitrariness of $\rho>0$, we deduce that $\{x_1,\dots,x_M\}\subset {\rm supp}(\mu_{\eta})$ and $\mu\mres\{x_h\}=d_h\delta_{x_h}$. Since $|\mu_{\eta}|(\Omega)\leq M$, it follows that $\mu_{\eta}=\mu$ independently of $\eta$ and the subsequence of $\e$. Since the flat convergence is given by a metric, we can thus use a diagonal argument with $\eta=\eta_{\e}$ to find a sequence $u_{\e}:=u_{\e,\eta_{\e}}$ satisfying $\mu_{u_\e}\flat\mu$ and, due to \eqref{eq:almostthere}, also the claimed inequality \eqref{eq:recovery in theta less eps}.
\end{proof}

\noindent {\bf Acknowledgments.}  The work of M.\ Cicalese was supported by the DFG Collaborative Research Center TRR 109, “Discretization in Geometry and Dynamics”. G.\ Orlando has been supported by the Alexander von Humboldt Foundation and the European Union's Horizon 2020 research and innovation programme under the Marie Sk\l odowska-Curie grant agreement No 792583. M.\ Ruf acknowledges financial support from the European Research Council under
the European Community's Seventh Framework Program (FP7/2014-2019 Grant Agreement QUANTHOM 335410).

\end{document}